\newtheorem{theorem}{Theorem}[section]
\newtheorem{lemma}[theorem]{Lemma}
\newtheorem{proposition}[theorem]{Proposition}
\newtheorem{corollary}[theorem]{Corollary}
\theoremstyle{definition}
\newtheorem{definition}[theorem]{Definition}
\newtheorem{example}[theorem]{Example}
\theoremstyle{remark}
\newtheorem{remark}[theorem]{Remark}
\newcommand{\R}{{\mathbb R}}
\newcommand{\Z}{{\mathbb Z}}
\newcommand{\C}{{\mathbb C}}
\newcommand{\s}{S}
\newcommand{\h}{\mathscr H}
\begin{document}

\title{Reduction by symmetries in singular quantum-mechanical problems: general scheme and application to Aharonov-Bohm model}

\author{A.G.~Smirnov}
\email{smirnov@lpi.ru}
\affiliation{I.~E.~Tamm Theory Department, P.~N.~Lebedev
Physical Institute, Leninsky prospect 53, Moscow 119991, Russia}

\begin{abstract}
We develop a general technique for finding self-adjoint extensions of a symmetric operator that respect a given set of its symmetries. Problems of
this type naturally arise when considering two- and three-dimensional Schr\"odinger operators with singular potentials. The approach is based on
constructing a unitary transformation diagonalizing the symmetries and reducing the initial operator to the direct integral of a suitable family of
partial operators. We prove that symmetry preserving self-adjoint extensions of the initial operator are in a one-to-one correspondence with
measurable families of self-adjoint extensions of partial operators obtained by reduction. The general scheme is applied to the
three-dimensional Aharonov-Bohm Hamiltonian describing the electron in the magnetic field of an infinitely thin solenoid. We construct all self-adjoint extensions of this Hamiltonian, invariant under translations along the solenoid and rotations around it, and explicitly find their eigenfunction expansions.
\end{abstract}

\maketitle

\onecolumngrid

\section{Introduction}

The spectral analysis of Schr\"odinger operators can be often facilitated if some symmetries of the problem are known. This happens if the spectral problem for the operators representing symmetries can be explicitly solved. In this case, the initial operator can be represented as a direct sum (or, more generally, direct integral) of partial operators corresponding to fixed eigenvalues of symmetries. In this way, two- and three-dimensional problems can be often reduced to solving certain one-dimensional Schr\"odinger equations (this is the case, for example, for spherically symmetric problems).

The present paper is concerned with such a reduction by symmetries in the case of singular quantum-mechanical problems. It is well known~\cite{Albeverio1988,GTV2012} that strong singularities in the potential may lead to the lack of self-adjointness of the corresponding Schr\"odinger operator on its natural domain. As a result, the quantum model is no longer fixed uniquely by the potential and different quantum dynamics described by various self-adjoint extensions of the initial Schr\"odinger operator are possible. In this paper, we are interested in the extensions that respect some given set of symmetries of the initial operator. We propose a general technique for the reduction of such symmetry preserving extensions and apply it to the spectral analysis of the three-dimensional Aharonov-Bohm Hamiltonian describing the electron in the magnetic field of an infinitely thin solenoid.

In a general form, the problem can be posed as follows. Suppose $H$ is a closed operator in a separable Hilbert space $\mathfrak H$ and $\mathfrak X$ is a subset of the algebra $L(\mathfrak H)$ of linear bounded everywhere defined operators in $\mathfrak H$. We assume that $\mathfrak X$ is involutive (i.e., the adjoint $T^*$ of every $T\in \mathfrak X$ belongs to $\mathfrak X$) and consists of pairwise commuting operators. Further, we assume $\mathfrak X$ is a set of symmetries of $H$, by which we mean that $H$ commutes with elements of $\mathfrak X$ in the sense of the next definition.

\begin{definition}\label{d0000}
A linear operator $H$ in $\mathfrak H$ with the domain $D_H$ is said to commute with $T\in L(\mathfrak H)$ if $T\Psi\in D_H$ and $HT\Psi=TH\Psi$ for any $\Psi \in D_H$.
\end{definition}

Our aim is to show how spectral decompositions of symmetries can be used to represent closed (and, in particular, self-adjoint) symmetry preserving extensions of $H$ as direct integrals of suitable partial operators. To clarify main ideas, we assume for a while that $\mathfrak H$ falls into an orthogonal direct sum of eigenspaces of symmetries. That is, we suppose that there is a countable family $\{\mathfrak S(s)\}_{s\in \mathscr S}$ of nontrivial closed pairwise orthogonal subspaces of $\mathfrak H$ such that
\begin{equation}\label{decompo}
\mathfrak H=\bigoplus_{s\in \mathscr S} \mathfrak S(s)
\end{equation}
and
\[
T\Psi = g_T(s)\Psi, \quad \Psi\in \mathfrak S(s),
\]
for every $s\in \mathscr S$ and $T\in \mathfrak X$, where $g_T(s)$ are some complex numbers (since $\mathfrak S(s)$ are nontrivial, the eigenvalues $g_T(s)$ are defined uniquely). Let $P_s$ denote the orthogonal projection of $\mathfrak H$ onto $\mathfrak S(s)$. We would like to have a decomposition of the form
\begin{equation}\label{decomposition}
H = \bigoplus_{s\in \mathscr S} \mathcal H(s),
\end{equation}
where $\mathcal H(s)$ are closed operators in $\mathfrak S(s)$ for all $s\in\mathscr S$ and $\bigoplus_{s\in \mathscr S} \mathcal H(s)$ is, by definition, the operator in $\mathfrak H$ whose graph consists of all $(\Psi,\tilde\Psi)\in \mathfrak H\oplus\mathfrak H$ such that $(P_s\Psi,P_s\tilde\Psi)$ is in the graph of $\mathcal H(s)$ for every $s\in \mathscr S$. As the operator $\bigoplus_{s\in \mathscr S} \mathcal H(s)$ obviously commutes with all projections $P_s$, equality~(\ref{decomposition}) can hold only if $H$ commutes with $P_s$ for every $s\in \mathscr S$. In fact, the latter condition is also sufficient for the existence of decomposition~(\ref{decomposition}) (see Theorems~2 and~3 in Sec.~45 of~Ref.~\onlinecite{AkhiezerGlazman}). Given a decomposition of form~(\ref{decomposition}), it can be easily shown that $\mathcal H(s)$ is actually the restriction of $H$ to the domain $D_H\cap \mathfrak S(s)$ for every $s\in\mathscr S$. Thus, the partial operators $\mathcal H(s)$ are uniquely determined by $H$.

In general, the condition that $\mathfrak S(s)$ are eigenspaces of symmetries is insufficient to guarantee that $H$ commutes with $P_s$ for all $s\in\mathscr S$ (otherwise we could choose $\mathfrak X$ to be the one-element set containing the identity operator in $\mathfrak H$ and consider the decomposition $\mathfrak H=\mathfrak H'\oplus \mathfrak H^{\prime\bot}$, where $\mathfrak H'$ is an arbitrary closed subspace of $\mathfrak H$; it would follow that every closed operator in $\mathfrak H$ commutes with the orthogonal projection onto $\mathfrak H'$). The commutation between $H$ and $P_s$ can be ensured, however, if we require that
\begin{equation}\label{exactdecomp}
P_s\in \mathcal A(\mathfrak X)
\end{equation}
for every $s\in\mathscr S$, where $\mathcal A(\mathfrak X)$ is the smallest strongly closed subalgebra of $L(\mathfrak H)$ containing the set $\mathfrak X$ and the identity operator in $\mathfrak H$ (in other words, $\mathcal A(\mathfrak X)$ is the strong closure of the algebra generated by $\mathfrak X$ and the identity operator). Indeed, let $\mathcal M$ be the subset of $L(\mathfrak H)$ consisting of all operators commuting with $H$. Since the sum and product of any two operators in $L(\mathfrak H)$ commuting with $H$ also commute with $H$, we conclude that $\mathcal M$ is an algebra. Moreover, it easily follows from the closedness of $H$ that $\mathcal M$ is closed in the strong operator topology (see Lemma~1 in~Ref.~\onlinecite{JMAA2013}) and, hence, $\mathcal A(\mathfrak X)\subset \mathcal M$. Condition~(\ref{exactdecomp}) therefore implies that $P_s\in \mathcal M$ or, in other words, that $H$ commutes with $P_s$ for all $s\in\mathscr S$.

We shall say that an eigenspace decomposition of form~(\ref{decompo}) is exact for $\mathfrak X$ if condition~(\ref{exactdecomp}) is fulfilled for all $s\in\mathscr S$. Given an exact decomposition for $\mathfrak X$, it is easy to describe symmetry preserving closed extensions of $H$. Indeed, suppose (\ref{decomposition}) holds and a closed extension $\tilde {\mathcal H}(s)$ of $\mathcal H(s)$ is given for every $s\in \mathscr S$. Then the operator $\tilde H = \bigoplus_{s\in \mathscr S}\tilde {\mathcal H}(s)$ is clearly a closed extension of $H$ commuting with all elements of $\mathfrak X$. Conversely, if $\tilde H$ is a closed extension of $H$ commuting with all elements of $\mathfrak X$, then the above considerations applied to $\tilde H$ instead of $H$ show that $\tilde H$ falls into the direct sum of its partial operators $\tilde {\mathcal H}(s)$, which are obviously closed extensions of $\mathcal H(s)$. Moreover, $\tilde H$ is densely defined if and only if all $\tilde {\mathcal H}(s)$ are densely defined, in which case we have $\tilde H^*= \bigoplus_{s\in \mathscr S}\tilde {\mathcal H}(s)^*$. Hence, $\tilde H$ is self-adjoint if and only if $\tilde {\mathcal H}(s)$ is self-adjoint for every $s\in \mathscr S$.

Thus, given an exact decomposition for $\mathfrak X$, the operator $H$ falls into the direct sum of its partial operators $\mathcal H(s)$, and the symmetry preserving closed (resp., self-adjoint) extensions of $H$ are precisely the direct sums of closed (resp., self-adjoint) extensions of $\mathcal H(s)$.

Exactness condition~(\ref{exactdecomp}) can be given another, equivalent, formulation that is better suited for verifying in concrete applications. More specifically, we claim that (\ref{exactdecomp}) holds for all $s\in\mathscr S$ if and only if the following condition is fulfilled:
\begin{itemize}
\item[(E)] For every $s,s'\in \mathscr S$ such that $s\neq s'$, there is $T\in \mathfrak X$ such that $g_T(s)\neq g_T(s')$.
\end{itemize}
Indeed, suppose (E) holds. Since $\mathfrak X$ is involutive, the elements of $\mathcal A(\mathfrak X)$ can be characterized using von~Neumann's bicommutant theorem (see Sec.~\ref{s3} for details): an operator in $L(\mathfrak H)$ belongs to $\mathcal A(\mathfrak X)$ if and only if it commutes with every element of commutant of $\mathfrak X$, i.e., with every operator in $L(\mathfrak H)$ that commutes with all elements of $\mathfrak X$. To prove~(\ref{exactdecomp}), we have to verify this condition for $P_s$. Let $R\in L(\mathfrak H)$ commute with all elements of $\mathfrak X$. Since $TP_s = P_s T = g_T(s)P_s$ for every $s\in\mathscr S$ and $T\in \mathfrak X$, we have $P_{s'}RTP_s = g_T(s) P_{s'}RP_s = g_T(s') P_{s'}RP_s$ for all $T\in \mathfrak X$ and $s,s'\in\mathscr S$. In view of~(E), this implies that $P_{s'}RP_s = 0$ for all $s,s'\in\mathscr S$ such that $s\neq s'$. As $\Psi = \sum_{s\in\mathscr S} P_s\Psi$ for any $\Psi\in \mathfrak H$, we obtain
\[
P_{s}R\Psi = \sum_{s'\in\mathscr S} P_sRP_{s'}\Psi = P_s RP_s = \sum_{s'\in\mathscr S} P_{s'}RP_s\Psi = RP_s\Psi
\]
for every $s\in\mathscr S$ and $\Psi\in \mathfrak H$. Thus, $P_s$ commutes with $R$ for every $s\in \mathscr S$, and~(\ref{exactdecomp}) is proved.
We now show that, conversely, (\ref{exactdecomp}) implies~(E). Suppose, to the contrary, that (\ref{exactdecomp}) holds and there exist $s_1,s_2\in\mathscr S$ such that $s_1\neq s_2$ and $g_T(s_1)=g_T(s_2)$ for every $T\in \mathfrak X$. We then have $\tilde PT=T\tilde P = g_T(s_1)\tilde P$ for every $T\in \mathfrak X$, where $\tilde P=P_{s_1}+P_{s_2}$ is the orthogonal projection of $\mathfrak H$ onto $\mathfrak S(s_1)\oplus \mathfrak S(s_2)$. It follows that every element of $\mathfrak X$ commutes with $\tilde PR\tilde P$ for every $R\in L(\mathfrak H)$. In view of~(\ref{exactdecomp}) and the characterization of elements of $\mathcal A(\mathfrak X)$ given above, this implies that $P_{s_1}$ commutes with $\tilde PR\tilde P$ for every $R\in L(\mathfrak H)$, which is obviously false. We thus arrive at a contradiction and our claim is proved.

The above discussion is based on the assumption that the Hilbert space $\mathfrak H$ falls into a direct sum of eigenspaces of symmetries. In general, this assumption does not hold and one has to use direct integral decompositions instead. More specifically, we can try to find a positive measure $\nu$, a $\nu$-measurable family $\mathfrak S$ of Hilbert spaces, and a unitary operator $V\colon \mathfrak H\to \int^\oplus \mathfrak S(s)\,d\nu(s)$ such that every $T\in \mathcal X$ is representable in the form $V^{-1}\mathcal T^{\nu,\mathfrak S}_g V$ for some $\nu$-measurable complex function $g$, where $T^{\nu,\mathfrak S}_g$
is the operator of multiplication by $g$ in $\int^\oplus \mathfrak S(s)\,d\nu(s)$ (we
refer the reader to Appendix~\ref{app1} for the notions related to direct integrals of Hilbert spaces). A triple $(\nu,\mathfrak S,V)$ satisfying this condition is said to be a diagonalization for $\mathfrak X$. Since countable direct sums of separable Hilbert spaces can be always identified with suitable direct integrals (see Sec.~\ref{s_const_fam} and formula~(\ref{3dirint})), the eigenspace decompositions considered above can be viewed as a particular case of diagonalizations, with the role of $\nu$ and $V$ played by the counting measure\footnote{The counting measure on $\mathscr S$ is, by definition, the measure assigning to every finite subset of $\mathscr S$ the number of points in this set.}  on $\mathscr S$ and the natural identification between $\mathfrak H$ and $\bigoplus_{s\in\mathscr S} \mathfrak S(s)$ respectively.

Given a diagonalization $(\nu,\mathfrak S,V)$, we would like to have a decomposition of $H$ of the form
\begin{equation}\label{dint_repr}
H = V^{-1} \int^\oplus \mathcal H(s) \,d\nu(s)\,V,
\end{equation}
where $\mathcal H$ is a $\nu$-measurable family of closed operators in $\mathfrak S$. This decomposition is an analogue of~(\ref{decomposition}) in the direct integral setting. As we have seen, such a decomposition, in general, does not exist even in the direct sum case. It turns out, however, that this problem does not arise if we confine ourselves to \textit{exact} diagonalizations that are singled out by the requirement that
\begin{equation}\label{exactcond}
V^{-1}\mathcal T^{\nu,\mathfrak S}_g V\in \mathcal A(\mathfrak X)
\end{equation}
for every $\nu$-measurable $\nu$-essentially bounded function $g$. It will be shown that, in contrast to direct sum decompositions, exact diagonalizations exist for every involutive set $\mathfrak X\subset L(\mathfrak H)$ of pairwise commuting operators. Exactness condition~(\ref{exactcond}) (which is an analogue of~(\ref{exactdecomp})) implies that $VHV^{-1}$ commutes with $\mathcal T^{\nu,\mathfrak S}_g$ for every $\nu$-measurable $\nu$-essentially bounded $g$ since, as shown above, $H$ commutes with all elements of $\mathcal A(\mathfrak X)$. This allows us to apply the von~Neumann's reduction theory~\cite{Neumann} (or, more precisely, its generalization~\cite{Nussbaum, Barriere1951} for the case of unbounded operators) to $VHV^{-1}$ and obtain decomposition~(\ref{dint_repr}) as a result. Moreover, given an exact diagonalization $(\nu,\mathfrak S,V)$ for $\mathfrak X$ and a decomposition of form~(\ref{dint_repr}), the symmetry preserving closed (resp., self-adjoint) extensions of $H$ are precisely the operators of the form
\begin{equation}\nonumber
V^{-1} \int^\oplus \tilde{\mathcal H}(s) \,d\nu(s)\,V,
\end{equation}
where $\tilde{\mathcal H}$ is a $\nu$-measurable family of operators in $\mathfrak S$ such that $\tilde{\mathcal H}(s)$ is a closed (resp., self-adjoint) extension of the partial operator $\mathcal H(s)$ for $\nu$-a.e. $s$.\footnote{Throughout the paper, a.e. means either 'almost every' or 'almost everywhere'.} Condition~(\ref{exactcond}) is, as a rule, inconvenient for concrete applications because it involves the algebra $\mathcal A(\mathfrak X)$ rather than the set $\mathfrak X$ itself. We shall see, however, that a simple exactness criterion similar to condition~(E) can be obtained under very mild restrictions on the measure $\nu$ (Theorem~\ref{t0a}). This generalized condition~(E) can usually be easily verified for concrete examples.

The results formulated above suggest that, in general, one can find the symmetry preserving self-adjoint extensions of $H$ by doing the following steps:
 \begin{itemize}
\item[(I)] Find an exact diagonalization $(\nu,\mathfrak S,V)$ for $\mathfrak X$.

\item[(II)] Compute partial operators $\mathcal H(s)$ satisfying~(\ref{dint_repr}).

\item[(III)] Find self-adjoint extensions of the partial operators $\mathcal H(s)$.
\end{itemize}

In practice, the operator $H$ often comes as the closure of some non-closed operator $\check H$. While $\check H$ is usually given by some explicit formula, finding an explicit description of $H$ may be a difficult task. We shall see, however, that every symmetry of $\check H$ is also a symmetry of $H$ (Lemma~\ref{l1}). Moreover, finding partial operators of $H$ can be effectively reduced to some computations involving $\check H$ (Proposition~\ref{p_reduction}). For this reason, the knowledge of $\check H$ is actually sufficient for doing steps (I)-(III) for $H$.

We apply the general construction described above to the three-dimensional model of an electron in the magnetic field of an
infinitely thin solenoid. The Hamiltonian for this model is formally given by the differential expression
\begin{equation}\label{diff_expr}
\sum_{j=1}^3\left(-i\partial_{x_j} + \mathscr A_j(x)\right)^2,
\end{equation}
where $x=(x_1,x_2,x_3)\in\R^3$ and the vector potential $\mathscr
A=(\mathscr A_1,\mathscr A_2,\mathscr A_3)$ has the form
\[
\mathscr A_1(x) = -\frac{\phi x_2}{x_1^2+x_2^2},\quad \mathscr A_2(x) = \frac{\phi x_1}{x_1^2+x_2^2},\quad \mathscr A_3(x)=0.
\]
Here, the real parameter $\phi$ is equal to the flux of the magnetic field through the solenoid divided by $2\pi$.

This model was originally considered by Aharonov and Bohm~\cite{AB1959}. Self-adjoint extensions in the two-dimensional variant of this model, as well as their eigenfunction expansions and corresponding scattering amplitudes, were analysed in~Refs.~\onlinecite{Tyutin1974,AdamiTeta1998,DabrowskiStovicek1998}.

The vector potential $\mathscr A$ is smooth on the domain $\mathscr O=\{(x_1,x_2,x_3)\in\R^3 : x_1^2+x_2^2>0\}$, which is obtained by excluding the $x_3$-axis from $\R^3$. Hence, (\ref{diff_expr}) naturally determines an operator $\h^\phi$ in the space $C^\infty(\mathscr O)$ of
smooth functions on~$\mathscr O$,
\begin{multline}
(\h^\phi \Phi)(x) = \sum_{j=1}^3\left(-i\partial_{x_j} + \mathscr A_j(x)\right)^2\Phi(x) = \\= \left(-\Delta +\frac{2i\phi}{x_1^2+x_2^2}(x_2\partial_{x_1}-x_1\partial_{x_2})+
\frac{\phi^2}{x_1^2+x_2^2}\right)\Phi(x),\quad \Phi\in C^\infty(\mathscr O). \label{checkH}
\end{multline}
Restricting $\mathscr H^\phi$ to the space $C_0^\infty(\mathscr O)$ of smooth functions on $\mathscr O$ with compact support and passing to $\Lambda$-equivalence classes, where $\Lambda$ is the Lebesgue measure on $\R^3$, we obtain a (non-closed) densely defined symmetric operator $\check H^\phi$ in $L_2(\R^3)$:
\begin{align}
& D_{\check H^\phi} = \left\{ [\Phi]_\Lambda : \Phi\in C_0^\infty(\mathscr O)\right\}, \nonumber\\
& \check H^\phi[\Phi]_\Lambda = [\h^\phi \Phi]_\Lambda,\quad \Phi\in C_0^\infty(\mathscr O), \nonumber
\end{align}
where $[\Phi]_\Lambda$ denotes the $\Lambda$-equivalence class corresponding to $\Phi$. We define the operator $H^\phi$ in $L_2(\R^3)$ as the closure of $\check H^\phi$,
\begin{equation}\label{Hphidef}
H^\phi = \overline{\check H^\phi}.
\end{equation}

Let $\mathcal G$ be the Abelian group of linear operators in $\R^3$ generated by translations along the $x_3$-axis and rotations around the $x_3$-axis.
Given $G\in \mathcal G$, we denote by $T_G$ the unitary operator in $L_2(R^3)$ taking $\Psi$ to $[\Psi\circ G^{-1}]_\Lambda$ for any $\Psi\in L_2(\R^3)$. Clearly, $G\to T_G$ is a linear representation of $\mathcal G$ in $L_2(\R^3)$. It is straightforward to check that $\check H^\phi$ (and, hence, $H^\phi$) commutes with $T_G$ for any $G\in \mathcal G$. The above abstract scheme can therefore be applied to $H=H^\phi$ and the set $\mathfrak X$ consisting of all $T_G$ with $G\in \mathcal G$. As a result, we shall describe all self-adjoint
extensions of $H^\phi$ commuting with $T_G$ for any $G\in \mathcal G$ and find eigenfunction expansions for such extensions.

The paper is organized as follows. In Sec.~\ref{sec_results}, we formulate the results concerning self-adjoint extensions and eigenfunction expansions for the Aharonov-Bohm model. In Secs.~\ref{s3}--\ref{s6}, we elaborate on the general scheme of constructing symmetry preserving extensions outlined above.
Sec.~\ref{s3} is devoted to preliminaries concerning the commutation properties of operators in Hilbert space. In Sec.~\ref{s5}, we establish the existence of exact diagonalizations and prove an exactness criterion, analogous to condition~(E) discussed above. In Sec.~\ref{s6}, we consider direct integral decompositions for closed operators possessing a given set of symmetries and describe their symmetry preserving extensions. The aim of Secs.~\ref{s_one}--\ref{s11} is to derive eigenfunction expansions for the Aharonov-Bohm model. Sec.~\ref{s_one} is concerned with self-adjoint extensions and eigenfunction expansions for one-dimensional Schr\"odinger operators. In Sec.~\ref{s_meas}, we treat the measurability questions for families of such operators. Secs.~\ref{s_one} and~\ref{s_meas} provide a background for performing steps~(I)-(III) for $H^\phi$ in Sec~\ref{s8}. As a result, we represent symmetry preserving self-adjoint extensions of $H^\phi$ as direct integrals of suitable one-dimensional Schr\"odinger operators. In Sec~\ref{s9}, we study a special construction, called the direct integral of measures,  which can be viewed as a generalization of direct products of measures and arises naturally in the spectral analysis of direct integrals of operators. In Sec.~\ref{s11}, we combine the results of Secs.~\ref{s_one}, \ref{s8}, and~\ref{s9} to obtain the eigenfunction expansions of symmetry preserving self-adjoint extensions of $H^\phi$. In Appendices~\ref{app0} and~\ref{app1}, we give the necessary background material concerning measure theory and direct integral decompositions of operators in Hilbert space respectively.

\section{Formulation of results for the Aharonov-Bohm model}
\label{sec_results}

In this section, we give a precise description of self-adjoint extensions of $H^\phi$ commuting with $T_G$ for all $G\in \mathcal G$ and their eigenfunction expansions. The proofs of results stated below will be given in Sec.~\ref{s11}.

Let $S=\Z\times \R$, where $\Z$ is the set of integers. For $\phi\in\R$, let the subset $A^\phi$ of $S$ be given by
\begin{equation}\label{Aphi}
A^\phi =\{(m,p)\in S: |m+\phi|<1 \}.
\end{equation}
For every Borel\footnote{We endow $S$ with the topology induced from $\R^2$. The Borel $\sigma$-algebra of $S$ consists of all Borel subsets of $\R^2$ that are contained in $S$. The Borel structure on $A^\phi$ is induced from $S$.} real function $\theta$ on $A^\phi$, we shall define a map $\mathcal W^\phi_\theta$ on $S\times \R$ such that
\begin{enumerate}
\item[(a)] For every $s\in S$ and $E\in\R$, $\mathcal W^\phi_\theta(s,E)$ is a smooth complex function on $\mathscr O$ that is locally square-integrable on $\R^3$.

\item[(b)] The equalities\footnote{For brevity, we let $\mathcal W^\phi_\theta(s,E|x)$ denote the value of the function $\mathcal W^\phi_\theta(s,E)$ at the point $x$: $\mathcal W^\phi_\theta(s,E|x)=(\mathcal W^\phi_\theta(s,E))(x)$. A similar notation will be used for any maps whose values are also maps.}
\begin{align}
& \mathscr H^\phi \mathcal W^\phi_\theta(s,E) = (E+p^2) \mathcal W^\phi_\theta(s,E),\nonumber\\
& -i\partial_{x_3}\mathcal W^\phi_\theta(s,E|x) = p\mathcal W^\phi_\theta(s,E|x),\nonumber \\
& -i (x_1\partial_{x_2} - x_2\partial_{x_1})\mathcal W^\phi_\theta(s,E|x) = m\mathcal W^\phi_\theta(s,E|x)\nonumber
\end{align}
hold for every $s=(m,p)\in S$, $E\in\R$, and $x=(x_1,x_2,x_3)\in \mathscr O$.
\end{enumerate}

Furthermore, we shall define a positive Borel measure $M^\phi_\theta$ on $S\times \R$ such that the following statement is valid.

\begin{proposition}\label{propW}
Let $\phi\in\R$ and $\theta$ be a Borel real function on $A^\phi$. Then there is a unique unitary operator $W\colon L_2(\R^3)\to L_2(S\times\R,M^\phi_{\theta})$ such that
\begin{equation}\label{WWtheta12}
(W\Psi)(s,E) = \int \overline{\mathcal W^\phi_{\theta}(s,E|x)}\Psi(x)\,dx,\quad \Psi\in L_2^c(\R^3),
\end{equation}
for $M^\phi_{\theta}$-a.e. $(s,E)$, where the bar means complex conjugation and $L_2^c(\R^3)$ is the subspace of $L_2(\R^3)$ consisting of all its elements vanishing $\Lambda$-a.e. outside some compact subset of $\R^3$.
\end{proposition}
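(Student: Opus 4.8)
The plan is to obtain $W$ as the composition of the transform diagonalizing the symmetries with a fiberwise one-dimensional spectral transform, following steps~(I)-(III) of the general scheme as carried out for $H^\phi$ in Sec.~\ref{s8}. First I would pass to cylindrical coordinates $(r,\varphi,x_3)$ and perform a Fourier series in $\varphi$ together with a Fourier transform in $x_3$; this yields a unitary $V$ from $L_2(\R^3)$ onto the direct integral $\int^\oplus L_2((0,\infty),r\,dr)\,d\nu(s)$ over $S=\Z\times\R$, where $\nu$ is the product of the counting measure on $\Z$ with Lebesgue measure on $\R$. Under $V$ the rotation and translation operators generating $\mathfrak X=\{T_G:G\in\mathcal G\}$ become multiplication by $e^{-im\,\cdot}$ and $e^{-ip\,\cdot}$, so $(\nu,\mathfrak S,V)$ is a diagonalization for $\mathfrak X$ whose exactness follows from the criterion of Theorem~\ref{t0a}, distinct points of $S$ being separated by some element of $\mathfrak X$. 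By Proposition~\ref{p_reduction} this reduces $H^\phi$ to a measurable family of partial operators, each being on the fiber $s=(m,p)$ the one-dimensional radial operator generated by $-\partial_r^2-r^{-1}\partial_r+(m+\phi)^2r^{-2}$ shifted by the constant $p^2$; the self-adjoint extension encoded by $\theta$ amounts to imposing a boundary condition at $r=0$ precisely on the fibers with $|m+\phi|<1$, that is, on $A^\phi$, the operator being essentially self-adjoint on the remaining fibers.

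Next I would apply the one-dimensional eigenfunction-expansion theory of Sec.~\ref{s_one} to these partial operators. For each $s=(m,p)$ the radial operator $-\partial_r^2-r^{-1}\partial_r+(m+\phi)^2r^{-2}$, with the boundary condition prescribed by $\theta$ when $s\in A^\phi$, is self-adjoint in $L_2((0,\infty),r\,dr)$ and carries a spectral representation: a Borel measure $\rho_s$ on $\R$ together with a unitary $U_s\colon L_2((0,\infty),r\,dr)\to L_2(\R,\rho_s)$ diagonalizing it and given, on compactly supported inputs, by integration against a radial solution $w^\phi_\theta(s,E|r)$ at radial energy $E$. Since the $p^2$-shift only translates the spectral parameter, $\int^\oplus U_s\,d\nu(s)$ will diagonalize the extension of $H^\phi$ as multiplication by $E+p^2$, matching property~(b). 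The measurability of $s\mapsto(\rho_s,U_s)$ is furnished by Sec.~\ref{s_meas}, and the direct-integral-of-measures construction of Sec.~\ref{s9} assembles the $\rho_s$ into a single Borel measure $M^\phi_\theta$ on $S\times\R$ for which the canonical identification $L_2(S\times\R,M^\phi_\theta)\cong\int^\oplus L_2(\R,\rho_s)\,d\nu(s)$ holds.

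Then I would set $W=\bigl(\int^\oplus U_s\,d\nu(s)\bigr)V$, which is unitary as a composition of unitaries, and define $\mathcal W^\phi_\theta(s,E|x)$ as the product of the angular factor $e^{im\varphi}$, the axial factor $e^{ipx_3}$, and the radial solution $w^\phi_\theta(s,E|r)$, where $(r,\varphi)$ are the polar coordinates of $(x_1,x_2)$; properties~(a) and~(b) then follow from the corresponding properties of the three factors. To verify the kernel formula~(\ref{WWtheta12}) I would take $\Psi\in L_2^c(\R^3)$: then each radial fiber of $V\Psi$ has compact support in $r$, so $U_s$ acts as a genuinely convergent integral against $w^\phi_\theta(s,E|r)$, and combining this radial integration with the angular and axial integrations defining $V$ via Fubini collapses the iterated integral into the single integral $\int\overline{\mathcal W^\phi_\theta(s,E|x)}\Psi(x)\,dx$ for $M^\phi_\theta$-a.e.\ $(s,E)$. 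Uniqueness is then immediate, since $L_2^c(\R^3)$ is dense in $L_2(\R^3)$ and a bounded operator is determined by its restriction to a dense subspace.

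The main obstacle will be the measurability and Fubini bookkeeping underlying this last step: one must show that $(s,E)\mapsto\int\overline{\mathcal W^\phi_\theta(s,E|x)}\Psi(x)\,dx$ is $M^\phi_\theta$-measurable and coincides $M^\phi_\theta$-a.e.\ with the fiberwise-defined $(W\Psi)(s,E)$, and that the interchange of integrations is legitimate even though $\mathcal W^\phi_\theta(s,E|\cdot)$ is not square-integrable on $\R^3$ — which is precisely why~(\ref{WWtheta12}) is asserted only on $L_2^c(\R^3)$. A related subtlety is confirming that the construction of Sec.~\ref{s9} indeed produces a bona fide Borel measure on $S\times\R$, and that the family $(\rho_s,U_s,w^\phi_\theta)$ depends on $\theta$ only through the limit-circle fibers $A^\phi$, the operators on the remaining fibers being essentially self-adjoint and hence independent of $\theta$.
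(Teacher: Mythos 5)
Your overall route is the same as the paper's: the cylindrical-coordinates diagonalization $V$ (Lemma~\ref{l_unitary}), exactness via Theorem~\ref{t0a}, reduction to radial operators, the fiberwise spectral transforms of Propositions~\ref{leig1} and~\ref{leig2}, the direct-integral-of-measures identification $L_2(S\times\R,M^\phi_\theta)\cong\int^\oplus L_2(\R,\mu^\phi_\theta(s))\,d\nu_0(s)$ from Proposition~\ref{puo}, the composition $W=(Q^\phi_\theta)^{-1}U^\phi_\theta V$, and uniqueness by density of $L_2^c(\R^3)$. One cosmetic difference: you keep the weighted fibers $L_2((0,\infty),r\,dr)$, whereas the paper absorbs the factor $\sqrt r$ into $V$ so that the fibers are the flat space $L_2(\R_+)$ and the one-dimensional theory of Sec.~\ref{s_one} (eigenfunctions $u^\kappa_\vartheta$, measures $\mathcal V_{\kappa,\vartheta}$) applies verbatim; with your convention you would need the extra unitary $\psi\mapsto\sqrt r\,\psi$ before invoking those results.

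The genuine gap is in your verification of~(\ref{WWtheta12}). You argue that for $\Psi\in L_2^c(\R^3)$ "each radial fiber of $V\Psi$ has compact support in $r$" and then apply the integral formulas for the fiber transforms directly. But a compact subset of $\R^3$ may meet the $x_3$-axis, so the fibers $(V\Psi)(s)$ are in general supported in an interval $(0,R]$ and do \emph{not} belong to $L_2^c(\R_+)$, whose elements must vanish near $r=0$ as well; Propositions~\ref{leig1} and~\ref{leig2} assert the integral representations of $U_\kappa$ and $U_{\kappa,\vartheta}$ only on $L_2^c(\R_+)$. So the fiberwise Fubini step is not licensed for such $\Psi$ --- and this near-axis behavior, not the failure of global square-integrability of the kernel that you cite, is the real difficulty. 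The paper (proof of Lemma~\ref{lemmaW}) closes it in two steps: first it proves~(\ref{WWtheta12}) for $\Psi=[\Phi]_\Lambda$ with $\Phi\in C_0^\infty(\mathscr O)$, whose fibers $\tilde\Phi(s)$ genuinely lie in $C_0^\infty(\R_+)$ because $\Phi$ vanishes near the axis; it then extends to arbitrary $\Psi\in L_2^c(\R^3)$ by approximating with $\Phi_n\in C_0^\infty(\mathscr O)$ supported in a fixed compact set, using the local square-integrability of $\mathcal W^\phi_\theta(s,E)$ on $\R^3$ to get pointwise convergence $h_{\Psi_n}(s,E)\to h_\Psi(s,E)$ and the continuity of $W$ to get convergence in $L_2(S\times\R,M^\phi_\theta)$. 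You need to insert this two-step argument (or an equivalent truncation-near-the-axis limit) to make your last step correct.
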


Condition~(b) means that $\mathcal W^\phi_\theta(m,s)$ is a joint generalized eigenfunction of operator~(\ref{diff_expr}) and $x_3$-components of momentum and angular momentum for every $s\in S$ and $E\in\R$. Proposition~\ref{propW} states that these eigenfunctions constitute a complete set and can be used to expand square-integrable functions on $\R^3$. We shall see (Theorem~\ref{main_theorem}) that all such eigenfunction expansions correspond to symmetry preserving self-adjoint extensions of $H^\phi$, but first we give precise definitions of $\mathcal W^\phi_\theta$ and $M^\phi_\theta$.

It follows from the second and third equalities in condition~(b) that
\begin{equation}\label{Wtheta12}
\mathcal W^\phi_\theta(s,E|x) = \frac{e^{ipx_3}}{2\pi\sqrt{r_x}}\left(\frac{x_1+ix_2}{r_x}\right)^m \mathcal J^\phi_\theta(s,E|r_x)
\end{equation}
for every $s=(m,p)\in S$, $E\in \R$, and $x\in\mathscr O$, where $r_x=\sqrt{x_1^2+x_2^2}$ and $\mathcal J^\phi_\theta(s,E)$ is some smooth function on $\R_+=(0,\infty)$ (note that $x_1\partial_{x_2} - x_2\partial_{x_1}$ and $r_x^{-m}(x_1+ix_2)^m$ correspond to $\partial_\varphi$ and $e^{im\varphi}$ respectively in the polar coordinates). For future convenience, we write the factor $(2\pi\sqrt{r_x})^{-1}$ separately rather than subsume it into $\mathcal J^\phi_\theta(s,E)$. The first equality in condition~(b) is fulfilled if and only if $\mathcal J^\phi_\theta(s,E)$ satisfies the one-dimensional Schr\"odinger equation
\begin{equation}\label{eqforJ}
-\partial^2_r\mathcal J^\phi_\theta(s,E|r) + \frac{(m+\phi)^2 -1/4}{r^2}\mathcal J^\phi_\theta(s,E|r) = E\mathcal J^\phi_\theta(s,E|r),\quad r\in \R_+,
\end{equation}
for every $s=(m,p)\in S$ and $E\in\R$.

To define $\mathcal J^\phi_\theta$ explicitly, we introduce some additional notation. For any $E,\kappa\in\R$, we define the function $u^\kappa(E)$ on $\R_+$ by the relation
\begin{equation}\label{ukappa}
u^\kappa(E|r) = r^{1/2+\kappa}\mathcal X_\kappa(r^2 E),\quad r\in\R_+,
\end{equation}
where the entire function $\mathcal X_\kappa$ is given by
\begin{equation}\label{Xkappa}
\mathcal X_\kappa(\zeta) = \frac{1}{2^\kappa}\sum_{n=0}^\infty \frac{(-1)^n\zeta^n}{\Gamma(\kappa+n+1)n!2^{2n}},\quad \zeta\in\C.
\end{equation}
For $-1<\kappa<1$, $\vartheta\in\R$, and $E\in\R$, we define the function $u^\kappa_\vartheta(E)$ on $\R_+$ by setting
\begin{equation}\label{wkappa}
u^\kappa_\vartheta(E) = \frac{u^{\kappa}(E)\sin(\vartheta+\vartheta_\kappa)-u^{-\kappa}(E)\sin(\vartheta-\vartheta_\kappa)}{\sin\pi\kappa},\quad 0<|\kappa|<1,
\end{equation}
and
\begin{equation}\label{w(z)}
u^0_\vartheta(E|r) = \lim_{\kappa\to 0} u^\kappa_\vartheta(E|r)=u^0(E|r)\cos\vartheta+\frac{2}{\pi}\left[\left(\ln\frac{r}{2} + \gamma\right)u^0(E|r) -\sqrt{r}\,\mathcal Y(r^2 E)\right]\sin\vartheta,\quad r\in\R_+,
\end{equation}
where
\begin{equation}\label{varthetakappa}
\vartheta_\kappa=\frac{\pi\kappa}{2},
\end{equation}
the entire function $\mathcal Y$ is given by
\begin{equation}\label{Y}
\mathcal Y(\zeta) = \sum_{n=1}^\infty \frac{(-1)^nc_n}{(n!)^2 2^{2n}} \zeta^n,\quad c_n = \sum_{j=1}^n \frac{1}{j},
\end{equation}
and $\gamma = \lim_{n\to\infty} (c_n -\ln n)=0,577\ldots$ is the Euler constant.\footnote{To compute the limit of $u^\kappa_\vartheta(E|r)$ as $\kappa\to 0$, one has to apply L'H\^{o}pital's rule and use the equality $\Gamma'(1+n)/\Gamma(1+n)=c_n-\gamma$ (see~Ref.~\onlinecite{Bateman}, Sec.~1.7.1, formula~(9)).} We now define $\mathcal J^\phi_\theta$ by setting
\begin{equation}\label{Jphitheta}
\mathcal J^\phi_\theta(s,E) = \left\{
\begin{matrix}
u^{|m+\phi|}(E),& s\in S\setminus A^\phi,\\
u^{m+\phi}_{\theta(s)}(E),& s\in A^\phi,
\end{matrix}
\right.
\end{equation}
for every $s=(m,p)\in S$ and $E\in \R$.

The function $\mathcal X_\kappa$ is closely related to Bessel functions: for $\zeta\neq 0$, we have
\begin{equation}\label{bessel}
\mathcal X_\kappa(\zeta) = \zeta^{-\kappa/2}J_\kappa(\zeta^{1/2}),
\end{equation}
where $J_\kappa$ is the Bessel function of the first kind of order $\kappa$. Since $J_\kappa$ satisfies the Bessel equation, it follows that
\begin{equation}\label{eqf}
-\partial^2_r u^{\pm\kappa}(E|r) + \frac{\kappa^2-1/4}{r^2}u^{\pm\kappa}(E|r) = E\,u^{\pm\kappa}(E|r),\quad r\in\R_+,
\end{equation}
for every $E\neq 0$. By continuity, this is true for $E=0$ as well. Moreover, (\ref{eqf}) is satisfied for $u^\kappa_\vartheta(E)$ in place of $u^{\pm\kappa}(E)$ for every $-1<\kappa<1$, $E\in\R$, and $\vartheta\in\R$ (this obviously follows from~(\ref{wkappa}) for $0<|\kappa|<1$; for $\kappa=0$,
 we can express $u^0_\vartheta(E|r)$ in terms of the Bessel functions $J_0$ and $Y_0$ by means of the equality\footnote{See~Ref.~\onlinecite{Bateman}, Sec.~7.2.4, formula~(33).} $\pi Y_0(\zeta) = 2\left(\gamma+\ln(\zeta/2)\right) J_0(\zeta) - 2\mathcal Y(\zeta^2)$ and make use of the Bessel equation).
In view of~(\ref{Jphitheta}), we conclude that (\ref{eqforJ}) holds for every $s\in S$ and $E\in \R$. It follows immediately from~(\ref{Wtheta12}), (\ref{ukappa}), (\ref{wkappa}), (\ref{w(z)}), and~(\ref{Jphitheta}) that $\mathcal W^\phi_\theta(s,E)$ is locally square-integrable on $\R^3$ for every $s\in S$ and $E\in \R$. Thus, both conditions~(a) and~(b) are fulfilled for the map $\mathcal W^\phi_\theta$ defined by~(\ref{Wtheta12}) and~(\ref{Jphitheta}).

It should be noted that equation~(\ref{eqforJ}) and condition~(a) determine $\mathcal J^\phi_\theta(s,E)$ uniquely up to a numerical factor for $s\in S\setminus A^\phi$. At the same time, if $s\in A^\phi$, then there are two linearly independent solutions of~(\ref{eqforJ}) compatible with~(a) for every $E\in\R$. According to~(\ref{Jphitheta}), different choices of the function $\theta$ on $A^\phi$ allow us to pick different solutions of~(\ref{eqforJ}) and, consequently, lead to different eigenfunction expansions for operator~(\ref{diff_expr}).

We now proceed to define the measure $M^\phi_\theta$. The above definition of $\mathcal W^\phi_\theta$ involves the generalized eigenfunctions $u^\kappa(E)$ and $u^\kappa_\vartheta(E)$ of the one-dimensional Schr\"odinger operator
\begin{equation}\label{onedimdiff}
-\partial^2_r + \frac{\kappa^2-1/4}{r^2}.
\end{equation}
Similarly, $M^\phi_\theta$ will be constructed from one-dimensional measures $\mathcal V_\kappa$ and $\mathcal V_{\kappa,\vartheta}$ entering eigenfunction expansions associated with differential expression~(\ref{onedimdiff}) (see Propositions~\ref{leig1} and~\ref{leig2}).
For $\kappa\in\R$, the positive Borel measure $\mathcal V_\kappa$ on $\R$ is given by
\begin{equation}\label{measVkappa}
d\mathcal V_\kappa(E) = \frac{1}{2}\Theta(E) E^{|\kappa|}\, dE,
\end{equation}
where $\Theta$ is the Heaviside function, i.e., $\Theta(E)=1$ for $E\geq 0$ and $\Theta(E)=0$ for $E<0$.
Given $-1<\kappa<1$ and $\vartheta\in\R$, we define the positive Borel measure $\mathcal V_{\kappa,\vartheta}$ on $\R$ as follows. If $0<|\kappa|<1$, we set
\begin{equation}\label{measVkappatheta}
\mathcal V_{\kappa,\vartheta} =  \left\{
\begin{matrix}
\tilde {\mathcal V}_{\kappa,\vartheta},& \vartheta \in [-|\vartheta_\kappa|,|\vartheta_\kappa|]+\pi\Z,\\
\frac{\pi\sin\pi\kappa|E_{\kappa,\vartheta}|}{2\kappa\sin(\vartheta+\vartheta_\kappa)\sin(\vartheta-\vartheta_\kappa)}\delta_{E_{\kappa,\vartheta}}+\tilde{\mathcal V}_{\kappa,\vartheta},& \vartheta\in (|\vartheta_\kappa|,\pi-|\vartheta_\kappa|)+\pi\Z,
\end{matrix}
\right.
\end{equation}
where $\vartheta_\kappa$ is defined by~(\ref{varthetakappa}),
the positive Borel measure $\tilde{\mathcal V}_{\kappa,\vartheta}$ on $\R$ is given by
\begin{equation}\label{tildeVkappatheta}
d\tilde{\mathcal V}_{\kappa,\vartheta}(E) = \frac{1}{2}\frac{\Theta(E) \sin^2\pi\kappa}{E^{-\kappa}\sin^2(\vartheta+\vartheta_\kappa) - 2\cos\pi\kappa\sin(\vartheta+\vartheta_\kappa)\sin(\vartheta-\vartheta_\kappa) + E^{\kappa}\sin^2(\vartheta-\vartheta_\kappa)}\,dE
\end{equation}
and $\delta_{E_{\kappa,\vartheta}}$ is the Dirac measure at the point
\begin{equation}\label{Ekappavartheta}
E_{\kappa,\vartheta} = -\left(\frac{\sin(\vartheta+\vartheta_\kappa)}{\sin(\vartheta-\vartheta_\kappa)}\right)^{1/\kappa}.
\end{equation}
For $\kappa=0$, the measure $\mathcal V_{\kappa,\vartheta}$ is defined by taking the limit $\kappa\to 0$ in formulas~(\ref{measVkappatheta}), (\ref{tildeVkappatheta}), and~(\ref{Ekappavartheta}). This yields
\begin{equation}\label{measV0theta}
\mathcal V_{0,\vartheta} =  \left\{
\begin{matrix}
\tilde {\mathcal V}_{0,\vartheta},& \vartheta \in \pi\Z,\\
\frac{\pi^2|E_{0,\vartheta}|}{2\sin^2\vartheta}\delta_{E_{0,\vartheta}}+\tilde{\mathcal V}_{0,\vartheta},& \vartheta\notin \pi\Z,
\end{matrix}
\right.
\end{equation}
where
\begin{equation}\label{E0vartheta}
E_{0,\vartheta} = -e^{\pi\cot \vartheta}
\end{equation}
and the positive Borel measure $\tilde{\mathcal V}_{0,\vartheta}$ on $\R$ is given by
\begin{equation}\label{tildeV0theta}
d\tilde{\mathcal V}_{0,\vartheta}(E) = \frac{1}{2}\frac{\Theta(E)}{ (\cos\vartheta -\ln E\sin\vartheta/\pi)^2 + \sin^2\vartheta} dE.
\end{equation}

The next lemma shows, in particular, that the measure $\mathcal V_{\kappa,\vartheta}$ depends continuously on both $\kappa$ and $\vartheta$.

\begin{lemma}\label{l_borel}
Let $F$ be a bounded (continuous) Borel function on $\R$ with compact support. Then $(\kappa,\vartheta)\to \int F(E)\,d\mathcal V_{\kappa,\vartheta}(E)$ is a Borel (resp., continuous) function on $(-1,1)\times\R$ that is bounded on $[-\alpha,\alpha]\times \R$ for every $0\leq\alpha<1$.
\end{lemma}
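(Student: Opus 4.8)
The plan is to use the explicit decomposition of $\mathcal V_{\kappa,\vartheta}$ into its absolutely continuous part $\tilde{\mathcal V}_{\kappa,\vartheta}$, supported on $[0,\infty)$, and at most one atom located at the negative point $E_{\kappa,\vartheta}$, and to exploit throughout that $\mathrm{supp}\,F$ is compact. First I would observe that every ingredient in (\ref{measVkappatheta})--(\ref{tildeV0theta}) depends on $\vartheta$ only through the product $\sin(\vartheta+\vartheta_\kappa)\sin(\vartheta-\vartheta_\kappa)$ and the squares $\sin^2(\vartheta\pm\vartheta_\kappa)$, all of which are $\pi$-periodic; hence $\mathcal V_{\kappa,\vartheta}=\mathcal V_{\kappa,\vartheta+\pi}$, and it suffices to establish continuity and a bound on $[-\alpha,\alpha]\times I$ for one period interval $I$, the boundedness on $[-\alpha,\alpha]\times\R$ then following by periodicity. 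Fixing $R$ with $\mathrm{supp}\,F\subset[-R,R]$, only $E\in[0,R]$ contributes to the a.c. part and only $E\in[-R,0)$ to the atom.

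The Borel assertion is the soft part. Writing $a=\sin(\vartheta+\vartheta_\kappa)$ and $b=\sin(\vartheta-\vartheta_\kappa)$, the density $\rho_{\kappa,\vartheta}(E)=\tfrac{1}{2}\sin^2\pi\kappa/D_{\kappa,\vartheta}(E)$ with $D_{\kappa,\vartheta}(E)=a^2E^{-\kappa}-2ab\cos\pi\kappa+b^2E^\kappa$ is jointly Borel in $(\kappa,\vartheta,E)$: it is continuous off $\{\kappa=0\}$, and on $\{\kappa=0\}$ it coincides with the pointwise $\kappa\to0$ limit (\ref{tildeV0theta}), a limit of Borel functions. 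Likewise $E_{\kappa,\vartheta}$ and the atomic weight are continuous on the (Borel) region where the atom is present. Thus $(\kappa,\vartheta)\mapsto\int F\,d\mathcal V_{\kappa,\vartheta}$ is the sum of the $E$-integral of a jointly Borel, locally dominated integrand and the product of a Borel weight with $F\circ E_{\kappa,\vartheta}$, hence Borel by the standard measurability of parameter integrals.

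For the continuity I treat the two parts separately. The atomic part is continuous on the open atom region, where $E_{\kappa,\vartheta}$ and the weight are continuous. At the boundary $\vartheta\in\{\pm|\vartheta_\kappa|\}+\pi\Z$ one of $a,b$ vanishes, and a short computation with (\ref{Ekappavartheta}) shows that at one endpoint $E_{\kappa,\vartheta}\to-\infty$, so $F(E_{\kappa,\vartheta})=0$ by compact support, while at the other $E_{\kappa,\vartheta}\to0^-$ with the weight vanishing like a positive power (the relevant exponent being $1/|\kappa|-1>0$ since $|\kappa|<1$) and hence tending to $0$. In both cases the atomic contribution extends continuously by zero across the boundary into the atom-free region, and it matches the $\kappa\to0$ formula (\ref{measV0theta}) by construction. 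This yields continuity of the atomic part on all of $(-1,1)\times\R$.

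The a.c. part is handled by dominated convergence, and here lies the main obstacle: producing a locally uniform, $E$-integrable bound for $\rho_{\kappa,\vartheta}$. Completing the square gives $D_{\kappa,\vartheta}(E)\ge\sin^2\pi\kappa\,\max(a^2E^{-\kappa},b^2E^\kappa)$, whence $\rho_{\kappa,\vartheta}(E)\le\tfrac{1}{2}\min(E^{\kappa}/a^2,\,E^{-\kappa}/b^2)$; the sole $E$-singularity is the integrable $E^{-|\kappa|}$ at the origin (as $|\kappa|\le\alpha<1$), and away from the degenerate set the identity $a^2+b^2-2ab\cos\pi\kappa=\sin^2\pi\kappa$ keeps $\max(a^2,b^2)$ bounded below, furnishing a dominating function near any non-degenerate parameter. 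The hard case is the degenerate locus $\{\kappa=0\}\cap\{\vartheta\in\pi\Z\}$, where $a,b\to0$ and $\sin^2\pi\kappa\to0$ together and $\rho_{\kappa,\vartheta}$ grows a tall narrow spike; the resolution, exactly as for the atom, is that the spike sits at $E_*=(|a|/|b|)^{1/\kappa}\to\infty$ and thus leaves $\mathrm{supp}\,F$, so on $[0,R]$ the densities still converge dominatedly to the explicit $\kappa=0$ density and the limit is the value at the degenerate point. Finally, substituting $u=E^\kappa$ (for $\kappa>0$; $\kappa<0$ is symmetric) turns $\int_0^R\rho_{\kappa,\vartheta}\,dE$ into $\frac{\sin^2\pi\kappa}{2\kappa}\int_0^{R^\kappa}u^{1/\kappa}\,\bigl[\,b^2(u-\tfrac{a}{b}\cos\pi\kappa)^2+a^2\sin^2\pi\kappa\,\bigr]^{-1}\,du$, a Lorentzian-localized integral from which the uniform total-mass bound $\sup_{[-\alpha,\alpha]\times\R}\mathcal V_{\kappa,\vartheta}([-R,R])<\infty$ follows; together with $|\int F\,d\mathcal V_{\kappa,\vartheta}|\le\|F\|_\infty\,\mathcal V_{\kappa,\vartheta}([-R,R])$ this gives the stated boundedness for every bounded Borel $F$.
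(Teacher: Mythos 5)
The first thing to note is that the paper does not actually prove Lemma~\ref{l_borel}: its ``proof'' is a citation of Theorem~4 in Ref.~\onlinecite{Smirnov2015}. So your self-contained argument from the explicit formulas~(\ref{measVkappatheta})--(\ref{tildeV0theta}) is by construction a different route, and most of its ingredients check out. Writing $a=\sin(\vartheta+\vartheta_\kappa)$, $b=\sin(\vartheta-\vartheta_\kappa)$: the bound $D_{\kappa,\vartheta}(E)\ge\sin^2\pi\kappa\,\max(a^2E^{-\kappa},b^2E^\kappa)$ is correct; the identity $a^2+b^2-2ab\cos\pi\kappa=\sin^2\pi\kappa$ is correct (it is just $D_{\kappa,\vartheta}(1)$); the $\pi$-periodicity holds because $a$ and $b$ both flip sign under $\vartheta\to\vartheta+\pi$, so every ratio, product and square entering the formulas is invariant; the atom's boundary behavior is as you say (it escapes to $-\infty$ as $b\to0$, and tends to $0^-$ with weight $O(a^{1/|\kappa|-1})$ as $a\to0$); and near the degenerate locus $\kappa=0$, $\vartheta\in\pi\Z$ the spike of the a.c.\ density sits at $(a\cos\pi\kappa/b)^{1/\kappa}\ge e^{\pi/\vartheta+o(1)}$, uniformly outside $\mathrm{supp}\,F$, so your dominated-convergence argument for continuity goes through.

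The weak point is the final step. The uniform bound $\sup_{[-\alpha,\alpha]\times\R}\mathcal V_{\kappa,\vartheta}([-R,R])<\infty$ does not simply ``follow'' from the Lorentzian form of $\int_0^R\rho_{\kappa,\vartheta}(E)\,dE$, because the spike does not always leave $\mathrm{supp}\,F$: for $\vartheta\in(\pi/2,\pi-\vartheta_\kappa)$ it lies in $(0,1)$ for every small $\kappa>0$, and in the regime where $a$, $b$ and $\kappa$ tend to zero together the peak height $\sim(2ab)^{-1}$ blows up and is compensated only by the shrinking width $\sim E_*\tan(\pi\kappa)/\kappa$ with $E_*=(a\cos\pi\kappa/b)^{1/\kappa}$. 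Checking that height times width stays bounded amounts to an optimization of the type $\sup_{x>0}\,\kappa^{-2}x^{-1}(1+x)^{-1}\bigl(x/(1+x)\bigr)^{1/\kappa}=O(1)$, which is true but is genuine work, not a corollary of the substitution $u=E^\kappa$. Fortunately you do not need this hard estimate at all: it is implied by what you have already proved. Once continuity is established for continuous compactly supported $F$, choose a continuous $G$ with $\chi_{[-R,R]}\le G\le\chi_{[-R-1,R+1]}$; then $(\kappa,\vartheta)\mapsto\int G\,d\mathcal V_{\kappa,\vartheta}$ is continuous, hence bounded on the compact set $[-\alpha,\alpha]\times[0,\pi]$, hence on $[-\alpha,\alpha]\times\R$ by your own periodicity reduction, and $|\int F\,d\mathcal V_{\kappa,\vartheta}|\le\|F\|_\infty\int G\,d\mathcal V_{\kappa,\vartheta}$ settles the bounded Borel case. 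With that substitution for your last paragraph, the plan becomes a complete proof.
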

\begin{proof}
See Theorem~4 in Ref.~\onlinecite{Smirnov2015}.
\end{proof}

Given a Borel real function $\theta$ on $A^\phi$, let $\mu^\phi_{\theta}$ be the measure-valued map on $S$ such that
\begin{equation}\label{mutheta12}
\mu^\phi_{\theta}(s) = \left\{
\begin{matrix}
\mathcal V_{|m+\phi|},& s\in S\setminus A^\phi,\\
\mathcal V_{m+\phi,\theta(s)},& s\in A^\phi,
\end{matrix}
\right.
\end{equation}
for every $s=(m,p)\in S$.

The next statement follows immediately from Lemma~\ref{l_borel}.

\begin{lemma}\label{c_borel}
Let $\phi\in\R$ and $\theta$ be a Borel real function on $A^\phi$. For any compact set $K\subset \R$, $s\to \mu^\phi_{\theta}(s|K)$ is a Borel function on $S$ that is bounded on every compact subset of $S$.
\end{lemma}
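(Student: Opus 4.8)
The plan is to exploit the piecewise definition~(\ref{mutheta12}) of $\mu^\phi_\theta$, treating the two regions $A^\phi$ and $S\setminus A^\phi$ separately and invoking Lemma~\ref{l_borel} only on $A^\phi$. First I would record a structural observation about $A^\phi$: the condition $|m+\phi|<1$ in~(\ref{Aphi}) depends on $m$ alone and is met exactly by the finite set $M_\phi=\{m\in\Z:|m+\phi|<1\}$ (an open interval of length $2$ contains at most two integers), so $A^\phi=\pi^{-1}(M_\phi)$, where $\pi\colon S\to\Z$ is the continuous projection onto the first factor. Hence $A^\phi$ is clopen, and in particular Borel, in $S$. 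It therefore suffices to prove the claim on each of the two complementary clopen pieces: a function that is Borel on each of two complementary Borel sets is Borel, and since any compact $C\subset S$ meets each clopen piece in a compact set, a bound on $C\cap A^\phi$ together with one on $C\setminus A^\phi$ yields a bound on $C$.

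On $S\setminus A^\phi$ the claim is elementary and does not use Lemma~\ref{l_borel}. By~(\ref{mutheta12}) and~(\ref{measVkappa}), $\mu^\phi_\theta(s|K)=\mathcal V_{|m+\phi|}(K)=\tfrac12\int_{K\cap[0,\infty)}E^{|m+\phi|}\,dE$, a finite number depending on $s=(m,p)$ through $m$ only. A function on $S$ that factors through the continuous projection $\pi$ is automatically Borel, and since any compact subset of $S=\Z\times\R$ meets only finitely many of the lines $\{m\}\times\R$, only finitely many values $\mathcal V_{|m+\phi|}(K)$ occur on it, so the restriction is bounded there.

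The substance lies in the region $A^\phi$, where Lemma~\ref{l_borel} does the work, applied to the indicator $F=\mathbf 1_K$. Since $K$ is compact, $\mathbf 1_K$ is a bounded Borel function with compact support, so the lemma gives that $(\kappa,\vartheta)\mapsto\int\mathbf 1_K\,d\mathcal V_{\kappa,\vartheta}=\mathcal V_{\kappa,\vartheta}(K)$ is a Borel function on $(-1,1)\times\R$ that is bounded on $[-\alpha,\alpha]\times\R$ for every $0\le\alpha<1$. On $A^\phi$, formula~(\ref{mutheta12}) reads $\mu^\phi_\theta(s|K)=\mathcal V_{m+\phi,\theta(s)}(K)$, i.e.\ this quantity is the composition of the above Borel function with the map $g\colon A^\phi\to(-1,1)\times\R$, $g(m,p)=(m+\phi,\theta(m,p))$. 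The map $g$ is Borel, its first coordinate being continuous and its second Borel by hypothesis on $\theta$, and $m+\phi\in(-1,1)$ for $(m,p)\in A^\phi$ guarantees that $g$ indeed lands in $(-1,1)\times\R$. As a composition of Borel maps is Borel, $s\mapsto\mu^\phi_\theta(s|K)$ is Borel on $A^\phi$. For boundedness, $\kappa=m+\phi$ ranges over the finite set $\{m+\phi:m\in M_\phi\}\subset(-1,1)$, so with $\alpha=\max_{m\in M_\phi}|m+\phi|<1$ the bound of Lemma~\ref{l_borel} on $[-\alpha,\alpha]\times\R$ bounds $\mu^\phi_\theta(\cdot|K)$ on all of $A^\phi$, a fortiori on its compact subsets.

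I expect no serious obstacle, since the statement is genuinely a corollary; the only points needing care are administrative. One must check that $\mathbf 1_K$ qualifies under the (non-continuous) Borel hypothesis of Lemma~\ref{l_borel}, and verify that the composition $s\mapsto(m+\phi,\theta(s))\mapsto\mathcal V_{m+\phi,\theta(s)}(K)$ remains Borel. The latter is the mildly delicate step, as ``Borel after Borel'' can fail for weaker measurability notions; here it is valid because $g$ takes values in the Borel space $(-1,1)\times\R$ on which the conclusion of Lemma~\ref{l_borel} lives, so the preimage identity $(h\circ g)^{-1}(B)=g^{-1}(h^{-1}(B))$ applies verbatim.
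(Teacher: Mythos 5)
Your proof is correct and follows the same route as the paper, which simply observes that the lemma ``follows immediately from Lemma~\ref{l_borel}'': you apply that lemma with $F=\mathbf 1_K$ on $A^\phi$ and handle $S\setminus A^\phi$ directly via~(\ref{measVkappa}), which is exactly the intended reading. The details you supply (clopenness of $A^\phi$, Borel composition with $g(m,p)=(m+\phi,\theta(m,p))$, finiteness of the set $\{m:|m+\phi|<1\}$ giving the uniform bound) are precisely the administrative steps the paper leaves implicit.
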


Let $\varrho$ be the counting measure on $\Z$. We define the positive Borel measure $\nu_0$ on $S=\Z\times \R$ by setting $\nu_0=\varrho\times\lambda$, where $\lambda$ is the
Lebesgue measure on $\R$. For any $\nu_0$-integrable $f$, the function $p\to f(m,p)$ is integrable for every $m\in\Z$ and we have
\[
\int_{S} f(m,p)\,d\nu_0(m,p) = \sum_{m\in\Z} \int_{-\infty}^\infty f(m,p)\,dp.
\]

\begin{proposition}\label{prop1}
Let $\phi\in\R$ and $\theta$ be a Borel real function on $A^\phi$. Then there is a unique positive Borel measure $M$ on $S\times \R$ such that
\begin{equation}\label{M(K'timesK)}
M(K'\times K) = \int_{K'} \mu^\phi_\theta(s|K)\,d\nu_0(s)
\end{equation}
for every compact sets $K'\subset S$ and $K\subset \R$. If $f$ is an $M$-integrable complex function, then the function $E\to f(s,E)$ is $\mu^\phi_\theta(s)$-integrable for $\nu_0$-a.e. $s$, the function $s\to \int f(s,E)\,d\mu^\phi_\theta(s|E)$ is $\nu_0$-integrable, and
\begin{equation}\label{intf(s,E)}
\int f(s,E)\, dM(s,E) = \int d\nu_0(s)\int f(s,E)\,d\mu^\phi_\theta(s|E).
\end{equation}
\end{proposition}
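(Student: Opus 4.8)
The plan is to recognize the map $N(s,B)=\mu^\phi_\theta(s|B)$, defined for $s\in S$ and Borel $B\subset\R$, as a $\sigma$-finite kernel and to build $M$ by integrating this kernel against $\nu_0$, obtaining every assertion of the proposition as a standard consequence. The first and principal step is to upgrade the measurability provided by Lemma~\ref{c_borel} from compact sets to arbitrary Borel sets, i.e.\ to show that for every Borel $B\subset\R$ the function $s\mapsto\mu^\phi_\theta(s|B)$ is $\nu_0$-measurable. Since each fiber measure $\mu^\phi_\theta(s)$ is finite on compacts --- indeed, by Lemma~\ref{c_borel}, $s\mapsto\mu^\phi_\theta(s|K)$ is bounded on compact subsets of $S$ for each compact $K\subset\R$ --- I would fix $n$ and run a Dynkin-system argument on the Borel subsets of $[-n,n]$: the class of $B\subset[-n,n]$ for which $s\mapsto\mu^\phi_\theta(s|B)$ is measurable contains the compact (equivalently, closed) subsets of $[-n,n]$, which form a $\pi$-system generating the Borel subsets of $[-n,n]$, and, thanks to the finiteness of $\mu^\phi_\theta(s|[-n,n])$, is closed under proper differences and increasing countable unions. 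Letting $n\to\infty$ gives measurability for every Borel $B\subset\R$.

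With the kernel measurability in hand, I would next check that for every Borel $A\subset S\times\R$ the section $A_s=\{E\in\R:(s,E)\in A\}$ is Borel and that $s\mapsto\mu^\phi_\theta(s|A_s)=\int\mathbf 1_A(s,E)\,d\mu^\phi_\theta(s|E)$ is $\nu_0$-measurable; this is again a monotone-class argument, starting from measurable rectangles $B'\times B$, for which the map is $\mathbf 1_{B'}(s)\,\mu^\phi_\theta(s|B)$, and extending over the product $\sigma$-algebra, which coincides with the Borel $\sigma$-algebra of $S\times\R$. One then defines
\[
M(A)=\int_S \mu^\phi_\theta(s|A_s)\,d\nu_0(s).
\]
Countable additivity of $M$ follows from the monotone convergence theorem applied twice (to the inner $\mu^\phi_\theta(s)$-integral and to the outer $\nu_0$-integral), and $M(\emptyset)=0$ is clear, so $M$ is a positive Borel measure. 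For compact $K'\subset S$ and $K\subset\R$ one has $(K'\times K)_s=K$ for $s\in K'$ and $\emptyset$ otherwise, which yields (\ref{M(K'timesK)}); by Lemma~\ref{c_borel} and $\nu_0(K')<\infty$ this quantity is finite, and exhausting $S\times\R$ by compact rectangles shows $M$ is $\sigma$-finite. Uniqueness I would then obtain from the $\pi$-system uniqueness theorem: the compact rectangles $K'\times K$ form a $\pi$-system (their pairwise intersections are again compact rectangles) generating the Borel $\sigma$-algebra of $S\times\R$, they determine $M$ through (\ref{M(K'timesK)}), and $S\times\R$ admits an exhausting sequence of such rectangles of finite $M$-measure, so any two Borel measures satisfying (\ref{M(K'timesK)}) coincide.

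Finally, formula (\ref{intf(s,E)}) is the relation
\[
\int f\,dM=\int d\nu_0(s)\int f(s,E)\,d\mu^\phi_\theta(s|E),
\]
which holds for $f=\mathbf 1_A$ by the definition of $M$, extends to nonnegative simple $f$ by linearity and to nonnegative Borel $f$ by monotone convergence, and then to $M$-integrable $f$ by decomposing into real, imaginary, positive, and negative parts. Applying the nonnegative version to $|f|$ gives $\int d\nu_0(s)\int|f(s,E)|\,d\mu^\phi_\theta(s|E)=\int|f|\,dM<\infty$, which forces $E\mapsto f(s,E)$ to be $\mu^\phi_\theta(s)$-integrable for $\nu_0$-a.e.\ $s$ and $s\mapsto\int f(s,E)\,d\mu^\phi_\theta(s|E)$ to be $\nu_0$-integrable, completing the argument. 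The only genuinely delicate point in this scheme is the first step, the passage from compact to arbitrary Borel sets in the measurability of $s\mapsto\mu^\phi_\theta(s|B)$: here the local finiteness of the fiber measures furnished by Lemma~\ref{c_borel} is exactly what is needed for the Dynkin class to be closed under complementation, and it is this uniform-on-compacts boundedness, rather than mere pointwise finiteness, that makes the subsequent construction of a $\sigma$-finite $M$ go through.
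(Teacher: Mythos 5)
Your proposal is correct, and at bottom it is the same construction the paper uses: both define $M$ by integrating the kernel $s\mapsto\mu^\phi_\theta(s|\cdot)$ against $\nu_0$, both verify the rectangle formula~(\ref{M(K'timesK)}) directly, both get uniqueness from agreement on the compact rectangles (a $\pi$-system with an exhausting sequence of finite measure), and both obtain~(\ref{intf(s,E)}) by the indicator--simple--nonnegative--general extension. The difference is the scaffolding. The paper's proof (Lemma~\ref{lmutheta12}) is routed through the general machinery of direct integrals of measures developed in Sec.~\ref{s9}: Corollary~\ref{cormm} upgrades measurability from compacts to the $\delta$-ring of a.e.-finite-measure Borel sets, Proposition~\ref{p_prod} gives existence and uniqueness, and Proposition~\ref{p_fubini} gives the Fubini identity; all monotone-class reasoning there is done with the paper's $\alpha$-classes (Lemma~\ref{l_alpha}) and the completion lemma (Lemma~\ref{compl}), because in the paper's framework (Appendix~\ref{app0}) measures are finite-valued, complete, and defined on $\delta$-rings. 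You instead work in the classical framework: $[0,\infty]$-valued Borel measures on $\sigma$-algebras, Dynkin's $\pi$-$\lambda$ theorem (with the correct localization to $[-n,n]$, which is indeed exactly where the local finiteness from Lemma~\ref{c_borel} is needed to close the class under proper differences), and the standard $\pi$-system uniqueness theorem. The paper's route buys reusability --- the same apparatus is what later identifies $L_2(S\times\R,M^\phi_\theta)$ with the direct integral $\int^\oplus L_2(\R,\mu^\phi_\theta(s))\,d\nu_0(s)$ in Proposition~\ref{puo}, which the eigenfunction-expansion proof depends on --- and consistency with its finite-valued-measure conventions; your route buys brevity and off-the-shelf theorems. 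The only supplement your argument needs to prove the statement \emph{as the paper reads it} is the translation between frameworks: there, ``positive Borel measure'' and ``$M$-integrable'' refer to complete measures, so your Borel measure must be completed (uniqueness transfers), and for an integrand $f$ that is merely completion-measurable you should note that $f$ agrees $M$-a.e.\ with a Borel function and that sections of $M$-null sets are $\mu^\phi_\theta(s)$-null for $\nu_0$-a.e.\ $s$ --- which follows at once from your own formula $M(N)=\int\mu^\phi_\theta(s|N_s)\,d\nu_0(s)$ and is the content of the paper's Lemma~\ref{cor_prod}. This is a routine adjustment, not a gap.
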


Note that the right-hand side of~(\ref{M(K'timesK)}) is well-defined in view of Lemma~\ref{c_borel}. Given $\phi\in\R$ and a Borel real function $\theta$ on $A^\phi$, we let $M^\phi_\theta$ denote the measure $M$ satisfying the conditions of Proposition~\ref{prop1}. We thus have explicitly defined all entities entering Proposition~\ref{propW}.

It is worth mentioning that the structure of the set $A^\phi$ depends on whether $\phi\in\Z$ or not. If $\phi\in\Z$, then we have $A^\phi=\{-\phi\}\times\R$ and defining a Borel function on $A^\phi$ amounts to defining a single Borel function on $\R$. On the other hand, if $\phi\notin\Z$, then $A^\phi=\{m_\phi,m_\phi+1\}\times \R$, where $m_\phi\in \Z$ is such that $-\phi-1<m_\phi<-\phi$, and every Borel function on $A^\phi$ is determined by two Borel functions on $\R$. Using functions on $A^\phi$ instead of functions on $\R$ allows us to treat both cases uniformly.

Given $\phi\in\R$ and a Borel real function $\theta$ on $A^\phi$, the operator $W$ satisfying the conditions of Proposition~\ref{propW} will be denoted by $W^\phi_\theta$. Let the function $\mathfrak f$ on $S\times\R$ be defined by the relation
\begin{equation}\label{mathfrakf}
\mathfrak f(m,p\,;E) = p^2+E.
\end{equation}
We define the self-adjoint operator $H^\phi_{\theta}$ in $L_2(\R^3)$ by setting
\begin{equation}\label{Htheta12}
H^\phi_{\theta} = (W^\phi_{\theta})^{-1} \mathcal T^M_{\mathfrak f} W^\phi_{\theta},
\end{equation}
where $M=M^\phi_{\theta}$ and $\mathcal T^M_{\mathfrak f}$ is the operator of multiplication by $\mathfrak f$ in $L_2(S\times\R,M)$ (see Sec.~\ref{s_l2}).

Let $\Phi\in C_0^\infty(\mathscr O)$, $\Psi=[\Phi]_\Lambda$, and $h_\Psi(s,E)$ be the right-hand side of~(\ref{WWtheta12}). In view of~(\ref{checkH}) and  the first equality in condition~(b), integrating by parts yields
\begin{equation}\nonumber
h_{\check H^\phi\Psi}(s,E) = \int \overline{\mathcal W^\phi_{\theta}(s,E|x)}(\mathscr H^\phi\Phi)(x)\,dx=\int \overline{(\mathscr H^\phi\mathcal W^\phi_{\theta}(s,E))(x)}\Psi(x)\,dx = \mathfrak f(s,E)h_\Psi(s,E)
\end{equation}
for every $s\in S$ and $E\in\R$. It follows that $W^\phi_\theta \check H^\phi\Psi = \mathcal T^M_{\mathfrak f} W^\phi_\theta\Psi$ for every $\Psi\in D_{\check H^\phi}$, i.e., $H^\phi_\theta$ is a self-adjoint extension of $\check H^\phi$ and, hence, of $H^\phi$.

Further, given $\alpha,\beta\in \R$, let $G_{\alpha\beta}\in \mathcal G$ be defined by the relation
\begin{equation}\label{Galphabeta}
G_{\alpha\beta}x = (x_1\cos\alpha-x_2\sin\alpha,x_1\sin\alpha+x_2\cos\alpha,x_3+\beta),
\end{equation}
where $x=(x_1,x_2,x_3)\in\R^3$. It follows from~(\ref{Wtheta12}) that
\[
\mathcal W^\phi_\theta(s,E|G_{\alpha\beta}x) = e^{im\alpha+ip\beta}\mathcal W^\phi_\theta(s,E|x),\quad x\in \mathscr O,
\]
for every $s=(m,p)\in S$ and $E\in\R$. Let $\Psi\in L^c_2(\R^3)$. Setting $\tilde \Psi = T_{G_{\alpha\beta}}\Psi$, we obtain
\[
h_{\tilde\Psi}(s,E) = \int \overline{\mathcal W^\phi_{\theta}(s,E|x)}\Psi(G_{\alpha\beta}^{-1} x)\,dx = g_{\alpha\beta}(s,E)h_\Psi(s,E)
\]
for every $s\in S$ and $E\in\R$, where $g_{\alpha\beta}$ is the function $(m,p;E)\to e^{-im\alpha-ip\beta}$ on $S\times \R$. As $L^c_2(\R^3)$ is dense in $L_2(\R^3)$, this implies that
\[
T_{G_{\alpha\beta}} = (W^\phi_{\theta})^{-1} \mathcal T^M_{g_{\alpha\beta}} W^\phi_{\theta}.
\]
Since every element of $\mathcal G$ is equal to $G_{\alpha\beta}$ for some $\alpha,\beta\in\R$ and $\mathcal T^M_{\mathfrak f}$ commutes with $\mathcal T^M_{g_{\alpha\beta}}$, it follows from~(\ref{Htheta12}) that $H^\phi_\theta$ commutes with $T_G$ for all $G\in \mathcal G$.

Let $\theta$ and $\tilde \theta$ be Borel real functions on $A^\phi$ such that $\theta(s)-\tilde\theta(s)\in\pi\Z$ for $\nu_0$-a.e. $s\in A^\phi$. By~(\ref{measVkappatheta})--(\ref{tildeV0theta}), we have $\mathcal V_{\kappa,\vartheta+\pi n} = \mathcal V_{\kappa,\vartheta}$ for every $-1<\kappa<1$, $\vartheta\in\R$, and $n\in\Z$. In view of~(\ref{mutheta12}), it follows that $\mu^\phi_\theta(s)=\mu^\phi_{\tilde\theta}(s)$ for $\nu_0$-a.e. $s$. The uniqueness statement of Proposition~\ref{prop1} therefore implies that $M^\phi_\theta=M^\phi_{\tilde\theta}$. By~(\ref{wkappa}) and~(\ref{w(z)}), we have $u^\kappa_{\vartheta+\pi n}(z)=e^{i\pi n}u^\kappa_\vartheta(z)$ for every $-1<\kappa<1$, $\vartheta\in\R$, $z\in\C$, and $n\in\Z$. It follows from~(\ref{Wtheta12}) and~(\ref{Jphitheta}) that
\begin{equation}\label{tildethetatheta}
\mathcal W^\phi_{\tilde\theta}(s,E|x) = f_{\theta\tilde\theta}(s,E)\mathcal W^\phi_{\theta}(s,E|x)
\end{equation}
for every $s\in S\setminus N$, $E\in\R$, and $x\in\mathscr O$, where $N=\{s\in A^\phi: \theta(s)-\tilde \theta(s)\notin \pi\Z\}$ is a $\nu_0$-null set and the Borel function $f_{\theta\tilde\theta}$ on $S\times \R$ is given by
\[
f_{\theta\tilde\theta}(s,E) = \left\{
\begin{matrix}
1,& s\in S\setminus A^\phi,\\
e^{i(\tilde\theta(s)-\theta(s))},& s\in A^\phi.
\end{matrix}
\right.
\]
It easily follows from Proposition~\ref{prop1} that $N\times \R$ is an $M$-null set, where $M=M^\phi_\theta=M^\phi_{\tilde\theta}$, and, therefore, $f_{\theta\tilde\theta}(s,E) = \pm 1$ for $M$-a.e. $(s,E)$. Proposition~\ref{propW} and formula~(\ref{tildethetatheta}) hence imply that $W^\phi_{\tilde\theta} = \mathcal T^M_{f_{\theta\tilde\theta}}W^\phi_\theta$. In view of~(\ref{Htheta12}), we conclude that $H^\phi_{\theta}=H^\phi_{\tilde\theta}$.

Thus, assuming Propositions~\ref{prop1} and~\ref{propW}, we have proved the next statement in one direction.

\begin{theorem}\label{main_theorem}
Let $\phi\in\R$. For every Borel real function $\theta$ on $A^\phi$, the operator $H^\phi_{\theta}$ is a self-adjoint extension of $H^\phi$ commuting with $T_G$ for any $G\in \mathcal G$. Conversely, every self-adjoint extension of $H^\phi$ commuting with $T_G$ for any $G\in \mathcal G$ is equal to $H^\phi_{\theta}$ for some Borel real function $\theta$ on $A^\phi$. Given Borel real functions $\theta$ and $\tilde \theta$ on $A^\phi$, we have $H^\phi_{\theta} = H^\phi_{\tilde \theta}$ if and only if $\theta(s)-\tilde\theta(s)\in\pi\Z$ for $\nu_0$-a.e. $s\in A^\phi$.
\end{theorem}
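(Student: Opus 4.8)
The forward implication and the ``if'' half of the uniqueness statement were already obtained in the discussion preceding the theorem, so the plan is to prove the converse---that every symmetry preserving self-adjoint extension of $H^\phi$ coincides with some $H^\phi_\theta$---together with the ``only if'' half of the uniqueness statement. The backbone of the argument is the general reduction scheme. Applying steps (I) and (II) to $H=H^\phi$ and $\mathfrak X=\{T_G:G\in\mathcal G\}$ (Secs.~\ref{s5} and~\ref{s8}) produces an exact diagonalization $(\nu_0,\mathfrak S,V)$ over $S=\Z\times\R$ whose fibers carry the radial coordinate and whose partial operators $\mathcal H(s)$ are the one-dimensional Schr\"odinger operators built from differential expression~\eqref{onedimdiff} with $\kappa=|m+\phi|$, shifted by the constant $p^2$. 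The structural result of Sec.~\ref{s6} then identifies the symmetry preserving self-adjoint extensions of $H^\phi$ with the operators $V^{-1}\int^\oplus \tilde{\mathcal H}(s)\,d\nu_0(s)\,V$, where $\tilde{\mathcal H}$ ranges over $\nu_0$-measurable families with each $\tilde{\mathcal H}(s)$ self-adjoint extending $\mathcal H(s)$; moreover, since the partial operators are recovered from the extension as its restrictions to the fibers, this is a genuine bijection onto such families. Both remaining directions are read off from this bijection.

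The next step is to classify the fiber extensions by means of the one-dimensional theory of Sec.~\ref{s_one}. For $s\notin A^\phi$ we have $|m+\phi|\geq 1$, the radial operator is in the limit point case at the origin and hence essentially self-adjoint, so $\tilde{\mathcal H}(s)$ is uniquely determined and its eigenfunction expansion is the one carried by $u^{|m+\phi|}(E)$ with spectral measure $\mathcal V_{|m+\phi|}$. For $s\in A^\phi$ we have $|m+\phi|<1$, the deficiency indices equal $(1,1)$, and by Proposition~\ref{leig2} the self-adjoint extensions are indexed bijectively by a boundary condition angle ranging over $\R$ modulo $\pi\Z$, the extension with parameter $\theta(s)$ having generalized eigenfunction $u^{m+\phi}_{\theta(s)}(E)$ and spectral measure $\mathcal V_{m+\phi,\theta(s)}$. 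Thus a choice of $\nu_0$-measurable family $\tilde{\mathcal H}$ amounts to a choice of angle at each point of $A^\phi$, and I would use the measurability results of Sec.~\ref{s_meas} to show that the family is $\nu_0$-measurable exactly when $s\mapsto\theta(s)$ is Borel on $A^\phi$.

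Given such a $\theta$, the fiber expansions are assembled into a single unitary through the direct-integral-of-measures construction of Sec.~\ref{s9}: integrating the fiber measures $\mu^\phi_\theta(s)$ against $\nu_0$ reproduces the measure $M^\phi_\theta$ of Proposition~\ref{prop1}, the fiber eigenfunctions glue into the kernel $\mathcal W^\phi_\theta$, and $W^\phi_\theta$ (Proposition~\ref{propW}) carries $V^{-1}\int^\oplus \tilde{\mathcal H}(s)\,d\nu_0(s)\,V$ into multiplication by $\mathfrak f(s,E)=p^2+E$ on $L_2(S\times\R,M^\phi_\theta)$. By~\eqref{Htheta12} this is precisely $H^\phi_\theta$, which proves the converse. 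For the ``only if'' direction, if $H^\phi_\theta=H^\phi_{\tilde\theta}$ then uniqueness of the partial operators forces the fiber extensions to agree for $\nu_0$-a.e.\ $s$; on $A^\phi$ the injectivity of the parametrization in Proposition~\ref{leig2} then gives $\theta(s)-\tilde\theta(s)\in\pi\Z$ for $\nu_0$-a.e.\ $s\in A^\phi$.

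I expect the fiberwise step to be the main obstacle: one must verify that Proposition~\ref{leig2} delivers a bijection between $\R/\pi\Z$ and the entire set of self-adjoint extensions of the limit-circle radial operator, with each extension's spectral resolution matching the pair $(u^{m+\phi}_\vartheta,\mathcal V_{m+\phi,\vartheta})$---in particular capturing the extensions that carry a single negative eigenvalue, which is exactly the role of the Dirac mass $\delta_{E_{\kappa,\vartheta}}$ in~\eqref{measVkappatheta}. Closely linked is the measure-theoretic bookkeeping of Sec.~\ref{s9}: one has to confirm that the pointwise spectral measures integrate to a well-defined Borel measure $M^\phi_\theta$ (Proposition~\ref{prop1}) and that the fiber expansions glue into a bona fide unitary (Proposition~\ref{propW}), together with the Borel measurability of the boundary angle. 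By comparison, the essential self-adjointness off $A^\phi$ and the uniqueness of the partial operators are routine.
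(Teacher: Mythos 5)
Your proposal follows essentially the same route as the paper: the exact diagonalization $(\nu_0,\mathcal I_{\mathfrak h,\nu_0},V)$ and fiber reduction (Lemmas~\ref{l_unitary}--\ref{l100}), the classification of measurable families of fiber extensions by Borel angle functions (Lemma~\ref{l_famsaext}, yielding Proposition~\ref{t2}), and the identification of the resulting direct integral $R^\phi_\theta$ with $H^\phi_\theta$ via the direct-integral-of-measures machinery and the factorization $W^\phi_\theta=(Q^\phi_\theta)^{-1}U^\phi_\theta V$ (Lemma~\ref{lemmaW}, Proposition~\ref{p_final}). You also correctly locate the genuine technical burdens (the fiberwise bijection including the bound state from the Dirac mass, the measurability of the angle, and Propositions~\ref{prop1} and~\ref{propW}), so the plan is sound as stated.
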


\section{Commutation of operators and von~Neumann algebras}
\label{s3}

In this section, we give some background material for the treatment of diagonalizations in Sec.~\ref{s5}. It should be noted that, as far as diagonalizations are concerned, the conditions imposed on $\mathfrak X$ in Introduction are excessively restrictive. In fact, it suffices to assume (and we do so in this section and Sec.~\ref{s5}) that $\mathfrak X$ is an arbitrary set of closed densely defined operators rather than an involutive subset of $L(\mathfrak H)$. In Sec.~\ref{s6}, however, where the reduction by symmetries is treated, the assumptions on $\mathfrak X$ are the same as in Introduction.

\begin{lemma} \label{l1}
Let $\mathfrak H$ be a Hilbert space, $T\in L(\mathfrak H)$, and $R$ be an operator in $\mathfrak H$ commuting with $T$. If $R$ is densely defined, then $R^*$ commutes with $T^*$. If $R$ is
closable, then the closure $\bar R$ of $R$ commutes with $T$.
\end{lemma}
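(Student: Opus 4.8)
The plan is to prove the two assertions separately; each follows directly from the commutation hypothesis (Definition~\ref{d0000}) combined with a standard property of, respectively, the adjoint and the closure. Throughout I will use that $R$ commuting with $T$ means precisely that $T\Psi\in D_R$ and $RT\Psi = TR\Psi$ for every $\Psi\in D_R$, together with the fact that $T,T^*\in L(\mathfrak H)$ are bounded and everywhere defined.

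For the first assertion, suppose $R$ is densely defined, so that $R^*$ exists, and fix $\Phi\in D_{R^*}$. I would verify straight from the definition of the adjoint that $T^*\Phi\in D_{R^*}$ with $R^*T^*\Phi = T^*R^*\Phi$. The key computation is that, for any $\Psi\in D_R$,
\[
\langle R\Psi, T^*\Phi\rangle = \langle TR\Psi, \Phi\rangle = \langle RT\Psi, \Phi\rangle = \langle T\Psi, R^*\Phi\rangle = \langle \Psi, T^*R^*\Phi\rangle,
\]
where the second equality uses the commutation relation $TR\Psi = RT\Psi$ (legitimate because $T\Psi\in D_R$) and the third uses $T\Psi\in D_R$ together with $\Phi\in D_{R^*}$. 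Since this holds for every $\Psi$ in the dense domain $D_R$, the antilinear functional $\Psi\mapsto\langle R\Psi, T^*\Phi\rangle$ is represented by the vector $T^*R^*\Phi$; by the defining property of the adjoint this is exactly the statement that $T^*\Phi\in D_{R^*}$ and $R^*(T^*\Phi)=T^*R^*\Phi$. Hence $R^*$ commutes with $T^*$.

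For the second assertion, suppose $R$ is closable and fix $\Psi\in D_{\bar R}$. By the definition of the closure as the closure of the graph, I would choose a sequence $\Psi_n\in D_R$ with $\Psi_n\to\Psi$ and $R\Psi_n\to \bar R\Psi$. Applying the commutation hypothesis to each $\Psi_n$ yields $T\Psi_n\in D_R$ and $RT\Psi_n = TR\Psi_n$. Because $T$ is bounded, it follows that $T\Psi_n\to T\Psi$ and $RT\Psi_n = TR\Psi_n\to T\bar R\Psi$. Thus $(T\Psi_n, RT\Psi_n)$ is a sequence in the graph of $R$ converging to $(T\Psi, T\bar R\Psi)$, so $T\Psi\in D_{\bar R}$ and $\bar R(T\Psi)=T\bar R\Psi$, which is precisely the assertion that $\bar R$ commutes with $T$.

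Neither step presents a genuine obstacle: the entire content is in matching the definition of commutation against the definitions of the adjoint and the closure. If anything, the only point requiring care is the chain of equalities in the first part, where one must first record $T\Psi\in D_R$ before invoking both the commutation identity and the defining relation for $R^*$; the second part is then a routine limiting argument that works only because $T$ is bounded.
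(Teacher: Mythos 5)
Your proposal is correct and follows essentially the same route as the paper's own proof: the first part is the identical adjoint computation $\langle R\Psi, T^*\Phi\rangle = \langle RT\Psi,\Phi\rangle = \langle \Psi, T^*R^*\Phi\rangle$ over the dense domain $D_R$, and the second part is the same graph-limit argument using boundedness of $T$. No gaps; only the notation differs from the paper.
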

\begin{proof}
Suppose $R$ is densely defined. Let $\Psi\in D_{R^*}$ and $\Phi=R^*\Psi$. Then we have $\langle
R\Psi',\Psi\rangle=\langle \Psi',\Phi\rangle$ for any $\Psi'\in D_R$. Hence, we obtain
\[
\langle R\Psi',T^*\Psi\rangle=\langle RT\Psi',\Psi \rangle=\langle T\Psi',\Phi \rangle = \langle \Psi',T^*\Phi \rangle,\quad \Psi'\in D_R.
\]
This means that $T^*\Psi\in D_{R^*}$ and $R^* T^* \Psi=T^*R^*\Psi$, i.e., $R^*$ commutes with $T^*$.

Suppose now that $R$ is closable. Let $\Psi\in D_{\bar R}$. Then there is a sequence $\Psi_n\in D_R$ such that $\Psi_n\to \Psi$ and $R\Psi_n\to \bar R \Psi$ in $\mathfrak H$. Since $R$ commutes with
$T$, we have $T\Psi_n\in D_R$ for all $n$. The continuity of $T$ implies that $T\Psi_n\to T\Psi$ and $RT\Psi_n = TR\Psi_n\to T\bar R \Psi$. This means that $T\Psi\in
D_{\bar R}$ and $\bar R T\Psi= T\bar R\Psi$.
\end{proof}

Given a set $\mathfrak X$ of closed densely defined operators in a Hilbert space $\mathfrak H$, let $\mathfrak X'$ denote its commutant, i.e., the subalgebra of $L(\mathfrak H)$ consisting of all operators commuting with every element of $\mathfrak X$. Let $\mathfrak X^*$ be the set consisting of the adjoints of the elements of $\mathfrak X$. By Lemma~\ref{l1}, we have
\begin{equation}\label{adjcomm}
(\mathfrak X')^*=(\mathfrak X^*)'.
\end{equation}
The set $\mathfrak X$ is involutive if and only if $\mathfrak X^*=\mathfrak X$.

Recall~\cite{Dixmier} that a subalgebra $\mathcal M$ of $L(\mathfrak H)$ is
called a von~Neumann algebra if it is involutive and coincides with its bicommutant~$\mathcal M''$. By the well-known von~Neumann's bicommutant theorem (see,
e.~g., Ref.~\onlinecite{Dixmier}, Sec.~I.3.4, Corollaire~2), an involutive subalgebra $\mathcal M$ of $L(\mathfrak H)$ is a von~Neumann algebra if and only if
it contains the identity operator and is closed in the strong operator topology. It follows from~(\ref{adjcomm}) that $\mathfrak X'$ is an involutive subalgebra of $L(\mathfrak H)$ for any involutive set $\mathfrak X$ of closed densely defined operators in $\mathfrak H$. Moreover, it is easy to show (see~Lemma~1 in~Ref.~\onlinecite{JMAA2013}) that $\mathfrak X'$ is always strongly closed and, therefore, is a von~Neumann algebra for involutive $\mathfrak X$ by the bicommutant theorem.

A closed densely defined operator $T$ in $\mathfrak H$ is called affiliated with a von~Neumann algebra $\mathcal M$ if $T$ commutes with every element of $\mathcal M'$.
If $\mathfrak X$ is a set of closed densely defined operators in $\mathfrak H$, then every element of $\mathfrak X$ is obviously affiliated with the algebra $\mathcal A(\mathfrak X)=(\mathfrak X\cup\mathfrak X^*)''$. As shown by the next lemma, $\mathcal A(\mathfrak X)$ is actually the smallest von~Neumann algebra with this property.

\begin{lemma}\label{l1a}
Let $\mathfrak X$ be a set of closed densely defined operators in a Hilbert space $\mathfrak H$ and $\mathcal M$ be a von~Neumann algebra in $\mathfrak H$. Then $\mathcal A(\mathfrak X)\subset \mathcal M$ if and only if every operator in $\mathfrak X$ is affiliated with $\mathcal M$.
\end{lemma}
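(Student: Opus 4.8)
The plan is to prove both implications by exploiting the defining properties of the bicommutant construction $\mathcal A(\mathfrak X)=(\mathfrak X\cup\mathfrak X^*)''$ together with the notion of affiliation, which by definition says that $T$ commutes with every element of $\mathcal M'$. The forward direction is the easy one: suppose $\mathcal A(\mathfrak X)\subset\mathcal M$. Since every element of $\mathfrak X$ is affiliated with $\mathcal A(\mathfrak X)$ (as remarked just before the lemma, this is immediate because $T\in\mathfrak X$ commutes with every element of $(\mathfrak X\cup\mathfrak X^*)'=\mathcal A(\mathfrak X)'$), I would observe that $\mathcal M'\subset\mathcal A(\mathfrak X)'$ by taking commutants of the inclusion $\mathcal A(\mathfrak X)\subset\mathcal M$ (the commutant is inclusion-reversing). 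Hence every $T\in\mathfrak X$ commutes with every element of $\mathcal M'$, which is exactly the statement that $T$ is affiliated with $\mathcal M$.

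For the converse, suppose every operator in $\mathfrak X$ is affiliated with $\mathcal M$, i.e., every $T\in\mathfrak X$ commutes with every $R\in\mathcal M'$. The goal is to deduce $\mathcal A(\mathfrak X)=(\mathfrak X\cup\mathfrak X^*)''\subset\mathcal M$. The natural route is to show $\mathcal M'\subset(\mathfrak X\cup\mathfrak X^*)'$ and then take bicommutants, using that $\mathcal M''=\mathcal M$ because $\mathcal M$ is a von~Neumann algebra. So I would take an arbitrary $R\in\mathcal M'$ and verify that $R$ commutes with every element of $\mathfrak X\cup\mathfrak X^*$. By hypothesis $R$ commutes with every $T\in\mathfrak X$. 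The step that needs care is showing $R$ also commutes with every $T^*$ for $T\in\mathfrak X$. This is where Lemma~\ref{l1} enters: a von~Neumann algebra is involutive, so $R\in\mathcal M'$ implies $R^*\in\mathcal M'$ (the commutant of an involutive set is involutive by~(\ref{adjcomm}), and $\mathcal M'=\mathcal M'''$ relative to $\mathcal M$; more directly, $\mathcal M'$ is itself a von~Neumann algebra hence involutive). Then $R^*$ commutes with $T$ by the affiliation hypothesis, and applying the first assertion of Lemma~\ref{l1} to the bounded operator $R^*$ and the closed densely defined operator $T$ shows that $(R^*)^*=R$ commutes with $T^*$.

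Having shown that every $R\in\mathcal M'$ commutes with every element of $\mathfrak X\cup\mathfrak X^*$, I conclude $\mathcal M'\subset(\mathfrak X\cup\mathfrak X^*)'$. Taking commutants reverses this inclusion, giving $(\mathfrak X\cup\mathfrak X^*)''\subset\mathcal M''$, that is, $\mathcal A(\mathfrak X)\subset\mathcal M''=\mathcal M$, as desired.

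The main obstacle is the bookkeeping around involutivity and the asymmetry between bounded and unbounded operators: the elements of $\mathfrak X$ are only closed densely defined, not bounded, so one cannot naively form products or adjoints and must invoke Lemma~\ref{l1} precisely to transfer the commutation relation from $R^*$ with $T$ to $R$ with $T^*$. Everything else is formal manipulation of commutants, relying on the inclusion-reversing property of the commutant operation and the defining equality $\mathcal M=\mathcal M''$ for von~Neumann algebras.
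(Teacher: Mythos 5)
Your proof is correct and follows essentially the same route as the paper: both directions come down to the chain $\mathcal M'\subset\mathfrak X'\cap(\mathfrak X^*)'=(\mathfrak X\cup\mathfrak X^*)'$ (and its reverse), the inclusion-reversing property of commutants, and the identity $\mathcal M''=\mathcal M$. Your use of Lemma~\ref{l1} together with the involutivity of $\mathcal M'$ to pass from commutation with $T$ to commutation with $T^*$ is exactly the content of equality~(\ref{adjcomm}), which is what the paper cites at that step.
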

\begin{proof}
If every element of $\mathfrak X$ is affiliated with $\mathcal M$, then $\mathcal M'\subset\mathfrak  X'$, whence $\mathcal M'\subset (\mathfrak X^*)'$ by~(\ref{adjcomm}). It follows that $\mathcal M'\subset \mathfrak X'\cap (\mathfrak X^*)'=(\mathfrak X\cup\mathfrak X^*)'$ and, hence, $\mathcal A(\mathfrak X)\subset \mathcal M''=\mathcal M$. Conversely, if $\mathcal A(\mathfrak X)\subset \mathcal M$, then $\mathcal M'\subset (\mathfrak X\cup\mathfrak X^*)'\subset \mathfrak X'$ and, hence, every element of $\mathfrak X$ is affiliated with $\mathcal M$.
\end{proof}

The algebra $\mathcal A(\mathfrak X)$ will be called the von~Neumann algebra generated by $\mathfrak X$, and $\mathfrak X$ will be referred to as a set of generators of $\mathcal A(\mathfrak X)$. If $\mathfrak X\subset L(\mathfrak H)$, then $\mathcal A(\mathfrak X)$ is just the smallest von~Neumann algebra containing $\mathfrak X$. By Lemma~\ref{l1a}, a closed densely defined operator $T$ is affiliated with a von~Neumann algebra $\mathcal M$ if and only if $\mathcal A(T)\subset \mathcal M$ (here and subsequently, we write $\mathcal A(T)$ instead of $\mathcal A(\{T\})$, where $\{T\}$ is the one-element set containing $T$).

Let $\mathfrak X\subset L(\mathfrak H)$ be an involutive set of pairwise commuting operators. Then $\mathfrak X\subset \mathfrak X'$ and, hence, $\mathcal A(\mathfrak X)\subset \mathfrak X'$ because $\mathfrak X'$ is a von~Neumann algebra. As $\mathfrak X'=\mathcal A(\mathfrak X)'$, it follows that $\mathcal A(\mathfrak X)$ is Abelian.

We say that two sets $\mathfrak X$ and $\mathfrak Y$ of closed densely defined operators in $\mathfrak H$ are equivalent if $\mathcal A(\mathfrak X) =
\mathcal A(\mathfrak Y)$. We say that $\mathfrak X$ is equivalent to a closed densely defined operator $T$ if $\mathfrak X$ is equivalent to the
one-element set $\{T\}$.

\begin{remark}
If $\mathfrak X\subset L(\mathfrak H)$ is an involutive set, then the smallest strongly closed algebra containing $\mathfrak X$ and the identity operator in $\mathfrak H$ is obviously a von~Neumann algebra and, hence, coincides with $\mathcal A(\mathfrak X)$. The above definition of $\mathcal A(\mathfrak X)$ therefore complies with that used in Introduction.
\end{remark}

Given a spectral measure $\mathcal E$ (see Sec.~\ref{s_spec_meas}), we denote by $\mathcal P_{\mathcal E}$ the set of all operators $\mathcal E(A)$, where $A$ is an $\mathcal E$-measurable set. As the elements of $\mathcal P_{\mathcal E}$ pairwise commute, the algebra $\mathcal A(\mathcal P_{\mathcal E})$ is Abelian. For an $\mathcal E$-measurable complex function $g$, we let $J^{\mathcal E}_g$ denote the integral of $g$ with respect to $\mathcal E$ (see Sec.~\ref{s_spec_meas}).
\begin{lemma}\label{l4d}
Let $\mathfrak H$ be a separable Hilbert space and $\mathcal E$ be a spectral measure in $\mathfrak H$. Then the following statements hold:
\begin{enumerate}
\item[$1.$] The algebra $\mathcal A(\mathcal P_{\mathcal E})$ coincides with the set of all $J^{\mathcal E}_g$, where $g$ is an $\mathcal E$-measurable $\mathcal E$-essentially bounded complex function.

\item[$2.$] A closed densely defined operator $T$ in $\mathfrak H$ is affiliated with $\mathcal A(\mathcal P_{\mathcal E})$ if and only if $T=J^{\mathcal E}_g$ for an $\mathcal E$-measurable complex function $g$.
\end{enumerate}
\end{lemma}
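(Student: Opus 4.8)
The plan is to prove statement~1 by showing that the set $\mathcal N:=\{J^{\mathcal E}_g : g \text{ is $\mathcal E$-measurable and $\mathcal E$-essentially bounded}\}$ is a von~Neumann algebra with $\mathcal N'=\mathcal P_{\mathcal E}'$; taking commutants then gives $\mathcal N=\mathcal N''=\mathcal P_{\mathcal E}''=\mathcal A(\mathcal P_{\mathcal E})$, since $\mathcal P_{\mathcal E}$ is involutive and contains $\mathcal E(\varnothing)$ and $\mathcal E(\Omega)$. Statement~2 will be deduced from statement~1 via the bounded transform. The separability of $\mathfrak H$ enters through a scalar spectral measure: fixing an orthonormal basis $\{e_n\}$, the finite measure $\mu=\sum_n 2^{-n}\langle\mathcal E(\cdot)e_n,e_n\rangle$ has the same null sets as $\mathcal E$, so that $\mathcal E$-measurability and $\mathcal E$-essential boundedness can be read off from $\mu$ and $g\mapsto J^{\mathcal E}_g$ is an isometric $*$-homomorphism with $\|J^{\mathcal E}_g\|=\|g\|_{L_\infty(\mu)}$.

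First I would dispose of the routine facts. Since $\mathcal E(A)=J^{\mathcal E}_{\chi_A}$ and the spectral integral is linear and multiplicative on bounded functions, each $J^{\mathcal E}_g$ with $g$ simple is a finite linear combination of elements of $\mathcal P_{\mathcal E}$ and hence lies in the algebra generated by $\mathcal P_{\mathcal E}$; for arbitrary bounded $g$ one approximates uniformly by simple $g_n$ and uses $\|J^{\mathcal E}_{g_n}-J^{\mathcal E}_g\|\le\|g_n-g\|_{L_\infty(\mu)}\to 0$ to conclude $J^{\mathcal E}_g\in\mathcal A(\mathcal P_{\mathcal E})$. This shows $\mathcal N\subset\mathcal A(\mathcal P_{\mathcal E})$ and that $\mathcal N$ is a unital $*$-subalgebra of $L(\mathfrak H)$. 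The same uniform approximation, together with $\mathcal P_{\mathcal E}\subset\mathcal N$, yields $\mathcal N'=\mathcal P_{\mathcal E}'$: any $R$ commuting with every $\mathcal E(A)$ commutes with all simple-function integrals and, by norm continuity, with every $J^{\mathcal E}_g$.

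The main obstacle is to show that $\mathcal N$ is weakly closed. For this I would use that, for $\psi,\varphi\in\mathfrak H$, the complex measure $A\mapsto\langle\mathcal E(A)\psi,\varphi\rangle$ is absolutely continuous with respect to $\mu$, with density $\rho_{\psi\varphi}\in L_1(\mu)$, so that $\langle J^{\mathcal E}_g\psi,\varphi\rangle=\int g\,\rho_{\psi\varphi}\,d\mu$. This exhibits $g\mapsto J^{\mathcal E}_g$ as a map continuous from the weak-$*$ topology of $L_\infty(\mu)=L_1(\mu)^*$ to the weak operator topology. By Banach--Alaoglu the closed unit ball of $L_\infty(\mu)$ is weak-$*$ compact, and since $\|J^{\mathcal E}_g\|=\|g\|_{L_\infty(\mu)}$ its image is precisely the unit ball of $\mathcal N$; as a continuous image of a compact set in the Hausdorff weak operator topology, this unit ball is weakly compact, hence weakly closed. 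The Kaplansky density theorem identifies the weak closure of the $*$-algebra $\mathcal N$ with $\mathcal N''$ and makes the unit ball of $\mathcal N$ weakly dense in that of $\mathcal N''$; weak closedness of the former forces the two unit balls to coincide, so $\mathcal N=\mathcal N''$. Combined with $\mathcal N'=\mathcal P_{\mathcal E}'$ this gives $\mathcal N=\mathcal N''=\mathcal P_{\mathcal E}''=\mathcal A(\mathcal P_{\mathcal E})$, proving statement~1.

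For statement~2, suppose first $T=J^{\mathcal E}_g$ with $g$ measurable; then $T$ is closed and densely defined, and every $R\in\mathcal A(\mathcal P_{\mathcal E})'=\mathcal P_{\mathcal E}'$ commutes with each truncation $J^{\mathcal E}_{g\chi_{\{|g|\le n\}}}$ (by statement~1) and with $\mathcal E(\{|g|\le n\})$, so passing to the limit with the aid of Lemma~\ref{l1} shows $R$ commutes with $T$, i.e.\ $T$ is affiliated with $\mathcal A(\mathcal P_{\mathcal E})$. Conversely, let $T$ be affiliated with $\mathcal M=\mathcal A(\mathcal P_{\mathcal E})$; the bounded transform $Z=T(I+T^*T)^{-1/2}$ and $B=(I+T^*T)^{-1/2}$ then lie in the von~Neumann algebra $\mathcal M$, so statement~1 gives $Z=J^{\mathcal E}_z$, $B=J^{\mathcal E}_b$ with $z,b$ bounded, $0\le b\le 1$, and $b>0$ $\mu$-a.e.\ (as $B$ is injective). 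Put $g=z/b$, finite $\mu$-a.e. On the common dense set $\operatorname{ran}B=D_T$, which is a core for $T$, one computes $T(B\psi)=Z\psi=J^{\mathcal E}_{gb}\psi=J^{\mathcal E}_g(B\psi)$, whence $T\subset J^{\mathcal E}_g$; since $\mathcal M$ is abelian, $T$ is automatically normal, and a normal operator admits no proper normal extension, so this inclusion is an equality $T=J^{\mathcal E}_g$. I expect the delicate points to be, in statement~1, the weak-$*$-to-weak continuity underlying the compactness argument, and, in statement~2, the core/domain identification combined with the maximality of normal operators.
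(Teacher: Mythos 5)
Your proof is correct in substance, but it is necessarily a different route from the paper's: the paper does not prove Lemma~\ref{l4d} in the text at all, it simply cites Lemma~10 of Ref.~\onlinecite{JMAA2013}. What you do instead is reconstruct the result from classical operator theory. For statement~1 you realize $\mathcal N=\{J^{\mathcal E}_g : g \mbox{ bounded}\}$ as the isometric image of $L_\infty(\mu)$ for a scalar measure $\mu$ with the same null sets as $\mathcal E$ (this is exactly where separability enters), establish weak-$*$-to-weak-operator continuity of $g\mapsto J^{\mathcal E}_g$ via Radon--Nikodym densities, and then combine Banach--Alaoglu with the Kaplansky density theorem and the bicommutant theorem to force $\mathcal N=\mathcal N''=\mathcal P_{\mathcal E}''=\mathcal A(\mathcal P_{\mathcal E})$; together with your norm-approximation argument giving $\mathcal N\subset\mathcal A(\mathcal P_{\mathcal E})$ and $\mathcal N'=\mathcal P_{\mathcal E}'$, this is a complete and standard proof that the bounded functional calculus of a spectral measure is a von~Neumann algebra. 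The forward half of statement~2 (truncation by $\chi_{\{|g|\le n\}}$, commutation of $R\in\mathcal P_{\mathcal E}'$ with each $J^{\mathcal E}_{g\chi_{\{|g|\le n\}}}$ and each $\mathcal E(\{|g|\le n\})$, then a limit which uses the closedness of $J^{\mathcal E}_g$ --- this is what your appeal to Lemma~\ref{l1} amounts to) is also sound. What your route buys is self-containedness modulo textbook theorems; what the paper's citation buys is brevity.

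The one step you should not leave as a bare assertion is the claim, in the converse half of statement~2, that affiliation with the Abelian algebra $\mathcal M$ automatically makes $T$ normal. The claim is true, but it is a nontrivial theorem whose content is essentially as deep as statement~2 itself, so invoking it without proof or reference is the weak point of your argument. Fortunately, your own setup yields a cleaner finish that avoids normality altogether: the bounded transform satisfies the identity $Z^*Z+B^2=I$, so $|z|^2=1-b^2$ $\mu$-a.e., and since each $\mathcal E_\xi$ is a finite measure,
\[
D_{J^{\mathcal E}_g}=\bigl\{\xi : \textstyle\int |z/b|^2\,d\mathcal E_\xi<\infty\bigr\}
=\bigl\{\xi : \textstyle\int b^{-2}\,d\mathcal E_\xi<\infty\bigr\}
=\mathrm{Ran}\,B=D_T,
\]
so the inclusion $T\subset J^{\mathcal E}_g$ that you established is automatically an equality. (Alternatively, the same identity together with the commutativity of $Z$, $Z^*$, and $B$ gives a short direct proof that $T$ is normal, after which your maximal-normality argument applies; but then that proof should be written out rather than asserted.)
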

\begin{proof}
See Lemma~10 in~Ref.\onlinecite{JMAA2013}.
\end{proof}

A family of maps $\{g_\iota\}_{\iota\in I}$ is said to separate points of a set $S$ if $S\subset D_{g_\iota}$ for all $\iota\in I$ and for any two distinct elements $s_1$ and $s_2$ of $S$, there is
$\iota\in I$ such that $g_\iota(s_1)\neq g_\iota(s_2)$.

To cover both positive and spectral measures, the next definition is formulated in terms of a general $\mathfrak A$-valued measure (see Sec.~\ref{measures_sec}).
\begin{definition}\label{d_exact}
Let $\mathfrak A$ be a topological Abelian group and $\nu$ be a $\sigma$-finite $\mathfrak A$-valued measure. A family $\{g_\iota\}_{\iota\in I}$ of maps is said to be $\nu$-separating if $I$ is countable and $\{g_\iota\}_{\iota\in I}$ separates points of $\s_\nu\setminus N$ for some $\nu$-null set $N$.\footnote{Here and subsequently, $S_\nu$ denotes the largest $\nu$-measurable set, see Sec.~\ref{measures_sec}.}
\end{definition}

The next result gives a complete description of systems of generators for $\mathcal A(\mathcal P_{\mathcal E})$.

\begin{proposition}\label{t0}
Let $\mathfrak H$ be a separable Hilbert space, $\mathcal E$ be a standard$\,\footnote{See Sec.~\ref{s_standard}.}$ spectral measure in $\mathfrak H$, and $\mathfrak X$
be a set of closed densely defined operators in $\mathfrak H$. Then $\mathcal A(\mathfrak X)=\mathcal A(\mathcal P_{\mathcal E})$ if and only if the following
conditions hold
\begin{enumerate}
\item[$1.$] $\mathcal A(\mathfrak X)\subset \mathcal A(\mathcal P_{\mathcal E})$

\item[$2.$] There is an $\mathcal E$-separating family $\{g_\iota\}_{\iota\in I}$ of $\mathcal E$-measurable complex functions such that $J^{\mathcal E}_{g_\iota}\in\mathfrak X$
for all $\iota\in I$.
\end{enumerate}
\end{proposition}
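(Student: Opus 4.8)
The plan is to reduce the assertion to a measure-theoretic statement about the standard Borel space $(S,\mathcal B)$ underlying $\mathcal E$, using Lemma~\ref{l4d} as a dictionary between operators and functions. The key observation is that for any $\mathcal E$-measurable $g$ the operator $J^{\mathcal E}_g$ is normal with spectral projections $\mathcal E(g^{-1}(B))$, $B$ Borel; consequently an operator commutes with $J^{\mathcal E}_g$ and its adjoint $J^{\mathcal E}_{\bar g}$ precisely when it commutes with all the projections $\mathcal E(g^{-1}(B))$. Taking bicommutants, the von~Neumann algebra generated by a family $\{J^{\mathcal E}_{g_\iota}\}_{\iota\in I}$ coincides with $\mathcal A(\{\mathcal E(A):A\in\mathcal G\})$, where $\mathcal G$ is the sub-$\sigma$-algebra of $\mathcal B$ generated by the $g_\iota$. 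Since $A\mapsto\mathcal E(A)$ is faithful modulo $\mathcal E$-null sets, this algebra equals $\mathcal A(\mathcal P_{\mathcal E})$ if and only if $\mathcal G$ exhausts $\mathcal B$ modulo $\mathcal E$-null sets. Thus everything comes down to comparing $\mathcal G$ with $\mathcal B$.

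Assume first conditions 1 and 2. Condition~2 supplies a countable $\mathcal E$-separating family $\{g_\iota\}$ with $J^{\mathcal E}_{g_\iota}\in\mathfrak X$, so $\mathcal A(\{J^{\mathcal E}_{g_\iota}\})\subset\mathcal A(\mathfrak X)$. Because $\mathcal E$ is standard and $\{g_\iota\}$ separates the points of $S_{\mathcal E}$ off an $\mathcal E$-null set, the generated $\sigma$-algebra $\mathcal G$ coincides with $\mathcal B$ modulo $\mathcal E$-null sets (the separating-implies-generating half of the descriptive-set-theoretic fact). By the bridge of the previous paragraph, $\mathcal A(\{J^{\mathcal E}_{g_\iota}\})=\mathcal A(\mathcal P_{\mathcal E})$, so $\mathcal A(\mathcal P_{\mathcal E})\subset\mathcal A(\mathfrak X)$; together with condition~1 this gives the desired equality.

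Conversely, assume $\mathcal A(\mathfrak X)=\mathcal A(\mathcal P_{\mathcal E})$. Condition~1 is then immediate. Each $T\in\mathfrak X$ is affiliated with $\mathcal A(\mathfrak X)=\mathcal A(\mathcal P_{\mathcal E})$, so Lemma~\ref{l4d} gives $T=J^{\mathcal E}_{g_T}$ for some $\mathcal E$-measurable $g_T$, and the bridge shows that the $\sigma$-algebra generated by the whole family $\{g_T\}_{T\in\mathfrak X}$ equals $\mathcal B$ modulo null sets. Since $\mathfrak H$ is separable and $\mathcal E$ standard, the measure algebra of $(S,\mathcal B)$ is separable; choosing countably many sets that generate $\mathcal B$ modulo null sets and noting that each lies in the $\sigma$-algebra generated by countably many of the $g_T$, I extract a countable subfamily $\{g_{T_n}\}$ with the same generated $\sigma$-algebra modulo null sets. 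This subfamily is countable and generates $\mathcal B$ modulo null sets, hence separates the points of $S_{\mathcal E}$ off a null set (the generating-implies-separating half), i.e. it is $\mathcal E$-separating; as $J^{\mathcal E}_{g_{T_n}}=T_n\in\mathfrak X$, condition~2 holds.

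The main obstacle is the descriptive-set-theoretic equivalence underpinning the bridge: on a standard Borel space a countable family of Borel functions separates points if and only if it generates the full Borel $\sigma$-algebra, and one must carry this equivalence through in the ``modulo $\mathcal E$-null'' form required here. Establishing that equivalence, together with the faithfulness of $A\mapsto\mathcal E(A)$ and the countable-subfamily extraction from the separability of the measure algebra, is where the real work lies; the passages between spectral projections, spectral integrals, and generated von~Neumann algebras are routine once Lemma~\ref{l4d} is in hand.
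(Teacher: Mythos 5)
Your proposal and the paper's proof are structurally very different, because the paper's own proof is essentially a two-line reduction: it translates condition~1 through Lemma~\ref{l1a} and statement~2 of Lemma~\ref{l4d} (condition~1 holds iff every element of $\mathfrak X$ equals $J^{\mathcal E}_g$ for some $\mathcal E$-measurable $g$) and then quotes Theorem~3 of Ref.~\onlinecite{JMAA2013}, which is precisely the equivalence at issue. What you have written is therefore, in effect, a proposed proof of that cited theorem. Your skeleton is the right one: the ``bridge'' identifying $\mathcal A(\{J^{\mathcal E}_{g_\iota}\})$ with $\mathcal A(\{\mathcal E(A):A\in\mathcal G\})$ (which works because $\{A:\mathcal E(A)\in\mathcal M\}$ is a $\sigma$-algebra for any von~Neumann algebra $\mathcal M$, by multiplicativity of $\mathcal E$ and strong convergence of monotone sequences of projections), the two implications between ``separating'' and ``generating modulo null sets'', and the countable-subfamily extraction from separability of the measure algebra. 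Every step you list is true, and you have correctly located where standardness of $\mathcal E$ and separability of $\mathfrak H$ must enter.

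The genuine gap is that the three facts on which everything rests are asserted rather than proven, and they are not routine. (i) ``Separating implies generating'' is the Lusin--Souslin theorem (an injective Borel map between standard Borel spaces is a Borel isomorphism onto its Borel image); it must be invoked explicitly, in the form adapted to a co-null standard part of $\s_{\mathcal E}$. (ii) ``Generating modulo null sets implies separating off a null set'' is \emph{not} a formal dual of (i): it requires disintegrating the scalar measure over the map $s\mapsto(g_\iota(s))_\iota$ (equivalently, von~Neumann's point-realization theorem for standard measures) to conclude that almost every fiber carries a Dirac mass; this direction can fail without standardness, so it cannot be waved through. (iii) The faithfulness step --- $\mathcal E(A)\in\mathcal A(\{\mathcal E(A'):A'\in\mathcal G\})$ forces $A\in\mathcal G$ modulo $\mathcal E$-null sets --- also needs an argument, e.g.\ realizing $\mathcal A(\mathcal P_{\mathcal E})$ as $L_\infty$ of a scalar spectral measure (this is where separability of $\mathfrak H$ is used) and identifying the von~Neumann subalgebra generated by the indicators of $\mathcal G$-sets with the $L_\infty$-space of the completed sub-$\sigma$-algebra via conditional expectation. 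Since (i)--(iii) together are of the same depth as the proposition itself, leaving them as ``where the real work lies'' makes your text an outline, not a proof: to complete it you must supply (i)--(iii), or else do what the paper does and cite Theorem~3 of Ref.~\onlinecite{JMAA2013} after the Lemma~\ref{l1a}/Lemma~\ref{l4d} translation, which your converse direction already carries out correctly.
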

\begin{proof}
By Lemma~\ref{l1a} and statement~2 of Lemma~\ref{l4d}, condition~1 holds if and only if every element of $\mathfrak X$ is equal to $J^{\mathcal E}_g$ for some $\mathcal E$-measurable complex function $g$. Hence the proposition follows from Theorem~3 in~Ref.~\onlinecite{JMAA2013}.
\end{proof}

\begin{example}\label{e_1}
Let $T$ be a normal\footnote{Recall that a closed densely defined linear operator $T$
in a Hilbert space is called normal if the operators $TT^*$ and $T^*T$ have the same domain of definition and coincide thereon. In particular, self-adjoint and unitary operators are normal.} operator in a separable Hilbert space and $\mathcal E_T$ be its spectral measure (see Sec.~\ref{s_spec_meas}). By Lemma~\ref{l1a} and statement~2 of Lemma~\ref{l4d}, condition~1 of Proposition~\ref{t0} is fulfilled for every set $\mathfrak X$ whose elements are functions of $T$. Let $g$ be the identical function on $\C$: $g(z)=z$, $z\in\C$. Then the family containing the single function $g$ separates points of $\C$, and Proposition~\ref{t0} implies that the operator $T=g(T)$ is equivalent to the set $\mathcal P_{\mathcal E_T}$ of its spectral projections. Let $\zeta\in \C$ and $h_\zeta$ be the function on $\C\setminus\{\zeta\}$ defined by the relation $h_\zeta(z) = (z-\zeta)^{-1}$. If $\zeta$ is not an eigenvalue of $T$, then $\mathcal E_T(\{\zeta\})=0$ and the family containing the single function $h_\zeta$ is $\mathcal E_T$-separating. It follows from Proposition~\ref{t0} that the operator $(T-\zeta)^{-1}=h_\zeta(T)$ is equivalent to $\mathcal P_{\mathcal E_T}$ (and, hence, to $T$). Let $A\subset\C$ be a set having an accumulation point in $\C$ and let $f_\zeta(z)=e^{\zeta z}$ for $\zeta\in A$ and $z\in\C$. It is easy to show (see example~7 in~Ref.~\onlinecite{JMAA2013} for details) that the family $\{f_\zeta\}_{\zeta\in A}$ contains a countable subfamily separating the points of $\C$. By Proposition~\ref{t0}, we conclude that the set of all operators $e^{\zeta T}$ with $\zeta\in A$ is equivalent to $\mathcal P_{\mathcal E_T}$.
\end{example}

\section{Diagonalizations}
\label{s5}

Given a positive $\sigma$-finite measure $\nu$, we say that a $\nu$-a.e. defined family $\mathfrak S$ of Hilbert spaces is $\nu$-nondegenerate if $\mathfrak S(s)\neq \{0\}$ for $\nu$-a.e. $s$.

\begin{definition}\label{df-diag}
Let $\mathfrak X$ be a set of closed densely defined operators in a Hilbert space $\mathfrak H$. A triple $(\nu,\mathfrak S,V)$, where $\nu$ is a positive $\sigma$-finite measure, $\mathfrak S$ is a $\nu$-nondegenerate $\nu$-measurable family of Hilbert spaces, and $V$ is a unitary operator from $\mathfrak H$ to $\int^\oplus \mathfrak S(s)\,d\nu(s)$, is called a diagonalization for $\mathfrak X$ if every $T\in\mathfrak X$ is equal to $V^{-1}\mathcal T_g^{\nu,\mathfrak S} V$ for some complex $\nu$-measurable function $g$. A diagonalization $(\nu,\mathfrak S,V)$ for $\mathfrak X$ is called exact if condition~(\ref{exactcond}) holds for any complex $\nu$-measurable $\nu$-essentially bounded function $g$.
\end{definition}

The next theorem gives an exactness criterion for diagonalizations, similar to condition~(E) discussed in Introduction in the context of direct sum decompositions.

\begin{theorem}\label{t0a}
Let $\mathfrak H$ be a separable Hilbert space and $\mathfrak X$ be a set of closed densely defined operators in $\mathfrak H$.
A diagonalization $(\nu,\mathfrak S,V)$ for $\mathfrak X$, where $\nu$ is standard, is exact if and only if
there is a $\nu$-separating family $\{g_\iota\}_{\iota\in I}$ of $\nu$-measurable complex functions such that
\begin{equation}\label{diag}
V^{-1}\mathcal T^{\nu,\mathfrak S}_{g_\iota} V\in \mathfrak X,\quad \iota\in I.
\end{equation}
\end{theorem}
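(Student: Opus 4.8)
The plan is to reduce the theorem to Proposition~\ref{t0} by transporting into $\mathfrak H$, via $V$, the spectral measure canonically attached to the direct integral. Concretely, let $\mathcal E^{\nu,\mathfrak S}$ be the spectral measure on $\int^\oplus \mathfrak S(s)\,d\nu(s)$ assigning to each $\nu$-measurable set $A$ the operator of multiplication by its indicator function $\chi_A$, and set $\mathcal E(A) = V^{-1}\mathcal E^{\nu,\mathfrak S}(A)\,V$. Then $\mathcal E$ is a spectral measure in $\mathfrak H$, and since every multiplication operator is the integral of its symbol against $\mathcal E^{\nu,\mathfrak S}$, we have
\[
V^{-1}\mathcal T^{\nu,\mathfrak S}_g V = J^{\mathcal E}_g
\]
for every $\nu$-measurable complex $g$, the operator $J^{\mathcal E}_g$ being bounded precisely when $g$ is $\nu$-essentially bounded. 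The $\nu$-nondegeneracy of $\mathfrak S$ guarantees that $\mathcal E(A)=0$ exactly when $A$ is $\nu$-null; hence the $\mathcal E$-null sets coincide with the $\nu$-null sets, the $\mathcal E$-measurable functions are the $\nu$-measurable ones, and the notions of $\mathcal E$-separating and $\nu$-separating family (Definition~\ref{d_exact}) agree. Because $\nu$ is standard, $\mathcal E$ is a standard spectral measure, so Proposition~\ref{t0} applies to $\mathcal E$.

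With this dictionary in place, the two hypotheses of Proposition~\ref{t0} translate directly. First, being a diagonalization means every $T\in\mathfrak X$ equals $V^{-1}\mathcal T^{\nu,\mathfrak S}_g V = J^{\mathcal E}_g$ for some $\nu$-measurable $g$; by statement~2 of Lemma~\ref{l4d} and Lemma~\ref{l1a}, this is exactly condition~1 of Proposition~\ref{t0}, namely $\mathcal A(\mathfrak X)\subset \mathcal A(\mathcal P_{\mathcal E})$. Thus condition~1 holds automatically for any diagonalization, and the content of the theorem lies in identifying exactness with condition~2. Second, condition~2 asserts the existence of an $\mathcal E$-separating family $\{g_\iota\}_{\iota\in I}$ with $J^{\mathcal E}_{g_\iota}\in\mathfrak X$; under the dictionary above this is verbatim the existence of a $\nu$-separating family with $V^{-1}\mathcal T^{\nu,\mathfrak S}_{g_\iota}V\in\mathfrak X$, i.e.\ condition~(\ref{diag}).

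It remains to match exactness with the full conclusion $\mathcal A(\mathfrak X)=\mathcal A(\mathcal P_{\mathcal E})$ of Proposition~\ref{t0}. By Definition~\ref{df-diag}, the diagonalization is exact iff $J^{\mathcal E}_g = V^{-1}\mathcal T^{\nu,\mathfrak S}_g V\in\mathcal A(\mathfrak X)$ for every $\nu$-essentially bounded $g$. By statement~1 of Lemma~\ref{l4d}, the operators $J^{\mathcal E}_g$ with $g$ $\mathcal E$-essentially bounded are precisely the elements of $\mathcal A(\mathcal P_{\mathcal E})$, so exactness is equivalent to $\mathcal A(\mathcal P_{\mathcal E})\subset\mathcal A(\mathfrak X)$. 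Combined with the inclusion $\mathcal A(\mathfrak X)\subset\mathcal A(\mathcal P_{\mathcal E})$ established above, exactness is equivalent to $\mathcal A(\mathfrak X)=\mathcal A(\mathcal P_{\mathcal E})$. Proposition~\ref{t0} then gives that this equality holds iff conditions~1 and~2 hold; since condition~1 is automatic, exactness is equivalent to condition~2, which is~(\ref{diag}). I expect the only genuinely technical point to be the foundational verification that $\mathcal E$ inherits standardness from $\nu$ and that the $\nu$- and $\mathcal E$-measure-theoretic structures (null sets, measurable and essentially bounded functions, separating families) coincide; once these appendix-level identifications are secured, the proof is a clean transcription of Proposition~\ref{t0}.
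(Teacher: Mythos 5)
Your proof is correct and follows essentially the same route as the paper: the spectral measure $\mathcal E$ you transport via $V$ is precisely the paper's $\mathcal E^t$, your ``dictionary'' is the content of the paper's Lemma~\ref{l_spec_m}, and your identification of the diagonalization and exactness conditions with the inclusions $\mathcal A(\mathfrak X)\subset\mathcal A(\mathcal P_{\mathcal E})$ and $\mathcal A(\mathcal P_{\mathcal E})\subset\mathcal A(\mathfrak X)$ is exactly the paper's Lemma~\ref{l_diag}, after which both arguments conclude by Proposition~\ref{t0}. The standardness transfer you flag is indeed the only foundational point, and it follows from the coincidence of $\mathcal E$- and $\nu$-measurable and null sets, just as the paper implicitly uses.
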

Sometimes the analysis of diagonalizations for $\mathfrak X$ simplifies if we replace $\mathfrak X$ with an equivalent set $\mathfrak Y$. The next result shows that passing to equivalent sets is always possible.
\begin{proposition}\label{t0b}
Let $\mathfrak H$ be a separable Hilbert space and $\mathfrak X$ and $\mathfrak Y$ be equivalent sets of closed densely defined operators in $\mathfrak H$. Then every (exact) diagonalization for $\mathfrak X$ is an (exact) diagonalization for $\mathfrak Y$.
\end{proposition}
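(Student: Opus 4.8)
The plan is to show that both \emph{being a diagonalization} and \emph{being an exact diagonalization} for a set $\mathfrak X$ can be rephrased as conditions involving a single von~Neumann algebra attached to the triple $(\nu,\mathfrak S,V)$ alone, with $\mathfrak X$ entering only through $\mathcal A(\mathfrak X)$. The conclusion then follows at once from $\mathcal A(\mathfrak X)=\mathcal A(\mathfrak Y)$.

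First I would attach to the given triple $(\nu,\mathfrak S,V)$ the spectral measure $\mathcal E$ in $\mathfrak H$ defined by $\mathcal E(A)=V^{-1}\mathcal E_0(A)V$, where $\mathcal E_0(A)$ is multiplication by the indicator of $A$ in $\int^\oplus \mathfrak S(s)\,d\nu(s)$ for each $\nu$-measurable $A$. Conjugation by the unitary $V$ preserves all the defining properties of a spectral measure, so $\mathcal E$ is a genuine spectral measure in the separable space $\mathfrak H$ and Lemmas~\ref{l1a} and~\ref{l4d} apply to it. The one computation to record is the functional-calculus identity $J^{\mathcal E}_g = V^{-1}\mathcal T^{\nu,\mathfrak S}_g V$ for every $\nu$-measurable complex $g$, which holds because integration against $\mathcal E_0$ is precisely multiplication by $g$ and conjugation by $V$ commutes with the integration.

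Next I would characterize the two notions in terms of $\mathcal A(\mathcal P_{\mathcal E})$. By statement~2 of Lemma~\ref{l4d}, a closed densely defined operator $T$ is of the form $V^{-1}\mathcal T^{\nu,\mathfrak S}_g V = J^{\mathcal E}_g$ for some $\nu$-measurable $g$ exactly when $T$ is affiliated with $\mathcal A(\mathcal P_{\mathcal E})$; hence, by Definition~\ref{df-diag}, $(\nu,\mathfrak S,V)$ is a diagonalization for $\mathfrak X$ if and only if every element of $\mathfrak X$ is affiliated with $\mathcal A(\mathcal P_{\mathcal E})$, which by Lemma~\ref{l1a} is equivalent to $\mathcal A(\mathfrak X)\subset \mathcal A(\mathcal P_{\mathcal E})$. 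For exactness, condition~(\ref{exactcond}) requires $V^{-1}\mathcal T^{\nu,\mathfrak S}_g V=J^{\mathcal E}_g\in\mathcal A(\mathfrak X)$ for every $\nu$-essentially bounded $g$; by statement~1 of Lemma~\ref{l4d} these operators are exactly the members of $\mathcal A(\mathcal P_{\mathcal E})$, so the exactness condition amounts to $\mathcal A(\mathcal P_{\mathcal E})\subset \mathcal A(\mathfrak X)$. Combining the two, $(\nu,\mathfrak S,V)$ is an exact diagonalization for $\mathfrak X$ if and only if $\mathcal A(\mathfrak X)=\mathcal A(\mathcal P_{\mathcal E})$.

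Finally I would conclude. Since $\mathfrak X$ and $\mathfrak Y$ are equivalent, $\mathcal A(\mathfrak X)=\mathcal A(\mathfrak Y)$, so the inclusion $\mathcal A(\mathfrak X)\subset\mathcal A(\mathcal P_{\mathcal E})$ holds if and only if $\mathcal A(\mathfrak Y)\subset\mathcal A(\mathcal P_{\mathcal E})$, and likewise $\mathcal A(\mathfrak X)=\mathcal A(\mathcal P_{\mathcal E})$ holds if and only if $\mathcal A(\mathfrak Y)=\mathcal A(\mathcal P_{\mathcal E})$. Therefore any (exact) diagonalization for $\mathfrak X$ is an (exact) diagonalization for $\mathfrak Y$. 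There is no serious obstacle here: the only points needing care are checking that $\mathcal E$ is a bona fide spectral measure and the identity $J^{\mathcal E}_g=V^{-1}\mathcal T^{\nu,\mathfrak S}_g V$, after which the argument is direct bookkeeping with the two lemmas.
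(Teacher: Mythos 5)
Your proof is correct and follows essentially the same route as the paper: the paper also attaches to the triple the spectral measure $\mathcal E^t(A)=V^{-1}\mathcal T^{\nu,\mathfrak S}_{\chi_A}V$ (its Lemma~\ref{l_spec_m}, which in addition verifies that $\mathcal E^t$-measurable and $\mathcal E^t$-null sets coincide with the $\nu$-measurable and $\nu$-null ones, using the $\nu$-nondegeneracy of $\mathfrak S$), proves the identity $J^{\mathcal E^t}_g=V^{-1}\mathcal T^{\nu,\mathfrak S}_gV$, and then characterizes (exact) diagonalizations by $\mathcal A(\mathfrak X)\subset\mathcal A(\mathcal P_{\mathcal E^t})$ (resp.\ $\mathcal A(\mathfrak X)=\mathcal A(\mathcal P_{\mathcal E^t})$) via Lemmas~\ref{l1a} and~\ref{l4d}, exactly as you do in its Lemma~\ref{l_diag}. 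The conclusion from $\mathcal A(\mathfrak X)=\mathcal A(\mathfrak Y)$ is then immediate in both treatments.
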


We say that normal operators $T_1$ and $T_2$ in $\mathfrak H$ commute if their spectral projections commute (if $T_2\in L(\mathfrak H)$, then this definition agrees with Definition~\ref{d0000}, see~Ref.~\onlinecite{Fuglede}). If a set $\mathfrak X$ of closed densely defined operators admits a diagonalization, then $\mathfrak X$ consists of normal pairwise commuting operators because operators of multiplication by functions are normal and commute with each other. The converse statement is provided by the next proposition.
\begin{proposition}\label{t_diag}
For every set of normal pairwise commuting operators in a separable Hilbert space, there exists a diagonalization $(\nu,\mathfrak S,V)$, where $\nu$ is standard.
\end{proposition}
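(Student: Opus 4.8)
The plan is to reduce the statement to two classical facts: that an Abelian von~Neumann algebra on a separable space is singly generated, and that a standard spectral measure can be realized as a direct integral. Let $\mathfrak X$ be a set of normal pairwise commuting operators in a separable Hilbert space $\mathfrak H$. The first step is to show that $\mathcal A(\mathfrak X)$ is Abelian. Each $T\in\mathfrak X$ is normal, so by Example~\ref{e_1} we have $\mathcal A(T)=\mathcal A(\mathcal P_{\mathcal E_T})$, where $\mathcal E_T$ is the spectral measure of $T$. Set $\mathfrak Y=\bigcup_{T\in\mathfrak X}\mathcal P_{\mathcal E_T}$. Since $\mathcal A(T)=\mathcal A(\mathcal P_{\mathcal E_T})\subset\mathcal A(\mathfrak Y)$ for every $T$, the remark following Lemma~\ref{l1a} gives $\mathcal A(\mathfrak X)\subset\mathcal A(\mathfrak Y)$; conversely each projection in $\mathfrak Y$ lies in some $\mathcal A(T)\subset\mathcal A(\mathfrak X)$, so $\mathcal A(\mathfrak Y)\subset\mathcal A(\mathfrak X)$, and thus $\mathcal A(\mathfrak X)=\mathcal A(\mathfrak Y)$. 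As the operators in $\mathfrak X$ pairwise commute, their spectral projections commute, so $\mathfrak Y$ is a commuting family of self-adjoint projections; hence $\mathfrak Y\subset\mathfrak Y'$ and $\mathcal A(\mathfrak Y)\subset\mathfrak Y'=\mathcal A(\mathfrak Y)'$, which shows that $\mathcal A(\mathfrak X)$ is Abelian.

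The second step is to replace $\mathfrak X$ by a single standard spectral measure. Because $\mathfrak H$ is separable, the Abelian von~Neumann algebra $\mathcal A(\mathfrak X)$ is generated by a single bounded self-adjoint operator $A$, so that $\mathcal A(A)=\mathcal A(\mathfrak X)$. Applying Example~\ref{e_1} to the self-adjoint operator $A$ shows that $A$ is equivalent to the set $\mathcal P_{\mathcal E_A}$ of its spectral projections, whence $\mathcal A(\mathcal P_{\mathcal E})=\mathcal A(A)=\mathcal A(\mathfrak X)$ for $\mathcal E=\mathcal E_A$. The spectral measure $\mathcal E$ is defined on the Borel subsets of $\R$ and is therefore standard.

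It remains to realize $\mathcal E$ as a direct integral and to verify the diagonalization property. By the direct integral realization of a standard spectral measure in a separable Hilbert space (von~Neumann's reduction theory; see Appendix~\ref{app1}), there are a standard positive $\sigma$-finite measure $\nu$, a $\nu$-nondegenerate $\nu$-measurable family of Hilbert spaces $\mathfrak S$, and a unitary operator $V\colon\mathfrak H\to\int^\oplus\mathfrak S(s)\,d\nu(s)$ such that $V J^{\mathcal E}_g V^{-1}=\mathcal T^{\nu,\mathfrak S}_g$ for every $\mathcal E$-measurable complex function $g$. Fixing $T\in\mathfrak X$ and using that $T$ is affiliated with $\mathcal A(\mathfrak X)=\mathcal A(\mathcal P_{\mathcal E})$, statement~2 of Lemma~\ref{l4d} supplies an $\mathcal E$-measurable complex function $g$ with $T=J^{\mathcal E}_g$, so that $T=V^{-1}\mathcal T^{\nu,\mathfrak S}_g V$. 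Hence $(\nu,\mathfrak S,V)$ is a diagonalization for $\mathfrak X$ with $\nu$ standard.

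The routine ingredients here are already in place: that $\mathcal A(\mathfrak X)$ is Abelian, that a normal operator is equivalent to its spectral projections, and the affiliation criterion of Lemma~\ref{l4d}. The genuine obstacle is the direct integral realization of a standard spectral measure invoked in the last step, which packages von~Neumann's reduction theory together with the measurability and nondegeneracy bookkeeping needed to produce $(\nu,\mathfrak S,V)$ with the stated properties; this is the technical core and is the part I expect to require the appendix machinery rather than the elementary algebra of commutation.
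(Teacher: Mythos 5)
Your proof is correct in substance, and its first and last steps coincide with the paper's own: the Abelianness of $\mathcal A(\mathfrak X)$ is established exactly as in the paper (via $\mathfrak Y=\bigcup_{T\in\mathfrak X}\mathcal P_{\mathcal E_T}$, Example~\ref{e_1}, and Lemma~\ref{l1a}), and the concluding passage from affiliation to $T=V^{-1}\mathcal T^{\nu,\mathfrak S}_g V$ is the same use of statement~2 of Lemma~\ref{l4d} that underlies the paper's Lemma~\ref{l_diag}. Where you genuinely diverge is the middle step: the paper applies Dixmier's structure theorem (Th\'eor\`eme~2 of Sec.~II.6.2 in Ref.~\onlinecite{Dixmier}) directly to the Abelian algebra $\mathcal A(\mathfrak X)$, obtaining at once an \emph{exact} diagonalization for $\mathcal A(\mathfrak X)$, and then transfers it to $\mathfrak X$ by Proposition~\ref{t0b}; you instead factor this into two classical facts, namely von~Neumann's theorem that an Abelian von~Neumann algebra on a separable space has a single self-adjoint generator $A$, plus the direct-integral realization of the standard spectral measure $\mathcal E_A$ (Hahn--Hellinger multiplicity theory). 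The two routes unpack the same structure theory; yours buys a more concrete intermediate object (a single generator) at the price of invoking two external theorems instead of one. The one thing you must repair is attribution: the realization theorem you lean on -- which you rightly identify as the technical core -- is \emph{not} in Appendix~\ref{app1}. The appendix develops measurable families and direct integrals of operators, but it contains no statement that a standard spectral measure is unitarily equivalent to a multiplication operator family in a direct integral; the closest result, Lemma~\ref{l_spec_m}, goes in the opposite direction, constructing a spectral measure from a given triple. So that step, like the singly-generated theorem, needs an external citation, exactly as the paper supports the corresponding step by citing Dixmier. With the citations fixed the argument is sound; note also that your construction in fact yields an exact diagonalization (by statement~1 of Lemma~\ref{l4d} together with $\mathcal A(\mathcal P_{\mathcal E})=\mathcal A(\mathfrak X)$), although Proposition~\ref{t_diag} does not ask for it.
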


Before proceeding with the proofs of the above results, we give some simple examples of diagonalizations.

\begin{example}\label{e_2}
Let $Z$ be the set of all absolutely continuous square-integrable complex functions on $\R$ having square-integrable derivatives. Let $P$ be the one-dimensional operator of momentum, i.e., the operator in $L_2(\R)$ with the domain $D_P = \{[f]_\lambda : f\in Z\}$ satisfying the relation
\[
(P[f]_\lambda)(x) = -if'(x),\quad f\in Z,
\]
for $\lambda$-a.e. $x$ (as in Sec.~\ref{sec_results}, $\lambda$ is the Lebesgue measure on $\R$). Let $\mathcal F\colon L_2(\R)\to L_2(\R)$ be the operator of the Fourier transformation: $(\mathcal F f)(p) = (2\pi)^{-1/2}\int f(x)e^{-ipx}\,dx$. Then we have $\mathcal F P \mathcal F^{-1}=\mathcal T^\lambda_g$, where $g$ is the identical function on $\R$: $g(p)=p$, $p\in\R$. By Theorem~\ref{t0a}, we conclude that $(\lambda, \mathcal I_{\C,\lambda},\mathcal F)$ is an exact diagonalization for $P$ (here, $\mathcal I_{\C,\lambda}$ is a constant family of Hilbert spaces, see Sec.~\ref{s_const_fam}; by~(\ref{l2dirint}), we have $L_2(\R)=\int^\oplus \mathcal I_{\C,\lambda}(x)\,dx$). In view of Example~\ref{e_1} and Proposition~\ref{t0b}, $(\lambda, \mathcal I_{\C,\lambda},\mathcal F)$ is also an exact diagonalization for the set $\{e^{iaP}\}_{a\in\R}$ of translations in $L_2(\R)$.
\end{example}

\begin{example}\label{e_3}
Let $P$, $\mathcal F$, and $g$ be as in Example~\ref{e_2} and let $H=P^2$ be the Hamiltonian of the one-dimensional free particle. Then $\mathcal F H \mathcal F^{-1}=\mathcal T^\lambda_{g^2}$ and, hence, $(\lambda, \mathcal I_{\C,\lambda},\mathcal F)$ is a diagonalization for $H$. Since the family containing the single function $g^2$ is not $\lambda$-separating, this diagonalization is not exact. Let $\R_+=(0,\infty)$ and $V\colon L_2(\R)\to L_2(\R_+,\C^2,\lambda)$ be the unitary operator such that $(Vf)(p) = ((\mathcal Ff)(p),(\mathcal Ff)(-p))$, $f\in L_2(\R)$, for $\lambda$-a.e. $p\in \R_+$. Then $V H V^{-1}$ is the operator of multiplication by $g^2$ in $L_2(\R_+,\C^2,\lambda)$. In view of~(\ref{l2dirint}), it follows that $V H V^{-1}= \mathcal T^{\lambda_+,\mathfrak S}_{g^2}$, where $\lambda_+ = \lambda|_{\R_+}$ and $\mathfrak S=\mathcal I_{\C^2,\lambda_+}$. Since $g^2$ separates points of $\R_+$, it follows from Theorem~\ref{t0a} that $(\lambda_+,\mathcal I_{\C^2,\lambda_+}, V)$ is an exact diagonalization for $H$.
\end{example}

We now turn to the proofs of Theorem~\ref{t0a} and Propositions~\ref{t0b} and~\ref{t_diag}. In the rest of this section, we assume that the Hilbert space $\mathfrak H$ is separable. For brevity, we say that $(\nu,\mathfrak S,V)$ is an $\mathfrak H$-triple if $\nu$ is a positive $\sigma$-finite measure, $\mathfrak S$ is a $\nu$-nondegenerate $\nu$-measurable family of Hilbert spaces, and $V$ is a unitary operator from $\mathfrak H$ to $\int^\oplus \mathfrak S(s)\,d\nu(s)$.

Given an $\mathfrak H$-triple $t=(\nu,\mathfrak S,V)$, we define the map $\mathcal E^t$ on $\sigma(D_\nu)$ by the relation
\[
\mathcal E^t(A) = V^{-1} \mathcal T^{\nu,\mathfrak S}_{\chi_A} V,\quad A\in\sigma(D_\nu),
\]
where $\chi_A$ is equal to unity on $A$ and vanishes on $\s_\nu\setminus A$.

\begin{lemma}\label{l_spec_m}
Let $t=(\nu,\mathfrak S,V)$ be an $\mathfrak H$-triple. Then $\mathcal E^t$ is a spectral measure such that $\mathcal E^t$-measurable and $\mathcal E^t$-null sets coincide with $\nu$-measurable and $\nu$-null sets respectively. For any $\nu$-measurable complex function $g$, we have $J^{\mathcal E^t}_g = V^{-1}\mathcal T^{\nu,\mathfrak S}_g V$.
\end{lemma}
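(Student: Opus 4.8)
The plan is to transport everything to the model space $\mathfrak K=\int^\oplus \mathfrak S(s)\,d\nu(s)$ through the unitary $V$ and then exploit the elementary algebra of multiplication operators there. Conjugation $R\mapsto V^{-1}RV$ is a $*$-isomorphism of $L(\mathfrak K)$ onto $L(\mathfrak H)$ preserving products, adjoints, strong limits, closedness, and (bijectively) domains, while $g\mapsto\mathcal T^{\nu,\mathfrak S}_g$ is linear and multiplicative on bounded functions, satisfies $(\mathcal T^{\nu,\mathfrak S}_g)^*=\mathcal T^{\nu,\mathfrak S}_{\bar g}$, and obeys $\|\mathcal T^{\nu,\mathfrak S}_g\psi\|^2=\int |g(s)|^2\|\psi(s)\|^2\,d\nu(s)$ (see Appendix~\ref{app1}). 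Since $\mathcal E^t(A)=V^{-1}\mathcal T^{\nu,\mathfrak S}_{\chi_A}V$, each property of $\mathcal E^t$ will be read off from the corresponding property of $A\mapsto\mathcal T^{\nu,\mathfrak S}_{\chi_A}$.

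First I would verify the spectral-measure axioms of Sec.~\ref{s_spec_meas}. From $\chi_A^2=\chi_A=\overline{\chi_A}$ the operator $\mathcal T^{\nu,\mathfrak S}_{\chi_A}$, and hence $\mathcal E^t(A)$, is a self-adjoint idempotent, i.e. an orthogonal projection; normalization is immediate from $\chi_\emptyset=0$ and $\chi_{\s_\nu}=1$, and multiplicativity from $\chi_{A\cap B}=\chi_A\chi_B$. For countable additivity, given pairwise disjoint $\{A_n\}$ with union $A$ and any $\psi\in\mathfrak K$, one has $\|\mathcal T^{\nu,\mathfrak S}_{\chi_A}\psi-\sum_{n\le N}\mathcal T^{\nu,\mathfrak S}_{\chi_{A_n}}\psi\|^2=\int_{A\setminus\bigcup_{n\le N}A_n}\|\psi(s)\|^2\,d\nu(s)$, which tends to $0$ by continuity from above of the finite measure $B\mapsto\int_B\|\psi(s)\|^2\,d\nu(s)$; thus $\mathcal E^t$ is countably additive in the strong operator topology and is a spectral measure.

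Next I would identify the null and measurable sets. Using $\nu$-nondegeneracy ($\mathfrak S(s)\neq\{0\}$ for $\nu$-a.e. $s$), the vanishing $\mathcal T^{\nu,\mathfrak S}_{\chi_A}=0$ forces $\chi_A=0$ $\nu$-a.e., i.e. $\nu(A)=0$; the converse is clear. As $V$ is unitary, $\mathcal E^t(A)=0$ iff $\nu(A)=0$, so the $\mathcal E^t$-null sets are exactly the $\nu$-null sets. Since $\mathcal E^t$ and $\nu$ are carried by the same $\sigma$-algebra $\sigma(D_\nu)$ and share the same null sets, their completions coincide, whence the $\mathcal E^t$-measurable sets coincide with the $\nu$-measurable sets.

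Finally I would establish $J^{\mathcal E^t}_g=V^{-1}\mathcal T^{\nu,\mathfrak S}_gV$. For a simple $g=\sum_i c_i\chi_{A_i}$ this is just linearity, $J^{\mathcal E^t}_g=\sum_i c_i\mathcal E^t(A_i)=V^{-1}\mathcal T^{\nu,\mathfrak S}_gV$, and uniform approximation extends it to every bounded $\nu$-measurable $g$, the spectral integral being a norm limit. For general, possibly unbounded, $g$ I would compute the scalar spectral measure $A\mapsto\langle\mathcal E^t(A)\psi,\psi\rangle=\int_A\|(V\psi)(s)\|^2\,d\nu(s)$; then the domain condition $\int|g|^2\,d\langle\mathcal E^t(\cdot)\psi,\psi\rangle<\infty$ defining $D_{J^{\mathcal E^t}_g}$ reads $\int|g(s)|^2\|(V\psi)(s)\|^2\,d\nu(s)<\infty$, which is precisely $V\psi\in D_{\mathcal T^{\nu,\mathfrak S}_g}$, and the two closed operators agree on this common domain. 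I expect this unbounded case to be the main technical point, since it requires matching the abstractly defined domain of the spectral integral with the explicit domain of the multiplication operator; the projection-valued and bounded parts are routine transfers through $V$.
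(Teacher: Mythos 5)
Your proposal follows essentially the same route as the paper: $\mathcal E^t$ is studied by conjugating multiplication operators through $V$, the null sets are identified via $\nu$-nondegeneracy exactly as in the paper's proof, and the unbounded case rests on the scalar-measure computation $\mathcal E^t_\Psi(A)=\int_A\|(V\Psi)(s)\|^2\,d\nu(s)$ followed by the domain match. Two differences are worth noting. First, your simple-function/uniform-approximation treatment of bounded $g$ is a detour the paper does not need: in this paper's framework $J^{\mathcal E^t}_g$ is \emph{defined} by conditions~(\ref{dom}) and~(\ref{spectral_integral}), so bounded and unbounded $g$ are handled uniformly, and your more careful verification of strong $\sigma$-additivity fills in what the paper dismisses with ``clearly.'' Second, the paper gets the equality of measurable sets immediately from $D_{\mathcal E^t}=\sigma(D_\nu)$, whereas you argue via completions; both are fine.

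The one real soft spot is your final clause, ``the two closed operators agree on this common domain,'' which is asserted rather than proved: equality of domains alone does not identify two operators. The paper closes this point by running your scalar-measure identity once more, now for $g$ itself rather than $|g|^2$: for $\Psi\in D_{J^{\mathcal E^t}_g}$,
\[
\langle\Psi,J^{\mathcal E^t}_g\Psi\rangle=\int g(s)\,d\mathcal E^t_\Psi(s)=\int g(s)\,\|(V\Psi)(s)\|^2\,d\nu(s)=\langle V\Psi,\mathcal T^{\nu,\mathfrak S}_g V\Psi\rangle,
\]
after which polarization and the density of the common domain give $J^{\mathcal E^t}_g=V^{-1}\mathcal T^{\nu,\mathfrak S}_g V$. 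Alternatively, with your bounded case in hand you could finish by truncation: for $g_n=g\chi_{\{|g|\le n\}}$ one has $J^{\mathcal E^t}_{g_n}=V^{-1}\mathcal T^{\nu,\mathfrak S}_{g_n}V$, and both sides converge appropriately on the common domain by dominated convergence. Either way it is a one-line repair with ingredients you already have, but as written the crucial identification is missing.
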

\begin{proof}
Clearly, $\mathcal E^t$ is an $L(\mathfrak H)$-valued $\sigma$-additive function satisfying condition~(a) of Sec.~\ref{measures_sec}. Clearly, $\mathcal N_\nu\subset \mathcal N_{\mathcal E^t}$. If $N\in \mathcal N_{\mathcal E^t}$, then $\mathcal T^{\nu,\mathfrak S}_{\chi_N}=0$ and, hence, $\nu(N)=0$ because $\mathfrak S(s)\neq 0$ for $\nu$-a.e. $s$. It follows that $\mathcal N_\nu = \mathcal N_{\mathcal E^t}$ and, therefore, condition~(b) of Sec.~\ref{measures_sec} is also fulfilled. Thus, $\mathcal E^t$ is a spectral measure having the same null sets as $\nu$. Since $D_{\mathcal E^t}=\sigma(D_\nu)$, we have $\sigma(D_{\mathcal E^t})=\sigma(D_\nu)$, i.e., $\nu$-measurable sets coincide with $\mathcal E^t$-measurable sets. For any $A\in D_{\mathcal E^t}$ and $\Psi\in \mathfrak H$, we have $\mathcal E^t_\Psi(A) = \int_A \|(V\Psi)(s)\|^2\,d\nu(s)$ (see~Sec.~\ref{s_spec_meas}). Hence, a $\nu$-measurable function $g$ is $\mathcal E^t_\Psi$-integrable if and only if $s\to \|(V\Psi)(s)\|^2f(s)$ is a $\nu$-integrable function, in which case we have
\begin{equation}\label{equality_spec_int}
\int g(s)\,d\mathcal E^t_\Psi(s) = \int \|(V\Psi)(s)\|^2g(s)\,d\nu(s).
\end{equation}
Let $g$ be a $\nu$-measurable function and $\Psi\in \mathfrak H$. By~(\ref{dom}), we have
\begin{equation}\nonumber
\Psi\in D_{J^{\mathcal E^t}_g} \Longleftrightarrow \mbox{$|g|^2$ is $\mathcal E^t_\Psi$-integrable}\Longleftrightarrow  \mbox{$s\to \|g(s)(V\Psi)(s)\|^2$ is  $\nu$-integrable} \Longleftrightarrow V\Psi\in D_{\mathcal T^{\nu,\mathfrak S}_g}
\end{equation}
and, therefore, the domains of $J^{\mathcal E^t}_g$ and $V^{-1}\mathcal T^{\nu,\mathfrak S}_g V$ coincide. Now (\ref{spectral_integral}) and~(\ref{equality_spec_int}) imply that
\[
\langle\Psi, J^{\mathcal E^t}_g \Psi\rangle = \int g(s)\,d\mathcal E^t_\Psi(s) = \langle V\Psi, \mathcal T^{\nu,\mathfrak S}_g V\Psi\rangle
\]
for any $\Psi\in D_{J^{\mathcal E^t}_g}$. Hence, $J^{\mathcal E^t}_g = V^{-1}\mathcal T^{\nu,\mathfrak S}_g V$.
\end{proof}

Proposition~\ref{t0b} follows immediately from the next lemma.

\begin{lemma}\label{l_diag}
Let $\mathfrak X$ be a set of closed densely defined operators in $\mathfrak H$. An $\mathfrak H$-triple $t$ is an (exact) diagonalization for $\mathfrak X$ if and only if $\mathcal A(\mathfrak X)\subset \mathcal A(\mathcal P_{\mathcal E^t})$ (resp., $\mathcal A(\mathfrak X)= \mathcal A(\mathcal P_{\mathcal E^t})$).
\end{lemma}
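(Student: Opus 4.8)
The plan is to use the spectral measure $\mathcal E=\mathcal E^t$ together with the dictionary supplied by Lemma~\ref{l_spec_m} to translate both defining conditions of a (exact) diagonalization into inclusions between the von~Neumann algebras $\mathcal A(\mathfrak X)$ and $\mathcal A(\mathcal P_{\mathcal E})$. Recall that Lemma~\ref{l_spec_m} gives $J^{\mathcal E}_g = V^{-1}\mathcal T^{\nu,\mathfrak S}_g V$ for every $\nu$-measurable $g$ and, crucially, that the $\mathcal E$-measurable and $\mathcal E$-null sets coincide with the $\nu$-measurable and $\nu$-null ones; in particular a $\nu$-measurable $g$ is $\nu$-essentially bounded if and only if it is $\mathcal E$-essentially bounded. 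Thus every operator of the form $V^{-1}\mathcal T^{\nu,\mathfrak S}_g V$ appearing in Definition~\ref{df-diag} and in condition~(\ref{exactcond}) can be rewritten as $J^{\mathcal E}_g$, after which Lemmas~\ref{l4d} and~\ref{l1a} apply directly. The proof will consist of two short chains of equivalences, one for each parenthetical version of the statement.

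First I would treat the unparenthesized claim. By Definition~\ref{df-diag}, $t$ is a diagonalization for $\mathfrak X$ if and only if every $T\in\mathfrak X$ equals $V^{-1}\mathcal T^{\nu,\mathfrak S}_g V$ for some $\nu$-measurable $g$; by the dictionary this says each $T\in\mathfrak X$ equals $J^{\mathcal E}_g$ for some $\mathcal E$-measurable $g$. By statement~2 of Lemma~\ref{l4d}, this in turn is equivalent to every $T\in\mathfrak X$ being affiliated with $\mathcal A(\mathcal P_{\mathcal E})$, and finally Lemma~\ref{l1a} turns the latter into the inclusion $\mathcal A(\mathfrak X)\subset\mathcal A(\mathcal P_{\mathcal E})$. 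This establishes the first equivalence.

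For the exact case I would first show, for an arbitrary $\mathfrak H$-triple, that condition~(\ref{exactcond}) holds for all $\nu$-measurable $\nu$-essentially bounded $g$ if and only if $\mathcal A(\mathcal P_{\mathcal E})\subset\mathcal A(\mathfrak X)$. Indeed, rewriting $V^{-1}\mathcal T^{\nu,\mathfrak S}_g V$ as $J^{\mathcal E}_g$ and using that $\nu$-essential boundedness coincides with $\mathcal E$-essential boundedness, condition~(\ref{exactcond}) asks precisely that $J^{\mathcal E}_g\in\mathcal A(\mathfrak X)$ for every $\mathcal E$-measurable $\mathcal E$-essentially bounded $g$; by statement~1 of Lemma~\ref{l4d} the set of all such $J^{\mathcal E}_g$ is exactly $\mathcal A(\mathcal P_{\mathcal E})$, so the requirement is the inclusion $\mathcal A(\mathcal P_{\mathcal E})\subset\mathcal A(\mathfrak X)$. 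Since, by Definition~\ref{df-diag}, an exact diagonalization is a diagonalization that in addition satisfies~(\ref{exactcond}), combining this with the first equivalence yields that $t$ is an exact diagonalization if and only if $\mathcal A(\mathfrak X)\subset\mathcal A(\mathcal P_{\mathcal E})$ and $\mathcal A(\mathcal P_{\mathcal E})\subset\mathcal A(\mathfrak X)$ hold simultaneously, i.e.\ if and only if $\mathcal A(\mathfrak X)=\mathcal A(\mathcal P_{\mathcal E})$.

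I do not expect a serious obstacle here: once Lemma~\ref{l_spec_m} is in hand the argument is purely formal. The only points requiring genuine care are bookkeeping ones. First, the operators in $\mathfrak X$ may be unbounded, so one must phrase the diagonalization half in terms of \emph{affiliation} with $\mathcal A(\mathcal P_{\mathcal E})$ (statement~2 of Lemma~\ref{l4d}) rather than membership, and invoke Lemma~\ref{l1a} to pass to the inclusion of algebras. Second, one must consistently use that $\nu$- and $\mathcal E$-measurability, and $\nu$- and $\mathcal E$-essential boundedness, coincide, so that the measurable $g$ quantified in Definition~\ref{df-diag} and the essentially bounded $g$ quantified in~(\ref{exactcond}) match exactly the functions occurring in the two statements of Lemma~\ref{l4d}. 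With these matched, no further computation is needed.
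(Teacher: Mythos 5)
Your proof is correct and follows essentially the same route as the paper's: the dictionary from Lemma~\ref{l_spec_m}, statement~2 of Lemma~\ref{l4d} together with Lemma~\ref{l1a} for the equivalence ``diagonalization $\Longleftrightarrow$ $\mathcal A(\mathfrak X)\subset \mathcal A(\mathcal P_{\mathcal E^t})$'', and statement~1 of Lemma~\ref{l4d} to identify condition~(\ref{exactcond}) with the reverse inclusion. The bookkeeping points you flag (affiliation versus membership for unbounded operators, and the coincidence of $\nu$- and $\mathcal E^t$-measurability and essential boundedness) are exactly the ones the paper relies on.
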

\begin{proof}
Let $t$ be an $\mathfrak H$-triple. Lemma~\ref{l_spec_m}, statement~2 of Lemma~\ref{l4d}, and Lemma~\ref{l1a} imply that
\begin{multline}\nonumber
\mbox{$t$ is a diagonalization for $\mathfrak X$} \Longleftrightarrow\\ \mbox{every element of $\mathfrak X$ is equal to $J^{\mathcal E^t}_g$ for some $\mathcal E^t$-measurable $g$} \Longleftrightarrow \\  \mbox{every element of $\mathfrak X$ is affiliated with $\mathcal A(\mathcal P_{\mathcal E^t})$} \Longleftrightarrow \mathcal A(\mathfrak X)\subset \mathcal A(\mathcal P_{\mathcal E^t}).
\end{multline}
In view of statement~1 of Lemma~\ref{l4d}, an $\mathfrak H$-triple $t$ is an exact diagonalization for $\mathfrak X$ if and only if it is a diagonalization for $\mathfrak X$ and $\mathcal A(\mathfrak X)\supset \mathcal A(\mathcal P_{\mathcal E^t})$. By the above, these two conditions are equivalent to the equality $\mathcal A(\mathfrak X)= \mathcal A(\mathcal P_{\mathcal E^t})$.
\end{proof}

\begin{proof}[Proof of Theorem~$\mathrm{\ref{t0a}}$] Let $t=(\nu,\mathfrak S,V)$ be a diagonalization for $\mathfrak X$, where $\nu$ is standard. By Lemma~\ref{l_diag} and Proposition~\ref{t0}, $t$ is exact if and only if there is an $\mathcal E^t$-separating family $\{g_\iota\}_{\iota\in I}$ of $\mathcal E^t$-measurable complex functions such that $J^{\mathcal E^t}_{g_\iota}\in\mathfrak X$
for all $\iota\in I$. Hence, the required statement follows from Lemma~\ref{l_spec_m}.
\end{proof}

\begin{proof}[Proof of Proposition~$\mathrm{\ref{t_diag}}$]
Let $\mathfrak X$ be a set of pairwise commuting normal operators in $\mathfrak H$ and $\mathfrak Y = \bigcup_{T\in \mathfrak X} \mathcal P_{\mathcal E_T}$, where $\mathcal E_T$ is the spectral measure of $T$ (see Sec.~\ref{s_spec_meas}). Then $\mathfrak Y$ is an involutive subset of $L(\mathfrak H)$ whose elements pairwise commute and, therefore, $\mathcal M=A(\mathfrak Y)$ is an Abelian von~Neumann algebra. Since $\mathcal A(T) = \mathcal A(\mathcal P_{\mathcal E_T})\subset \mathcal M$ for any $T\in \mathfrak X$ (see Example~\ref{e_1}), every element of $\mathfrak X$ is affiliated with $\mathcal M$. It follows from Lemma~\ref{l1a} that $\mathcal A(\mathfrak X)\subset \mathcal M$. Hence, the algebra $\mathcal A(\mathfrak X)$ is Abelian. By Th\'eor\`eme~2 of Sec.~II.6.2 in~Ref.~\onlinecite{Dixmier}, there are a finite Borel measure $\nu$ on a compact metrizable
space, a $\nu$-measurable family $\mathfrak S$ of Hilbert spaces, and a unitary operator $V\colon \mathfrak H \to \int^\oplus \mathfrak
S(s)\,d\nu(s)$ such that $\mathcal A(\mathfrak X)$ coincides with the set of all operators $V^{-1}\mathcal T^{\nu,\mathfrak S}_g V$, where $g$ is a $\nu$-measurable
$\nu$-essentially bounded complex function. This means that $(\nu,\mathfrak S,V)$ is an exact diagonalization for $\mathcal A(\mathfrak X)$. Since $\mathfrak X$ is equivalent to $\mathcal A(\mathfrak X)$, Proposition~\ref{t0b} implies that $(\nu,\mathfrak S,V)$ is also an exact diagonalization for $\mathfrak X$.
\end{proof}

\section{Reduction by symmetries}
\label{s6}

In this section, we assume that
\begin{itemize}
\item[(A)] $\mathfrak H$ is a Hilbert space, $\mathfrak X$ is an involutive subset of $L(\mathfrak H)$, and $(\nu,\mathfrak S,V)$ is an exact digonalization for $\mathfrak X$.
\end{itemize}

Given a $\nu$-measurable family $\mathcal H$ of closed operators in $\mathfrak S$, we define the operator $Q_{\mathcal H}$ in $\mathfrak H$ by setting
\[
Q_{\mathcal H} = V^{-1} \int^\oplus \mathcal H(s) \,d\nu(s)\,V.
\]

The structure of closed operators in $\mathfrak H$ commuting with operators in $\mathfrak X$ and of their closed (in particular, self-adjoint) symmetry preserving extensions is described by the next theorem.

\begin{theorem} \label{t1}
Let $(\mathrm A)$ be satisfied. Then the following statements hold:
\begin{enumerate}
\item[$1.$] $H$ is a closed operator in $\mathfrak H$ commuting with all elements of $\mathfrak X$ if and only if $H=Q_{\mathcal H}$ for some $\nu$-measurable family $\mathcal H$ of closed operators in $\mathfrak S$.
\item[$2.$] Let $\mathcal H$ and $\tilde{\mathcal H}$ be $\nu$-measurable families of closed operators in $\mathfrak S$. Then $Q_{\tilde{\mathcal H}}$ is an extension of $Q_{\mathcal H}$ if and only if $\tilde{\mathcal H}(s)$ is an extension of $\mathcal H(s)$ for $\nu$-a.e.~$s$. In particular, $Q_{\tilde{\mathcal H}}=Q_{\mathcal H}$ if and only if $\tilde{\mathcal H}(s)=\mathcal H(s)$ for $\nu$-a.e. $s$.
\item[$3.$] Let $\mathcal H$ be a $\nu$-measurable family of closed operators in $\mathfrak S$. Then $Q_{\mathcal H}$ is self-adjoint if and only if $\mathcal H(s)$ is self-adjoint for $\nu$-a.e. $s$.
\end{enumerate}
\end{theorem}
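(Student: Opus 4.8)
The plan is to reduce all three statements to the reduction theory for the direct integral $\int^\oplus \mathfrak S(s)\,d\nu(s)$, using exactness of the diagonalization to convert commutation with $\mathfrak X$ into commutation with the full diagonal algebra. Write $t=(\nu,\mathfrak S,V)$ and let $\mathcal E^t$ be the spectral measure of Lemma~\ref{l_spec_m}. The key leverage is Lemma~\ref{l_diag}: exactness of $t$ is equivalent to $\mathcal A(\mathfrak X)=\mathcal A(\mathcal P_{\mathcal E^t})$. Suppose $H$ is closed and commutes with every $T\in\mathfrak X$. The set $\mathcal M\subset L(\mathfrak H)$ of bounded operators commuting with $H$ is a unital strongly closed algebra (Lemma~1 in Ref.~\onlinecite{JMAA2013}); since $\mathfrak X$ is involutive and contained in $\mathcal M$, the smallest strongly closed unital algebra generated by $\mathfrak X$, namely $\mathcal A(\mathfrak X)=\mathcal A(\mathcal P_{\mathcal E^t})$, is contained in $\mathcal M$. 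By statement~1 of Lemma~\ref{l4d} together with Lemma~\ref{l_spec_m}, this algebra is exactly the set of operators $V^{-1}\mathcal T^{\nu,\mathfrak S}_g V$ with $g$ $\nu$-essentially bounded, so $VHV^{-1}$ commutes with every bounded multiplication operator $\mathcal T^{\nu,\mathfrak S}_g$.

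For statement~1, I would treat the two implications separately. For the reverse implication, if $H=Q_{\mathcal H}$ then $VHV^{-1}=\int^\oplus\mathcal H(s)\,d\nu(s)$ is decomposable, hence closed, and commutes with every bounded diagonal operator; as each $T\in\mathfrak X$ is carried by $V$ to such an operator (its multiplier being $\nu$-essentially bounded because $T\in L(\mathfrak H)$), it follows that $H$ commutes with all of $\mathfrak X$. For the forward implication, the paragraph above shows that $VHV^{-1}$ commutes with every bounded $\mathcal T^{\nu,\mathfrak S}_g$; the generalization of von~Neumann's reduction theory to unbounded closed operators (Refs.~\onlinecite{Nussbaum,Barriere1951}, in the form recorded in Appendix~\ref{app1}) then yields that $VHV^{-1}$ is decomposable, i.e.\ equals $\int^\oplus\mathcal H(s)\,d\nu(s)$ for some $\nu$-measurable family $\mathcal H$ of closed operators, and applying $V^{-1}\cdot V$ gives $H=Q_{\mathcal H}$.

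For statements~2 and~3 I would invoke the fiberwise properties of direct integrals collected in Appendix~\ref{app1} and transport them through the unitary $V$. Statement~2 reduces to the assertion that $\int^\oplus\mathcal H(s)\,d\nu(s)\subset\int^\oplus\tilde{\mathcal H}(s)\,d\nu(s)$ if and only if $\mathcal H(s)\subset\tilde{\mathcal H}(s)$ for $\nu$-a.e.\ $s$. The \emph{if} part is immediate by integrating the graph inclusions fiberwise, and the uniqueness assertion follows by applying the extension statement in both directions. Statement~3 then follows from the adjoint formula $\bigl(\int^\oplus\mathcal H(s)\,d\nu(s)\bigr)^{*}=\int^\oplus\mathcal H(s)^{*}\,d\nu(s)$: if $\mathcal H(s)$ is self-adjoint for $\nu$-a.e.\ $s$ then it is densely defined a.e., so $Q_{\mathcal H}$ is densely defined and $Q_{\mathcal H}^{*}=Q_{\mathcal H}$; conversely, self-adjointness of $Q_{\mathcal H}$ forces $\mathcal H(s)$ to be densely defined a.e., and comparing $Q_{\mathcal H}$ with $Q_{\mathcal H}^{*}=V^{-1}\int^\oplus\mathcal H(s)^{*}\,d\nu(s)\,V$ via the uniqueness half of statement~2 gives $\mathcal H(s)=\mathcal H(s)^{*}$ a.e.

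The main obstacle I expect is the forward direction of statement~1: producing the decomposition of a closed—and possibly not densely defined—operator from its commutation with the diagonal algebra alone. This is precisely where the generalized reduction theory is needed, and the delicate points are the measurable selection of the fibers $\mathcal H(s)$ and the handling of the non-densely-defined case, which lies slightly outside the classical statements. The remaining difficulty, of a more routine measure-theoretic character, is the \emph{only if} half of statement~2—passing from a global graph inclusion to an a.e.\ fiberwise one—on which statement~3 also rests.
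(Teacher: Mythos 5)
Your proposal is correct and takes essentially the same route as the paper: strong closedness of the commutant of a closed operator together with exactness shows that $VHV^{-1}$ commutes with every bounded multiplication operator, the generalized reduction theory recorded as Proposition~\ref{p_di} then gives the decomposition in statement~1, and statements~2 and~3 are exactly Proposition~\ref{l_di4b} and Corollary~\ref{c_di} of Appendix~\ref{app1} (the latter proved there by precisely your adjoint-formula argument). The only cosmetic difference is your detour through Lemma~\ref{l_diag}, Lemma~\ref{l4d}, and Lemma~\ref{l_spec_m}: since exactness is by definition condition~(\ref{exactcond}), the inclusion of the bounded multiplication operators in $\mathcal A(\mathfrak X)$ is immediate, and the detour needlessly imports the separability assumption standing in Sec.~\ref{s5}, which hypothesis~$(\mathrm A)$ does not require.
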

\begin{proof}
Let $\mathcal H$ be a $\nu$-measurable family of closed operators in $\mathfrak S$. Since $\mathfrak X\subset L(\mathfrak H)$, every element of $\mathfrak X$ is representable in the form $V^{-1} \mathcal T^{\nu,\mathfrak S}_g V$, where $g$ is $\nu$-essentially bounded. By Proposition~\ref{p_di}, $\int^\oplus \mathcal H(s) \,d\nu(s)$ is a closed operator commuting with all $\mathcal T^{\nu,\mathfrak S}_g$ and, therefore, $Q_{\mathcal H}$ is a closed operator commuting with all elements of $\mathfrak X$. Conversely, let $H$ be a closed operator in $\mathfrak H$ commuting with all elements of $\mathfrak X$. Let $\mathcal M$ denote the subalgebra of $L(\mathfrak H)$ consisting of all operators commuting with $H$. By Lemma~1 in~Ref.~\onlinecite{JMAA2013}, $\mathcal M$ is strongly closed. Since $\mathfrak X$ is involutive, $\mathcal A(\mathfrak X)$ coincides with the smallest strongly closed subalgebra of $L(\mathfrak H)$ containing $\mathfrak X$ and the identity operator in $\mathfrak H$. We hence have $\mathcal A(\mathfrak X)\subset \mathcal M$, i.e., $H$ commutes with all operators in $\mathcal A(\mathfrak X)$. As the diagonalization $(\nu,\mathfrak S,V)$ is exact, it follows from~(\ref{exactcond}) that $VH V^{-1}$ commutes with all operators $\mathcal T^{\nu,\mathfrak S}_g$, where $g$ is a $\nu$-measurable $\nu$-essentially bounded function. By Proposition~\ref{p_di}, we have $VH V^{-1} = \int^\oplus \mathcal H(s) \,d\nu(s)$ for some $\nu$-measurable family $\mathcal H$ of closed operators in $\mathfrak S$. This means that $H=Q_{\mathcal H}$ and statement~1 is proved. Statements~2 and~3 follow immediately from Proposition~~\ref{l_di4b} and Corollary~\ref{c_di} respectively.
\end{proof}

In particular, Statements~1 and~2 of Theorem~\ref{t1} imply the existence and uniqueness (up to $\nu$-equivalence) of decomposition~(\ref{dint_repr}) for any closed operator $H$ commuting with all elements of $\mathfrak X$.

\begin{corollary}\label{c_red}
Let $(\mathrm A)$ be satisfied and $\mathcal H$ be a $\nu$-measurable family of closed operators in $\mathfrak S$. Then the closed (resp., self-adjoint) extensions of $Q_{\mathcal H}$ commuting with all elements of $\mathfrak X$ are precisely the operators $Q_{\tilde{\mathcal H}}$, where $\tilde{\mathcal H}$ is a $\nu$-measurable family of operators in $\mathfrak S$ such that $\tilde{\mathcal H}(s)$ is a closed (resp., self-adjoint) extension of $\mathcal H(s)$ for $\nu$-a.e. $s$.
\end{corollary}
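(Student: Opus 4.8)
The statement to prove is Corollary~\ref{c_red}, which characterizes the symmetry-preserving closed (resp.\ self-adjoint) extensions of $Q_{\mathcal H}$ as exactly the operators $Q_{\tilde{\mathcal H}}$ arising from fiberwise extensions. This is an immediate consequence of Theorem~\ref{t1}, so the proof should be short and should simply assemble the three statements of that theorem.

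The plan is as follows. First I would fix a $\nu$-measurable family $\mathcal H$ of closed operators and consider an arbitrary closed extension $\tilde H$ of $Q_{\mathcal H}$ that commutes with every element of $\mathfrak X$. By statement~1 of Theorem~\ref{t1}, any such $\tilde H$ must be of the form $Q_{\tilde{\mathcal H}}$ for some $\nu$-measurable family $\tilde{\mathcal H}$ of closed operators in $\mathfrak S$; this is precisely where the hypothesis that $\tilde H$ commutes with $\mathfrak X$ is used. Then, since $Q_{\tilde{\mathcal H}}$ extends $Q_{\mathcal H}$, statement~2 of Theorem~\ref{t1} guarantees that $\tilde{\mathcal H}(s)$ is an extension of $\mathcal H(s)$ for $\nu$-a.e.\ $s$. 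This yields the inclusion in one direction: every symmetry-preserving closed extension of $Q_{\mathcal H}$ has the required fiberwise form.

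For the converse direction, I would start from an arbitrary $\nu$-measurable family $\tilde{\mathcal H}$ such that $\tilde{\mathcal H}(s)$ is a closed extension of $\mathcal H(s)$ for $\nu$-a.e.\ $s$. Statement~1 of Theorem~\ref{t1} shows that $Q_{\tilde{\mathcal H}}$ is closed and commutes with all elements of $\mathfrak X$, and statement~2 shows that $Q_{\tilde{\mathcal H}}$ is an extension of $Q_{\mathcal H}$. Hence $Q_{\tilde{\mathcal H}}$ is indeed a symmetry-preserving closed extension of $Q_{\mathcal H}$, establishing the reverse inclusion and completing the argument for the closed case. The self-adjoint case is obtained by the same reasoning after invoking statement~3 of Theorem~\ref{t1}: a symmetry-preserving self-adjoint extension is in particular a symmetry-preserving closed extension, so it equals $Q_{\tilde{\mathcal H}}$ with $\tilde{\mathcal H}(s)$ a closed extension of $\mathcal H(s)$ a.e., and statement~3 tells us $Q_{\tilde{\mathcal H}}$ is self-adjoint precisely when $\tilde{\mathcal H}(s)$ is self-adjoint for $\nu$-a.e.\ $s$.

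There is essentially no genuine obstacle here, since all the substantive analytic content---the existence of the decomposition, the correspondence between operators and their fibers, and the measurability bookkeeping---has already been absorbed into Theorem~\ref{t1}. The only point demanding a little care is to phrase the two inclusions so that the a.e.\ qualifications on the fiber families match up, and to note that an operator that is self-adjoint is automatically closed, so the self-adjoint assertion really is a specialization of the closed one rather than a separate argument. Accordingly, I expect the final write-up to amount to little more than two or three sentences citing statements~1--3 of Theorem~\ref{t1}.
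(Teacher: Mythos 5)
Your proposal is correct and matches the paper's (implicit) argument: the paper states Corollary~\ref{c_red} without proof precisely because it follows by assembling statements~1--3 of Theorem~\ref{t1} exactly as you describe. Both directions and the reduction of the self-adjoint case to the closed case via statement~3 are handled as intended.
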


In concrete examples, $H$ usually comes as the closure of some non-closed operator $\check H$. It is often possible to establish by a direct computation that $Q_{\mathcal H}$ is an extension of $\check H$ for some $\nu$-a.e. defined family $\mathcal H$ of closed operators. In this case, the next proposition may be used to prove that actually equality~$(\ref{dint_repr})$ holds.

\begin{proposition}\label{p_reduction}
Let $(\mathrm A)$ be satisfied, $\check H$ be an operator in $\mathfrak H$, and $\mathcal H$ be a $\nu$-a.e. defined family of closed operators in $\mathfrak S$ such that $Q_{\mathcal H}$ is an extension of $\check H$. Suppose $D_{\check H}$ is taken to itself by all operators in $\mathfrak X$ and there is a sequence $\xi_1,\xi_2,\ldots$ of elements of $V(D_{\check H})$ such that the linear span of $(\xi_j(s),\mathcal H(s)\xi_j(s))$ is dense in the graph $G_{\mathcal H(s)}$ of $\mathcal H(s)$ for $\nu$-a.e. $s$. Then $\mathcal H$ is $\nu$-measurable, $\check H$ is closable, and $\overline{\check H}=Q_{\mathcal H}$.
\end{proposition}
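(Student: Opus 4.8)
The plan is to argue entirely with graphs, regarded as closed subspaces of $\int^\oplus(\mathfrak S(s)\oplus\mathfrak S(s))\,d\nu(s)$, and to identify the closure of the graph of $\check H$ with the graph of $Q_{\mathcal H}$. Write $\mathcal K=\int^\oplus\mathfrak S(s)\,d\nu(s)$ and $G=\int^\oplus\mathcal H(s)\,d\nu(s)$, so that $Q_{\mathcal H}=V^{-1}GV$. First I would record the measurability of $\mathcal H$: for each $\Phi_j\in D_{\check H}$ with $\xi_j=V\Phi_j$, the assumption that $Q_{\mathcal H}$ extends $\check H$ forces $\xi_j(s)\in D_{\mathcal H(s)}$ and $\mathcal H(s)\xi_j(s)=(V\check H\Phi_j)(s)$ for $\nu$-a.e.\ $s$, so $s\mapsto(\xi_j(s),\mathcal H(s)\xi_j(s))$ is a measurable vector field; by hypothesis these fields span a dense subspace of $G_{\mathcal H(s)}$ for a.e.\ $s$, which is exactly the defining condition (Appendix~\ref{app1}) for $\mathcal H$ to be a $\nu$-measurable family of closed operators. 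Then $G$ and $Q_{\mathcal H}$ are well defined, and $Q_{\mathcal H}$ is closed by Proposition~\ref{p_di}. Being a closed extension of $\check H$, it follows immediately that $\check H$ is closable and $\overline{\check H}\subset Q_{\mathcal H}$. Since $(V\oplus V)$ is unitary, $(V^{-1}\oplus V^{-1})\mathcal L=G_{\overline{\check H}}$ where $\mathcal L=\overline{(V\oplus V)(G_{\check H})}$, while $(V^{-1}\oplus V^{-1})G_G=G_{Q_{\mathcal H}}$ with $G_G=\int^\oplus G_{\mathcal H(s)}\,d\nu(s)$; hence the whole problem reduces to proving the inclusion of subspaces $G_G\subset\mathcal L$.

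The heart of the argument is to show that $\mathcal L$ is decomposable, and this is where hypothesis~(1) and exactness enter. Fix $T\in\mathfrak X$, say $T=V^{-1}\mathcal T^{\nu,\mathfrak S}_g V$. For $\Phi\in D_{\check H}$ we have $T\Phi\in D_{\check H}$ by~(1), and since $G$ commutes with $\mathcal T^{\nu,\mathfrak S}_g$ (Proposition~\ref{p_di}) the operator $Q_{\mathcal H}$ commutes with $T$; as $Q_{\mathcal H}$ extends $\check H$ this yields $\check H(T\Phi)=Q_{\mathcal H}(T\Phi)=TQ_{\mathcal H}\Phi=T\check H\Phi$. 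Thus $(V\oplus V)(G_{\check H})$, and hence its closure $\mathcal L$, is carried into itself by the diagonal multiplication $(\eta,\zeta)\mapsto(g\eta,g\zeta)$. As $T$ ranges over the involutive family $\mathfrak X$, the corresponding set of diagonal multiplications is involutive, so $\mathcal L$ is reducing for it and $P_{\mathcal L}$ lies in its commutant, hence in the commutant of the von~Neumann algebra it generates. By exactness, Lemma~\ref{l_diag} together with statement~1 of Lemma~\ref{l4d} shows that $V\mathcal A(\mathfrak X)V^{-1}$ is the full algebra of multiplications by $\nu$-essentially bounded functions; consequently the generated algebra of diagonal multiplications is precisely the diagonalizable algebra of $\int^\oplus(\mathfrak S(s)\oplus\mathfrak S(s))\,d\nu(s)$. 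Therefore $P_{\mathcal L}$ commutes with every diagonalizable operator and is a decomposable projection, so $\mathcal L=\int^\oplus\mathcal L(s)\,d\nu(s)$ for a measurable field of closed subspaces $\mathcal L(s)\subset\mathfrak S(s)\oplus\mathfrak S(s)$.

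The fibrewise conclusion is then routine. From $(\xi_j,\mathcal H(\cdot)\xi_j)\in\mathcal L$ we get $(\xi_j(s),\mathcal H(s)\xi_j(s))\in\mathcal L(s)$ for a.e.\ $s$; discarding the countable union of these null sets together with the null set where the density in the statement fails, we find that for a.e.\ $s$ the vectors $(\xi_j(s),\mathcal H(s)\xi_j(s))$ lie in the closed subspace $\mathcal L(s)$ and span $G_{\mathcal H(s)}$ densely, whence $G_{\mathcal H(s)}\subset\mathcal L(s)$. Integrating over $s$ gives $G_G=\int^\oplus G_{\mathcal H(s)}\,d\nu(s)\subset\int^\oplus\mathcal L(s)\,d\nu(s)=\mathcal L$, which is exactly $Q_{\mathcal H}\subset\overline{\check H}$. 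Combined with $\overline{\check H}\subset Q_{\mathcal H}$ this proves $\overline{\check H}=Q_{\mathcal H}$.

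I expect the main obstacle to be the decomposability step of the second paragraph. The invariance coming directly from $\mathfrak X$ is only invariance under a potentially small family of multiplication operators, and it is precisely the exactness of the diagonalization—promoting $\mathfrak X$ to the full diagonalizable algebra through $\mathcal A(\mathfrak X)=\mathcal A(\mathcal P_{\mathcal E^t})$—that upgrades this to commutation of $P_{\mathcal L}$ with all diagonalizable operators. Without exactness $\mathcal L$ need not split as a direct integral of fibre subspaces, and the fibrewise argument would be unavailable; by contrast, the measurability of $\mathcal H$ and the final density passage are straightforward.
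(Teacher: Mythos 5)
Your argument is correct, but its core runs along a genuinely different track from the paper's. After the measurability step (which coincides with the paper's), the paper does not decompose the graph closure directly: it notes, by the same computation you use for invariance, that $\check H$ commutes with every element of $\mathfrak X$, passes to $H=\overline{\check H}$ by Lemma~\ref{l1}, and then invokes Theorem~\ref{t1} as a black box --- statement~1 yields a $\nu$-measurable family $\mathcal H_0$ with $H=Q_{\mathcal H_0}$, statement~2 yields that $\mathcal H(s)$ extends $\mathcal H_0(s)$ for $\nu$-a.e.\ $s$, and the density hypothesis plus closedness of $\mathcal H_0(s)$ then forces $\mathcal H_0(s)=\mathcal H(s)$ a.e. You avoid the auxiliary family $\mathcal H_0$ altogether and instead re-derive the decomposability mechanism in the doubled space: invariance of $\mathcal L$ under the involutive set of diagonal multiplications, the bicommutant theorem, exactness, and then (implicitly) Lemma~\ref{l_di3}, followed by the fibrewise comparison $G_{\mathcal H(s)}\subset\mathcal L(s)$. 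The underlying engine is the same --- exactness upgrades $\mathfrak X$-invariance to invariance under all multiplication operators, after which Lemma~\ref{l_di3} decomposes --- since Theorem~\ref{t1} is itself proved exactly this way; the paper's route is shorter because it reuses that theorem, while yours is more self-contained at the level of subspaces and never needs the fibres of the decomposition to be operator graphs (Lemma~\ref{l_di4}) or the extension comparison of statement~2 of Theorem~\ref{t1}, since containment of $G_{\mathcal H(s)}$ in the closed subspace $\mathcal L(s)$ suffices. Two small repairs would tighten your write-up: (i) the passage from ``$P_{\mathcal L}$ commutes with the von~Neumann algebra generated by the diagonal multiplications coming from $\mathfrak X$'' to ``$P_{\mathcal L}$ commutes with all diagonal multiplications'' silently uses that $A\mapsto A\oplus A$ is strongly continuous, so that $\{A\oplus A: A\in V\mathcal A(\mathfrak X)V^{-1}\}$ lies in that generated algebra; moreover, condition~(\ref{exactcond}) by itself already gives $\mathcal T^{\nu,\mathfrak S}_h\in V\mathcal A(\mathfrak X)V^{-1}$ for every $\nu$-essentially bounded $h$, so your appeal to Lemma~\ref{l_diag} and Lemma~\ref{l4d} (which carry separability and standardness hypotheses not present in assumption~(A)) is avoidable; (ii) your identification of $(V\oplus V)(G_{\check H})$ and $\int^\oplus G_{\mathcal H(s)}\,d\nu(s)$ as subspaces of one and the same Hilbert space uses the natural isomorphism $U_{\mathfrak S,\mathfrak S}$ of Lemma~\ref{l_di3a} and should be made explicit.
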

\begin{proof}
Since $\xi_j\in D_{V\check HV^{-1}}$ for all $j=1,2,\ldots$ and $\int^\oplus \mathcal H(s)\,d\nu(s)$ is an extension of $V\check HV^{-1}$, we have $\mathcal H(s)\xi_j(s) = (V\check HV^{-1}\xi_j)(s)$ for $\nu$-a.e. $s$ and, hence, $s\to \mathcal H(s)\xi_j(s)$ are $\nu$-measurable sections of $\mathfrak S$. It follows that the family $\mathcal H$ is $\nu$-measurable. Statement~1 of Theorem~\ref{t1} implies that $Q_{\mathcal H}$ is a closed operator commuting with all elements of $\mathfrak X$. Since $Q_{\mathcal H}$ is an extension of $\check H$ and $D_{\check H}$ is taken to itself by all operators in $\mathfrak X$, it follows that $\check H$ is closable and commutes with all elements of $\mathfrak X$. By Lemma~\ref{l1}, we conclude that $H=\overline{\check H}$ also commutes with all operators in $\mathfrak X$. By statement~1 of Theorem~\ref{t1}, there is a $\nu$-measurable family $\mathcal H_0$ of closed operators in $\mathfrak S$ such that $H = Q_{\mathcal H_0}$. As $Q_{\mathcal H}$ is a closed extension of $\check H$, it is also an extension of $H$. By statement~2 of Theorem~\ref{t1}, it follows that $\mathcal H(s)$ is an extension of $\mathcal H_0(s)$ for $\nu$-a.e. $s$. Since $V(D_{\check H})\subset V(D_{H})$ and $V(D_H)$ coincides with the domain of $\int^\oplus \mathcal H_0(s)\,d\nu(s)$, we have $\xi_j(s)\in D_{\mathcal H_0(s)}$ and $\mathcal H_0(s)\xi_j(s)=\mathcal H(s)\xi_j(s)$ for all $j$ and $\nu$-a.e. $s$. Thus, the linear span of $(\xi_j(s),\mathcal H(s)\xi_j(s))$ is contained in $G_{\mathcal H_0(s)}$ and, hence, $G_{\mathcal H_0(s)}$ is dense in $G_{\mathcal H(s)}$ for $\nu$-a.e. $s$. In view of the closedness of $\mathcal H_0(s)$, this implies that $\mathcal H_0(s)=\mathcal H(s)$ for $\nu$-a.e. $s$.
\end{proof}

\section{One-dimensional Schr\"odinger operators}
\label{s_one}

In this section, we recall basic facts (see, e.g., Refs.~\onlinecite{Naimark,Teschl,Weidmann}) concerning one-dimensional Schr\"odinger operators and then briefly discuss self-adjoint realizations of differential expression~(\ref{onedimdiff}) and their eigenfunction expansions. In Sec.~\ref{s8}, we shall use the results of Sec.~\ref{s6} to represent self-adjoint extensions of the Aharonov-Bohm Hamiltonian as direct integrals of one-dimensional Schr\"odinger operators of type~(\ref{onedimdiff}). The proof of Theorem~\ref{main_theorem} given in Sec.~\ref{s11} is based on combining such a representation with the analysis of the one-dimensional problem given in this section.

Let $-\infty \leq a < b \leq \infty$ and $\lambda_{a,b}$ be the restriction to $(a,b)$ of the Lebesgue measure $\lambda$ on $\R$.
We denote by $\mathcal D$ the space of all complex continuously differentiable functions on $(a,b)$ whose derivative is absolutely continuous on $(a,b)$ (i.e., absolutely continuous on every segment $[c,d]$ with $a<c\leq d<b$).
Given a locally integrable real function $q$ on $(a,b)$, we denote by $l_q$ the linear operator from $\mathcal D$ to the space of complex $\lambda_{a,b}$-equivalence classes such that
\begin{equation}\label{lq}
l_q f(r) = -f''(r)+ q(r) f(r)
\end{equation}
for $\lambda$-a.e. $r\in (a,b)$.
For every $c\in (a,b)$ and complex numbers $z_1$ and $z_2$, there is a unique solution $f$ of the equation $l_{q}f=0$ such that $f(c)=z_1$ and $f'(c)=z_2$. This implies that solutions of $l_{q}f=0$ constitute a two-dimensional subspace of $\mathcal D$.
For any functions $f,g\in \mathcal D$, their Wronskian $W_r(f,g)$ at point $r\in (a,b)$ is defined by the
relation
\begin{equation}\nonumber
W_r(f,g) = f(r)g'(r) - f'(r)g(r).
\end{equation}
Clearly, $r\to W_r(f,g)$ is an absolutely continuous function on $(a,b)$. If $f$ and $g$ are such that $r\to W_r(f,g)$ is a constant function on $(a,b)$ (this is the case, in particular, when $f$ and $g$ are solutions of $l_{q}f=l_{q}g=0$), its value will be denoted by $W(f,g)$.
Let
\begin{equation}\label{D_q}
\mathcal D_q = \{f\in \mathcal D : f \mbox{ and } l_q f \mbox{ are both square-integrable on } (a,b)\}.
\end{equation}
A $\lambda_{a,b}$-measurable complex function $f$ is said to be left (right) square-integrable on $(a,b)$ if $\int_a^c |f(r)|^2\,dr <\infty$ (resp., $\int_c^b
|f(r)|^2\,dx <\infty$) for any $c\in (a,b)$. The subspace of $\mathcal D$ consisting of left (right) square-integrable on $(a,b)$ functions $f$ such that $l_qf$ is also left (resp., right) square-integrable on $(a,b)$ will be denoted by $\mathcal D_q^l$ (resp., $\mathcal D_q^r$). We obviously have $\mathcal D_q = \mathcal D_q^l\cap \mathcal D_q^r$.
It follows from~(\ref{lq}) by integrating by parts that
\[
\int_c^d ((l_q f)(r)g(r) - f(r) (l_q g)(r))\,dr = W_d(f,g) - W_c(f,g)
\]
for every $f,g\in \mathcal D$ and $c,d\in (a,b)$. This implies the existence of limits $W_a(f,g) = \lim_{r\downarrow a} W_r(f,g)$ and $W_b(f,g) = \lim_{r\uparrow b} W_r(f,g)$ for $f,g\in \mathcal D_q^l$ and $f,g\in \mathcal D_q^r$ respectively. Moreover, it follows that\footnote{Throughout this section and Sec.~\ref{s_meas}, all equivalence classes are taken with respect to $\lambda_{a,b}$. We shall drop the subscript
and write $[f]$ instead of $[f]_{\lambda_{a,b}}$.}
\begin{equation}\label{anti}
\langle l_q f, [g]\rangle - \langle [f], l_q g\rangle = W_{b}(\bar f,g) - W_{a}(\bar f,g)
\end{equation}
for any $f,g\in \mathcal D_q$, where $\langle\cdot,\cdot\rangle$ is the scalar product in $L_2(a,b)$.

For any linear subspace $Z$ of $\mathcal D_q$, let $L_q(Z)$ be the linear operator in $L_2(a,b)$ defined by the relations
\begin{align}
& D_{L_q(Z)} = \{[f]: f\in Z\},\nonumber\\
& L_q(Z) [f] = l_q f,\quad f\in Z.\nonumber
\end{align}
We define the minimal operator $L_q$ by setting
\begin{equation}\label{L_q}
L_q = L_q(\mathcal D_q^0),
\end{equation}
where
\begin{equation}\label{D^0_q}
\mathcal D_q^0 = \{f\in \mathcal D_q: W_a(f,g)=W_b(f,g)=0\mbox{ for any }g\in \mathcal D_q\}.
\end{equation}
By~(\ref{anti}), the operator $L_q(Z)$ is symmetric if and only if $W_{a}(\bar f,g) = W_{b}(\bar f,g)$ for any $f,g\in Z$. In particular, $L_q$ is a symmetric operator.
Moreover, $L_q$ is closed and densely defined and its adjoint $L_q^*$ is given by
\begin{equation}\label{lq*}
L_q^* = L_q(\mathcal D_q)
\end{equation}
(see~Ref.~\onlinecite{Teschl}, Lemma~9.4).
If $T$ is a symmetric extension of $L_q$, then $L^*_q$ is an extension of $T^*$ and, hence, of $T$. In view of~(\ref{lq*}), this implies that $T$ is of the form $L_q(Z)$ for some subspace $Z$ of~$\mathcal D_q$.

If $W_{a}(f,g)=0$ for any $f,g\in \mathcal D_q^l$, then $q$ is said to be in the limit point case (l.p.c.) at $a$. Otherwise $q$ is said to be in the limit circle case (l.c.c.) at $a$. Similarly, $q$ is said to be in the l.p.c. at $b$ if $W_{b}(f,g)=0$ for any $f,g\in \mathcal D_q^r$ and to be in l.c.c. at $b$ otherwise. According to the well-known Weyl alternative (see, e.g., Ref~\onlinecite{Teschl}, Theorem~9.9), $q$ is in l.c.c. at $a$ if and only if all solutions of $l_{q}f = 0$ are left square-integrable on $(a,b)$ (and, hence, belong to $\mathcal D_q^l$).

If $q$ is in l.p.c. at both $a$ and $b$, then (\ref{lq*}) implies that $L_q^*$ is symmetric and, therefore, $L_q$ is self-adjoint.

Suppose now that $q$ is in l.c.c. at $a$.
Let $f_1,f_2\in \mathcal D$ be linearly independent functions such that $l_qf_1 = l_qf_2 = 0$. Then $f_1$ and $f_2$ are left square-integrable on $(a,b)$ and, therefore, the function
\begin{equation}\label{zeta}
\delta^g_q(r) = \frac{1}{W(f_1,f_2)}\left[ f_1(r)\int\limits_a^r (l_qg)(\rho)f_2(\rho)\,d\rho - f_2(r)\int\limits_a^r (l_qg)(\rho)f_1(\rho)\,d\rho\right]
\end{equation}
is well-defined and belongs to $\mathcal D$ for any $g\in \mathcal D_q$. It is straightforward to check that $\delta_q^g$ does not depend on the choice of the solutions $f_1$ and $f_2$ and $l_q \delta^g_q = l_q g$. Hence,
the function
\begin{equation}\label{sigma}
\gamma_q^g = g -\delta_q^g
\end{equation}
satisfies the equation
\begin{equation}\label{lqgamma}
l_q \gamma_q^g = 0.
\end{equation}

Let $q$ be in l.c.c. at $a$ and in l.p.c. at $b$. In this case, $L_q$ has deficiency indices $(1,1)$ and the self-adjoint extensions of $L_q$ are precisely the operators (see~Ref.~\onlinecite{Smirnov2015}, Lemma~11)
\begin{equation}\label{Lqf}
L_q^f = L_q(Z_q^f),
\end{equation}
where $f$ is a nontrivial real solution of $l_q f = 0$ and the subspace $Z_q^f$ of $\mathcal D_q$ is given by
\begin{equation}\label{Zqf}
Z_q^f = \{g\in \mathcal D_q: W_a(f,g) = 0\}.
\end{equation}
The operator $L_q^f$ determines $f$ uniquely up to a nonzero real coefficient. Since the deficiency indices of $L_q$ are both equal to $1$, the orthogonal complement $G_{L_q^f}\ominus G_{L_q}$ of the graph $G_{L_q}$
of $L_q$ in the graph $G_{L_q^f}$ of $L_q^f$ is one-dimensional.

\begin{lemma}\label{l_ext}
Suppose $q$ is in l.c.c. at $a$ and in l.p.c. at $b$. Let $T$ be a self-adjoint extension of $L_q$ and $g$ be a real function in $\mathcal D_q$ such that $[g]\in D_{T}\setminus
D_{L_q}$. Then $\gamma_q^g$ is a real nontrivial solution of~$(\ref{lqgamma})$ and $T = L_q^{\gamma_q^g}$.
\end{lemma}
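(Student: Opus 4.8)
The plan is to exploit the splitting $g=\gamma_q^g+\delta_q^g$ from~(\ref{zeta})--(\ref{sigma}) together with the classification of self-adjoint extensions as operators $L_q^f$ recalled above, thereby reducing the whole statement to the evaluation of boundary Wronskians at $a$. First I would dispose of reality: since $q$ is real, $l_qf=0$ admits a real fundamental system $f_1,f_2$, and because $\delta_q^g$ is independent of the choice of $f_1,f_2$, computing~(\ref{zeta}) with real $f_1,f_2$ shows that $\delta_q^g$ is real whenever $g$ is real; hence $\gamma_q^g=g-\delta_q^g$ is real, and it solves $l_q\gamma_q^g=0$ by~(\ref{lqgamma}).

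The technical core, which I expect to be the main obstacle, is the identity $W_a(\delta_q^g,h)=0$ for every $h\in\mathcal D_q^l$. I would prove it by differentiating~(\ref{zeta}): writing $I_j(r)=\int_a^r(l_qg)(\rho)f_j(\rho)\,d\rho$, the terms involving $I_j'(r)$ cancel because $f_1I_2'-f_2I_1'=(l_qg)(f_1f_2-f_2f_1)=0$, and a direct computation then gives
\[
W_r(\delta_q^g,h)=\frac{1}{W(f_1,f_2)}\bigl(I_2(r)\,W_r(f_1,h)-I_1(r)\,W_r(f_2,h)\bigr).
\]
As $r\downarrow a$ the factors $I_j(r)$ tend to $0$, while $W_r(f_j,h)$ tend to the finite limits $W_a(f_j,h)$ (these exist since $f_j,h\in\mathcal D_q^l$ in the limit circle case at $a$); hence the right-hand side vanishes. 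Combined with $g=\gamma_q^g+\delta_q^g$ and the bilinearity of the Wronskian, this yields $W_a(g,h)=W_a(\gamma_q^g,h)$ for all $h\in\mathcal D_q^l$, in particular for all $h\in\mathcal D_q$.

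Nontriviality of $\gamma_q^g$ then follows by contradiction: were $\gamma_q^g=0$, the previous identity would force $W_a(g,h)=0$ for every $h\in\mathcal D_q$; since $q$ is in the limit point case at $b$ we have $W_b(g,h)=0$ automatically, so $g\in\mathcal D_q^0$ by~(\ref{D^0_q}) and thus $[g]\in D_{L_q}$, contradicting the hypothesis $[g]\in D_T\setminus D_{L_q}$. Hence $\gamma_q^g$ is a real nontrivial solution, and $L_q^{\gamma_q^g}$ is a well-defined self-adjoint extension.

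Finally I would identify $T$. By the classification, $T=L_q^{f_0}$ for some real nontrivial solution $f_0$ of $l_qf_0=0$. Since $[g]\in D_T=\{[h]:h\in Z_q^{f_0}\}$ and functions in $\mathcal D$ are continuous (so $g$ itself lies in $Z_q^{f_0}$), definition~(\ref{Zqf}) gives $W_a(f_0,g)=0$; taking $h=f_0\in\mathcal D_q^l$ in the identity above yields $W_a(f_0,\gamma_q^g)=0$. As $f_0$ and $\gamma_q^g$ are both solutions of $l_qf=0$, their Wronskian is constant and therefore vanishes identically, so $\gamma_q^g=c\,f_0$ for some nonzero real constant $c$. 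Consequently $Z_q^{\gamma_q^g}=Z_q^{f_0}$ and $T=L_q^{f_0}=L_q^{\gamma_q^g}$, as claimed.
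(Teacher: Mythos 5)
Your proof is correct and follows essentially the same route as the paper's: the same Wronskian identity $W_r(\delta_q^g,h)=\frac{1}{W(f_1,f_2)}\bigl(I_2(r)W_r(f_1,h)-I_1(r)W_r(f_2,h)\bigr)$ yielding $W_a(\delta_q^g,h)=0$ on $\mathcal D_q^l$, the same contradiction argument for nontriviality via~(\ref{D^0_q}), and the same identification $W(f_0,\gamma_q^g)=0\Rightarrow\gamma_q^g=cf_0$. The extra details you supply (cancellation of the $I_j'$ terms, continuity of elements of $\mathcal D$ to pass from $[g]\in D_T$ to $g\in Z_q^{f_0}$) are sound and merely make explicit what the paper leaves implicit.
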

\begin{proof}
The reality of $\gamma_q^g$ follows from~(\ref{zeta}) and~(\ref{sigma}) because we can choose $f_1$ and $f_2$ in (\ref{zeta}) to be real. It follows easily from~(\ref{zeta}) that
\begin{equation}
W_r(\delta_q^g,h) = \frac{1}{W(f_1,f_2)}\left[ W_r(f_1,h)\int_a^r (l_qg)(\rho)f_2(\rho)\,d\rho - W_r(f_2,h)\int_a^r
(l_qg)(\rho)f_1(\rho)\,d\rho\right],\quad r\in (a,b),\nonumber
\end{equation}
for any $h\in \mathcal D$, where $f_1,f_2$ are linearly independent solutions of $l_q f_{1,2} = 0$. This implies that
\begin{equation}\label{wr}
W_a(\delta_q^g,h) =0
\end{equation}
for any $h\in \mathcal D_q^l$ and, therefore, $W_a(g,h) = W_a(\gamma_q^g,h)$.
If $\gamma_q^g$ were trivial, we would have $W_a(g,h) = W_b(g,h) = 0$ for any $h\in \mathcal D_q$ (recall that $q$ is in l.p.c. at
$b$) and, hence, $[g]\in D_{L_q}$ by~(\ref{D^0_q}) and~(\ref{L_q}). Thus, $\gamma_q^g$ is nontrivial.
Let $f$ be a nontrivial real solution of $l_q f = 0$ such that
$T=L^f_q$. Then we have $W_a(f,g)=0$, and it follows from~(\ref{sigma}) and~(\ref{wr}) that $W(f,\gamma_q^g)=0$. This means that $\gamma_q^g = Cf$ for some real $C\neq 0$
and, therefore, $T=L^{\gamma_q^g}_q$.
\end{proof}

If $q$ is locally square-integrable on $(a,b)$, then $\mathcal D^0_q\supset C^\infty_0(a,b)$, where $C^\infty_0(a,b)$ is the space of smooth functions on $(a,b)$ with compact support. In view of~(\ref{L_q}), this implies that $([f],l_qf)\in G_{L_q}$ for any $f\in C^\infty_0(a,b)$.

\begin{lemma}\label{l_density}
There exists a countable set $A\subset C^\infty_0(a,b)$ such that the elements $([f],l_q f)$ with $f\in A$ are dense in the graph $G_{L_q}$ of $L_q$ for any locally square-integrable real function $q$ on $(a,b)$.
\end{lemma}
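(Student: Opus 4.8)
The plan is to reduce the assertion to a $q$-independent $C^2$-approximation on compact subintervals of $(a,b)$. Recall first (see Ref.~\onlinecite{Teschl}) that $C_0^\infty(a,b)$ is a core for $L_q$: since $C_0^\infty(a,b)\subset \mathcal D_q^0$, the restriction $l_q|_{C_0^\infty(a,b)}$ is a restriction of the closed operator $L_q$, and its adjoint is the maximal operator $L_q(\mathcal D_q)=L_q^*$, so that $\overline{l_q|_{C_0^\infty(a,b)}}=L_q$ and the pairs $([f],l_qf)$ with $f\in C_0^\infty(a,b)$ are dense in $G_{L_q}$. It therefore suffices to produce a single countable $A\subset C_0^\infty(a,b)$ such that every $([f],l_qf)$ with $f\in C_0^\infty(a,b)$ is approximated in $L_2(a,b)\oplus L_2(a,b)$ by pairs $([g],l_qg)$ with $g\in A$, \emph{uniformly in $q$}. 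The crucial observation is that if $g$ and $f$ are supported in a common compact $K\subset(a,b)$, then $l_qg-l_qf=-(g''-f'')+q(g-f)$, whence
\[
\|l_qg-l_qf\|_{L_2(a,b)}\le \|g''-f''\|_{L_2(K)}+\|g-f\|_\infty\Bigl(\int_K|q|^2\Bigr)^{1/2},
\]
and the last factor is finite for every locally square-integrable $q$; hence $C^2$-closeness of $g$ to $f$ on $K$ forces $([g],l_qg)$ to be close to $([f],l_qf)$ with a bound not depending on $q$.

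To build $A$, I would fix an exhaustion of $(a,b)$ by open intervals $U_1\subset U_2\subset\cdots$ with $\overline{U_k}\subset U_{k+1}$, each $\overline{U_k}$ compact in $(a,b)$, and $\bigcup_k U_k=(a,b)$. The space $C^2(\overline{U_k})$ is a separable Banach space, so the subset $C_0^\infty(U_k)$ equipped with the induced $C^2$-norm is separable; let $A_k$ be a countable $C^2$-dense subset of $C_0^\infty(U_k)$ and set $A=\bigcup_k A_k$, a countable subset of $C_0^\infty(a,b)$.

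For the verification, take any $f\in C_0^\infty(a,b)$. Its support is compact, hence contained in $U_k$ for all large $k$; fix such a $k$, so $f\in C_0^\infty(U_k)$. By $C^2$-density choose $g_n\in A_k$ with $g_n\to f$ in $C^2(\overline{U_k})$. All $g_n$ and $f$ are supported in $\overline{U_k}$, so uniform convergence gives $\|g_n-f\|_{L_2(a,b)}\to 0$, while the displayed estimate with $K=\overline{U_k}$ gives $\|l_qg_n-l_qf\|_{L_2(a,b)}\to 0$ because $\|g_n''-f''\|_{L_2(\overline{U_k})}\to 0$ and $\|g_n-f\|_\infty\to 0$. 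Thus $([g_n],l_qg_n)\to([f],l_qf)$ in $G_{L_q}$, and since the pairs $([f],l_qf)$ with $f\in C_0^\infty(a,b)$ are dense in $G_{L_q}$, the pairs $([g],l_qg)$ with $g\in A$ are dense in $G_{L_q}$ for every locally square-integrable real $q$. I expect the only genuine subtlety to be ensuring that one countable $A$ serves all $q$ simultaneously; this is exactly what the common-compact-support estimate secures, since local square-integrability of $q$ bounds $\int_K|q|^2$ regardless of $q$.
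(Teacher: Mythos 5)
Your proposal is correct and follows essentially the same route as the paper: both invoke the known fact that $C^\infty_0(a,b)$ is a core for $L_q$, build $A$ by exhausting $(a,b)$ with compact subintervals and taking countable dense subsets of compactly supported smooth functions in a $C^2$-type norm, and conclude via the estimate $\|l_q g - l_q f\|\le \|g''-f''\|_{L_2(K)}+\|g-f\|_\infty\|q\|_{L_2(K)}$ for functions with common compact support (the paper phrases this as continuity of $l_q$ on the normed space $Y_{\alpha,\beta}$). The only cosmetic differences are your use of the full $C^2$ norm where the paper uses $\sup(|f|+|f''|)$, and your slightly loose phrase ``uniformly in $q$'' --- the bound does depend on $q$ through $\|q\|_{L_2(K)}$, but that is harmless since only finiteness of this constant for each fixed $q$ is needed.
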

\begin{proof}
Given $a<\alpha\leq\beta<b$, we denote by $Y_{\alpha,\beta}$ the linear subspace of the space $C^\infty_0(a,b)$ consisting of all its elements vanishing outside of $[\alpha,\beta]$. We make $Y_{\alpha,\beta}$ a normed space by setting
\[
\|f\| = \sup_{\alpha\leq r\leq\beta} (|f(r)|+|f''(r)|),\quad f\in Y_{\alpha,\beta}.
\]
Let $C[\alpha,\beta]$ be the space of all continuous functions on the segment $[\alpha,\beta]$. Then $f\to (f,f'')$ is an isometric embedding of $Y_{\alpha,\beta}$ into $C[\alpha,\beta]^2$ if the latter space is endowed with the norm
\begin{equation}\nonumber
\|(f,g)\|=\sup_{\alpha\leq r\leq\beta} (|f(r)|+|g(r)|),\quad f,g\in C[\alpha,\beta].
\end{equation}
Note that the space $C[\alpha,\beta]^2$ with this norm is separable because $C[\alpha,\beta]$ endowed with its ordinary supremum norm is separable. Since every subspace of a separable metric space is separable, it follows that $Y_{\alpha,\beta}$ is separable. We now pick sequences $\alpha_1,\alpha_2,\ldots$ and $\beta_1,\beta_2,\ldots$ such that $a<\alpha_j\leq\beta_j<b$ for all $j$ and $\alpha_j\to a$ and $\beta_j\to b$ as $j\to\infty$. For each $j=1,2,\ldots$, we choose a countable dense subset $A_j$ of $Y_{\alpha_j,\beta_j}$ and define the set $A$ by the relation $A = \bigcup_{j=1}^\infty A_j$. If $q$ is a locally square-integrable real function on $(a,b)$, then $L_q$ is the closure of $L_q(C^\infty_0(a,b))$ by Lemma~17 of Ref.~\onlinecite{Smirnov2015}. This means that elements of the form $([f],l_q f)$ with $f\in C^\infty_0(a,b)$ are dense in the graph of $L_q$, and it suffices to prove that every such element can be approximated by $([g],l_qg)$ with $g\in A$. For this, we choose a $j$ such that $f\in Y_{\alpha_j,\beta_j}$ and find a sequence $g_1,g_2,\ldots$ of elements of $A_j$ converging to $f$ in $Y_{\alpha_j,\beta_j}$. As $l_q$ obviously induces a continuous map from $Y_{\alpha_j,\beta_j}$ to $L_2(a,b)$ and $f\to [f]$ is a continuous embedding of $Y_{\alpha_j,\beta_j}$ into $L_2(a,b)$, we conclude that $[g_k]\to [f]$ and $l_qg_k\to l_q f$ in $L_2(a,b)$ as $k\to\infty$.
\end{proof}

We now turn to the Schr\"odinger operators corresponding to differential expression~(\ref{onedimdiff}) that arises as a result of the separation of variables in the Aharonov-Bohm model. We therefore assume $a=0$ and $b=\infty$ and define the potential $q_\kappa$ on $\R_+$ by the relation
\begin{equation}\label{qkappa}
q_\kappa(r) = \frac{\kappa^2-1/4}{r^2}
\end{equation}
for every $\kappa\in\R$. Let
\begin{equation}\label{hkappa}
h_\kappa = L_{q_\kappa}.
\end{equation}
Our aim is to describe all self-adjoint extensions of $h_\kappa$ and their eigenfunction expansions.

The equation $l_{q_\kappa} f = 0$ has linearly independent solutions $r^{1/2\pm\kappa}$ for $\kappa\neq 0$ and $r^{1/2}$ and $r^{1/2}\ln r$ for $\kappa=0$.
This implies that
\begin{itemize}
\item[(i)] $q_\kappa$ is in l.p.c. at both $0$ and $\infty$ for $|\kappa|\geq 1$,
\item[(ii)] $q_\kappa$ is in l.p.c. at $\infty$ and in l.c.c. at $0$ for $|\kappa|<1$.
\end{itemize}

For every $E\in\R$, let the function $u^\kappa(E)$ on $\R_+$ be given by~(\ref{ukappa}). By~(\ref{eqf}), $u^\kappa(E)$ are generalized eigenfunctions for differential expression~(\ref{onedimdiff}). In particular, for $E=0$, we have
\begin{equation}\label{lukappa}
l_{q_\kappa}u^{\kappa}(0) = 0.
\end{equation}
Let $L_2^c(\R_+)$ denote the subspace of $L_2(\R_+)$ consisting of all its elements vanishing $\lambda$-a.e. outside some compact subset of $\R_+$.

If $|\kappa|\geq 1$, then the operator $h_\kappa$ is self-adjoint by~(i). It is well known\cite{Naimark,GesztesyZinchenko,GTV2010,KST} that this operator can be diagonalized by the Hankel transformation. In terms of the functions $u^\kappa(E)$, this result can be formulated as follows (see Theorem~1 and formula~(36) in~Ref.~\onlinecite{Smirnov2015}).

\begin{proposition}\label{leig1}
Let $|\kappa|\geq 1$ and the positive measure $\mathcal V_\kappa$ on $\R$ be defined by~$(\ref{measVkappa})$. Then there is a unique unitary operator $U_\kappa\colon L_2(\R_+)\to L_2(\R,\mathcal V_\kappa)$ such that
\begin{equation}\label{Ukappa}
(U_\kappa\psi)(E) = \int_0^\infty u^{|\kappa|}(E|r)\psi(r)\,dr,\quad \psi\in L_2^c(\R_+),
\end{equation}
for $\mathcal V_\kappa$-a.e. $E$. We have $U_\kappa h_\kappa U_\kappa^{-1} = \mathcal T^{\mathcal V_\kappa}_\iota$, where $\iota$ is the identity function on $\R$ (i.e., $\iota(E)=E$ for all $E\in\R$).
\end{proposition}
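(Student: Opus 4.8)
The plan is to reduce the statement to the classical Plancherel theorem for the Hankel transform. First I would record that, by item~(i), the operator $h_\kappa$ is self-adjoint for $|\kappa|\geq 1$, so there is a genuine eigenfunction expansion to identify. The natural first move is the unitary substitution $U_0\colon L_2(\R_+)\to L_2(\R_+,r\,dr)$, $U_0\psi=\psi/\sqrt r$, which is isometric because $\int_0^\infty|\psi|^2\,dr=\int_0^\infty|\psi/\sqrt r|^2\,r\,dr$. A direct computation shows that, writing $g=\psi/\sqrt r$, one has $-\psi''+(\kappa^2-1/4)r^{-2}\psi=\sqrt r\,B_{|\kappa|}g$, where $B_\nu=-r^{-1}\partial_r(r\partial_r)+\nu^2 r^{-2}$ is the Bessel operator; hence $U_0 h_\kappa U_0^{-1}=B_{|\kappa|}$, the (unique, since $\nu\geq 1$ gives the limit point case at both ends) self-adjoint Bessel operator on $L_2(\R_+,r\,dr)$.

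Next I would invoke the classical fact that, for $\nu=|\kappa|\geq 1>-1$, the Hankel transform $\hat g(k)=\int_0^\infty g(r)J_\nu(kr)\,r\,dr$ is a unitary involution of $L_2(\R_+,r\,dr)$ that diagonalizes $B_\nu$, in the sense that $\widehat{B_\nu g}(k)=k^2\hat g(k)$. This already furnishes a diagonalization with spectral parameter $k\in\R_+$, generalized eigenfunctions $r\mapsto J_\nu(kr)$, eigenvalue $k^2$, and spectral measure $k\,dk$.

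It then remains to rewrite this expansion in the normalization of the Proposition. Passing to the spectral variable $E=k^2$ gives $k\,dk=\tfrac12\,dE$ on $\R_+$ and turns $r\mapsto J_\nu(kr)$ into $r\mapsto J_\nu(r\sqrt E)$. By~(\ref{ukappa}) and~(\ref{bessel}) the prescribed eigenfunction is $u^{|\kappa|}(E|r)=\sqrt r\,E^{-|\kappa|/2}J_{|\kappa|}(r\sqrt E)$, which differs only by the scalar factor $E^{-|\kappa|/2}$ from the function $\sqrt r\,J_{|\kappa|}(r\sqrt E)$ obtained by transporting the Hankel kernel $J_{|\kappa|}(kr)$, $k=\sqrt E$, back to $L_2(\R_+)$ through $U_0^{-1}$. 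Rescaling a generalized eigenfunction by $E^{-|\kappa|/2}$ forces the spectral measure to be multiplied by $E^{|\kappa|}$ in order to preserve the Plancherel identity, producing exactly $d\mathcal V_\kappa(E)=\tfrac12\Theta(E)E^{|\kappa|}\,dE$ from~(\ref{measVkappa}). Composing $U_0$ with the Hankel transform written in these variables yields a unitary $U_\kappa\colon L_2(\R_+)\to L_2(\R,\mathcal V_\kappa)$ satisfying~(\ref{Ukappa}) on $L_2^c(\R_+)$, where the defining integral converges absolutely (compact support bounded away from $0$ and $\infty$), together with $U_\kappa h_\kappa U_\kappa^{-1}=\mathcal T^{\mathcal V_\kappa}_\iota$. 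Uniqueness of $U_\kappa$ is then immediate, since two unitaries agreeing on the dense subspace $L_2^c(\R_+)$ must coincide.

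Finally, the anticipated main obstacle is not analytic but normalization bookkeeping: one must match the convention-dependent constants hidden in $u^\kappa$, $\mathcal X_\kappa$, and $\mathcal V_\kappa$ against a single fixed normalization of the Hankel transform and its Plancherel formula, tracking carefully the factor $\tfrac12$ coming from $E=k^2$ and the power $E^{|\kappa|}$ coming from the rescaling of the eigenfunction. Since the self-adjointness of $h_\kappa$ for $|\kappa|\geq 1$ and the unitarity of the Hankel transform are classical, and since the resulting statement is precisely Theorem~1 together with formula~(36) of Ref.~\onlinecite{Smirnov2015}, the cleanest route in practice is to quote that result once the present normalizations have been verified to agree.
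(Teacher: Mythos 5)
Your proposal is correct and follows essentially the same route as the paper: the paper likewise treats this proposition as a reformulation of the classical Hankel-transform diagonalization of $h_\kappa$, noting self-adjointness via the limit point case~(i) and then simply citing Theorem~1 and formula~(36) of Ref.~\onlinecite{Smirnov2015}. Your explicit bookkeeping --- the unitary substitution $\psi\mapsto\psi/\sqrt r$ turning $l_{q_\kappa}$ into the Bessel operator, the change of spectral variable $E=k^2$ producing the factor $\tfrac12$, and the rescaling of the kernel by $E^{-|\kappa|/2}$ producing the density $E^{|\kappa|}$ in~(\ref{measVkappa}) --- correctly verifies that the normalizations of $u^{|\kappa|}$ and $\mathcal V_\kappa$ match the classical Hankel--Plancherel formula, with uniqueness following from the density of $L_2^c(\R_+)$.
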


Suppose now that $-1<\kappa<1$. For every $\vartheta,E\in \R$, let the function $u^\kappa_\vartheta(E)$ on $\R_+$ be defined by~(\ref{wkappa}) and~(\ref{w(z)}).
As was shown in Sec.~\ref{sec_results}, equality~(\ref{eqf}) remains valid for $u^\kappa_\vartheta(E)$ in place of $u^{\pm\kappa}(E)$ and, hence, $u^\kappa_\vartheta(E)$ are generalized eigenfunctions for differential expression~(\ref{onedimdiff}). For $E=0$, this means that
\begin{equation}\label{lukappavartheta}
l_{q_\kappa}u^\kappa_\vartheta(0) = 0.
\end{equation}
It follows immediately from~(\ref{wkappa}) and~(\ref{w(z)}) that
\begin{equation}\label{ukappavartheta}
u^\kappa_\vartheta(E) = u^\kappa(E)\cos(\vartheta-\vartheta_\kappa) + w^\kappa(E)\sin(\vartheta-\vartheta_\kappa)
\end{equation}
for all $\vartheta,E\in\R$, where $\vartheta_\kappa$ is defined by~(\ref{varthetakappa}) and
\begin{equation}\label{wkappa(z)}
w^\kappa(E) = u^\kappa_{\pi/2+\vartheta_\kappa}(E),\quad E\in\R.
\end{equation}
By~(\ref{lukappavartheta}) and~(\ref{wkappa(z)}), we have
\begin{equation}\label{lu0}
l_{q_\kappa}w^\kappa(0)=0.
\end{equation}
It follows immediately from (\ref{ukappa}), (\ref{wkappa}), (\ref{w(z)}), and~(\ref{wkappa(z)}) that
\begin{equation}\label{wronskian}
W(u^\kappa(0),w^\kappa(0)) =\frac{2}{\pi}
\end{equation}
and, therefore, $u^\kappa(0)$ and $w^\kappa(0)$ are real linearly independent elements of $\mathcal D$ for every $-1<\kappa<1$. By~(\ref{lukappa}), (\ref{lukappavartheta}), (\ref{ukappavartheta}), and~(\ref{lu0}), we conclude that the set of all real nontrivial $f\in \mathcal D$ such that $l_{q_\kappa}f=0$ coincides with the set of all elements of the form $c\,u^\kappa_\vartheta(0)$, where $c,\vartheta\in\R$ and $c\neq0$. In view of~(\ref{hkappa}) and~(ii), this implies that the self-adjoint extensions of $h_\kappa$ are precisely the operators
\begin{equation}\label{htheta}
h_{\kappa,\vartheta} = L_{q_\kappa}^{u^\kappa_\vartheta(0)},
\end{equation}
where $\vartheta\in\R$. Given $\vartheta\in \R$, let the positive measure $\mathcal V_{\kappa,\vartheta}$ on $\R$ be defined by~$(\ref{measVkappatheta})$ and~$(\ref{measV0theta})$ for $0<|\kappa|<1$ and $\kappa=0$ respectively. The next proposition gives eigenfunction expansions for $h_{\kappa,\vartheta}$.

\begin{proposition}\label{leig2}
For every $-1<\kappa<1$ and $\vartheta\in \R$, there is a unique unitary operator $U_{\kappa,\vartheta}\colon L_2(\R_+)\to L_2(\R,\mathcal V_{\kappa,\vartheta})$ such that
\begin{equation}\label{Ukappatheta}
(U_{\kappa,\vartheta}\psi)(E) = \int_0^\infty u^\kappa_\vartheta(E|r)\psi(r)\,dr,\quad \psi\in L_2^c(\R_+),
\end{equation}
for $\mathcal V_{\kappa,\vartheta}$-a.e. $E$. We have $U_{\kappa,\vartheta} h_{\kappa,\vartheta} U_{\kappa,\vartheta}^{-1} = \mathcal T^{\mathcal V_{\kappa,\vartheta}}_\iota$,
where $\iota$ is the identity function on $\R$.
\end{proposition}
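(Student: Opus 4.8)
The plan is to treat Proposition~\ref{leig2} as a Weyl--Titchmarsh--Kodaira eigenfunction expansion for the singular Sturm--Liouville operator $h_{\kappa,\vartheta}$, which by property~(ii) is in the limit circle case at $0$ and the limit point case at $\infty$. Uniqueness of $U_{\kappa,\vartheta}$ is immediate, since $L_2^c(\R_+)$ is dense in $L_2(\R_+)$ and a bounded operator is determined by its action on a dense subspace; so the whole content lies in exhibiting the expansion with the prescribed kernel and measure. First I would verify that $u^\kappa_\vartheta(E)$ is the correct generalized eigenfunction: it solves $l_{q_\kappa}u^\kappa_\vartheta(E)=E\,u^\kappa_\vartheta(E)$ by the discussion following~(\ref{eqf}), and it satisfies the boundary condition defining $h_{\kappa,\vartheta}=L_{q_\kappa}^{u^\kappa_\vartheta(0)}$ at the singular endpoint. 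The latter holds because, by~(\ref{ukappa})--(\ref{w(z)}), the coefficients of the leading $r^{1/2\pm\kappa}$ asymptotics of $u^\kappa_\vartheta(E|r)$ as $r\downarrow 0$ are independent of $E$, so $W_0(u^\kappa_\vartheta(0),u^\kappa_\vartheta(E))=0$ for every $E$.

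Next I would compute the Weyl $m$-function explicitly. For $\operatorname{Im}z>0$ the limit point property at $\infty$ singles out, up to a scalar, the unique solution $\psi_z$ of $l_{q_\kappa}f=zf$ square-integrable near $\infty$; through~(\ref{bessel}) this is the Hankel solution $\sim H^{(1)}_{|\kappa|}(\sqrt z\,r)$, decaying like $e^{i\sqrt z\,r}$. Expanding $\psi_z$ in the entire-in-$z$ basis $\{u^\kappa(z),u^{-\kappa}(z)\}$ by means of the connection formula $H^{(1)}_\nu=(J_{-\nu}-e^{-i\nu\pi}J_\nu)/(i\sin(\nu\pi))$, and re-expressing through $u^\kappa_\vartheta(z)$ and $w^\kappa(z)$ via~(\ref{ukappavartheta}), yields $m_{\kappa,\vartheta}(z)$ as an explicit ratio built from $z^{\pm\kappa/2}$ and $\sin(\vartheta\pm\vartheta_\kappa)$. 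I would then form the resolvent kernel of $(h_{\kappa,\vartheta}-z)^{-1}$ from $u^\kappa_\vartheta(z)$ and $\psi_z$, normalized by their Wronskian.

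The measure $\mathcal V_{\kappa,\vartheta}$ is recovered from $m_{\kappa,\vartheta}$ by the Stieltjes--Perron inversion formula. The absolutely continuous part on $(0,\infty)$ comes from $\pi^{-1}\operatorname{Im}m_{\kappa,\vartheta}(E+i0)$ and should reproduce the density~(\ref{tildeVkappatheta}); the dichotomy in~(\ref{measVkappatheta}) according to whether $\vartheta\in[-|\vartheta_\kappa|,|\vartheta_\kappa|]+\pi\Z$ reflects precisely whether $m_{\kappa,\vartheta}$ has a pole on the negative axis, i.e.\ whether $h_{\kappa,\vartheta}$ possesses the negative eigenvalue $E_{\kappa,\vartheta}$ of~(\ref{Ekappavartheta}), whose residue produces the point mass. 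The case $\kappa=0$ I would obtain by the $\kappa\to 0$ limit already used in~(\ref{w(z)}) and~(\ref{measV0theta}), replacing the $J_{\pm\kappa}$ computation by one with $J_0$ and $Y_0$. With measure and eigenfunctions in hand, the general singular expansion theorem gives that~(\ref{Ukappatheta}) extends to a unitary $U_{\kappa,\vartheta}$ onto $L_2(\R,\mathcal V_{\kappa,\vartheta})$ intertwining $h_{\kappa,\vartheta}$ with $\mathcal T^{\mathcal V_{\kappa,\vartheta}}_\iota$, in exact parallel with the derivation of Proposition~\ref{leig1} for $|\kappa|\geq 1$.

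The hard part will be the explicit $m$-function computation together with pinning down the normalization so that the kernel is exactly $u^\kappa_\vartheta(E|r)$ with no stray $E$-dependent prefactor; this hinges on the precise Bessel asymptotics and on the Wronskian normalization $W(u^\kappa(0),w^\kappa(0))=2/\pi$ of~(\ref{wronskian}). Equally delicate is the bookkeeping of the discrete eigenvalue: correctly locating the pole $E_{\kappa,\vartheta}$, computing its residue to match the weight in~(\ref{measVkappatheta})/(\ref{measV0theta}), and ensuring that the $0<|\kappa|<1$ and $\kappa=0$ formulas join continuously as guaranteed by Lemma~\ref{l_borel}.
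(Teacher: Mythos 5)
Your proposal is sound, but note what the paper actually does here: its entire ``proof'' of Proposition~\ref{leig2} is the citation ``See Theorem~3 and formulas~(36) and~(57) in Ref.~\onlinecite{Smirnov2015}'' --- the Weyl--Titchmarsh analysis you sketch is exactly the work that this paper outsources to its one-dimensional companion paper. So you are not taking a different route so much as reconstructing the route hidden behind the citation, and your reconstruction is structurally correct: uniqueness from density of $L_2^c(\R_+)$; the verification that $u^\kappa_\vartheta(E)$ satisfies the boundary condition defining $h_{\kappa,\vartheta}=L_{q_\kappa}^{u^\kappa_\vartheta(0)}$ because the coefficients of $r^{1/2\pm\kappa}$ in the $r\downarrow 0$ asymptotics are $E$-independent (which indeed forces $W_0(u^\kappa_\vartheta(0),u^\kappa_\vartheta(E))=0$, since the difference of the two solutions is $O(r^{5/2-|\kappa|})$ and the Wronskian error is $O(r^{2-2|\kappa|})$); the Hankel-function Weyl solution at $\infty$; Stieltjes--Perron inversion giving the density~(\ref{tildeVkappatheta}); and the correct identification of the dichotomy in~(\ref{measVkappatheta}) with the presence or absence of the negative eigenvalue $E_{\kappa,\vartheta}$ as a pole of the $m$-function. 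You have also correctly flagged the two genuinely delicate points --- the normalization tying the kernel to $u^\kappa_\vartheta(E|r)$ with no $E$-dependent prefactor (which rests on $W(u^\kappa(0),w^\kappa(0))=2/\pi$) and the residue computation for the point mass. Two caveats keep this from being a finished proof rather than a correct plan. First, since $r=0$ is a singular (limit-circle) endpoint rather than a regular one, the inversion step needs the singular Weyl--Titchmarsh--Kodaira framework (as in the references~\onlinecite{GesztesyZinchenko,KST} the paper cites for Proposition~\ref{leig1}), where the $m$-function is defined relative to the fundamental system $\{u^\kappa_\vartheta(z),\,w\}$ built from the boundary-condition solution; you use this implicitly but should say so, because the measure produced by the expansion theorem is only canonical once that fundamental system and its Wronskian normalization are fixed. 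Second, the computational core you defer (the explicit $m$-function, its residue at $E_{\kappa,\vartheta}$, and the $\kappa\to 0$ matching) is precisely the content of the cited Theorem~3 of Ref.~\onlinecite{Smirnov2015}; executing it is nontrivial but routine, and nothing in your outline would fail.
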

\begin{proof}
See Theorem~3 and formulas~(36) and~(57) in~Ref.~\onlinecite{Smirnov2015}.
\end{proof}

\begin{remark}
Another treatment of self-adjoint realizations of~(\ref{onedimdiff}) and their eigenfunction expansions can be found in~Ref.~\onlinecite{GTV2010}. Our consideration differs from that in~Ref.~\onlinecite{GTV2010} by the choice of parametrization of self-adjoint extensions. The advantage of our choice is that the generalized eigenfunctions $u^\kappa_\vartheta(E|r)$ and the measures $\mathcal V_{\kappa,\vartheta}$ are continuous in $\kappa$ at $\kappa=0$.
\end{remark}

\section{Measurable families of one-dimensional Schr\"odinger operators}
\label{s_meas}

In this section, we derive a criterion for the measurability of families of one-dimensional Schr\"odinger operators. In Sec.~\ref{s8}, we shall use this criterion together with Corollary~\ref{c_red} to obtain direct integral representations for self-adjoint extensions of $H^\phi$ commuting with $T_G$ for all $G\in \mathcal G$.

In what follows, we fix $-\infty \leq a < b \leq \infty$ and set $\mathfrak h = L_2(a,b)$.

Let $\nu$ be a $\sigma$-finite positive measure. A $\nu$-a.e. defined map $\xi$ is said to be a $\nu$-measurable family of functions on $(a,b)$ if $\xi(s)$ is a locally integrable complex function on $(a,b)$ for $\nu$-a.e. $s$ and $s\to \int_\alpha^\beta \xi(s|r)\,dr$ is a $\nu$-measurable complex function for any $a<\alpha\leq \beta<b$.

\begin{lemma}\label{aux_meas}
Let $\nu$ be a $\sigma$-finite positive measure. Then the following statements hold:
\begin{enumerate}
\item[$1.$] Let $\xi$ and $\eta$ be $\nu$-measurable families of locally square-integrable functions on $(a,b)$. Then $s\to \xi(s)\eta(s)$ is a $\nu$-measurable family of functions on $(a,b)$.

\item[$2.$] $\xi$ is an $\mathfrak h$-valued $\nu$-measurable map if and only if $\xi$ is a $\nu$-measurable family of functions on $(a,b)$ such that $\xi(s)\in \mathfrak h$ for $\nu$-a.e. $s$.

\item[$3.$] Let $\xi$ be a $\nu$-a.e. defined map such that $\xi(s)$ is a continuous function on $(a,b)$ for $\nu$-a.e. $s$. Then $\xi$ is a $\nu$-measurable family of functions on $(a,b)$ if and only if $s\to \xi(s|r)$ is a $\nu$-measurable complex function for any $r\in (a,b)$.

\item[$4.$] Let $r_0\in (a,b)$, $\xi$ be a $\nu$-measurable family of functions on $(a,b)$, and $\eta$ be a $\nu$-a.e. defined map such that, for $\nu$-a.e. $s$, $\eta(s)$ is a complex function on $(a,b)$ satisfying the equality $\eta(s|r) = \int_{r_0}^r \xi(s|r)\,dr$ for all $r\in (a,b)$. Then $\eta$ is a $\nu$-measurable family of functions on $(a,b)$.

\item[$5.$] Let $\xi$ be a $\nu$-measurable family of functions on $(a,b)$ such that $\xi(s)$ is absolutely continuous on $(a,b)$ for $\nu$-a.e. $s$. Then $s\to \xi(s)'$ is a $\nu$-measurable family of functions on $(a,b)$.
\end{enumerate}
\end{lemma}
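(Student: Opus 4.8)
The plan is to establish statement~2 first, since it is the bridge between the concrete notion of a $\nu$-measurable family of functions (phrased through the integrals $\int_\alpha^\beta\xi(s|r)\,dr$) and the abstract notion of a $\nu$-measurable $\mathfrak h$-valued map; the remaining four statements are then deduced from it and from one another. Throughout I would freely use the two standard facts about measurable sections recorded in Appendix~\ref{app1}: that the scalar product $s\to\langle\xi(s),\eta(s)\rangle$ of two $\nu$-measurable $\mathfrak h$-valued maps is a $\nu$-measurable scalar function, and that, since $\mathfrak h=L_2(a,b)$ is separable, an $\mathfrak h$-valued map $\xi$ is $\nu$-measurable as soon as $s\to\langle\xi(s),g\rangle$ is $\nu$-measurable for every $g$ in a subset of $\mathfrak h$ whose linear span is dense.

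For statement~2, observe that for a characteristic function $\chi_{[\alpha,\beta]}$ of a compact subinterval the scalar $\langle\xi(s),\chi_{[\alpha,\beta]}\rangle$ equals $\int_\alpha^\beta\xi(s|r)\,dr$ up to complex conjugation (as $\chi_{[\alpha,\beta]}$ is real-valued), which does not affect measurability. If $\xi$ is a $\nu$-measurable family of functions with $\xi(s)\in\mathfrak h$ for $\nu$-a.e.\ $s$, these integrals are $\nu$-measurable in $s$; the set of $g\in\mathfrak h$ for which $s\to\langle\xi(s),g\rangle$ is measurable is a linear subspace, and it is closed, because for fixed $s$ the functional $g\to\langle\xi(s),g\rangle$ is continuous, so measurability passes to $L_2$-limits. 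As this subspace contains all $\chi_{[\alpha,\beta]}$, it is all of $\mathfrak h$, and the separability fact yields $\nu$-measurability of $\xi$. The converse is immediate: an $\mathfrak h$-valued $\nu$-measurable map takes values in $\mathfrak h\subset L^1_{\mathrm{loc}}(a,b)$, and $s\to\int_\alpha^\beta\xi(s|r)\,dr$ is measurable, being $\langle\xi(s),\chi_{[\alpha,\beta]}\rangle$ up to conjugation.

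With statement~2 in hand, statement~1 follows by localization. For $a<\alpha\le\beta<b$ the restrictions $\xi(s)|_{(\alpha,\beta)}$ and $\eta(s)|_{(\alpha,\beta)}$ lie in $L_2(\alpha,\beta)$ by local square-integrability on the compact $[\alpha,\beta]$, so applying the already-established statement~2 on the interval $(\alpha,\beta)$ turns $s\to\xi(s)|_{(\alpha,\beta)}$ and $s\to\overline{\eta(s)|_{(\alpha,\beta)}}$ into $\nu$-measurable $L_2(\alpha,\beta)$-valued maps; then $\int_\alpha^\beta\xi(s|r)\eta(s|r)\,dr$ is their scalar product, hence $\nu$-measurable, while $\xi(s)\eta(s)\in L^1_{\mathrm{loc}}$ by Cauchy--Schwarz. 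For statement~3 I would use continuity in both directions: if the family is measurable, then $\xi(s|r)=\lim_{n\to\infty} n\int_r^{r+1/n}\xi(s|\rho)\,d\rho$ exhibits $s\to\xi(s|r)$ as a pointwise limit of measurable functions; conversely, if each $s\to\xi(s|r)$ is measurable, approximating $\int_\alpha^\beta\xi(s|r)\,dr$ by Riemann sums, which converge by continuity of $\xi(s)$, exhibits it as a pointwise limit of measurable functions, local integrability of continuous functions being automatic.

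Finally, statements~4 and~5 reduce to statement~3. In statement~4 the map $\eta(s)$ is an antiderivative of a locally integrable function, hence absolutely continuous and in particular continuous, and its value $\eta(s|r)=\int_{r_0}^r\xi(s|\rho)\,d\rho$ is $\nu$-measurable in $s$ for every $r$ by the very definition of a measurable family applied to $\xi$ (splitting the integral over $[r_0,r]$ or $[r,r_0]$); statement~3 then gives measurability of $\eta$. In statement~5 the derivative $\xi(s)'$ is locally integrable, and $\int_\alpha^\beta\xi(s)'(r)\,dr=\xi(s|\beta)-\xi(s|\alpha)$ by absolute continuity, which is $\nu$-measurable in $s$ since statement~3, applied to the continuous family $\xi$, makes each $s\to\xi(s|r)$ measurable. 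I expect statement~2 to be the main obstacle, as it is the only place where one must cross between the two measurability notions and invoke separability of $\mathfrak h$; once it is secured, the density, Riemann-sum, and fundamental-theorem-of-calculus arguments in the other parts are routine.
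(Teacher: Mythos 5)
Your proof is correct, and all five statements go through as you argue; the difference from the paper is one of organization rather than substance. The paper proves statement~1 first and directly, by expanding $\int_\alpha^\beta\xi(s|r)\eta(s|r)\,dr$ via Parseval's identity in a basis $e_1,e_2,\ldots$ obtained by orthogonalizing the characteristic functions $\chi_{\gamma,\delta}$ of subintervals with rational endpoints, and then proves statement~2 by pairing against that same basis; you invert the dependency, proving statement~2 first (your closed-subspace/density argument, together with the Pettis-type fact which in the paper's framework is Lemma~\ref{l_meas_sec} applied to constant sections of $\mathcal I_{\mathfrak h,\nu}$) and then deducing statement~1 by restricting to $(\alpha,\beta)$, applying statement~2 with $(a,b)$ replaced by $(\alpha,\beta)$ (legitimate, since $a$ and $b$ are arbitrary throughout the section), and invoking the general fact that $s\to\langle\xi(s),\eta(s)\rangle$ is measurable for measurable sections. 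Both routes rest on the same two ingredients---totality of interval characteristic functions in $L_2$ and the equivalence of weak and strong measurability in a separable Hilbert space---so neither is more general; yours is marginally slicker for statement~1 because the Parseval computation is absorbed into the already-available scalar-product lemma, while the paper's is self-contained at that point. Your treatment of statements~3--5 coincides with the paper's (the one-sided difference quotient in~3, versus the paper's symmetric one, is immaterial).
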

\begin{proof}
1. For any $a<\gamma \leq \delta < b$, let $\chi_{\gamma,\delta}$ be the function on $(a,b)$ that is equal to unity on $[\gamma,\delta]$ and vanishes outside this segment. Then the linear span of all $[\chi_{\gamma,\delta}]$ with rational $\gamma$ and $\delta$ is dense in $\mathfrak h$. Let $e_1,e_2,\ldots$ be a basis in $\mathfrak h$ obtained by orthogonalization of this system. Given $a<\alpha\leq \beta<b$, we have\footnote{As in Sec.~\ref{s_one}, we write $[f]$ in place of $[f]_{\lambda_{a,b}}$, where $\lambda_{a,b}$ is the restriction to $(a,b)$ of the Lebesgue measure $\lambda$ on $\R$.}
\begin{multline}
\int_\alpha^\beta \xi(s|r)\eta(s|r)\,dr = \langle [\chi_{\alpha,\beta}\overline{\xi(s)}],[\chi_{\alpha,\beta}\eta(s)]\rangle = \\ = \sum_{i=1}^\infty \langle [\chi_{\alpha,\beta}\overline{\xi(s)}],e_i\rangle \langle e_i,[\chi_{\alpha,\beta}\eta(s)]\rangle = \sum_{i=1}^\infty \int_\alpha^\beta \xi(s|r) e_i(r)\,dr
\int_\alpha^\beta e_i(r)\eta(s|r)\,dr \nonumber
\end{multline}
for $\nu$-a.e. $s$, where $\langle\cdot,\cdot\rangle$ is the scalar product in $\mathfrak h$. Since each $e_i$ is $\lambda_{a,b}$-equivalent to a linear combination of $\chi_{\gamma,\delta}$, the right-hand side is a $\nu$-measurable complex function of $s$.

2. Let $\chi_{\alpha,\beta}$ and $e_i$ be as in the proof of~1. If $\xi$ is an $\mathfrak h$-valued $\nu$-measurable map, then $s\to \langle \psi,\xi(s)\rangle$ is a $\nu$-measurable function for any $\psi\in \mathfrak h$. In particular, $s\to \langle [\chi_{\alpha,\beta}],\xi(s)\rangle = \int_\alpha^\beta \xi(s|r)\,dr$ is a $\nu$-measurable function for any $a<\alpha\leq \beta<b$, i.e., $\xi$ is a $\nu$-measurable family of functions on $(a,b)$. Conversely, if $\xi$ is a $\nu$-measurable family of functions on $(a,b)$, then $s\to  \int_a^b e_i(r)\xi(s|r)\,dr$ is a $\nu$-measurable function for every $i=1,2,\ldots$. If, in addition, $\xi(s)\in \mathfrak h$ for $\nu$-a.e. $s$, then $\int_a^b e_i(r)\xi(s|r)\,dr = \langle e_i,\xi(s)\rangle$ and, therefore, $\xi$ is an $\mathfrak h$-valued $\nu$-measurable map.

3. Let $\xi$ be a $\nu$-measurable family of functions on $(a,b)$ and $r\in (a,b)$. As $\xi(s)$ is continuous at $r$ for $\nu$-a.e. $s$, we have
$\xi(s|r) = \lim_{n\to\infty}n\int^{r+1/2n}_{r-1/2n} \xi(s|r)\,dr$ for $\nu$-a.e. $s$. This implies that $s\to \xi(s|r)$ is a $\nu$-measurable function because $s\to n\int^{r+1/2n}_{r-1/2n} \xi(s|r)\,dr$ are $\nu$-measurable functions. Conversely, let $s\to \xi(s|r)$ be a $\nu$-measurable function for every $r\in (a,b)$ and let $a<\alpha<\beta<b$. Given $n=1,2,\ldots$ and a function $f$ on $(a,b)$, we denote by $S_n(f)$ the Riemann sum $\frac{\beta-\alpha}{n}\sum_{k=1}^n f(r_n^k)$ for $f$ on $[\alpha,\beta]$, where $r_n^k = \alpha+(\beta-\alpha)k/n$. As $\xi(s)$ is continuous for $\nu$-a.e. $s$, $\int_\alpha^\beta\xi(s|r)\,dr$ is the limit of $S_n(\xi(s))$ for $\nu$-a.e. $s$. This means that $s\to \int_\alpha^\beta\xi(s|r)\,dr$ is a $\nu$-measurable function because $s\to S_n(\xi(s))$ is a $\nu$-measurable function for every $n$.

4. Clearly, $\eta(s)$ is a continuous function on $(a,b)$ for $\nu$-a.e. $s$ and $s\to \eta(s|r)$ is a $\nu$-measurable function for any $r\in (a,b)$. Hence, the statement follows from~3.

5. For $\nu$-a.e. $s$, the function $\xi(s)'$ is locally integrable and\footnote{We denote by $\xi'(s|r)$ the derivative of $\xi(s)$ at point $r$: $\xi'(s|r)=(\xi(s))'(r)$.} $\int_\alpha^\beta \xi'(s|r)\,dr = \xi(s|\beta)-\xi(s|\alpha)$. Hence, the statement follows from~3.
\end{proof}

\begin{lemma}\label{l_v}
Let $\nu$ be a $\sigma$-finite positive measure and $v$ be a $\nu$-measurable family of real locally square-integrable functions on $(a,b)$. Then $s\to L_{v(s)}$ is a $\nu$-measurable family of operators in $\mathfrak h$.
\end{lemma}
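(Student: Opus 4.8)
The plan is to exhibit explicit measurable sections whose graph-values span the graphs of the operators $L_{v(s)}$, and then invoke the graph-density characterization of measurability — the same criterion that is used in the proof of Proposition~\ref{p_reduction}. First I would observe that, for $\nu$-a.e. $s$, the function $v(s)$ is real and locally square-integrable (hence locally integrable), so $L_{v(s)}$ is a well-defined closed densely defined operator in $\mathfrak h$ by the general facts recalled in Sec.~\ref{s_one}. Let $A=\{f_1,f_2,\ldots\}\subset C_0^\infty(a,b)$ be the countable set furnished by Lemma~\ref{l_density}, so that the pairs $([f_j],l_{q}f_j)$ with $f_j\in A$ are dense in the graph $G_{L_{q}}$ for \emph{every} locally square-integrable real potential $q$; applying this with $q=v(s)$ shows that $([f_j],l_{v(s)}f_j)$ span a dense subspace of $G_{L_{v(s)}}$ for $\nu$-a.e. $s$. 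Taking the constant sections $\xi_j(s)=[f_j]$, which are trivially $\nu$-measurable $\mathfrak h$-valued maps, it remains only to verify that each section $s\mapsto L_{v(s)}\xi_j(s)=l_{v(s)}f_j$ is $\nu$-measurable. Granting this, the sequence $\{\xi_j\}$ is a system of measurable sections whose graph-values are dense in $G_{L_{v(s)}}$ for $\nu$-a.e. $s$, which is precisely the condition defining $\nu$-measurability of the family $s\mapsto L_{v(s)}$.

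The core of the argument is therefore the measurability of $s\mapsto l_{v(s)}f_j=[-f_j'']+[v(s)f_j]$. Since $f_j\in C_0^\infty(a,b)$, the summand $[-f_j'']$ is a fixed element of $\mathfrak h$, so its constant section is $\nu$-measurable. For the second summand I would apply Lemma~\ref{aux_meas}: the constant map $s\mapsto f_j$ is a $\nu$-measurable family of (bounded, hence locally square-integrable) functions by statement~3 of that lemma, while $v$ is a $\nu$-measurable family of real locally square-integrable functions by hypothesis; statement~1 then yields that the pointwise product $s\mapsto v(s)f_j$ is a $\nu$-measurable family of functions on $(a,b)$. Because $f_j$ is bounded with compact support and $v(s)$ is locally square-integrable, one has $v(s)f_j\in\mathfrak h$ for $\nu$-a.e. $s$, so statement~2 of Lemma~\ref{aux_meas} upgrades this to the assertion that $s\mapsto[v(s)f_j]$ is an $\mathfrak h$-valued $\nu$-measurable map. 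Adding the constant section $[-f_j'']$ gives the $\nu$-measurability of $s\mapsto l_{v(s)}f_j$, as required.

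I expect the passage from the scalar measurability of the potential family $v$ to the vector-valued measurability of the sections $s\mapsto[v(s)f_j]$ to be the only real point; everything else is bookkeeping once Lemmas~\ref{l_density} and~\ref{aux_meas} are in hand. In particular, the density requirement causes no difficulty, precisely because Lemma~\ref{l_density} was arranged so that one fixed countable set $A$ serves simultaneously for all locally square-integrable potentials, which is exactly what allows the same sections $\xi_j$ to work for every $s$.
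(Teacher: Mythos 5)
Your proof is correct and follows essentially the same route as the paper's own: Lemma~\ref{l_density} supplies one countable set of test functions giving constant $\nu$-measurable sections $\xi_j(s)=[f_j]$ whose graph-values are dense in $G_{L_{v(s)}}$, and statements of Lemma~\ref{aux_meas} (the product rule plus the upgrade from a measurable family of functions to an $\mathfrak h$-valued measurable map) yield the measurability of $s\mapsto l_{v(s)}f_j$. The only cosmetic difference is that you handle the $[-f_j'']$ term as a fixed element of $\mathfrak h$ via statement~3 of Lemma~\ref{aux_meas}, where the paper cites statements~1 and~5; this changes nothing of substance.
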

\begin{proof}
By Lemma~\ref{l_density}, there is a sequence $\varphi_1,\varphi_2,\ldots$ of functions in $C_0^\infty(a,b)$ such that the vectors $([\varphi_j],l_q\varphi_j)$ are dense in $G_{L_q}$ for any locally square-integrable real function $q$ on $(a,b)$. For each $j=1,2,\ldots$, let $\xi_j$ and $\eta_j$ be $\nu$-a.e. defined maps such that $\xi_j(s) = [\varphi_j]$ and $\eta_j(s) = l_{v(s)}\varphi_j$ for $\nu$-a.e. $s$. Since $v(s)$ is locally square integrable for $\nu$-a.e. $s$, the vectors $(\xi_j(s),\eta_j(s))$ are dense in $G_{L_{v(s)}}$ for $\nu$-a.e. $s$. By statements~1 and~5 of Lemma~\ref{aux_meas}, $\eta_j$ is a $\nu$-measurable family of functions on $(a,b)$, and statement~2 of Lemma~\ref{aux_meas} implies that $\eta_j$ is an $\mathfrak h$-valued $\nu$-measurable map for any $j=1,2,\ldots$. As $\xi_j$ are obviously $\mathfrak h$-valued $\nu$-measurable maps for all $j$, we conclude that $s\to L_{v(s)}$ is a $\nu$-measurable family of operators in $\mathfrak h$.
\end{proof}

\begin{lemma}\label{l_sol}
Let $\nu$ and $v$ be as in Lemma~$\mathrm{\ref{l_v}}$ and $\xi$ be a $\mathcal D$-valued $\nu$-a.e. defined map such that
\begin{equation}\label{main_eq}
l_{v(s)}\xi(s) =0
\end{equation}
for $\nu$-a.e. $s$. Suppose there
is $r_0\in (a,b)$ such that the functions $s\to \xi(s|r_0)$ and $s\to \xi'(s|r_0)$ are $\nu$-measurable. Then $\xi$ is
a $\nu$-measurable family of functions on $(a,b)$.
\end{lemma}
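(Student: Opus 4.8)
The plan is to reduce the assertion to the measurability of $s\to\xi(s|r)$ at each fixed point $r$, and then to propagate this measurability from $r_0$ to an arbitrary $r$ by the Picard successive-approximation scheme. Since $\xi(s)\in\mathcal D$ is continuous for $\nu$-a.e.\ $s$, statement~3 of Lemma~\ref{aux_meas} guarantees that $\xi$ is a $\nu$-measurable family of functions as soon as $s\to\xi(s|r)$ is $\nu$-measurable for every fixed $r\in(a,b)$. Thus the whole task is to carry the measurability of the initial data $s\to\xi(s|r_0)$ and $s\to\xi'(s|r_0)$ over to every $r$.

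First I would rewrite the equation $l_{v(s)}\xi(s)=0$, i.e.\ $\xi''(s)=v(s)\xi(s)$, as the Volterra system for the pair $Y(s)=(\xi(s),\xi'(s))$,
\[
\xi(s|r)=\xi(s|r_0)+\int_{r_0}^r \xi'(s|\rho)\,d\rho,\qquad
\xi'(s|r)=\xi'(s|r_0)+\int_{r_0}^r v(s|\rho)\xi(s|\rho)\,d\rho,
\]
valid for $\nu$-a.e.\ $s$ because $v(s)$ is locally integrable (being locally square-integrable) and $\xi(s)\in\mathcal D$. I then introduce the successive approximations $Y_n=(\xi_n,\zeta_n)$ with $\xi_0(s|r)=\xi(s|r_0)$, $\zeta_0(s|r)=\xi'(s|r_0)$, and
\[
\xi_{n+1}(s|r)=\xi(s|r_0)+\int_{r_0}^r \zeta_n(s|\rho)\,d\rho,\qquad
\zeta_{n+1}(s|r)=\xi'(s|r_0)+\int_{r_0}^r v(s|\rho)\xi_n(s|\rho)\,d\rho.
\]

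The core of the argument is an induction showing that $\xi_n$ and $\zeta_n$ are $\nu$-measurable families of continuous functions. Continuity is immediate since indefinite integrals of locally integrable functions are continuous. For measurability, the constant-in-$r$ maps $r\to\xi(s|r_0)$ and $r\to\xi'(s|r_0)$ are $\nu$-measurable families by statement~3 of Lemma~\ref{aux_meas} (they are continuous in $r$ and, by hypothesis, measurable in $s$); the products $v(s)\xi_n(s)$ are $\nu$-measurable families of functions by statement~1 (both factors are locally square-integrable, $\xi_n(s)$ because it is continuous); indefinite integration preserves measurable families by statement~4; and a sum of $\nu$-measurable families is again one, directly from the definition. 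This closes the induction, so by statement~3 the map $s\to\xi_n(s|r)$ is $\nu$-measurable for every $r$ and $n$.

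Finally I would establish convergence. On a compact $[\alpha,\beta]\ni r_0$ I set $a(s|\rho)=1+|v(s|\rho)|$, $m(s|r)=\bigl|\int_{r_0}^r a(s|\rho)\,d\rho\bigr|$, and $B(s)=\sup_{r\in[\alpha,\beta]}\|Y(s|r)\|$, all finite for $\nu$-a.e.\ $s$. A routine induction on the integral system, using $m'=a$, yields the factorial bound
\[
\|Y(s|r)-Y_n(s|r)\|\le B(s)\,\frac{m(s|r)^{\,n+1}}{(n+1)!},
\]
so that $Y_n(s|r)\to Y(s|r)$, and in particular $\xi_n(s|r)\to\xi(s|r)$, for every $r$ and $\nu$-a.e.\ $s$. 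As a $\nu$-a.e.\ pointwise limit of the $\nu$-measurable functions $s\to\xi_n(s|r)$, the map $s\to\xi(s|r)$ is $\nu$-measurable for each fixed $r$, and the conclusion then follows from statement~3 of Lemma~\ref{aux_meas}. I expect the only genuine work to be the inductive propagation of measurability through the integral operators; the convergence estimate is the standard Picard bound and is routine once the finiteness of $m(s|r)$ and $B(s)$ is noted.
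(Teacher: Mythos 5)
Your proof is correct, and while it shares the paper's core idea---propagating measurability from $r_0$ by Picard iteration, with Lemma~\ref{aux_meas} supplying measurability of each iterate and statement~3 of that lemma converting pointwise-in-$r$ measurability into measurability of the family---its execution differs from the paper's in a way worth noting. The paper iterates the \emph{second-order} integral equation centered at a variable point $r$ and gets only a geometric contraction with factor $1/2$, which is valid merely on intervals of radius $1/2N$ after restricting to the truncation sets $A_N=\{s:\int_\alpha^\beta v(s|r)^2\,dr<N^2/(\beta-\alpha)\}$; it must then run a covering argument (the set $Q_N$ of good points contains a fixed-radius neighborhood of each of its points and contains $r_0$, hence is all of $(\alpha,\beta)$) and finally piece the conclusion together over $N$ using that $S_\nu\setminus\bigcup_N A_N$ is $\nu$-null. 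You instead pass to the first-order system for $(\xi,\xi')$ and use the classical factorial bound
\[
\|Y(s|r)-Y_n(s|r)\|\le B(s)\,\frac{m(s|r)^{\,n+1}}{(n+1)!},
\]
which converges on every compact subinterval regardless of the size of $\int|v|$, so you need neither the truncation in $s$ nor the step-by-step propagation in $r$; the price is the introduction of the vector-valued iteration and the weight $m$, but the payoff is a noticeably shorter and more standard argument. Your estimate is also correct: with $a=1+|v|$ one has $\|A(s|\rho)Z\|_\infty\le a(s|\rho)\|Z\|_\infty$, the base case follows from $Y(s|r)-Y(s|r_0)=\int_{r_0}^r A\,Y\,d\rho$, and the induction step integrates $a\,m^{n+1}$ to $m^{n+2}/(n+2)$ (only a small remark: $\partial_\rho m=\pm a$ depending on the side of $r_0$, so the two directions should be handled with the appropriate sign, which is routine). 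The finiteness of $B(s)$ and $m(s|r)$ for $\nu$-a.e.\ $s$ follows, as you say, from continuity of $\xi(s),\xi'(s)$ and local integrability of $v(s)$, and the final passage from a.e.\ pointwise limits to $\nu$-measurability is exactly the limit property of measurable maps recorded in Appendix~\ref{app0}.
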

\begin{proof}
Clearly, it suffices to show that $\xi$ is
a $\nu$-measurable family of functions on $(\alpha,\beta)$ for any real numbers $\alpha$ and $\beta$ such that $a<\alpha<r_0<\beta<b$. We fix such $\alpha$ and $\beta$ and set
\begin{equation}
A_N = \left\{s\in \s_\nu :  \int\nolimits_\alpha^\beta v(s|r)^2\,dr < N^2/(\beta-\alpha)\right\}\nonumber
\end{equation}
for each $N=1,2,\ldots$.
By statement~1 of Lemma~\ref{aux_meas}, $A_N$ is a $\nu$-measurable set. By the Cauchy--Bunya\-kovsky inequality, we have
\begin{equation}\label{qn}
\int_\alpha^\beta |v(s|r)|\,dr < N,\quad s\in A_N.
\end{equation}
Let $\nu_N = \nu|_{A_N}$. To prove our statement, it suffices to show that the set
\[
Q_N = \{r\in (\alpha,\beta) : \mbox{$s\to \xi(s|r)$ and $s\to \xi'(s|r)$ are $\nu_N$-measurable functions}\} \nonumber
\]
coincides with $(\alpha,\beta)$ for all $N=1,2,\ldots$. Indeed, this condition implies that $s\to \xi(s|r)$ is a $\nu_N$-measurable function for all $N$ and $r\in (\alpha,\beta)$. Because $S_\nu\setminus \bigcup_{N=1}^\infty A_N$ is a $\nu$-null set, this means that $s\to \xi(s|r)$ is a $\nu$-measurable function for every $r\in (\alpha,\beta)$, and statement~3 of Lemma~\ref{aux_meas} ensures that $\xi$ is a $\nu$-measurable family of functions on $(\alpha,\beta)$. Note that $r_0\in Q_N$ for all $N$. Hence, to prove the equality $Q_N=(\alpha,\beta)$, it suffices to verify that the set $R_{N,r} = \{\rho\in(\alpha,\beta) : |\rho-r|<1/2N\}$ is contained in $Q_N$ for every $N=1,2,\ldots$ and $r\in Q_N$. Fix $N$ and $r\in Q_N$ and let $\xi_0,\xi_1,\ldots$ be $\mathcal D$-valued $\nu$-a.e. defined maps such that, for $\nu$-a.e. $s$, the relations
\begin{align}
& \xi_0(s|\rho) = \xi(s|r) + \xi'(s|r)(\rho-r),\nonumber \\
& \xi_n(s|\rho) = \xi_0(s|\rho) + \int_{r}^\rho d\rho'\int_{r}^{\rho'} v(s|t) \xi_{n-1}(s|t)\,dt,\quad n=1,2,\ldots, \label{eqn}
\end{align}
hold for all $\rho\in (a,b)$. As $r\in Q_N$, $\xi_0$ is a $\nu_N$-measurable family of functions on $(a,b)$, and it follows from statements~1 and~4 of Lemma~\ref{aux_meas} that $\xi_n$ is a $\nu_N$-measurable family of functions on $(a,b)$ for every $n=0,1,\ldots$.
Since $\xi(s)$ satisfies~(\ref{main_eq}) for $\nu$-a.e. $s$, it follows that, for $\nu$-a.e. $s$, the equality
\begin{equation}\label{int_eq}
\xi(s|\rho) = \xi_0(s|\rho) + \int_{r}^\rho d\rho'\int_{r}^{\rho'} v(s|t) \xi(s|t)\,dt
\end{equation}
holds for any $\rho\in (a,b)$. Let $\rho\in R_{N,r}$. It follows from~(\ref{qn}), (\ref{eqn}) and~(\ref{int_eq}) that
\[
|\xi(s|\rho)-\xi_n(s|\rho)|\leq  |\xi(s|\rho)-\xi_{n-1}(s|\rho)|/2
\]
for $\nu_N$-a.e. $s$ and all $n=0,1,\ldots$. We hence have
\[
|\xi(s|\rho)-\xi_n(s|\rho)| \leq |\xi(s|\rho)-\xi_0(s|\rho)|/2^n
\]
for $\nu_N$-a.e. $s$. This means that, for any $\rho\in R_{N,r}$, the sequence $\xi_n(s|\rho)$ converges to $\xi(s|\rho)$ for $\nu_N$-a.e. $s$ and, therefore, $s\to \xi(s|\rho)$ is a $\nu_N$-measurable function for every $\rho\in R_{N,r}$. As $\xi'(s|\rho)$ is the limit of $k(\xi(s|\rho+1/k)-\xi(s|\rho))$ as $k\to \infty$ for $\nu$-a.e. $s$, it also follows that $s\to \xi'(s|\rho)$ is a $\nu_N$-measurable function for every $\rho\in R_{N,r}$. Hence $R_{N,r}\subset Q_N$ and the lemma is proved.
\end{proof}

\begin{corollary}\label{cor_sol}
Let $\nu$ and $v$ be as in Lemma~$\mathrm{\ref{l_v}}$. Then there are $\nu$-measurable families $\xi_1$ and $\xi_2$ of functions on $(a,b)$ such that
$\xi_1(s)$ and $\xi_2(s)$ are linearly independent real elements of $\mathcal D$ satisfying equation ~$(\ref{main_eq})$ for $\nu$-a.e. $s$.
\end{corollary}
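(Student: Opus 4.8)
The plan is to construct the two families by the most economical initial-value prescription and then read off measurability from Lemma~\ref{l_sol}, which already contains all the real work. First I would fix once and for all a point $r_0\in(a,b)$. For $\nu$-a.e.\ $s$, namely for every $s$ at which $v(s)$ is a real locally square-integrable (hence locally integrable) function, the existence and uniqueness statement recalled at the beginning of Sec.~\ref{s_one} provides a unique solution of $l_{v(s)}f=0$ with prescribed value and derivative at $r_0$. I would then define $\xi_1(s)$ to be the solution determined by $\xi_1(s|r_0)=1$, $\xi_1'(s|r_0)=0$, and $\xi_2(s)$ to be the solution determined by $\xi_2(s|r_0)=0$, $\xi_2'(s|r_0)=1$.

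Next I would verify the required pointwise properties. Since $v(s)$ is real and the prescribed initial data are real, the functions $\xi_1(s)$ and $\xi_2(s)$ are real; moreover each lies in $\mathcal D$, because $f''=v(s)f$ is locally integrable, so $f$ is continuously differentiable with absolutely continuous derivative. Linear independence is seen by evaluating the Wronskian at $r_0$: one has $W_{r_0}(\xi_1(s),\xi_2(s)) = \xi_1(s|r_0)\xi_2'(s|r_0)-\xi_1'(s|r_0)\xi_2(s|r_0)=1\neq 0$, so $\xi_1(s)$ and $\xi_2(s)$ are linearly independent for $\nu$-a.e.\ $s$. Thus both $\xi_1$ and $\xi_2$ are $\mathcal D$-valued $\nu$-a.e.\ defined maps satisfying equation~(\ref{main_eq}) with the desired reality and independence.

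Finally, measurability follows at once from Lemma~\ref{l_sol}. Its only nontrivial hypothesis concerns the initial data at a point $r_0$, and with the present choice this hypothesis is trivial: the functions $s\to\xi_i(s|r_0)$ and $s\to\xi_i'(s|r_0)$ take the constant values $1,0$ for $i=1$ and $0,1$ for $i=2$, and constant functions are $\nu$-measurable. Lemma~\ref{l_sol} therefore yields that $\xi_1$ and $\xi_2$ are $\nu$-measurable families of functions on $(a,b)$, which completes the proof. I do not expect any genuine obstacle here, since the measurability of the propagated solution was already settled in Lemma~\ref{l_sol}; the only point requiring a moment's thought is selecting initial data at $r_0$ that simultaneously make those two initial-value functions manifestly measurable (constants are ideal) and force the Wronskian to be nonzero, and the standard fundamental-system data $(1,0)$ and $(0,1)$ accomplish both.
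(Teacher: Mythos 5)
Your proof is correct and follows essentially the same route as the paper: both fix $r_0\in(a,b)$, define $\xi_1,\xi_2$ by the fundamental initial data $\xi_1(s|r_0)=\xi_2'(s|r_0)=1$, $\xi_1'(s|r_0)=\xi_2(s|r_0)=0$, and invoke Lemma~\ref{l_sol} (whose initial-data hypothesis is trivially satisfied by constants) for measurability. Your added remarks on reality and the Wronskian just make explicit what the paper calls obvious.
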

\begin{proof}
Choose $r_0\in (a,b)$. Let $\xi_1$ and $\xi_2$ be $\nu$-a.e. defined $\mathcal D$-valued maps such that $\xi_1(s)$ and $\xi_2(s)$ are solutions of~(\ref{main_eq}) satisfying the conditions
\[
\xi_1(s|r_0) = \xi_2'(s|r_0)=1,\quad \xi_1'(s|r_0) = \xi_2(s|r_0)=0
\]
for $\nu$-a.e. $s$. Obviously, $\xi_1(s)$ and $\xi_2(s)$ are linearly independent for $\nu$-a.e. $s$, and Lemma~\ref{l_sol} implies that $\xi_1$ and
$\xi_2$ are $\nu$-measurable families of functions on $(a,b)$.
\end{proof}

\begin{proposition}\label{l_ext1}
Let $\nu$ and $v$ be as in Lemma~$\mathrm{\ref{l_v}}$. Suppose $v(s)$ is in l.c.c. at $a$ and in l.p.c. at $b$ for $\nu$-a.e. $s$. If $\xi$ is a $\nu$-measurable family of elements of $\mathcal D$ such that $\xi(s)$ is a
nontrivial real solution of~$(\ref{main_eq})$ for $\nu$-a.e. $s$, then $s\to L^{\xi(s)}_{v(s)}$ is a $\nu$-measurable family of
self-adjoint operators in $\mathfrak h$. If $\mathcal R$ is a $\nu$-measurable family of operators in $\mathfrak h$ such that $\mathcal R(s)$ is a self-adjoint extension of $L_{v(s)}$ for $\nu$-a.e. $s$, then there exists a $\nu$-measurable family $\xi$ of elements of $\mathcal D$ such that $\xi(s)$ is a
nontrivial real solution of~$(\ref{main_eq})$ and $\mathcal R(s) = L^{\xi(s)}_{v(s)}$ for $\nu$-a.e.~$s$.
\end{proposition}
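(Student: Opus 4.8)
The plan is to handle the two implications separately, using throughout the measurable families $\xi_1,\xi_2$ of real, linearly independent solutions of~(\ref{main_eq}) provided by Corollary~\ref{cor_sol}, together with the measurability of $s\mapsto L_{v(s)}$ from Lemma~\ref{l_v}; recall that the proof of the latter exhibits an explicit sequence of measurable graph sections $s\mapsto([\varphi_j],l_{v(s)}\varphi_j)$, with $\varphi_j\in C_0^\infty(a,b)$, that is dense in $G_{L_{v(s)}}$ for $\nu$-a.e.\ $s$.

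For the direct assertion I would fix once and for all a smooth cutoff $\chi$ on $(a,b)$ equal to $1$ near $a$ and to $0$ near $b$, and set $g(s)=\chi\,\xi(s)$. Near $a$ we have $g(s)=\xi(s)$, which is left square-integrable by the l.c.c.\ hypothesis, while near $b$ it vanishes; the Leibniz identity $l_{v(s)}(\chi\xi(s))=-\chi''\xi(s)-2\chi'\xi'(s)$ (using $l_{v(s)}\xi(s)=0$) shows that $l_{v(s)}g(s)$ is continuous with compact support. Hence $g(s)\in\mathcal D_{v(s)}$, and since $W_r(\xi(s),g(s))=W_r(\xi(s),\xi(s))=0$ near $a$ we get $W_a(\xi(s),g(s))=0$, i.e.\ $[g(s)]\in D_{L_{v(s)}^{\xi(s)}}$. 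Nontriviality of $\xi(s)$ and the l.c.c.\ property at $a$ give an $h\in\mathcal D_{v(s)}$ with $W_a(\xi(s),h)\neq 0$, whence $[g(s)]\notin D_{L_{v(s)}}$. Because $G_{L_{v(s)}^{\xi(s)}}\ominus G_{L_{v(s)}}$ is one-dimensional, the section $s\mapsto([g(s)],l_{v(s)}g(s))$ — which is $\mathfrak h$-valued measurable by statements~1,~2, and~5 of Lemma~\ref{aux_meas} — together with the sections $([\varphi_j],l_{v(s)}\varphi_j)$ has dense span in $G_{L_{v(s)}^{\xi(s)}}$. This exhibits $s\mapsto L_{v(s)}^{\xi(s)}$ as a measurable family, and self-adjointness is the pointwise statement of Section~\ref{s_one}.

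For the converse, the first task is to produce a measurable \emph{real} $g(s)\in\mathcal D_{v(s)}$ with $[g(s)]\in D_{\mathcal R(s)}\setminus D_{L_{v(s)}}$. Starting from measurable graph sections $s\mapsto(\psi_k(s),\mathcal R(s)\psi_k(s))$ dense in $G_{\mathcal R(s)}$, I note that for $\nu$-a.e.\ $s$ at least one $\psi_k(s)$ lies outside $D_{L_{v(s)}}$, since otherwise every section would lie in the proper closed subspace $G_{L_{v(s)}}$, contradicting density. Now $\psi_k(s)\in D_{L_{v(s)}}$ is equivalent to $(\psi_k(s),\mathcal R(s)\psi_k(s))\in G_{L_{v(s)}}$, and the distance of this measurable section to the measurable family of graphs $G_{L_{v(s)}}$ is a $\nu$-measurable function of $s$; selecting the least index with positive distance on the corresponding measurable sets glues the $\psi_k$ into a single measurable section $\tilde\psi(s)\in D_{\mathcal R(s)}\setminus D_{L_{v(s)}}$. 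Writing $\tilde\psi(s)=[f(s)]$ with $f(s)\in\mathcal D_{v(s)}$ the continuous representative, Lemma~\ref{aux_meas} makes $f$ a measurable family of functions; since every self-adjoint extension here equals $L_{v(s)}^{(\cdot)}$ for a real solution, its domain is invariant under complex conjugation, so one of $\mathrm{Re}\,f(s)$, $\mathrm{Im}\,f(s)$ lies in $D_{\mathcal R(s)}\setminus D_{L_{v(s)}}$, and the same distance criterion lets me choose it measurably to obtain the required real $g(s)$.

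Finally, with $g(s)$ in hand I would bypass formula~(\ref{zeta}) and set $\xi(s)=W_a(\xi_2(s),g(s))\,\xi_1(s)-W_a(\xi_1(s),g(s))\,\xi_2(s)$. Each boundary Wronskian is the limit along a fixed sequence $r_n\downarrow a$ of $W_{r_n}(\xi_i(s),g(s))$, hence a measurable function of $s$, so $\xi$ is a measurable family of real solutions of~(\ref{main_eq}). The coefficient vector is nonzero, because $W_a(\xi_1(s),g(s))=W_a(\xi_2(s),g(s))=0$ would force $g(s)\in\mathcal D_{v(s)}^0=D_{L_{v(s)}}$, contradicting the choice of $g(s)$; and by construction $W_a(\xi(s),g(s))=0$, so $[g(s)]\in D_{L_{v(s)}^{\xi(s)}}$. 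Since $[g(s)]\notin D_{L_{v(s)}}$ and both $\mathcal R(s)$ and $L_{v(s)}^{\xi(s)}$ are self-adjoint extensions of $L_{v(s)}$ — each adding exactly one complex dimension to $D_{L_{v(s)}}$ — their domains both equal $D_{L_{v(s)}}+\C\,[g(s)]$, so $\mathcal R(s)=L_{v(s)}^{\xi(s)}$. I expect the main obstacle to be precisely the measurable-selection step in the converse: turning the abstract measurability of the family $\mathcal R$ into a single measurable, genuinely function-valued, real section of $D_{\mathcal R(s)}\setminus D_{L_{v(s)}}$, where the measurability of the graph projections of $L_{v(\cdot)}$ and the conjugation-invariance of the extension domains both come into play.
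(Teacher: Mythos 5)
Your proof is correct. The direct half is essentially the paper's own argument: the same cutoff section $s\mapsto\bigl([\tau\xi(s)],\,l_{v(s)}(\tau\xi(s))\bigr)$, the same appeal to statements 1, 2, and 5 of Lemma~\ref{aux_meas}, and the same use of the one-dimensionality of $G_{L^{\xi(s)}_{v(s)}}\ominus G_{L_{v(s)}}$ to augment the dense sequence of graph sections supplied by Lemma~\ref{l_v}.

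The converse half follows the paper's skeleton (produce a measurable \emph{real} section of $D_{\mathcal R(s)}\setminus D_{L_{v(s)}}$, then convert it into a real solution prescribing the boundary condition), but all three steps are implemented differently, and your versions are legitimate. For the selection step the paper invokes statement 1 of Lemma~\ref{l_BBB2} to obtain a measurable nonzero section of $G_{\mathcal R(s)}\ominus G_{L_{v(s)}}$, whereas you glue a dense sequence of graph sections along the measurable sets where the distance to $G_{L_{v(s)}}$ is positive; that distance is indeed measurable by Lemma~\ref{l_BBB1} together with statement 1 of Lemma~\ref{l_BBB}, so this works. For realification, the paper replaces a graph vector $\eta(s)$ by $i\eta(s)$ or $\eta(s)+\overline{\eta(s)}$; your $\mathrm{Re}/\mathrm{Im}$ dichotomy, resting on conjugation-invariance of both domains, is an equivalent device (the needed measurability of $s\mapsto \mathcal R(s)\,\mathrm{Re}\,\tilde\psi(s)$ follows from Lemma~\ref{l_BBB5}). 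The substantive divergence is the last step: the paper sets $\xi(s)=\gamma^{\tilde g(s)}_{v(s)}$ via the integral formula~(\ref{zeta}) and Lemma~\ref{l_ext}, and extracts measurability from that formula, while you define $\xi(s)=W_a(\xi_2(s),g(s))\,\xi_1(s)-W_a(\xi_1(s),g(s))\,\xi_2(s)$, which makes measurability nearly immediate (boundary Wronskians are pointwise limits of measurable functions) and identifies $\mathcal R(s)=L^{\xi(s)}_{v(s)}$ by a clean deficiency-index count. The price is the one assertion you leave unproved: that $W_a(\xi_1(s),g(s))=W_a(\xi_2(s),g(s))=0$ forces $g(s)\in\mathcal D^0_{v(s)}$. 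This is true, but it requires the limit-circle identity expressing $W_a(g,h)$ for arbitrary $h\in\mathcal D_{v(s)}$ through the two Wronskians against a fundamental system, e.g.\ the Pl\"ucker relation
\begin{equation}\nonumber
W_r(g,h)\,W(\xi_1,\xi_2)=W_r(g,\xi_1)W_r(h,\xi_2)-W_r(g,\xi_2)W_r(h,\xi_1),
\end{equation}
passed to the limit $r\downarrow a$; this is exactly the content the paper encapsulates in equality~(\ref{wr}) inside the proof of Lemma~\ref{l_ext}. Add that identity (or simply invoke Lemma~\ref{l_ext}, which also subsumes your nontriviality and identification arguments) and the argument is complete.
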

\begin{proof}
By Lemma~\ref{l_v}, $s\to L_{v(s)}$ is a $\nu$-measurable family of operators in $\mathfrak h$. This means that there is a sequence
$\zeta_1,\zeta_2,\ldots$ of $\mathfrak h\oplus \mathfrak h$-valued $\nu$-measurable maps such that the linear span of $\zeta_1(s),\zeta_2(s),\ldots$
is dense in the graph $G_{L_{v(s)}}$ of $L_{v(s)}$ for $\nu$-a.e.~$s$.

Let $\xi$ be a $\nu$-measurable family of elements of $\mathcal D$ such that $\xi(s)$ is a
nontrivial real solution of~$(\ref{main_eq})$ for $\nu$-a.e. $s$ and let $\tau$ be a smooth function on $(a,b)$ that is equal to unity in a neighborhood of $a$ and vanishes in a neighborhood of $b$. Let $\nu$-a.e. defined maps $g$ and $h$ be such that
\begin{align}
& g(s) = [\tau \xi(s)], \quad h(s) = -[\tau'' \xi(s) + 2\tau' \xi'(s)]\nonumber
\end{align}
for $\nu$-a.e. $s$. We obviously have $g(s)\in \mathfrak h$ and $h(s)\in \mathfrak h$ for $\nu$-a.e. $s$ and, therefore,
statements~1, 2, and~5 of Lemma~\ref{aux_meas} imply that $g$ and $h$ are $\mathfrak h$-valued $\nu$-measurable maps. Let $\zeta$ be an $\mathfrak h\oplus\mathfrak h$-valued $\nu$-measurable map such that $\zeta(s) = (g(s),h(s))$ for $\nu$-a.e.~$s$. Note that
\begin{equation}\label{lv(s)tauxi}
l_{v(s)} (\tau\xi(s)) = h(s)
\end{equation}
for $\nu$-a.e.~$s$ and, therefore, $\tau\xi(s)\in \mathcal D_{v(s)}$ for $\nu$-a.e. $s$ (see~(\ref{D_q})). Since $\xi(s)$ is a nontrivial real solution of~$(\ref{main_eq})$ and $\tau\xi(s)$ coincides with $\xi(s)$ in a neighborhood of $a$, it follows from~(\ref{D^0_q}) and~(\ref{Zqf}) that $\tau\xi(s)\in Z_{v(s)}^{\xi(s)}$ and $\tau\xi(s)\notin \mathcal D^0_{v(s)}$ for $\nu$-a.e. $s$. In view of~(\ref{L_q}), (\ref{Lqf}), and~(\ref{lv(s)tauxi}), we conclude that $\zeta(s)\in G_{L^{\xi(s)}_{v(s)}}\setminus G_{L_{v(s)}}$ for $\nu$-a.e.~$s$. As $G_{L^{\xi(s)}_{v(s)}}\ominus G_{L_{v(s)}}$ is one-dimensional, this implies that the linear span of the sequence $\zeta(s),\zeta_1(s),\zeta_2(s),\ldots$ is
dense in the graph of $L^{\xi(s)}_{v(s)}$ for $\nu$-a.e. $s$. This means that $s\to L^{\xi(s)}_{v(s)}$ is a $\nu$-measurable family of
operators in $\mathfrak h$.

Conversely, let $\mathcal R$ be a $\nu$-measurable family of operators in $\mathfrak h$ such that $\mathcal R(s)$ is a self-adjoint extension of $L_{v(s)}$ for $\nu$-a.e. $s$. Then both $s\to G_{\mathcal R(s)}$ and $s\to G_{L_{v(s)}}$ are
$\nu$-measurable families of subspaces of $\mathfrak h\oplus \mathfrak h$. By statement~1 of Lemma~\ref{l_BBB2}, $s\to G_{\mathcal R(s)}\ominus G_{L_{v(s)}}$ is also a $\nu$-measurable
family of subspaces of $\mathfrak h\oplus \mathfrak h$. Since $G_{\mathcal R(s)}\ominus G_{L_{v(s)}}$ is nontrivial for $\nu$-a.e. $s$, there is an $\mathfrak h\oplus\mathfrak h$-valued
$\nu$-measurable map $\eta$ such that $\eta(s)$ is a nonzero element of $G_{\mathcal R(s)}\ominus G_{L_{v(s)}}$ for $\nu$-a.e. $s$.
Let\footnote{Given $\psi=(\psi_1,\psi_2)\in \mathfrak h\oplus\mathfrak h$, we set $\bar\psi=(\bar\psi_1,\bar\psi_2)$, where $\bar\psi_1$ and $\bar \psi_2$ are elements of $\mathfrak h$ such that
$\bar \psi_{1,2}(r) = \overline{\psi_{1,2}(r)}$ for $\lambda$-a.e. $r\in (a,b)$.} $Q = \{s\in S_\nu: \eta(s) = -\overline{\eta(s)}\}$ and $\zeta$ be a $\nu$-a.e. defined map such that $\zeta(s) = i\eta(s)$ for $\nu$-a.e. $s\in Q$ and $\zeta(s) = \eta(s)+\overline{\eta(s)}$ for $\nu$-a.e. $s\in \s_\nu\setminus Q$. Then $\zeta$ is an $\mathfrak h\oplus\mathfrak h$-valued
$\nu$-measurable map such that
\begin{equation}\label{conj_zeta}
\zeta(s) = \overline{\zeta(s)}
\end{equation}
and $\zeta(s)\neq 0$ for $\nu$-a.e. $s$. Moreover, since $G_{\mathcal R(s)}$ and $G_{L_{v(s)}}$ are both invariant under complex conjugation, we have $\zeta(s)\in G_{\mathcal R(s)}\ominus G_{L_{v(s)}}$ for $\nu$-a.e. $s$. Let $g$ and $h$ be $\mathfrak h$-valued $\nu$-measurable maps such that $\zeta(s) = (g(s),h(s))$ for $\nu$-a.e. $s$. For $\nu$-a.e. $s$, we have
\begin{equation}\label{member}
g(s)\in D_{\mathcal R(s)}\setminus D_{L_{v(s)}}.
\end{equation}
As $g(s)\in D_{\mathcal R(s)}$ for $\nu$-a.e. $s$, there exists a $\nu$-a.e. defined map $\tilde g$ such that $\tilde g(s)\in \mathcal D_{v(s)}$ and $g(s) = [\tilde g(s)]$ for $\nu$-a.e. $s$. It follows from~(\ref{conj_zeta}) that $\tilde g(s)$ is real for $\nu$-a.e. $s$. By statement~2 of Lemma~\ref{aux_meas}, $\tilde g$ and $h$ are $\nu$-measurable families of functions on $(a,b)$.
Let $\xi$ be a $\nu$-a.e. defined map such that $\xi(s) = \gamma^{\tilde g(s)}_{v(s)}$ for $\nu$-a.e. $s$. In view of~(\ref{member}), it follows from Lemma~\ref{l_ext} that $\xi(s)$ is a nontrivial real solution of~$(\ref{main_eq})$ and $\mathcal R(s) = L^{\xi(s)}_{v(s)}$ for $\nu$-a.e. $s$. Let $\xi_1$ and $\xi_2$ be as in Corollary~\ref{cor_sol}.
Since $h(s) = \mathcal R(s)g(s)$, we have $h(s) = l_{v(s)}\tilde g(s)$ for $\nu$-a.e. $s$.
Equations~(\ref{zeta}) and~(\ref{sigma}) imply that, for $\nu$-a.e. $s$, equality
\begin{equation}
\xi(s|r) = \tilde g(s|r) - \frac{1}{W(\xi_1(s),\xi_2(s))}\left[ \xi_1(s|r)\int\limits_a^r h(s|\rho)\xi_2(s|\rho)\,d\rho - \xi_2(s|r)\int\limits_a^r h(s|\rho)\xi_1(s|\rho)\,d\rho\right]
\end{equation}
holds for all $r\in (a,b)$. By statements~3 and~5 of Lemma~\ref{aux_meas}, $s\to W(\xi_1(s),\xi_2(s))$ is a $\nu$-measurable function. It therefore follows from statements~1 and~4 of Lemma~\ref{aux_meas} that $\xi$ is a $\nu$-measurable family of functions on $(a,b)$.
\end{proof}

\section{Self-adjoint extensions of the three-dimensional \\ Aharonov--Bohm Hamiltonian}
\label{s8}

In this section, we use the results of Secs.~\ref{s6} and~\ref{s_meas} to represent the self-adjoint extensions of $H^\phi$ commuting with all operators $T_G$, $G\in \mathcal G$, as direct integrals of one-dimensional Schr\"odinger operators. Let $\mathfrak h = L^2(\R_+)$ and the positive Borel measure $\nu_0$ on $S=\Z\times \R$ be as in Sec.~\ref{sec_results}. We begin by constructing a suitable unitary operator $V\colon L_2(\R^3)\to L_2(S,\mathfrak h,\nu_0)$ and then prove that $(\nu_0,\mathcal I_{\mathfrak h,\nu_0},V)$ is actually an exact diagonalization for the set of the operators $T_G$. After that, we use Proposition~\ref{p_reduction} to obtain a representation of form~(\ref{dint_repr}) for $H^\phi$. Finally, we combine Corollary~\ref{c_red} and Proposition~\ref{l_ext1} to get an explicit description of the self-adjoint extensions of $H^\phi$ commuting with $T_G$.

We denote by $\lambda_+$ the restriction to $\R_+$ of the Lebesgue measure $\lambda$ on $\R$.
For $\Phi\in C_0^\infty(\mathscr O)$, let the map $\tilde\Phi$ from $S$ to $C^\infty_0(\R_+)$ be defined by the relation
\begin{equation}\label{tilde}
\tilde\Phi(s|r) = \frac{\sqrt{r}}{2\pi}\int_{-\infty}^\infty dx_3\int_{0}^{2\pi}d\varphi\, \Phi(r\cos\varphi,r\sin\varphi,x_3)e^{-ipx_3-im\varphi}
\end{equation}
for any $s=(m,p)\in S$ and $r>0$.

\begin{lemma}\label{l_unitary}
There is a unique unitary operator $V\colon L_2(\R^3)\to L_2(S,\mathfrak h,\nu_0)$ such that the equality
\begin{equation}\label{def_v}
(V[\Phi]_\Lambda)(s) = [\tilde\Phi(s)]_{\lambda_+}
\end{equation}
holds for $\nu_0$-a.e. $s$ for any $\Phi\in C_0^\infty(\mathscr O)$.
\end{lemma}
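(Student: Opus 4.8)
The plan is to recognize $V$ as the unitary operator implementing separation of variables in cylindrical coordinates: Fourier expansion in the angle $\varphi$ produces the discrete index $m$, the Fourier transform in $x_3$ produces the continuous index $p$, and the radial variable $r$ survives as the fibre variable of $\mathfrak h=L_2(\R_+)$. Uniqueness is immediate: since the $x_3$-axis is $\Lambda$-null we have $L_2(\R^3)=L_2(\mathscr O)$, and $C_0^\infty(\mathscr O)$ is dense in $L_2(\mathscr O)$, so the subspace $D=\{[\Phi]_\Lambda:\Phi\in C_0^\infty(\mathscr O)\}$ is dense in $L_2(\R^3)$; any bounded operator satisfying~(\ref{def_v}) is therefore determined. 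It thus suffices to construct one unitary $V$ with the stated property.

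First I would check that~(\ref{def_v}) defines a linear isometry on $D$. For $\Phi\in C_0^\infty(\mathscr O)$ the support of $\Phi$ is bounded and bounded away from the $x_3$-axis, so $\tilde\Phi(s)$ lies in $C_0^\infty(\R_+)\subset\mathfrak h$ for every $s$; moreover $\tilde\Phi(m,p|r)$ is jointly continuous in $(p,r)$ and, for each $r$, the map $s\to\tilde\Phi(s|r)$ is $\nu_0$-measurable, so statements~2 and~3 of Lemma~\ref{aux_meas} show that $s\to[\tilde\Phi(s)]_{\lambda_+}$ is an $\mathfrak h$-valued $\nu_0$-measurable map. To compute its norm I pass to cylindrical coordinates, writing $\Psi(r,\varphi,x_3)=\sqrt{r}\,\Phi(r\cos\varphi,r\sin\varphi,x_3)$, so that
\[
\|[\Phi]_\Lambda\|_{L_2(\R^3)}^2=\int_0^\infty dr\int_0^{2\pi}d\varphi\int_{-\infty}^\infty dx_3\,|\Psi(r,\varphi,x_3)|^2 .
\]
By~(\ref{tilde}), $\tilde\Phi(m,p|r)$ is precisely the joint $(m,p)$ Fourier coefficient of $\Psi(r,\cdot,\cdot)$, normalised by $1/(2\pi)$; Parseval's identity in $\varphi$ together with the Plancherel identity in $x_3$ gives $\sum_{m}\int dp\,|\tilde\Phi(m,p|r)|^2=\int_0^{2\pi}d\varphi\int dx_3\,|\Psi(r,\varphi,x_3)|^2$ for each $r$, and integrating over $r$ (Tonelli) yields $\|[\tilde\Phi(\cdot)]\|_{L_2(S,\mathfrak h,\nu_0)}^2=\|[\Phi]_\Lambda\|^2$. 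Hence $[\Phi]_\Lambda\to[\tilde\Phi(\cdot)]$ is a well-defined linear isometry on $D$, which extends uniquely to an isometry $V\colon L_2(\R^3)\to L_2(S,\mathfrak h,\nu_0)$ obeying~(\ref{def_v}).

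It remains to prove that $V$ is onto. The cleanest route is to factor $V$ through three standard unitary maps: the cylindrical change of variables $C\colon L_2(\R^3)\to L_2((0,\infty)\times(0,2\pi)\times\R,dr\,d\varphi\,dx_3)$, $(C\Phi)(r,\varphi,x_3)=\sqrt{r}\,\Phi(r\cos\varphi,r\sin\varphi,x_3)$ (unitary by the computation just used, with an obvious inverse), followed by $\mathrm{id}_{\mathfrak h}\otimes F_\varphi\otimes F_3$, where $F_\varphi\colon L_2(0,2\pi)\to\ell^2(\Z)$ is the Fourier-series isomorphism and $F_3\colon L_2(\R)\to L_2(\R)$ the Fourier transform, both unitary. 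Under the identifications $\ell^2(\Z)=L_2(\Z,\varrho)$, $L_2(\R)=L_2(\R,\lambda)$ and $\varrho\times\lambda=\nu_0$, the target space becomes $\mathfrak h\otimes L_2(S,\nu_0)=L_2(S,\mathfrak h,\nu_0)$. A direct computation (interchanging integrations by Fubini, legitimate because $\Phi$ is smooth with compact support) shows that this composition reproduces~(\ref{def_v}); being continuous and agreeing with $V$ on the dense set $D$, it equals $V$, which is therefore a composition of surjections and hence unitary. Equivalently, one checks that $V$ has dense range by exhibiting the inverse Fourier-series / Fourier-transform reconstruction of $\Phi$ from $[\tilde\Phi(\cdot)]$.

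The conceptual content is only that separation of variables in cylindrical coordinates is implemented by a unitary; the points requiring genuine care, and the main potential pitfall, are the exact bookkeeping of the normalisation constants --- the factor $\sqrt{r}$ absorbing the radial Jacobian and the single prefactor $1/(2\pi)$ accounting for both Parseval in $\varphi$ and Plancherel in $x_3$ --- and the justification of the measurability of $s\to[\tilde\Phi(s)]_{\lambda_+}$ and of the Fubini/Tonelli interchanges, all of which are routine given the smoothness and compact support of $\Phi$ and Lemma~\ref{aux_meas}.
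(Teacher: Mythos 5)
Your proposal is correct, and its uniqueness and isometry parts are essentially the paper's own argument: both establish the $\nu_0$-measurability of $s\to[\tilde\Phi(s)]_{\lambda_+}$ via Lemma~\ref{aux_meas}, and both compute the norm by combining Parseval for the Fourier series in $\varphi$ with Plancherel in $x_3$ (the paper routes this through the intermediate function $f(x_1,x_2,p)$, you through $\Psi(r,\varphi,x_3)=\sqrt{r}\,\Phi(r\cos\varphi,r\sin\varphi,x_3)$, but the bookkeeping of the $\sqrt r$ Jacobian and the $1/(2\pi)$ normalization is the same).

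Where you genuinely diverge is surjectivity. The paper proves $(\mathrm{Im}\,V)^\bot=\{0\}$ directly: it tests an orthogonal $\xi$ against the images of the separated functions $\Phi_{\psi,\chi,m}$ of~(\ref{Phi_psi}), whose transforms by~(\ref{tildePhi}) are $\delta_{km}\hat\chi(p)\psi(r)$, and then uses density of the $[\hat\chi]_\lambda$ in $L_2(\R)$ and of the $[\psi]_{\lambda_+}$ in $\mathfrak h$ to force $\xi=0$. You instead factor $V$ through three standard unitaries (the cylindrical change of variables, the Fourier-series map in $\varphi$, and the Fourier--Plancherel transform in $x_3$) and get surjectivity for free from surjectivity of the factors. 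Your route makes the structural content explicit --- separation of variables is a tensor factorization --- and is arguably more conceptual. What it costs is reliance on the identifications $L_2(\R_+\times(0,2\pi)\times\R)\cong \mathfrak h\otimes L_2(0,2\pi)\otimes L_2(\R)$ and $\mathfrak h\otimes L_2(S,\nu_0)\cong L_2(S,\mathfrak h,\nu_0)$, which are standard but are never set up in the paper's self-contained measure-theoretic framework: there $L_2(S,\mathfrak h,\nu_0)$ is defined directly as a space of equivalence classes of vector-valued measurable maps, not as a tensor product, so proving that elementary tensors span a dense subspace would amount to the same density/orthogonality argument the paper runs explicitly with its test functions. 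In short, the two proofs do equivalent work; yours packages it into quotable standard facts, while the paper keeps everything inside its own formalism.
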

\begin{proof}
Let $\Phi\in C_0^\infty(\mathscr O)$. Then $s\to \int_c^d \tilde\Phi(s|r)\,dr$ is a continuous and, hence, $\nu_0$-measurable function for any $c,d\in\R_+$. This means that $\tilde \Phi$ is a $\nu_0$-measurable family of functions on $(0,\infty)$, and statement~2 of Lemma~\ref{aux_meas} implies that $s\to [\tilde\Phi(s)]_{\lambda_+}$ is an $\mathfrak h$-valued $\nu_0$-measurable map.
Let $f$ be the function on $\R^3$ defined by the formula
\[
f(x_1,x_2,p) = \int_{-\infty}^\infty \Phi(x_1,x_2,x_3) e^{-ipx_3}\,dx_3.
\]
Clearly, $f$ is continuous and, hence, Lebesgue measurable. It follows from the Fubini theorem and the Parseval identity for the one-dimensional Fourier transformation that $f$ is square-integrable and
\begin{equation}\label{parseval}
\int |f(x_1,x_2,p)|^2\,dx_1dx_2dp = 2\pi\int |\Phi(x_1,x_2,x_3)|^2 \,dx_1dx_2dx_3.
\end{equation}
Let $g$ be the continuous function on $\R\times\R\times\R_+$ defined by the relation $g(\varphi,p,r) = f(r\cos\varphi,r\sin\varphi,p)$. By the Fubini theorem, we have
\[
\tilde\Phi(s|r) = \frac{\sqrt{r}}{2\pi}\int_0^{2\pi} g(\varphi,p,r) e^{-im\varphi}\,d\varphi
\]
for any $s=(m,p)\in S$ and $r\in \R_+$. Further, by the Fubini theorem and Parseval identity for the Fourier series expansion, we have
\begin{multline}
\int d\nu_0(s)\int_0^\infty |\tilde\Phi(s|r)|^2\,dr = \int_0^\infty dr\int_{-\infty}^\infty dp \sum_{m\in\Z} |\tilde\Phi(m,p|r)|^2 = \\ =
\frac{1}{2\pi}\int_0^\infty r\,dr\int_{-\infty}^\infty dp\int_0^{2\pi}|g(\varphi,p,r)|^2\,d\varphi = \frac{1}{2\pi} \int |f(x_1,x_2,p)|^2\,dx_1dx_2dp\nonumber
\end{multline}
and in view of~(\ref{parseval}), we conclude that $s\to [\tilde\Phi(s)]_{\lambda_+}$ is a square-integrable map and
\[
\int \|[\tilde\Phi(s)]_{\lambda_+}\|^2\,d\nu_0(s) = \int d\nu_0(s)\int_0^\infty |\tilde\Phi(s|r)|^2\,dr = \int |\Phi(x_1,x_2,x_3)|^2 \,dx_1dx_2dx_3.
\]
Since $[\Phi]_\Lambda$ with $\Phi\in C_0^\infty(\mathscr O)$ are dense in $L_2(\R^3)$, it follows that there exists a unique isometric operator $V\colon L_2(\R^3)\to L_2(S,\mathfrak h,\nu_0)$ satisfying~(\ref{def_v}). It remains to check that the image of $V$ is the entire space $L_2(S,\mathfrak h,\nu_0)$. For any $\xi \in L_2(S,\mathfrak h,\nu_0)$ and $h\in \mathfrak h$, $s\to \langle \xi(s),h\rangle$ is a $\nu_0$-square integrable function because $|\langle \xi(s),h\rangle|\leq \|\xi(s)\|\|h\|$ for $\nu_0$-a.e. $s$. Given $\psi \in C_0^\infty(\R_+)$, $\xi \in L_2(S,\mathfrak h,\nu_0)$, and $m\in\Z$, we denote by $F_{\xi,\psi,m}$ the element of $L_2(\R)$ such that $F_{\xi,\psi,m}(p) = \langle \xi(m,p),[\psi]_{\lambda_+}\rangle$ for $\lambda$-a.e. $p$. For $\psi\in C_0^\infty(\R_+)$, $\chi\in C_0^\infty(\R)$, and $m\in \Z$, let $\Phi_{\psi,\chi,m}\in C_0^\infty(\mathscr O)$ be such that
\begin{equation}\label{Phi_psi}
\Phi_{\psi,\chi,m}(r\cos\varphi,r\sin\varphi,x_3) = \frac{1}{\sqrt{r}}\psi(r)\chi(x_3)e^{im\varphi}
\end{equation}
for any $r>0$ and $x_3,\varphi\in\R$. Then we have
\begin{equation}\label{tildePhi}
\tilde \Phi_{\psi,\chi,m}(k,p|r) = \delta_{km} \hat\chi(p)\psi(r),
\end{equation}
where $\delta_{km}=0$ for $k\neq m$ and $\delta_{km}=1$ for $k=m$ and $\hat\chi(p) = \int \chi(x_3)e^{-ipx_3}\,dx_3$ is the Fourier transform of $\chi$.
It follows that
\begin{equation}\label{scalar_pr}
\langle \xi, V[\Phi_{\psi,\chi,m}]_\Lambda\rangle_{L_2(S,\mathfrak h,\nu_0)} = \int \hat\chi(p)\langle\xi(m,p),[\psi]_{\lambda_+}\rangle_{\mathfrak h}\,dp = \langle \bar F_{\xi,\psi,m},[\hat\chi]_\lambda\rangle_{L_2(\R)}
\end{equation}
for every $\xi\in L_2(S,\mathfrak h,\nu_0)$. Suppose now that $\xi$ is orthogonal to every element of $\mathrm{Im}\,V$. Since $[\hat\chi]_\lambda$ with $\chi\in C_0^\infty(\R)$ are dense in $L_2(\R)$, equation~(\ref{scalar_pr}) implies that $F_{\xi,\psi,m} = 0$, i.e., $\langle \xi(s),[\psi]_{\lambda_+}\rangle = 0$ for $\nu_0$-a.e. $s$ for any $\psi\in C_0^\infty(\R_+)$. As $[\psi]_{\lambda_+}$ with $\psi\in C_0^\infty(\R_+)$ are dense in $\mathfrak h$, it follows that $\xi=0$ and, therefore, $\mathrm{Im}\,V = L_2(S,\mathfrak h,\nu_0)$.
\end{proof}

\begin{lemma}\label{l_d}
Let $V$ be as in Lemma~$\mathrm{\ref{l_unitary}}$. Then $(\nu_0,\mathcal I_{\mathfrak h,\nu_0},V)$ is an exact diagonalization for the set of the operators $T_G$ with $G\in\mathcal G$.
\end{lemma}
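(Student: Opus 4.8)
The plan is to verify the hypotheses of Theorem~\ref{t0a}. Since $\mathfrak X=\{T_G:G\in\mathcal G\}$ consists of unitary (hence bounded) operators and $\nu_0=\varrho\times\lambda$ is a $\sigma$-finite Borel measure on the Polish space $S=\Z\times\R$ and is therefore standard, it suffices to show that $(\nu_0,\mathcal I_{\mathfrak h,\nu_0},V)$ is a diagonalization for $\mathfrak X$ and to exhibit a $\nu_0$-separating family $\{g_\iota\}_{\iota\in I}$ of $\nu_0$-measurable functions with $V^{-1}\mathcal T^{\nu_0,\mathcal I_{\mathfrak h,\nu_0}}_{g_\iota}V\in\mathfrak X$.

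First I would compute the action of $T_{G_{\alpha\beta}}$ after conjugation by $V$. For $\Phi\in C_0^\infty(\mathscr O)$ put $\Phi_{\alpha\beta}=\Phi\circ G_{\alpha\beta}^{-1}\in C_0^\infty(\mathscr O)$, so that $T_{G_{\alpha\beta}}[\Phi]_\Lambda=[\Phi_{\alpha\beta}]_\Lambda$. In polar coordinates $G_{\alpha\beta}^{-1}$ sends $\varphi\mapsto\varphi-\alpha$ and $x_3\mapsto x_3-\beta$; substituting these into~(\ref{tilde}) and using that the $\varphi$-integrand is $2\pi$-periodic (so its integral over any interval of length $2\pi$ is unchanged) gives
\begin{equation}\nonumber
\tilde\Phi_{\alpha\beta}(m,p|r)=e^{-im\alpha-ip\beta}\,\tilde\Phi(m,p|r),\quad (m,p)\in S,\ r>0.
\end{equation}
Let $g_{\alpha\beta}$ denote the bounded continuous function $(m,p)\mapsto e^{-im\alpha-ip\beta}$ on $S$. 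By~(\ref{def_v}) the displayed relation reads $VT_{G_{\alpha\beta}}[\Phi]_\Lambda=\mathcal T^{\nu_0,\mathcal I_{\mathfrak h,\nu_0}}_{g_{\alpha\beta}}V[\Phi]_\Lambda$ for all $\Phi\in C_0^\infty(\mathscr O)$. Since such $[\Phi]_\Lambda$ are dense in $L_2(\R^3)$ and all the operators involved are bounded, we obtain $T_{G_{\alpha\beta}}=V^{-1}\mathcal T^{\nu_0,\mathcal I_{\mathfrak h,\nu_0}}_{g_{\alpha\beta}}V$. As every element of $\mathcal G$ equals $G_{\alpha\beta}$ for some $\alpha,\beta\in\R$, this presents each $T_G$ as multiplication by a $\nu_0$-measurable function and shows that $(\nu_0,\mathcal I_{\mathfrak h,\nu_0},V)$ is a diagonalization for $\mathfrak X$.

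For exactness I would extract a separating family from the functions $g_{\alpha\beta}$ just produced. Fix $\alpha_0$ with $\alpha_0/(2\pi)$ irrational; then $m\mapsto e^{-im\alpha_0}$ is injective on $\Z$, so $g_{\alpha_0,0}$ separates any two points of $S$ with distinct first coordinates. Fix $\beta_1,\beta_2$ with $\beta_1/\beta_2$ irrational; if $p\neq p'$ and $e^{-i(p-p')\beta_1}=e^{-i(p-p')\beta_2}=1$, then $(p-p')\beta_j=2\pi k_j$ with nonzero integers $k_j$, forcing $\beta_1/\beta_2=k_1/k_2$ to be rational, a contradiction, so $g_{0,\beta_1}$ and $g_{0,\beta_2}$ together separate any two points sharing a first coordinate. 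Hence the finite, a fortiori countable, family $\{g_{\alpha_0,0},g_{0,\beta_1},g_{0,\beta_2}\}$ separates all points of $S=\s_{\nu_0}$ and is therefore $\nu_0$-separating; by the previous paragraph each of the three associated multiplication operators is carried into $\mathfrak X$ by $V^{-1}(\cdot)V$. Theorem~\ref{t0a} now gives the exactness of $(\nu_0,\mathcal I_{\mathfrak h,\nu_0},V)$.

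I expect the main obstacle to be the first step---the change of variables yielding $\tilde\Phi_{\alpha\beta}=e^{-im\alpha-ip\beta}\tilde\Phi$---where one must correctly track how $G_{\alpha\beta}^{-1}$ acts on the angular and axial variables and justify the invariance of the $\varphi$-integral under the shift by $\alpha$. Once this intertwining relation is established, the standardness of $\nu_0$ and the construction of the separating family are routine, and the conclusion follows directly from Theorem~\ref{t0a}.
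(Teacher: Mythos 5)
Your proof is correct and follows essentially the same route as the paper: it establishes the intertwining relation $VT_{G_{\alpha\beta}}V^{-1}=\mathcal T^{\nu_0,\mathcal I_{\mathfrak h,\nu_0}}_{g_{\alpha\beta}}$ via the change of variables in~(\ref{tilde}) and then applies Theorem~\ref{t0a}. The only (immaterial) difference is the choice of separating family: the paper takes the countable family $\{g_{\alpha\beta}\}_{(\alpha,\beta)\in\mathbb Q^2}$ and uses density of $\mathbb Q^2$ in $\R^2$, whereas you take three functions chosen by irrationality conditions; both choices satisfy the hypotheses of Theorem~\ref{t0a}.
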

\begin{proof}
For $\alpha,\beta\in \R$, let $G_{\alpha\beta}\in \mathcal G$ be defined by~(\ref{Galphabeta}). Clearly, each element of $\mathcal G$ is equal to $G_{\alpha\beta}$ for some $\alpha,\beta\in \R$. We have
\begin{equation}\label{widetildePhicirc}
\widetilde{\Phi\circ G_{\alpha\beta}^{-1}}(m,p) = e^{-i\alpha m - i\beta p}\tilde\Phi(m,p),\quad (m,p)\in S,
\end{equation}
for any $\Phi\in C_0^\infty(\mathscr O)$ and, therefore,
\[
VT_{G_{\alpha\beta}} V^{-1} = \mathcal T^{\nu_0,\mathfrak S}_{\gamma_{\alpha\beta}},
\]
where $\mathfrak S = \mathcal I_{\mathfrak h,\nu_0}$ and the function $\gamma_{\alpha\beta}$ on $S$ is given by
\begin{equation}\nonumber
\gamma_{\alpha\beta}(m,p) = e^{-i\alpha m - i\beta p}.
\end{equation}
Suppose $(m,p)$ and $(m',p')$ are such that $\gamma_{\alpha\beta}(m,p)=\gamma_{\alpha\beta}(m',p')$ for all $(\alpha,\beta)\in \mathbb Q^2$, where $\mathbb Q$ is the set of rational numbers. Then we
have
\[
e^{i\alpha(m-m') +i\beta(p-p')}=1
\]
for all $(\alpha,\beta)\in \mathbb Q^2$. Since $\mathbb Q^2$ is dense in $\R^2$, it follows that $m=m'$ and $p=p'$. This means that the family
$\{\gamma_{\alpha,\beta}\}_{(\alpha,\beta)\in\mathbb Q^2}$ separates the points of $S$. The desired statement now follows from Theorem~\ref{t0a}.
\end{proof}

The next lemma gives a representation of form~(\ref{dint_repr}) for $H^\phi$.

\begin{lemma}\label{l100}
Let $\phi\in\R$, $V$ be as in Lemma~$\mathrm{\ref{l_unitary}}$, and $\mathcal H^\phi$ be the map on $S$ such that
\begin{equation}\label{mathcalHphi}
\mathcal H^\phi(m,p) = h_{m+\phi}+p^2 1_{\mathfrak h},\quad (m,p)\in S,
\end{equation}
where $1_{\mathfrak h}$ is the identity operator in $\mathfrak h$ and the operator $h_\kappa$, $\kappa\in\R$, is given by~$(\ref{hkappa})$. Then $\mathcal H^\phi$ is a $\nu_0$-measurable family of operators in $\mathfrak h$ and we have
\[
H^\phi = V^{-1}\int^\oplus \mathcal H^\phi(s)\,d\nu_0(s)\, V.
\]
\end{lemma}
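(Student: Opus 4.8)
The plan is to deduce everything from Proposition~\ref{p_reduction}, applied with $\check H=\check H^\phi$ and $\mathcal H=\mathcal H^\phi$. Condition~(A) holds because, by Lemma~\ref{l_d}, $(\nu_0,\mathcal I_{\mathfrak h,\nu_0},V)$ is an exact diagonalization for $\mathfrak X=\{T_G:G\in\mathcal G\}$. Each partial operator $\mathcal H^\phi(m,p)=h_{m+\phi}+p^2 1_{\mathfrak h}$ is closed, being the sum of the closed minimal operator $h_{m+\phi}=L_{q_{m+\phi}}$ and the bounded operator $p^2 1_{\mathfrak h}$. It thus remains to check the three hypotheses of Proposition~\ref{p_reduction}: that $Q_{\mathcal H^\phi}$ extends $\check H^\phi$, that $D_{\check H^\phi}$ is invariant under every $T_G$, and that there is a suitable sequence $\xi_j\in V(D_{\check H^\phi})$.

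The central computation is the separation of variables expressing the radial reduction of $\mathscr H^\phi$. For $\Phi\in C_0^\infty(\mathscr O)$ I would pass to cylindrical coordinates, in which $\mathscr H^\phi=-\partial_r^2-r^{-1}\partial_r-r^{-2}(\partial_\varphi+i\phi)^2-\partial_{x_3}^2$. Applying the transform~(\ref{tilde}), the factors $e^{-im\varphi}$ and $e^{-ipx_3}$ diagonalize $\partial_\varphi$ and $\partial_{x_3}$, replacing $-r^{-2}(\partial_\varphi+i\phi)^2$ by $(m+\phi)^2r^{-2}$ and $-\partial_{x_3}^2$ by $p^2$, while the prefactor $\sqrt r/2\pi$ effects the unitary conjugation $f\mapsto\sqrt r\,f$ that removes the first-order term and turns $(m+\phi)^2 r^{-2}$ into $q_{m+\phi}=((m+\phi)^2-1/4)r^{-2}$. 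A direct differentiation confirms $\sqrt r\,(-\partial_r^2-r^{-1}\partial_r+(m+\phi)^2 r^{-2})f=l_{q_{m+\phi}}(\sqrt r\,f)$, so that $\widetilde{\mathscr H^\phi\Phi}(m,p)=l_{q_{m+\phi}}\tilde\Phi(m,p)+p^2\tilde\Phi(m,p)$. Since $\Phi$ has compact support in $\mathscr O$, each $\tilde\Phi(m,p)$ lies in $C_0^\infty(\R_+)\subset\mathcal D^0_{q_{m+\phi}}$ (recall $q_{m+\phi}$ is locally square-integrable on $\R_+$), whence $[\tilde\Phi(m,p)]_{\lambda_+}\in D_{\mathcal H^\phi(m,p)}$ and $\mathcal H^\phi(m,p)(V\Psi)(m,p)=(V\check H^\phi\Psi)(m,p)$ for $\Psi=[\Phi]_\Lambda$; as the right-hand side is a genuine square-integrable section, this shows $Q_{\mathcal H^\phi}$ extends $\check H^\phi$. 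Invariance of $D_{\check H^\phi}$ under $T_G$ is immediate, since $\mathcal G$ maps $\mathscr O$ onto itself and $\Phi\circ G^{-1}\in C_0^\infty(\mathscr O)$ for every $\Phi\in C_0^\infty(\mathscr O)$ and $G\in\mathcal G$.

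For the required sequence I would use the functions $\Phi_{\psi,\chi_0,m}$ of~(\ref{Phi_psi}), where $\psi$ runs over the countable set $A\subset C_0^\infty(\R_+)$ furnished by Lemma~\ref{l_density}, $m$ runs over $\Z$, and $\chi_0\in C_0^\infty(\R)$ is a single fixed nonzero function. By~(\ref{tildePhi}) we have $\tilde\Phi_{\psi,\chi_0,m}(k,p)=\delta_{km}\hat\chi_0(p)[\psi]_{\lambda_+}$, so at a point $s=(m_0,p_0)$ only the terms with $m=m_0$ survive and they equal $\hat\chi_0(p_0)\,([\psi]_{\lambda_+},[l_{q_{m_0+\phi}}\psi+p_0^2\psi]_{\lambda_+})$. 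Because $\chi_0$ has compact support, $\hat\chi_0$ extends to a nonzero entire function, so $\hat\chi_0(p_0)\neq0$ for $\lambda$-a.e. $p_0$; for such $p_0$ the scalar $\hat\chi_0(p_0)$ is harmless, and the linear span of these pairs is the image of the span of $([\psi]_{\lambda_+},[l_{q_{m_0+\phi}}\psi]_{\lambda_+})$ under the homeomorphism $(\varphi_1,\varphi_2)\mapsto(\varphi_1,\varphi_2+p_0^2\varphi_1)$ of $\mathfrak h\oplus\mathfrak h$. By Lemma~\ref{l_density} the latter span is dense in the graph of $h_{m_0+\phi}$, so its image is dense in the graph of $\mathcal H^\phi(s)$ for $\nu_0$-a.e. $s$.

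With all hypotheses verified, Proposition~\ref{p_reduction} yields at once that $\mathcal H^\phi$ is $\nu_0$-measurable and that $H^\phi=\overline{\check H^\phi}=Q_{\mathcal H^\phi}=V^{-1}\int^\oplus\mathcal H^\phi(s)\,d\nu_0(s)\,V$. The only genuinely delicate points are the separation-of-variables identity — in particular keeping track of the $\sqrt r$ conjugation that produces the $-1/4$ shift in $q_{m+\phi}$ — and the construction guaranteeing graph-density for almost every $s$ simultaneously, which is what forces the choice of a single $\chi_0$ with an almost-everywhere nonvanishing Fourier transform.
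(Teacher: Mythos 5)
Your proposal is correct and follows essentially the same route as the paper's own proof: separation of variables in cylindrical coordinates to show that $V^{-1}\int^\oplus\mathcal H^\phi(s)\,d\nu_0(s)\,V$ extends $\check H^\phi$, then the countable family from Lemma~\ref{l_density} tensored with a single fixed $\chi\in C_0^\infty(\R)$ whose Fourier transform (being entire and nontrivial) vanishes on at most a countable set, and finally Proposition~\ref{p_reduction}. The only cosmetic differences are that you spell out the graph-density argument via the homeomorphism $(\varphi_1,\varphi_2)\mapsto(\varphi_1,\varphi_2+p_0^2\varphi_1)$ and explicitly verify the $T_G$-invariance of $D_{\check H^\phi}$ and the closedness of the partial operators, points the paper leaves implicit.
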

\begin{proof}
It easily follows from (\ref{checkH}) that
\begin{equation}\label{polar}
(\mathscr H^\phi\Phi)(r\cos \varphi,r\sin\varphi,x_3) = \left(-\partial_{x_3}^2-
\partial_r^2-\frac{1}{r}\partial_r - \frac{1}{r^2}(\partial_\varphi^2+2i\phi\partial_\varphi-\phi^2)\right)
F_\Phi(r,\varphi,x_3)
\end{equation}
for any $\Phi\in C^\infty_0(\mathscr O)$, where the operator $\mathscr H^\phi$ in $C^\infty_0(\mathscr O)$ is given by~(\ref{checkH}) and $F_\Phi$ is the smooth function on $\R\times\R\times\R_+$ which represents $\Phi$ in the
cylindrical coordinates,
\[
F_\Phi(r,\varphi,x_3) = \Phi(r\cos \varphi,r\sin\varphi,x_3).
\]
Substituting (\ref{polar}) in~(\ref{tilde}) and integrating by parts yields
\begin{equation}\label{widetildecheck}
\widetilde{\mathscr H^\phi\Phi}(s) = -\tilde\Phi(s)'' +
q_{m+\phi}\tilde \Phi(s) + p^2\tilde\Phi(s)
\end{equation}
for every $s=(m,p)\in S$, where $q_\kappa$, $\kappa\in\R$, is given by~(\ref{qkappa}). Since $\tilde \Phi(s)\in C_0^\infty(\R_+)$ for all $s\in S$, it follows from~(\ref{def_v}) and~(\ref{hkappa}) that $(V[\Phi]_\Lambda)(s)\in D_{\mathcal H^\phi(s)}$ and
\begin{equation}\label{amp}
\mathcal H^\phi(s) (V[\Phi]_\Lambda)(s) = l_{q_{m+\phi}}\tilde\Phi(s)+p^2[\tilde\Phi(s)]_{\lambda_+} =  [\widetilde{\mathscr H^\phi\Phi}(s)]_{\lambda_+} = (V[\mathscr H^\phi\Phi]_\Lambda)(s) = (V\check H^\phi[\Phi]_\Lambda)(s)
\end{equation}
for $\nu_0$-a.e. $s=(m,p)$. Since every element of $D_{\check H^\phi}$ is equal to $[\Phi]_\Lambda$ for some $\Phi\in C^\infty_0(\mathscr O)$, this implies that $V^{-1}\int^\oplus \mathcal H^\phi(s)\,d\nu_0(s)\,V$ is an extension of $\check H^\phi$. In view of~(\ref{hkappa}), Lemma~\ref{l_density} implies the existence of a sequence $f_1,f_2,\ldots$ of elements of $C_0^\infty(\R_+)$ such that $([f_j]_{\lambda_+},l_{q_\kappa}f_j)$ are dense in the graph of $h_\kappa$ for any $\kappa\in\R$. Since $p^2 1_{\mathfrak h}$ is a bounded everywhere defined operator in $\mathfrak h$, it follows that the elements $([f_j]_{\lambda_+},\mathcal H^\phi(s)[f_j]_{\lambda_+})$ are dense in the graph of $\mathcal H^\phi(s)$ for all $s\in S$. Let $\chi$ be a nonzero element of $C_0^\infty(\R)$ and $\hat\chi(p) = \int_{-\infty}^\infty \chi(x)e^{-ipx}\,dx$ be its Fourier transform. For every  $j=1,2,\ldots$ and $m\in\Z$, we define $\xi_{jm}$ to be the $\nu_0$-equivalence class such that
\[
\xi_{jm}(k,p) = \delta_{km}\hat\chi(p)[f_j]_{\lambda_+}
\]
for $\nu_0$-a.e. $(k,p)$, where $\delta_{km}=0$ for $k\neq m$ and $\delta_{km}=1$ for $k = m$. Let $\Phi_{jm}\in C_0^\infty(\mathscr O)$ be defined by the relation $\Phi_{jm} = \Phi_{f_j,\chi,m}$, where $\Phi_{\psi,\chi,m}$ for $\psi\in C_0^\infty(\R_+)$ is given by~(\ref{Phi_psi}). It follows from~(\ref{def_v}) and~(\ref{tildePhi}) that $(V[\Phi_{jm}]_\Lambda)(s) = \xi_{jm}(s)$ for $\nu_0$-a.e. $s$ and, hence, $V[\Phi_{jm}]_\Lambda = \xi_{jm}$. Thus, $\xi_{jm}\in V(D_{\check H^\phi})$ for all $j=1,2,\ldots$ and $m\in \Z$. Since $\hat\chi$ is the restriction to $\R$ of a nontrivial entire function, the set of its zeros is at most countable. This implies that the elements $(\xi_{jm}(s),\mathcal H^\phi(s)\xi_{jm}(s))$ are dense in the graph of $\mathcal H^\phi(s)$ for $\nu_0$-a.e. $s$. The statement of the lemma therefore follows from~(\ref{Hphidef}) and Proposition~\ref{p_reduction}.
\end{proof}

Let $\phi\in\R$ and $\theta$ be a Borel real function on $A^\phi$, where $A^\phi$ is given by~(\ref{Aphi}).
We denote by $\mathcal H^\phi_{\theta}$ the operator-valued map on $S$ such that
\begin{equation}\label{htau12}
\mathcal H^\phi_{\theta}(s) = \left\{
\begin{matrix}
\mathcal H^\phi(s),& s\in S\setminus A^\phi,\\
h_{m+\phi,\theta(s)}+p^2 1_{\mathfrak h},& s\in A^\phi,
\end{matrix}
\right.
\end{equation}
for every $s=(m,p)\in S$, where $\mathcal H^\phi$ is defined by~(\ref{mathcalHphi}) and the operator $h_{\kappa,\vartheta}$ for $-1<\kappa<1$ and $\vartheta\in\R$ is given by~(\ref{htheta}).

\begin{lemma}\label{l_famsaext}
Let $\phi\in \R$. For any Borel real function $\theta$ on $A^\phi$, $\mathcal H^\phi_{\theta}$ is a $\nu_0$-measurable family of self-adjoint operators in $\mathfrak h$ such that $\mathcal H^\phi_{\theta}(s)$ is an extension of $\mathcal H^\phi(s)$ for every $s\in S$. If $\tilde{\mathcal H}$ is a $\nu_0$-measurable family of operators in $\mathfrak h$ such that $\tilde{\mathcal H}(s)$ is a self-adjoint extension of $\mathcal H^\phi(s)$ for $\nu_0$-a.e. $s$, then $\tilde{\mathcal H}$ is $\nu_0$-equivalent to $\mathcal H^\phi_\theta$ for some Borel real function $\theta$ on $A^\phi$.
\end{lemma}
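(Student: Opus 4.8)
The plan is to split $S$ into the two $\nu_0$-measurable pieces $A^\phi$ and $S\setminus A^\phi$ and treat them separately. For $s=(m,p)\in S\setminus A^\phi$ one has $|m+\phi|\geq 1$, so by case~(i) of Sec.~\ref{s_one} the potential $q_{m+\phi}$ is in the limit point case at both endpoints and $h_{m+\phi}=L_{q_{m+\phi}}$ is already self-adjoint; hence $\mathcal H^\phi(s)=h_{m+\phi}+p^2 1_{\mathfrak h}$ is self-adjoint and is its own only self-adjoint extension. On this set $\mathcal H^\phi_\theta(s)=\mathcal H^\phi(s)$ for every $\theta$, and the restriction is $\nu_0$-measurable by Lemma~\ref{l100}. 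Thus all the content is concentrated on $A^\phi$, where $|m+\phi|<1$ and, by case~(ii), $q_{m+\phi}$ is in the limit circle case at $0$ and the limit point case at $\infty$. Since $\mathcal H^\phi(s)=h_{m+\phi}+p^2 1_{\mathfrak h}$ and $p^2 1_{\mathfrak h}$ is a $\nu_0$-measurable family of bounded self-adjoint operators, I would peel off this summand: the map $(f,g)\mapsto(f,g\pm p^2 f)$ is a $\nu_0$-measurable family of bounded invertible operators on $\mathfrak h\oplus\mathfrak h$, so adding or subtracting $p^2 1_{\mathfrak h}$ carries $\nu_0$-measurable families of operators to $\nu_0$-measurable families, preserves self-adjointness (as $p^2\in\R$) and preserves the extension order. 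It therefore suffices to work with the minimal operators $L_{v(s)}$, where $v(s)=q_{m+\phi}$ is, by (\ref{qkappa}), a $\nu_0$-measurable family of real locally square-integrable functions on $\R_+$, and to feed these into Proposition~\ref{l_ext1}.

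For the first assertion I would fix a Borel real $\theta$ on $A^\phi$ and set $\xi(s)=u^{m+\phi}_{\theta(s)}(0)$ for $s=(m,p)\in A^\phi$. By the discussion following (\ref{wronskian}), $\xi(s)$ is a nontrivial real solution of $l_{q_{m+\phi}}\xi(s)=0$, and $L^{\xi(s)}_{q_{m+\phi}}=h_{m+\phi,\theta(s)}$ by (\ref{htheta}). On $A^\phi$ the index $m$ takes only finitely many values, so $\kappa=m+\phi$ is a (locally constant) Borel function of $s$; since $\vartheta\mapsto u^\kappa_\vartheta(0|r)$ is continuous for each fixed $r$ and $\theta$ is Borel, $s\mapsto\xi(s|r)$ is $\nu_0$-measurable for every $r\in\R_+$, and as $\xi(s)$ is smooth in $r$, statement~3 of Lemma~\ref{aux_meas} shows that $\xi$ is a $\nu_0$-measurable family of functions on $\R_+$. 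The first part of Proposition~\ref{l_ext1} then gives that $s\mapsto h_{m+\phi,\theta(s)}=L^{\xi(s)}_{v(s)}$ is a $\nu_0$-measurable family of self-adjoint operators on $A^\phi$. Adding back $p^2 1_{\mathfrak h}$ and gluing with the complement (a family that is measurable on each piece of a countable measurable partition is measurable) shows $\mathcal H^\phi_\theta$ is a $\nu_0$-measurable family of self-adjoint operators; the extension property $\mathcal H^\phi(s)\subset\mathcal H^\phi_\theta(s)$ holds for \emph{every} $s$ because $h_{m+\phi}\subset h_{m+\phi,\theta(s)}$ by (\ref{htheta}).

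For the converse, let $\tilde{\mathcal H}$ be as stated. On $S\setminus A^\phi$ self-adjointness of $\mathcal H^\phi(s)$ forces $\tilde{\mathcal H}(s)=\mathcal H^\phi(s)=\mathcal H^\phi_\theta(s)$ for any $\theta$. On $A^\phi$ I would put $\mathcal R(s)=\tilde{\mathcal H}(s)-p^2 1_{\mathfrak h}$; this is a $\nu_0$-measurable family and $\mathcal R(s)$ is a self-adjoint extension of $h_{m+\phi}=L_{v(s)}$ for $\nu_0$-a.e.\ $s$. By the second part of Proposition~\ref{l_ext1} there is a $\nu_0$-measurable family $\xi$ with $\xi(s)$ a nontrivial real solution of $l_{q_{m+\phi}}\xi(s)=0$ and $\mathcal R(s)=L^{\xi(s)}_{v(s)}$ for $\nu_0$-a.e.\ $s$. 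Writing $\kappa=m+\phi$ and using (\ref{ukappavartheta}) together with the linear independence $W(u^\kappa(0),w^\kappa(0))=2/\pi$ from (\ref{wronskian}), the pair of Wronskians $\big(W(\xi(s),w^\kappa(0)),-W(\xi(s),u^\kappa(0))\big)$ is a nonzero real vector proportional to $(\cos(\theta(s)-\vartheta_\kappa),\sin(\theta(s)-\vartheta_\kappa))$; hence, applying a fixed Borel map $\R^2\setminus\{0\}\to\R$ that returns the angle of a line modulo $\pi$ and adding $\vartheta_\kappa$, I obtain a $\nu_0$-measurable $\theta$ on $A^\phi$ for which $u^{m+\phi}_{\theta(s)}(0)$ is proportional to $\xi(s)$, so that $h_{m+\phi,\theta(s)}=L^{\xi(s)}_{v(s)}=\mathcal R(s)$ and $\mathcal H^\phi_\theta(s)=\tilde{\mathcal H}(s)$ for $\nu_0$-a.e.\ $s\in A^\phi$ (here the reduction modulo $\pi$ is harmless since $u^\kappa_{\vartheta+\pi}(0)=-u^\kappa_\vartheta(0)$ yields the same extension). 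Replacing $\theta$ by a Borel function equal to it $\nu_0$-a.e., which alters $\mathcal H^\phi_\theta$ only on a $\nu_0$-null set, then produces the required Borel $\theta$.

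The Wronskians $W(\xi(s),\cdot)$ are evaluated at a single point, so their $\nu_0$-measurability in $s$ follows from statements~3 and~5 of Lemma~\ref{aux_meas} applied to $\xi$ and to the explicit reference solutions $u^\kappa(0),w^\kappa(0)$. I expect the main obstacle to be precisely this converse extraction — turning the abstract measurable solution family supplied by Proposition~\ref{l_ext1} into a genuine Borel angle $\theta$ — for which the linear-independence identity (\ref{wronskian}) and the Borel angle-extraction are the essential ingredients; the only other delicate point is the routine verification that the bounded scalar shift $\pm p^2 1_{\mathfrak h}$ preserves $\nu_0$-measurability of operator families.
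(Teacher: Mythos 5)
Your proposal is correct and follows essentially the same route as the paper's proof: split $S$ into $A^\phi$ and its complement, remove the bounded summand $p^2 1_{\mathfrak h}$, apply Proposition~\ref{l_ext1} in both directions with $v(s)=q_{m+\phi}$, and in the converse recover a Borel angle $\theta$ from the Wronskians of $\xi(s)$ against $u^{m+\phi}(0)$ and $w^{m+\phi}(0)$ (the paper's coefficients $C_1,C_2$), using the normalization $W(u^\kappa(0),w^\kappa(0))=2/\pi$ and an explicit $\arccos$-with-sign in place of your ``angle of a line modulo $\pi$'' map. The only cosmetic differences are that you justify the $\pm p^2 1_{\mathfrak h}$ shift by a graph transformation rather than by citing Lemma~\ref{l_BBB4a}, and you spell out the measurability of $\xi_\theta$ via statement~3 of Lemma~\ref{aux_meas}, which the paper dismisses as obvious.
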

\begin{proof}
Let $v$ be the map on $S$ defined by by the relation
\begin{equation}\label{vsr}
v(m,p) = q_{m+\phi},\quad (m,p)\in S,
\end{equation}
where $q_\kappa$, $\kappa\in\R$, is given by~(\ref{qkappa}). Then $v$ is a $\nu_0$-measurable family of functions on $\R_+$, and (\ref{hkappa}) implies that
\begin{equation}\label{lvs}
L_{v(s)} = h_{m+\phi}
\end{equation}
for any $s=(m,p)\in S$. Clearly, $v(s)$ is in l.c.c. at $0$ if and only if $s\in A^\phi$. Given a Borel real function  $\theta$ on $A^\phi$, we define the map $\xi_\theta$ on $A^\phi$ by setting
\begin{equation}\label{xitheta}
\xi_\theta(s) = u^{m+\phi}(0)\cos\left(\theta(s)-\vartheta_{m+\phi}\right) + w^{m+\phi}(0)\sin\left(\theta(s)-\vartheta_{m+\phi}\right)
\end{equation}
for every $s=(m,p)\in A^\phi$, where $u^\kappa$, $w^\kappa$, and $\vartheta_\kappa$ are given by~(\ref{ukappa}), (\ref{wkappa(z)}), and~(\ref{varthetakappa}) respectively. By~(\ref{lukappa}), (\ref{lu0}), and~(\ref{vsr}), we have $l_{v(s)}\xi_\theta(s)=0$ for every $s\in A^\phi$. It follows from~(\ref{ukappavartheta}), (\ref{htheta}), (\ref{vsr}), and~(\ref{xitheta}) that
\begin{equation}\label{mathcalRphi}
h_{m+\phi,\theta(s)} = L_{v(s)}^{\xi_\theta(s)},\quad s=(m,p)\in A^\phi.
\end{equation}
Since $\xi_\theta$ is obviously a $\nu_0|_{A^\phi}$-measurable family of functions on $\R_+$, it follows from Proposition~\ref{l_ext1} that $s\to h_{m+\phi,\theta(s)}$ is a $\nu_0$-measurable family of operators in $\mathfrak h$ on $A^\phi$. Since $(m,p)\to p^2 1_{\mathfrak h}$ is a $\nu_0$-measurable family of operators in $\mathfrak h$ by Lemma~\ref{l_BBB}, the $\nu_0$-measurability of $\mathcal H^\phi_{\theta}$ on $A^\phi$ follows from~(\ref{htau12}) and Lemma~\ref{l_BBB4a}. As $\mathcal H_\theta^\phi$ is $\nu_0$-measurable on $S\setminus A^\phi$ by~(\ref{htau12}) and Lemma~\ref{l100}, we see that $\mathcal H^\phi_\theta$ is a $\nu_0$-measurable family of operators in $\mathfrak h$. It follows immediately from~(\ref{mathcalHphi}) and (\ref{htau12}) that $\mathcal H^\phi_{\theta}(s)$ is a self-adjoint extension of $\mathcal H^\phi(s)$ for every $s\in S$.

Let $\tilde{\mathcal H}$ be a $\nu_0$-measurable family of operators in $\mathfrak h$ such that $\tilde{\mathcal H}(s)$ is a self-adjoint extension of $\mathcal H^\phi(s)$ for $\nu_0$-a.e. $s$. Let $\mathcal R$ be a $\nu_0$-a.e. defined family of operators in $\mathfrak h$ such that $\mathcal R(s)=\tilde{\mathcal H}(s)-p^2 1_{\mathfrak h}$ for $\nu_0$-a.e. $s=(m,p)$. By Lemma~\ref{l_BBB}, $\mathcal R$ is $\nu_0$-measurable. In view of~(\ref{mathcalHphi}) and~(\ref{lvs}), $\mathcal R(s)$ is a self-adjoint extension of $L_{v(s)}$ for $\nu_0$-a.e. $s$. By Proposition~\ref{l_ext1}, there exists a $\nu_0$-a.e. defined map $\xi$ on $A^\phi$ such that $\xi$ is a $\nu_0|_{A^\phi}$-measurable family of functions on $\R_+$, $\xi(s)$ is a nontrivial real element of $\mathcal D$ satisfying $l_{v(s)}\xi(s)=0$ for $\nu_0$-a.e. $s\in A^\phi$, and $\mathcal R(s)=L^{\xi(s)}_{v(s)}$ for $\nu_0$-a.e. $s\in A^\phi$. By~(\ref{lukappa}), (\ref{lu0}), and (\ref{wronskian}), $u^{m+\phi}(0)$ and $w^{m+\phi}(0)$ are real linearly independent solutions of $l_{v(s)}u^{m+\phi}(0)=l_{v(s)}w^{m+\phi}(0)=0$ for every $s=(m,p)\in A^\phi$. Hence, there exist $\nu_0$-a.e. defined real functions $C_1$ and $C_2$ on $A^\phi$ such that
\[
\xi(s) =C_1(s)u^{m+\phi}(0) + C_2(s) w^{m+\phi}(0)
\]
for $\nu_0$-a.e. $s=(m,p)\in A^\phi$. In view of~(\ref{wronskian}), we have
\[
C_1(s) = \frac{\pi}{2}W(\xi(s),w^{m+\phi}(0)),\quad C_2(s) = -\frac{\pi}{2}W(\xi(s),u^{m+\phi}(0))
\]
for $\nu_0$-a.e. $s=(m,p)\in A^\phi$. It follows from statements~3 and~5 of Lemma~\ref{aux_meas} that $C_1$ and $C_2$ are $\nu_0$-measurable functions on $A^\phi$. Since $\xi(s)\neq 0$, we have $C_1(s)^2+C_2(s)^2\neq 0$ for $\nu_0$-a.e. $s\in A^\phi$. Let $\theta$ be a Borel function on $A^\phi$ such that
\[
\theta(s) = \vartheta_{m+\phi}+\tau(C_2(s))\arccos \frac{C_1(s)}{C_1(s)^2+C_2(s)^2}
\]
for $\nu_0$-a.e. $s=(m,p)\in A^\phi$, where $\tau(y)=1$ for $y\geq 0$ and $\tau(y)=-1$ for $y<0$. We then have
\[
\cos(\theta(s)-\vartheta_{m+\phi})= \frac{C_1(s)}{C_1(s)^2+C_2(s)^2},\quad \sin (\theta(s)-\vartheta_{m+\phi})= \frac{C_2(s)}{C_1(s)^2+C_2(s)^2}
\]
for $\nu_0$-a.e. $s=(m,p)\in A^\phi$. This means that $\xi(s)$ is proportional to the function $\xi_\theta(s)$ given by~(\ref{xitheta}) and it follows from~(\ref{mathcalRphi}) that $\mathcal R(s)=h_{m+\phi,\theta(s)}$ for $\nu_0$-a.e. $s=(m,p)\in A^\phi$. Hence, $\tilde{\mathcal H}(s)=\mathcal H^\phi_\theta(s)$ for $\nu_0$-a.e. $s\in A^\phi$. As $\mathcal H^\phi(s)$ is self-adjoint for all $s\in S\setminus A^\phi$, we have $\tilde{\mathcal H}(s)=\mathcal H^\phi(s)=\mathcal H^\phi_\theta(s)$ for $\nu_0$-a.e. $s\in S\setminus A^\phi$. We therefore have $\tilde{\mathcal H}(s)=\mathcal H^\phi_\theta(s)$ for $\nu_0$-a.e.~$s$.
\end{proof}

Given $\phi\in\R$ and a Borel real function $\theta$ on $A^\phi$, we define the operator $R^\phi_\theta$ in $L_2(\R^3)$ by the relation
\begin{equation}\label{301}
R^\phi_{\theta} = V^{-1}\int^\oplus \mathcal H^\phi_{\theta}(s)\,d\nu_0(s)\,\, V,
\end{equation}
where the unitary operator $V\colon L_2(\R^3)\to L_2(S,\mathfrak h,\nu_0)$ is as in Lemma~\ref{l_unitary}.

\begin{proposition}\label{t2}
Let $\phi\in \R$. For any Borel real function $\theta$ on $A^\phi$, $R^\phi_\theta$
is a self-adjoint extension of $H^\phi$ commuting with $T_G$ for all $G\in \mathcal G$. Every self-adjoint extension of $H^\phi$ commuting with $T_G$ for all $G\in \mathcal G$ is equal to $R^\phi_{\theta}$  for some Borel real function $\theta$ on $A^\phi$. Given Borel real functions $\theta$ and $\tilde\theta$ on $A^\phi$, we have $R^\phi_{\theta}=R^\phi_{\tilde\theta}$ if and only if $\theta(s)-\tilde\theta(s)\in\pi\Z$ for $\nu_0$-a.e. $s\in A^\phi$.
\end{proposition}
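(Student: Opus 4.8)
The plan is to read the proposition off as a direct application of the reduction machinery of Sec.~\ref{s6} to the data assembled in Lemmas~\ref{l_unitary}--\ref{l_famsaext}. First I would record that condition~(A) is satisfied with $\mathfrak H=L_2(\R^3)$, $\mathfrak X=\{T_G\}_{G\in\mathcal G}$, and the triple $(\nu_0,\mathcal I_{\mathfrak h,\nu_0},V)$: the $T_G$ are unitary with $T_G^*=T_{G^{-1}}$ and pairwise commute because $\mathcal G$ is Abelian, so $\mathfrak X$ is an involutive subset of $L(\mathfrak H)$, while Lemma~\ref{l_d} states that the triple is an exact diagonalization for $\mathfrak X$. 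By Lemma~\ref{l100} we have $H^\phi=Q_{\mathcal H^\phi}$ in the notation of Sec.~\ref{s6}, and by~(\ref{301}) we have $R^\phi_\theta=Q_{\mathcal H^\phi_\theta}$.

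For the first assertion I would invoke Lemma~\ref{l_famsaext}: for any Borel real $\theta$ on $A^\phi$ the family $\mathcal H^\phi_\theta$ is $\nu_0$-measurable and $\mathcal H^\phi_\theta(s)$ is a self-adjoint extension of $\mathcal H^\phi(s)$ for every $s\in S$. Corollary~\ref{c_red} then immediately gives that $R^\phi_\theta=Q_{\mathcal H^\phi_\theta}$ is a self-adjoint extension of $H^\phi=Q_{\mathcal H^\phi}$ commuting with every $T_G$. For the converse, let $\tilde H$ be an arbitrary self-adjoint extension of $H^\phi$ commuting with all $T_G$. Corollary~\ref{c_red} yields $\tilde H=Q_{\tilde{\mathcal H}}$ for some $\nu_0$-measurable family $\tilde{\mathcal H}$ whose values are self-adjoint extensions of $\mathcal H^\phi(s)$ for $\nu_0$-a.e.~$s$; the second half of Lemma~\ref{l_famsaext} produces a Borel real $\theta$ on $A^\phi$ with $\tilde{\mathcal H}(s)=\mathcal H^\phi_\theta(s)$ for $\nu_0$-a.e.~$s$, whence statement~2 of Theorem~\ref{t1} gives $\tilde H=Q_{\tilde{\mathcal H}}=Q_{\mathcal H^\phi_\theta}=R^\phi_\theta$.

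It remains to establish the equality criterion. By statement~2 of Theorem~\ref{t1}, $R^\phi_\theta=R^\phi_{\tilde\theta}$ holds if and only if $\mathcal H^\phi_\theta(s)=\mathcal H^\phi_{\tilde\theta}(s)$ for $\nu_0$-a.e.~$s$, so I would compare the two families pointwise. On $S\setminus A^\phi$ they both equal $\mathcal H^\phi(s)$ by~(\ref{htau12}), so only the behaviour on $A^\phi$ is relevant, and there the comparison reduces to deciding when $h_{m+\phi,\theta(s)}=h_{m+\phi,\tilde\theta(s)}$. Using $h_{\kappa,\vartheta}=L_{q_\kappa}^{u^\kappa_\vartheta(0)}$ from~(\ref{htheta}) together with the fact (recorded in Sec.~\ref{s_one}) that $L_q^f$ determines $f$ up to a nonzero real factor, this amounts to deciding when $u^\kappa_{\theta(s)}(0)$ and $u^\kappa_{\tilde\theta(s)}(0)$ are proportional for $\kappa=m+\phi\in(-1,1)$. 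Expanding both via~(\ref{ukappavartheta}) in the basis $\{u^\kappa(0),w^\kappa(0)\}$, which is linearly independent by the nonvanishing Wronskian~(\ref{wronskian}), proportionality of the coordinate vectors $(\cos(\vartheta-\vartheta_\kappa),\sin(\vartheta-\vartheta_\kappa))$ holds exactly when the two angles differ by an integer multiple of $\pi$; the identity $u^\kappa_{\vartheta+\pi n}(0)=(-1)^n u^\kappa_\vartheta(0)$ noted in Sec.~\ref{sec_results} supplies one direction. Thus $\mathcal H^\phi_\theta(s)=\mathcal H^\phi_{\tilde\theta}(s)$ for $\nu_0$-a.e.~$s$ precisely when $\theta(s)-\tilde\theta(s)\in\pi\Z$ for $\nu_0$-a.e.~$s\in A^\phi$, which is the claimed criterion.

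Almost all of the argument is bookkeeping: the two existence assertions fall out of Corollary~\ref{c_red} and Lemma~\ref{l_famsaext} with no further work. The only step that carries genuine content is the equality criterion, and within it the one point deserving care is the converse implication — that proportionality of the generalized eigenfunctions $u^\kappa_{\theta(s)}(0)$ and $u^\kappa_{\tilde\theta(s)}(0)$ forces the parameters to differ by a multiple of $\pi$ — which I expect to be the main (if modest) obstacle.
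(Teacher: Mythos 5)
Your proposal is correct and follows essentially the same route as the paper: both directions are obtained by combining Corollary~\ref{c_red} with Lemmas~\ref{l_d}, \ref{l100}, and~\ref{l_famsaext}, and the equality criterion is reduced (via statement~2 of Theorem~\ref{t1}, equivalently Proposition~\ref{l_di4b}, exactly as in the paper) to proportionality of $u^{m+\phi}_{\theta(s)}(0)$ and $u^{m+\phi}_{\tilde\theta(s)}(0)$, settled by~(\ref{ukappavartheta}). Your extra remarks on verifying condition~(A) and on the linear independence of $u^\kappa(0)$ and $w^\kappa(0)$ via the Wronskian~(\ref{wronskian}) merely make explicit what the paper leaves implicit.
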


\begin{proof}
If $\theta$ is a Borel real function on $A^\phi$, then it follows immediately from Corollary~\ref{c_red} and Lemmas~\ref{l_d}, \ref{l100}, and~\ref{l_famsaext} that $R^\phi_{\theta}$ is a self-adjoint extension of $H^\phi$ commuting with $T_G$ for all $G\in \mathcal G$.

Conversely, let $\tilde H$ be a self-adjoint extension of $H^\phi$ commuting with $T_G$ for all $G\in \mathcal G$. By Corollary~\ref{c_red} and Lemmas~\ref{l_d} and~\ref{l100}, there is a $\nu_0$-measurable family $\tilde{\mathcal H}$ of operators in $\mathfrak h$ such that $\tilde{\mathcal H}(s)$ is a self-adjoint extension of $\mathcal H^\phi(s)$ for $\nu_0$-a.e. $s$ and
\begin{equation}\label{tildeHdint}
\tilde H = V^{-1} \int^\oplus \tilde{\mathcal H}(s)\,d\nu_0(s)\,V.
\end{equation}
By Lemma~\ref{l_famsaext}, there is a Borel real function $\theta$ on $A^\phi$ such that $\tilde{\mathcal H}(s)=\mathcal H^\phi_\theta(s)$ for $\nu_0$-a.e. $s$.
In view of~(\ref{301}) and~(\ref{tildeHdint}), it follows that $\tilde H = R^\phi_\theta$.

Let $\theta$ and $\tilde\theta$ be Borel real functions on $A^\phi$. By Proposition~\ref{l_di4b}, the equality $R^\phi_{\theta}=R^\phi_{\tilde\theta}$ holds if and only if $\mathcal H^\phi_{\theta}(s)=\mathcal H^\phi_{\tilde\theta}(s)$ for $\nu_0$-a.e. $s$. By~(\ref{htheta}) and (\ref{htau12}), the latter condition is fulfilled if and only if $u^{m+\phi}_{\theta(s)}(0)$ is proportional to $u^{m+\phi}_{\tilde\theta(s)}(0)$ for $\nu_0$-a.e. $s=(m,p)\in A^\phi$. In view of~(\ref{ukappavartheta}), this is true if and only if $\theta(s)-\tilde\theta(s)\in\pi\Z$ for $\nu_0$-a.e. $s\in A^\phi$.
\end{proof}

\section{Direct integrals of measures}
\label{s9}

Let $\nu$ be a $\sigma$-finite positive measure and $\mu$ be a map such that $\mu(s)$ is a $\sigma$-finite positive measure for $\nu$-a.e. $s$. Suppose $\Sigma$ is a $\sigma$-algebra whose every element is a $\mu(s)$-measurable set for $\nu$-a.e. $s$. Under suitable measurability conditions on $\mu$, we shall define a new measure $M$, called the direct integral of $\mu$ and denoted by $(\Sigma)\mbox{-}\!\int^\oplus\mu(s)\,d\nu(s)$, satisfying $\s_M = \s_\nu\times G$, where $G$ is the largest element of $\Sigma$. This measure has a Fubini-type property (Proposition~\ref{p_fubini}) and can be viewed as a generalization of the direct product of measures. The main result of this section (Proposition~\ref{puo}) states that the space $L_2(\s_M,M)$ can be identified with $\int^\oplus L_2(G,\mu(s))\,d\nu(s)$. We shall see in Sec.~\ref{s11} that the measure $(\mathcal B_\R)\mbox{-}\!\int^\oplus\mu^\phi_\theta(s)\,d\nu_0(s)$, where $\mu^\phi_\theta$ is given by~(\ref{mutheta12}) and $\mathcal B_\R$ is the Borel $\sigma$-algebra on $\R$, satisfies the conditions of Proposition~\ref{prop1} and, therefore, coincides with $M^\phi_\theta$. This will allow us to obtain a direct integral representation of the space $L_2(S\times \R,M^\phi_\theta)$ entering Proposition~\ref{propW} and relate it to the results of Sec.~\ref{s8}.

\begin{definition}\label{d_meas}
Let $\nu$ be a $\sigma$-finite positive measure, $\Sigma$ be a $\sigma$-algebra, and a map $\mu$ be such that $\mu(s)$ is a $\sigma$-finite positive measure for $\nu$-a.e. $s$. Let the $\delta$-ring $\mathcal Q^\Sigma_{\mu,\nu}$ be defined by the relation
\[
\mathcal Q^\Sigma_{\mu,\nu} = \{A\in\Sigma: \mbox{$A\in D_{\mu(s)}$ for $\nu$-a.e. $s$}\}
\]
We say that $\mu$ is a $(\nu,\Sigma)$-measurable family of measures if $\Sigma = \sigma(\mathcal Q^\Sigma_{\mu,\nu})$ and $s\to \mu(s|A)$ is a $\nu$-measurable function for any $A\in \mathcal Q^\Sigma_{\mu,\nu}$.
\end{definition}

\begin{lemma}\label{lmm}
Let $\nu$, $\Sigma$, $\mu$, and $\mathcal Q^\Sigma_{\mu,\nu}$ be as in Definition~$\mathrm{\ref{d_meas}}$.
Let $\mathcal K\subset \mathcal Q^\Sigma_{\mu,\nu}$ be closed under finite intersections and satisfy $\sigma(\mathcal K) = \Sigma$.
Let a map $\xi$ be such that, for every $A\in\mathcal Q^\Sigma_{\mu,\nu}$, $\xi(s)$ is a $\mu(s)$-integrable complex function on $A$ for $\nu$-a.e. $s$.
If $s\to \int_A \xi(s|E)\,d\mu(s|E)$ is a $\nu$-measurable function for any $A\in \mathcal K$, then the same is true for every $A\in \mathcal Q^\Sigma_{\mu,\nu}$.
\end{lemma}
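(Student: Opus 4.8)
The plan is to fix the complex set function $\rho_s(A)=\int_A \xi(s|E)\,d\mu(s|E)$, which is defined and finite for $\nu$-a.e.\ $s$ whenever $A\in\mathcal Q^\Sigma_{\mu,\nu}$, and to show that $s\mapsto\rho_s(A)$ is $\nu$-measurable for every $A\in\mathcal Q^\Sigma_{\mu,\nu}$ by upgrading this property from the generating class $\mathcal K$. The chief difficulty is that $\mathcal Q^\Sigma_{\mu,\nu}$ is only a $\delta$-ring, so there is no largest set at which to anchor a global Dynkin argument; I would get around this by first establishing the claim for $\Sigma$-measurable subsets of a single member of $\mathcal K$ and then exhausting an arbitrary $A$ by such subsets.

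First I would record a covering property: since $\sigma(\mathcal K)=\Sigma\supset\mathcal Q^\Sigma_{\mu,\nu}$, every $A\in\mathcal Q^\Sigma_{\mu,\nu}$ is contained in a countable union $\bigcup_n B_n$ of sets $B_n\in\mathcal K$, because the sets enjoying this covering property form a $\sigma$-ring containing $\mathcal K$. This is precisely where the hypothesis $\sigma(\mathcal K)=\Sigma$ is used and where the absence of a top set in the $\delta$-ring is circumvented.

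Next I would localize to a fixed $B\in\mathcal K$. Writing $\mathcal K_B=\{C\cap B:C\in\mathcal K\}$, closedness of $\mathcal K$ under finite intersections gives $\mathcal K_B\subset\mathcal K$, and $\mathcal K_B$ is a $\pi$-system on $B$ with $\sigma_B(\mathcal K_B)=\Sigma|_B$, the trace $\sigma$-algebra on $B$. Two Dynkin-type arguments on $B$ then follow. The collection $\{E\in\Sigma|_B:E\in\mathcal Q^\Sigma_{\mu,\nu}\}$ is a $\sigma$-algebra on $B$ (intersecting countably many full-measure sets of $s$ preserves $E\in D_{\mu(s)}$) and contains $\mathcal K_B$, hence equals $\Sigma|_B$; so every $\Sigma$-measurable subset of $B$ already lies in $\mathcal Q^\Sigma_{\mu,\nu}$, which makes $\rho_s$ a finite complex measure on $\Sigma|_B$ for $\nu$-a.e.\ $s$. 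Then $\{E\in\Sigma|_B:s\mapsto\rho_s(E)\text{ is }\nu\text{-measurable}\}$ is a $\lambda$-system on $B$: it contains $B$ and $\mathcal K_B$ by hypothesis, it is stable under proper differences because $\rho_s$ is finite and additive, and it is stable under increasing limits by dominated convergence with the $\mu(s)$-integrable majorant $|\xi(s|\cdot)|\,\mathbf 1_B$. By the $\pi$-$\lambda$ theorem it equals $\Sigma|_B$, so $s\mapsto\rho_s(E)$ is $\nu$-measurable for every $\Sigma$-measurable $E\subset B$.

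Finally I would assemble the general case. For $A\in\mathcal Q^\Sigma_{\mu,\nu}$ with covering sets $B_n\in\mathcal K$ as above, I set $C_n=B_1\cup\dots\cup B_n\in\mathcal Q^\Sigma_{\mu,\nu}$ and decompose $A\cap C_n=\bigsqcup_{i=1}^n P_i$ with $P_i=(A\cap B_i)\setminus(B_1\cup\dots\cup B_{i-1})\subset B_i$. Each $P_i$ is a $\Sigma$-measurable subset of a member of $\mathcal K$, so $s\mapsto\rho_s(P_i)$ is $\nu$-measurable by the localization step, and additivity yields $\nu$-measurability of $s\mapsto\rho_s(A\cap C_n)$. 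Since $A\cap C_n\uparrow A$, dominated convergence with majorant $|\xi(s|\cdot)|\,\mathbf 1_A$ gives $\rho_s(A\cap C_n)\to\rho_s(A)$ for $\nu$-a.e.\ $s$, whence $s\mapsto\rho_s(A)$ is $\nu$-measurable as an a.e.\ pointwise limit of $\nu$-measurable functions. The main obstacle throughout is the $\delta$-ring structure, which forces this two-stage scheme—localize to a $\mathcal K$-set, then exhaust—in place of a single global Dynkin argument.
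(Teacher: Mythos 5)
Your proof is correct, but it takes a genuinely different route from the paper's. The paper runs a single global monotone-class argument inside the $\delta$-ring $\mathcal Q^\Sigma_{\mu,\nu}$: it defines $\mathcal L$ as the set of all $A\in\mathcal Q^\Sigma_{\mu,\nu}$ for which $s\to \int_A\xi(s|E)\,d\mu(s|E)$ is $\nu$-measurable, checks that $\mathcal L$ is an $\alpha$-class (Definition~\ref{d_alpha}) containing $\mathcal K$, invokes the paper's bespoke Lemma~\ref{l_alpha} (an $\alpha$-class admitting a $\pi$-stable generating family is a $\delta$-ring) to conclude that $\mathcal L$ is a $\delta$-ring, and then writes an arbitrary $A\in\mathcal Q^\Sigma_{\mu,\nu}$ as an increasing countable union of elements of $\mathcal L$ and passes to the limit. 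You avoid the $\alpha$-class machinery altogether: a covering lemma (every $A\in\mathcal Q^\Sigma_{\mu,\nu}$ sits inside a countable union of members of $\mathcal K$), then localization to a fixed $B\in\mathcal K$, where the trace $\Sigma|_B$ is an honest $\sigma$-algebra so that the classical $\pi$-$\lambda$ theorem applies (twice: once to show $\Sigma|_B\subset\mathcal Q^\Sigma_{\mu,\nu}$, once to get measurability of $s\to\rho_s(E)$ for $E\in\Sigma|_B$), and finally reassembly by disjointification, additivity, and dominated convergence. The paper's route buys brevity and reusability: Lemma~\ref{l_alpha} is reused in the proofs of Proposition~\ref{p_prod}, Lemma~\ref{l_vanish}, and Lemma~\ref{lmutheta12}, so the cost of developing it is amortized; your route buys independence from that machinery (only textbook Dynkin-system tools) and yields the useful byproduct that every $\Sigma$-measurable subset of a member of $\mathcal K$ lies in $\mathcal Q^\Sigma_{\mu,\nu}$. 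One phrase should be tightened: ``$\rho_s$ is a finite complex measure on $\Sigma|_B$ for $\nu$-a.e.\ $s$'' asserts a single null set working simultaneously for all (possibly uncountably many) $E\in\Sigma|_B$, whereas your first Dynkin argument only provides, for each fixed $E$, a null set depending on $E$. This is harmless, since each of your $\lambda$-system verifications involves only finitely or countably many sets at a time, so per-instance null sets suffice; but the claim should be stated in that per-instance form.
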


If $\xi(s)$ is $\mu(s)$-equivalent to unity for $\nu$-a.e. $s$, this lemma reduces to the next result.

\begin{corollary}\label{cormm}
Let $\nu$, $\Sigma$, $\mu$, and $\mathcal K$ be as in Lemma~$\mathrm{\ref{lmm}}$. If $s\to \mu(s|A)$ is a $\nu$-measurable function for any $A\in \mathcal K$, then $\mu$ is a $(\nu,\Sigma)$-measurable family of measures.
\end{corollary}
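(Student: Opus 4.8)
The plan is to deduce Corollary~\ref{cormm} from Lemma~\ref{lmm} by specializing the auxiliary function $\xi$ to the constant function equal to unity. The key observation is that the quantity $\int_A \xi(s|E)\,d\mu(s|E)$ appearing in Lemma~\ref{lmm} becomes precisely $\mu(s|A)$ when $\xi(s)$ is $\mu(s)$-equivalent to $1$, so the hypothesis of Lemma~\ref{lmm} collapses to the $\nu$-measurability of $s\to\mu(s|A)$ for $A\in\mathcal K$, and its conclusion yields the $\nu$-measurability of $s\to\mu(s|A)$ for every $A\in\mathcal Q^\Sigma_{\mu,\nu}$.

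First I would fix the map $\xi$ defined by letting $\xi(s)$ be the function identically equal to $1$ on $\s_{\mu(s)}$ for $\nu$-a.e.\ $s$. For any $A\in\mathcal Q^\Sigma_{\mu,\nu}$ we have $A\in D_{\mu(s)}$ for $\nu$-a.e.\ $s$, and since $\mathcal Q^\Sigma_{\mu,\nu}$ is a $\delta$-ring, each such $A$ has finite $\mu(s)$-measure; hence the constant function $\xi(s)$ restricted to $A$ is $\mu(s)$-integrable for $\nu$-a.e.\ $s$, and $\int_A\xi(s|E)\,d\mu(s|E)=\mu(s|A)$. Thus $\xi$ satisfies the integrability requirement imposed on the map in Lemma~\ref{lmm}.

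Next I would verify the measurability hypothesis of Lemma~\ref{lmm} for this $\xi$ on the generating family $\mathcal K$. By the assumption of Corollary~\ref{cormm}, $s\to\mu(s|A)$ is $\nu$-measurable for every $A\in\mathcal K$, which is exactly the statement that $s\to\int_A\xi(s|E)\,d\mu(s|E)$ is $\nu$-measurable for $A\in\mathcal K$. Lemma~\ref{lmm} then applies and gives that $s\to\int_A\xi(s|E)\,d\mu(s|E)=\mu(s|A)$ is $\nu$-measurable for every $A\in\mathcal Q^\Sigma_{\mu,\nu}$.

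Finally, since $\mathcal K\subset\mathcal Q^\Sigma_{\mu,\nu}$ and $\Sigma=\sigma(\mathcal K)$, we have $\sigma(\mathcal Q^\Sigma_{\mu,\nu})=\Sigma$, so the first defining condition of a $(\nu,\Sigma)$-measurable family of measures holds automatically; the second condition is precisely the $\nu$-measurability of $s\to\mu(s|A)$ for all $A\in\mathcal Q^\Sigma_{\mu,\nu}$ just established. Hence $\mu$ is a $(\nu,\Sigma)$-measurable family of measures. I do not expect any real obstacle here, since the corollary is an immediate specialization of the lemma; the only point requiring a moment of care is checking that the constant function $\xi$ is genuinely $\mu(s)$-integrable on each $A\in\mathcal Q^\Sigma_{\mu,\nu}$, which follows from the $\delta$-ring property guaranteeing $\mu(s|A)<\infty$.
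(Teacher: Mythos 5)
Your proof is correct and matches the paper's own treatment: the paper obtains Corollary~\ref{cormm} exactly by specializing Lemma~\ref{lmm} to the map $\xi$ with $\xi(s)$ $\mu(s)$-equivalent to unity, just as you do. Your added verifications (integrability of the constant function on each $A\in\mathcal Q^\Sigma_{\mu,\nu}$, which really follows from $\mu(s|A)$ being a finite real number by the paper's definition of a measure, and the identity $\sigma(\mathcal Q^\Sigma_{\mu,\nu})=\Sigma$ via $\mathcal K\subset\mathcal Q^\Sigma_{\mu,\nu}\subset\Sigma$) are exactly the routine checks left implicit in the paper.
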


For the proof of Lemma~\ref{lmm}, we need the next definition.

\begin{definition}\label{d_alpha}
We say that a nonempty set of sets $\mathcal Q$ is an $\alpha$-class if it satisfies the following conditions
\begin{enumerate}
\item[(1)] If $A,B\in \mathcal Q$ and $A\cap B=\varnothing$, then $A\cup B\in \mathcal Q$.
\item[(2)] If $A,B\in \mathcal Q$ and $B\subset A$, then $A\setminus B\in \mathcal Q$.
\item[(3)] If $A_1\supset A_2\supset\ldots$ is a nonincreasing sequence of elements of $\mathcal Q$, then $\bigcap_{i=1}^\infty A_i\in \mathcal Q$.
\end{enumerate}
\end{definition}

\begin{remark}
In the context of $\delta$-rings, the role of $\alpha$-classes is essentially the same as that of Dynkin systems (see, e.g., Sec.~2 in~Ref.~\onlinecite{Bauer}) with respect to $\sigma$-algebras. In particular, Lemma~\ref{l_alpha} below is an analogue of Theorem~2.4 in~Ref.~\onlinecite{Bauer}.
\end{remark}

\begin{lemma}\label{l_alpha}
Let $\mathcal Q$ be an $\alpha$-class. If there exists a set $\mathcal K\subset \mathcal Q$ that is closed under finite intersections and satisfies $\mathcal Q\subset \sigma(\mathcal K)$, then $\mathcal Q$ is a $\delta$-ring.
\end{lemma}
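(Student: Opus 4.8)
The plan is to reduce the whole statement to closure under finite intersections and then obtain that closure by a two-stage ``good sets'' argument carried out not inside $\mathcal Q$ but inside the generated $\sigma$-algebra $\sigma(\mathcal K)$, where the classical Dynkin $\pi$--$\lambda$ theorem (Theorem~2.4 in Ref.~\onlinecite{Bauer}) applies. First I record the preliminary reductions. Since $\mathcal Q$ is nonempty, applying~(2) with $B=A$ gives $\varnothing\in\mathcal Q$. Moreover, once $\mathcal Q$ is known to be closed under finite intersections, the remaining $\delta$-ring properties follow formally: for $A,B\in\mathcal Q$ one has $A\setminus B=A\setminus(A\cap B)\in\mathcal Q$ by~(2), then $A\cup B=(A\setminus B)\cup B\in\mathcal Q$ by~(1) (the union being disjoint), and for a sequence $A_i\in\mathcal Q$ the partial intersections $\bigcap_{i\le n}A_i$ form a nonincreasing sequence in $\mathcal Q$ whose intersection lies in $\mathcal Q$ by~(3). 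Thus it suffices to prove $A\cap B\in\mathcal Q$ for all $A,B\in\mathcal Q$.

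For the base case I fix $K\in\mathcal K$ and set $\mathcal D_K=\{A\in\sigma(\mathcal K): A\cap K\in\mathcal Q \text{ and } K\setminus A\in\mathcal Q\}$. I would verify that $\mathcal D_K$ is a Dynkin system: $\Omega\in\mathcal D_K$ because $\Omega\cap K=K\in\mathcal Q$ and $K\setminus\Omega=\varnothing\in\mathcal Q$; it is closed under complementation because $A^c\cap K=K\setminus(A\cap K)\in\mathcal Q$ by~(2) and $K\setminus A^c=A\cap K\in\mathcal Q$; and (the delicate point, discussed below) it is closed under countable disjoint unions. The containment $\mathcal K\subset\mathcal D_K$ is exactly where the hypothesis that $\mathcal K$ is closed under finite intersections enters: for $K'\in\mathcal K$ we have $K'\cap K\in\mathcal K\subset\mathcal Q$ and $K\setminus K'=K\setminus(K\cap K')\in\mathcal Q$ by~(2). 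The Dynkin theorem then yields $\mathcal D_K=\sigma(\mathcal K)\supset\mathcal Q$, so $A\cap K\in\mathcal Q$ (and $K\setminus A$, hence also $A\setminus K=A\setminus(A\cap K)$) for every $A\in\mathcal Q$ and every $K\in\mathcal K$.

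Next I bootstrap. I fix an arbitrary $B\in\mathcal Q$ and set $\mathcal D_B=\{A\in\sigma(\mathcal K): A\cap B\in\mathcal Q \text{ and } B\setminus A\in\mathcal Q\}$. The Dynkin-system verification is verbatim the one for $\mathcal D_K$ with $B$ in place of $K$, using only that $B\in\mathcal Q$, that $\varnothing\in\mathcal Q$, and the three axioms. The containment $\mathcal K\subset\mathcal D_B$ is no longer automatic but is now supplied by the base case, since for $K\in\mathcal K$ we already know $B\cap K\in\mathcal Q$ and $B\setminus K\in\mathcal Q$. Dynkin's theorem again gives $\mathcal D_B=\sigma(\mathcal K)\supset\mathcal Q$, so $A\cap B\in\mathcal Q$ for all $A,B\in\mathcal Q$. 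Together with the reductions above this shows $\mathcal Q$ is a $\delta$-ring.

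The main obstacle is precisely the closure of $\mathcal D_K$ (and of $\mathcal D_B$) under countable disjoint unions, which is axiom (iii) of a Dynkin system, because an $\alpha$-class offers only finite disjoint unions~(1) and nonincreasing countable intersections~(3), never countable unions. The maneuver that resolves this is to work inside the fixed set $K$: if $A=\bigcup_i A_i$ with $A_i\in\mathcal D_K$ pairwise disjoint, the finite partial unions $B_n=\bigcup_{i\le n}(A_i\cap K)$ lie in $\mathcal Q$ by~(1), their relative complements $K\setminus B_n$ form a nonincreasing sequence in $\mathcal Q$ by~(2), and~(3) gives $K\setminus A=\bigcap_n(K\setminus B_n)\in\mathcal Q$; finally $A\cap K=K\setminus(K\setminus A)\in\mathcal Q$ by~(2). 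This is the single place where all three $\alpha$-class axioms act in concert, and it is what lets a structure lacking countable unions nevertheless feed into the Dynkin theorem.
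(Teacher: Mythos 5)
Your argument is sound in substance, and it takes a genuinely different route from the paper's. The paper never invokes the Dynkin theorem: it forms the \emph{smallest} $\alpha$-class $\mathcal L$ containing $\mathcal K$, proves that $\mathcal L$ is a $\delta$-ring by a good-sets bootstrap carried out entirely inside the class of $\alpha$-classes (first showing $\mathcal L^B=\mathcal L$ for $B\in\mathcal K$, then $\mathcal L^A=\mathcal L$ for all $A\in\mathcal L$, where $\mathcal L^A=\{B\in\mathcal L: A\cap B\in\mathcal L\}$), and then, writing each of $A,B\in\mathcal Q\subset\sigma(\mathcal K)=\sigma(\mathcal L)$ as an increasing countable union of elements $A_i,B_i$ of $\mathcal L\subset\mathcal Q$, recovers $A\cap B=A\setminus\bigcap_i\bigl(A\setminus(A_i\cap B_i)\bigr)$ from axioms (2) and (3). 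Your two Dynkin passes (over $K\in\mathcal K$, then over $B\in\mathcal Q$) play exactly the role of the paper's two bootstrap stages, and your ``delicate point'' --- converting a countable disjoint union into a nonincreasing intersection inside a fixed member of $\mathcal Q$ --- is precisely the maneuver the paper uses in its final step. The paper's version buys self-containedness and exact compatibility with its ambient-space-free $\sigma$-ring formalism; yours buys brevity, by outsourcing the monotone-class induction to a standard theorem.

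That compatibility is the one repair your write-up needs. In this paper $\sigma(\mathcal K)$ denotes the generated $\sigma$-\emph{ring}, and a $\sigma$-algebra is defined as a $\sigma$-ring possessing a largest element; there is no ambient set, and $\sigma(\mathcal K)$ is in general neither closed under complementation nor equipped with a largest element (if $\mathcal K$ is the family of finite subsets of $\R$, then $\sigma(\mathcal K)$ is the family of countable ones). So your system $\mathcal D_K=\{A\in\sigma(\mathcal K): A\cap K\in\mathcal Q,\ K\setminus A\in\mathcal Q\}$ cannot literally be ``closed under complementation'', and the identity $\mathcal D_K=\sigma(\mathcal K)$ is not what the $\pi$--$\lambda$ theorem can deliver. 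The fix is routine: put $\Omega=\bigcup_{K\in\mathcal K}K$; since $\{B\in\sigma(\mathcal K): B\subset\Omega\}$ is a $\sigma$-ring containing $\mathcal K$, it equals $\sigma(\mathcal K)$, so every member of $\mathcal Q\subset\sigma(\mathcal K)$ is a subset of $\Omega$; now define $\mathcal D_K$ (and later $\mathcal D_B$) as the collection of \emph{all} subsets $A$ of $\Omega$ satisfying your two conditions. Your three verifications then go through verbatim, and the $\pi$--$\lambda$ theorem on $\Omega$ shows that $\mathcal D_K$ contains the $\sigma$-algebra of subsets of $\Omega$ generated by $\mathcal K$, which contains the $\sigma$-ring $\sigma(\mathcal K)$ and hence $\mathcal Q$. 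With this adjustment the proof is complete, and it in fact establishes a marginally stronger statement, since the hypothesis $\mathcal Q\subset\sigma(\mathcal K)$ enters only through the weaker inclusion of $\mathcal Q$ in the generated $\sigma$-algebra.
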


\begin{proof}[Proof of Lemma~$\mathrm{\ref{lmm}}$]
It is straightforward to check that the set
\[
\mathcal L = \{A\in \mathcal Q^\Sigma_{\mu,\nu}: \mbox{$s\to \int_A\xi(s|E)\,d\mu(s|E)$ is a $\nu$-measurable function}\}.
\]
is an $\alpha$-class containing $\mathcal K$. As $\sigma(\mathcal K)=\Sigma$, we have $\mathcal L\subset \sigma(\mathcal K)$, and Lemma~\ref{l_alpha} implies that $\mathcal L$ is a $\delta$-ring. As $\sigma(\mathcal L)=\Sigma$, every $A\in \mathcal Q^\Sigma_{\mu,\nu}$ is representable as the union of a sequence $A_1\subset A_2\subset \ldots$ of elements of $\mathcal L$. Hence,
\[
\int_A\xi(s|E)\,d\mu(s|E)=\lim_{k\to\infty}\int_{A_k}\xi(s|E)\,d\mu(s|E)
\]
for $\nu$-a.e. $s$ and, therefore, $s\to\int_A\xi(s|E)\,d\mu(s|E)$ is $\nu$-measurable. Thus, $\mathcal Q^\Sigma_{\mu,\nu}=\mathcal L$ and the lemma is proved.
\end{proof}

\begin{proof}[Proof of Lemma~$\mathrm{\ref{l_alpha}}$]
Let $\mathcal L$ be the smallest $\alpha$-class containing $\mathcal K$.\footnote{It is easy to see that the intersection of any set of $\alpha$-classes is again an $\alpha$-class (note that such an intersection is always nonempty because $\varnothing$ is an element of every $\alpha$-class). Hence, given a set of sets $\mathcal K$, there exists the smallest $\alpha$-class containing $\mathcal K$.} We first prove that $\mathcal L$ is a $\delta$-ring. To this end, it suffices to show that $\mathcal L$ is closed under finite intersections because every $\alpha$-class with this property is a $\delta$-ring. For $A\in \mathcal L$, we set
\[
\mathcal L^A = \{B\in \mathcal L: A\cap B\in \mathcal L\}.
\]
Let $A\in \mathcal L$ and $B,C\in \mathcal L^A$ be such that $B\cap C=\varnothing$. As $(A\cap B)\cap (A\cap C)=\varnothing$ and both sets $A\cap B$ and $A\cap C$ belong to $\mathcal L$, the set $A\cap (B\cup C)=(A\cap B)\cup (A\cap C)$ also belongs to $\mathcal L$ by condition~1 of Definition~\ref{d_alpha}. This means that $B\cup C\in \mathcal L^A$.

If $B,C\in \mathcal L^A$ are such that $C\subset B$, then $A\cap (B\setminus C) = (A\cap B)\setminus (A\cap C)$. Since both $A\cap B$ and $A\cap C$ belong to $\mathcal L$ and $A\cap C\subset A\cap B$, we have $A\cap (B\setminus C)\in \mathcal L$ by condition~2 of Definition~\ref{d_alpha} and, hence, $B\setminus C\in \mathcal L^A$.

Now let $B_1\supset B_2\supset\ldots$ be a nonincreasing sequence of elements of $\mathcal L^A$ and let $B=\bigcap_{i=1}^\infty B_i$. Since $A\cap B = \bigcap_{i=1}^\infty (A\cap B_i)$ and the sets $A\cap B_1,A\cap B_2,\ldots$ constitute a nonincreasing sequence of elements of $\mathcal L$, we have $A\cap B\in \mathcal L$ by condition~3 of Definition~\ref{d_alpha} and, therefore, $B\in \mathcal L^A$.

It follows from the above that $\mathcal L^A$ is an $\alpha$-class for any $A\in \mathcal L$. If $B\in \mathcal K$, then we obviously have $\mathcal K\subset \mathcal L^B$ and, hence, $\mathcal L^B=\mathcal L$. It follows that $A\cap B\in \mathcal L$ for arbitrary $A\in \mathcal L$ and $B\in \mathcal K$ and, therefore, $B\in \mathcal L^A$. This means that $\mathcal K\subset \mathcal L^A$ for every $A\in \mathcal L$. Hence, $\mathcal L^A=\mathcal L$ for every $A\in \mathcal L$ and, consequently, $A\cap B\in \mathcal L$ for every $A,B\in \mathcal L$. Thus, $\mathcal L$ is a $\delta$-ring.

We now show that $\mathcal Q$ is a $\delta$-ring. As for $\mathcal L$, it suffices to show that $A\cap B\in \mathcal Q$ for any $A,B\in \mathcal Q$. As $\mathcal L$ is a $\delta$-ring, every element of $\sigma(\mathcal K)=\sigma(\mathcal L)$ is a countable union of elements of $\mathcal L$. Let $A_1\subset A_2\subset\ldots$ and $B_1\subset B_2\subset\ldots$ be nondecreasing sequences of elements of $\mathcal L$ such that $A=\bigcup_{i=1}^\infty A_i$ and $B=\bigcup_{i=1}^\infty B_i$. As $\mathcal L$ is a ring, $C_i = A_i\cap B_i$ belongs to $\mathcal L$ (and, hence, to $\mathcal Q$) for every $i=1,2,\ldots$. It follows from condition~2 of Definition~\ref{d_alpha} that $C'_i=A\setminus C_i$ belongs to $\mathcal Q$ for every $i=1,2,\ldots$. As $C'_1\supset C'_2\supset\ldots$, condition~3 of Definition~\ref{d_alpha} implies that $A\setminus B=\bigcap_{i=1}^\infty C'_i$ belongs to $\mathcal Q$. Applying condition~2 of Definition~\ref{d_alpha} again, we conclude that $A\cap B = A\setminus (A\setminus B)$ belongs to $\mathcal Q$.
\end{proof}

Given sets of sets $\mathcal Q_1$ and $\mathcal Q_2$, we denote by $\mathcal Q_1\boxtimes \mathcal Q_2$ the set of all sets $A_1\times A_2$, where $A_1\in \mathcal Q_1$ and $A_2\in \mathcal Q_2$. Let $A$ be a set. For any $s$, we define its section $A_s$ by the relation $A_s =\{E: (s,E)\in A\}$.

Let $\nu$ be a $\sigma$-finite positive measure, $\Sigma$ be a $\sigma$-algebra, and $\mu$ be a $(\nu,\Sigma)$-measurable family of measures. We set $\Xi^\Sigma_{\mu,\nu} = \sigma(D_\nu\boxtimes\Sigma)$ and denote by $\Delta^\Sigma_{\mu,\nu}$ the set of all $A\in\Xi^\Sigma_{\mu,\nu}$ such that $A_s\in D_{\mu(s)}$ for $\nu$-a.e. $s$ and $s\to \mu(s|A_s)$ is a $\nu$-integrable function.

\begin{proposition}\label{p_prod}
Let $\nu$ be a $\sigma$-finite positive measure, $\Sigma$ be a $\sigma$-algebra, and $\mu$ be a $(\nu,\Sigma)$-measurable family of measures. Then there is a unique $\Xi^\Sigma_{\mu,\nu}$-compatible measure $M$ such that $\Delta^\Sigma_{\mu,\nu}\subset D_M$ and
\begin{equation}\label{fub}
M(A) = \int \mu(s|A_s)\,d\nu(s)
\end{equation}
for any $A\in \Delta^\Sigma_{\mu,\nu}$. Moreover, the equality $\Xi^\Sigma_{\mu,\nu}\cap D_M =\Delta^\Sigma_{\mu,\nu}$ is fulfilled. If $\nu$ is $\tilde\Sigma$-compatible for some $\sigma$-algebra $\tilde\Sigma$, then $M$ is $\sigma(\tilde\Sigma\boxtimes\Sigma)$-compatible.
\end{proposition}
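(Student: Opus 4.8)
The plan is to obtain $M$ as the unique extension of the natural ``rectangle'' set function, mimicking the classical construction of a product measure but carried out within the $\delta$-ring framework of Appendix~\ref{app0}. I first single out the collection $\mathcal R_0$ of all finite disjoint unions of rectangles $B\times C$ with $B\in D_\nu$, $C\in\mathcal Q^\Sigma_{\mu,\nu}$, and $\int\chi_B(s)\mu(s|C)\,d\nu(s)<\infty$. For such a rectangle one has $(B\times C)_s=C$ when $s\in B$ and $(B\times C)_s=\varnothing$ otherwise, so $(B\times C)_s\in D_{\mu(s)}$ for $\nu$-a.e.\ $s$ and $s\to\mu(s|(B\times C)_s)=\chi_B(s)\mu(s|C)$ is $\nu$-measurable by the $(\nu,\Sigma)$-measurability of $\mu$. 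A short argument using $\sigma$-finiteness shows that $\mathcal R_0$ is a $\delta$-ring with $\sigma(\mathcal R_0)=\Xi^\Sigma_{\mu,\nu}$: since $\mu(s|C)$ is finite for $\nu$-a.e.\ $s$, each rectangle $B\times C$ is the increasing union of the integrable rectangles $(B\cap\{s:\mu(s|C)\le k\})\times C$, and the exhaustions $\s_\nu=\bigcup_nB^{(n)}$ and $G=\bigcup_kC_k$ reduce everything to such pieces. On $\mathcal R_0$ I set $m_0(A)=\int\mu(s|A_s)\,d\nu(s)$, which is finite and finitely additive.

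The two analytic facts to establish are that $s\to\mu(s|A_s)$ is $\nu$-measurable for the sets $A$ entering the argument and that $m_0$ is $\sigma$-additive. For the measurability I fix $B\in D_\nu$ and $C\in\mathcal Q^\Sigma_{\mu,\nu}$ and consider the family of all $A\in\Xi^\Sigma_{\mu,\nu}$ with $A\subset B\times C$ for which $A_s\in D_{\mu(s)}$ for $\nu$-a.e.\ $s$ and $s\to\mu(s|A_s)$ is $\nu$-measurable. Because $A_s\subset C$ forces $\mu(s|A_s)\le\mu(s|C)<\infty$ for $\nu$-a.e.\ $s$, the finiteness needed to take differences and decreasing limits is available, and this family is an $\alpha$-class (Definition~\ref{d_alpha}) containing the $\cap$-closed generating family of all rectangles $B'\times C'\subset B\times C$ with $B'\in D_\nu$, $C'\in\mathcal Q^\Sigma_{\mu,\nu}$. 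Lemma~\ref{l_alpha} then makes it a $\delta$-ring, hence equal to the trace of $\Xi^\Sigma_{\mu,\nu}$ on $B\times C$; passing to the limit along the exhaustions extends measurability to all of $\Xi^\Sigma_{\mu,\nu}$ (Corollary~\ref{cormm} and Lemma~\ref{lmm} serve the analogous purpose fiberwise). The $\sigma$-additivity of $m_0$ is then routine: if $A=\bigsqcup_iA_i$ with $A,A_i\in\mathcal R_0$, the $\sigma$-additivity of each $\mu(s)$ gives $\mu(s|A_s)=\sum_i\mu(s|(A_i)_s)$, and monotone convergence interchanges the sum with the $\nu$-integral.

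With the finite $\sigma$-additive premeasure $m_0$ on the $\delta$-ring $\mathcal R_0$ in hand, the extension theorem for measures on $\delta$-rings (Appendix~\ref{app0}) yields a measure $M$ with $\mathcal R_0\subset D_M$ and $\sigma(D_M)\supset\sigma(\mathcal R_0)=\Xi^\Sigma_{\mu,\nu}$, so $M$ is $\Xi^\Sigma_{\mu,\nu}$-compatible. To prove $\Xi^\Sigma_{\mu,\nu}\cap D_M=\Delta^\Sigma_{\mu,\nu}$ and to propagate~(\ref{fub}) from $\mathcal R_0$ to all of $\Delta^\Sigma_{\mu,\nu}$, I approximate an arbitrary $A\in\Xi^\Sigma_{\mu,\nu}$ from inside by an increasing sequence of elements of $\mathcal R_0$ and apply monotone convergence simultaneously to $M$ and to $\int\mu(s|A_s)\,d\nu(s)$; this shows at once that $A\in D_M$ iff $s\to\mu(s|A_s)$ is $\nu$-integrable (i.e.\ $A\in\Delta^\Sigma_{\mu,\nu}$) and that both sides of~(\ref{fub}) agree. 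Uniqueness is immediate, since $\Delta^\Sigma_{\mu,\nu}\supset\mathcal R_0$ is a $\sigma$-finite generating $\delta$-ring on which~(\ref{fub}) prescribes the values of any admissible measure, so the uniqueness theorem of Appendix~\ref{app0} forces agreement on $\Xi^\Sigma_{\mu,\nu}$.

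Finally, assume $\nu$ is $\tilde\Sigma$-compatible, so that every set of $\tilde\Sigma$ is $\nu$-measurable. Any rectangle $\tilde B\times C$ with $\tilde B\in\tilde\Sigma$ and $C\in\Sigma$ is then reachable from the already $M$-measurable rectangles in $D_\nu\boxtimes\mathcal Q^\Sigma_{\mu,\nu}$ by countable set operations: $\tilde B$ is a countable union of sets of $D_\nu$ by $\sigma$-finiteness, and $C\in\sigma(\mathcal Q^\Sigma_{\mu,\nu})$, whence $\tilde B\times C\in\sigma(D_M)$; therefore $\sigma(\tilde\Sigma\boxtimes\Sigma)\subset\sigma(D_M)$ and $M$ is $\sigma(\tilde\Sigma\boxtimes\Sigma)$-compatible. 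The main obstacle of the whole proof is the section-measurability step: one must carry $\nu$-measurability of $s\to\mu(s|A_s)$ from rectangles up to the generated $\sigma$-algebra while only the fibers $\mu(s)$ and the base $\nu$ are $\sigma$-finite and all domains are $\delta$-rings of finite-measure sets rather than $\sigma$-algebras. Confining the argument to a rectangle $B\times C$ with $C\in\mathcal Q^\Sigma_{\mu,\nu}$, where the fiber masses are finite, is exactly what lets the $\alpha$-class machinery of Lemmas~\ref{l_alpha} and~\ref{lmm} replace the monotone-class argument that would suffice in the finite-measure setting.
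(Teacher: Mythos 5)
Your extension strategy has two defects that cannot be patched within the argument as written. First, $\mathcal R_0$ (finite disjoint unions of integrable rectangles) is a ring but \emph{not} a $\delta$-ring: a decreasing sequence of finite unions of rectangles need not be a finite union of rectangles (staircase approximations to a triangle already fail), so Lemma~\ref{compl} --- the only extension tool in Appendix~\ref{app0}, and it requires a $\delta$-ring domain --- does not apply to $(m_0,\mathcal R_0)$. Second, and decisively, the step by which you transfer~(\ref{fub}) to $\Delta^\Sigma_{\mu,\nu}$ and prove $\Xi^\Sigma_{\mu,\nu}\cap D_M=\Delta^\Sigma_{\mu,\nu}$ --- inner approximation of an \emph{arbitrary} $A\in\Xi^\Sigma_{\mu,\nu}$ by an increasing sequence of elements of $\mathcal R_0$ --- is false. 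An increasing union of elements of $\mathcal R_0$ is a countable union of rectangles, whereas, for instance, with $\nu$ and all $\mu(s)$ equal to Lebesgue measure on $[0,1]$ the diagonal $\{(s,s):s\in[0,1]\}$ lies in $\Xi^\Sigma_{\mu,\nu}$ (indeed in $\Delta^\Sigma_{\mu,\nu}$) but contains no rectangles except singletons, hence is not such a union. Nor does passing to the generated $\delta$-ring $\delta(\mathcal R_0)$ repair this: every element of $\delta(\mathcal R_0)$ is contained in a single element of $\mathcal R_0$ (the sets with this property form a $\delta$-ring containing $\mathcal R_0$), while $\Delta^\Sigma_{\mu,\nu}$ contains sets such as $A=\bigcup_n\{n\}\times[0,2^{-n}]$ (for $\nu$ the counting measure on $\Z$ and $\mu(s)$ Lebesgue measure) which lie in no finite union of integrable rectangles. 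So $\delta(\mathcal R_0)\subsetneq\Delta^\Sigma_{\mu,\nu}$; moreover this $A$ admits only summable partitions into elements of $\delta(\mathcal R_0)$, so completeness condition~(a) of Sec.~\ref{measures_sec} \emph{fails} for $m_0$ on $\delta(\mathcal R_0)$, and Lemma~\ref{compl} cannot return the equality $\Xi^\Sigma_{\mu,\nu}\cap D_M=\Delta^\Sigma_{\mu,\nu}$. This is exactly the difficulty the paper's proof is built to avoid: there the premeasure is defined on all of $\Delta^\Sigma_{\mu,\nu}$ at once, $\Delta^\Sigma_{\mu,\nu}$ is shown to be a $\delta$-ring by applying Lemma~\ref{l_alpha} to the $\alpha$-class $\Delta^\Sigma_{\mu,\nu}$ with the $\cap$-stable family $\mathcal K=\Delta^\Sigma_{\mu,\nu}\cap(D_\nu\boxtimes\mathcal Q^\Sigma_{\mu,\nu})$, condition~(a) is verified there, and Lemma~\ref{compl} then delivers existence, uniqueness and the equality simultaneously. (Your fibrewise measurability argument inside a fixed integrable rectangle, and the monotone-convergence proof of $\sigma$-additivity, are sound and are essentially the paper's Lemmas~\ref{lmm} and~\ref{l_alpha} at work.)

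The final claim is also not proved: you establish only $\sigma(\tilde\Sigma\boxtimes\Sigma)\subset\sigma(D_M)$, but by the definition of compatibility in Sec.~\ref{measures_sec} one must in addition show that \emph{every} $A\in D_M$ coincides, after adjoining an $M$-null set, with an element of $\sigma(\tilde\Sigma\boxtimes\Sigma)$. The inclusion of $\sigma$-algebras says nothing about this approximation property, which is the substantive half of the statement. What is needed is the paper's sandwich argument: the collection $\mathcal L$ of all $A\in\Xi^\Sigma_{\mu,\nu}$ admitting $B_1\subset A\subset B_2$ with $B_1,B_2\in\sigma(\tilde\Sigma\boxtimes\Sigma)$ and $M(B_2\setminus B_1)=0$ is a $\sigma$-algebra; it contains $D_\nu\boxtimes\Sigma$ because the $\tilde\Sigma$-compatibility of $\nu$ allows each $B\in D_\nu$ to be so sandwiched; hence $\mathcal L=\Xi^\Sigma_{\mu,\nu}$, and the already established $\Xi^\Sigma_{\mu,\nu}$-compatibility of $M$ then carries the property from $\Xi^\Sigma_{\mu,\nu}$ to all of $D_M$.
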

\begin{proof}
Let $\mathcal Q=\mathcal Q^\Sigma_{\mu,\nu}$, $\Xi=\Xi^\Sigma_{\mu,\nu}$, $\Delta=\Delta^\Sigma_{\mu,\nu}$, and $\mathcal K = \Delta\cap (D_\nu\boxtimes\mathcal Q)$. Since $\mathcal Q$ is a $\delta$-ring, $\mathcal K$ is closed under finite intersections. Given $B\in D_\nu$, $C\in \mathcal Q$, and $n=1,2,\ldots$, we set $B_n = \{s\in B: \mu(s|C)\leq n\}$. Clearly, $B = N\cup\bigcup_n B_n$, where $N$ is a $\nu$-null set. Since $B_n\times C\in \mathcal K$ for all $n$ and $N\times C\in \mathcal K$, we conclude that $B\times C\in \sigma(\mathcal K)$. Thus, $D_\nu\boxtimes\mathcal Q\subset \sigma(\mathcal K)$. As $\sigma(\mathcal Q)=\Sigma$, it follows that $\sigma(\mathcal K)=\Xi$. As  $\Delta$ is obviously an $\alpha$-class, Lemma~\ref{l_alpha} implies that $\Delta$ is a $\delta$-ring. Let the function $m\colon \Delta\to \R$ be such that $m(A)$ is equal to the right-hand side of~(\ref{fub}) for any $A\in\Delta$. It follows from the monotone convergence theorem that $m$ is a $\sigma$-additive function satisfying condition~(a) of Sec.~\ref{measures_sec}. Hence, the existence and uniqueness of the measure $M$, as well as the equality $\Xi\cap D_M =\Delta$, are ensured by Lemma~\ref{compl}.
Now suppose $\nu$ is $\tilde\Sigma$-compatible and let $\mathcal L$ denote the set of all $A\in\Xi$ such that there exist $B_1,B_2\in\sigma(\tilde \Sigma\boxtimes\Sigma)$ satisfying the conditions $B_1\subset A\subset B_2$ and $M(B_2\setminus B_1)=0$. Then $\mathcal L$ is a $\sigma$-algebra that contains $D_\nu\boxtimes\Sigma$ and, hence, coincides with $\Xi$. In view of the $\Xi$-compatibility of $M$, this implies that $M$ is $\sigma(\tilde \Sigma\boxtimes\Sigma)$-compatible.
\end{proof}

\begin{definition}\label{d_dirm}
Under the conditions of Proposition~\ref{p_prod}, we call the measure $M$ the direct integral of $\mu$ with respect to $(\nu,\Sigma)$ and denote it by $(\Sigma)\mbox{-}\!\int^\oplus\mu(s)\,d\nu(s)$.
\end{definition}

\begin{lemma}\label{cor_prod}
Let $\nu$ be a $\sigma$-finite positive measure, $\Sigma$ be a $\sigma$-algebra, $\mu$ be a $(\nu,\Sigma)$-measurable family of measures, and $M=(\Sigma)\mbox{-}\!\int^\oplus\mu(s)\,d\nu(s)$. An $M$-measurable set $N$ is an $M$-null set if and only if $N_s$ is a $\mu(s)$-null set for $\nu$-a.e. $s$.
\end{lemma}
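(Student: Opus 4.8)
The plan is to reduce everything to the Fubini-type identity~(\ref{fub}) of Proposition~\ref{p_prod}, which evaluates $M$ on $\Delta^\Sigma_{\mu,\nu}=\Xi^\Sigma_{\mu,\nu}\cap D_M$, together with the $\Xi^\Sigma_{\mu,\nu}$-compatibility of $M$, which lets one sandwich an arbitrary $M$-measurable set between two $\Xi^\Sigma_{\mu,\nu}$-measurable sets differing by an $M$-null set. I would therefore first prove the equivalence for $B\in\Xi^\Sigma_{\mu,\nu}$ and only afterwards transfer it to a general $M$-measurable $N$. Recall that $\s_M=\s_\nu\times G$ and that every section $A_s$ of a set $A\in\Xi^\Sigma_{\mu,\nu}$ lies in $\Sigma$ and is therefore $\mu(s)$-measurable for $\nu$-a.e.\ $s$.

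For $B\in\Delta^\Sigma_{\mu,\nu}$ the claim is immediate: by~(\ref{fub}) we have $M(B)=\int \mu(s|B_s)\,d\nu(s)$, and a nonnegative $\nu$-integrable function vanishes iff its integral does, so $M(B)=0$ iff $\mu(s|B_s)=0$, i.e.\ iff $B_s$ is a $\mu(s)$-null set, for $\nu$-a.e.\ $s$. For a general $M$-null $B\in\Xi^\Sigma_{\mu,\nu}$ I would localize along the generating ring $\mathcal K=\Delta^\Sigma_{\mu,\nu}\cap(D_\nu\boxtimes\mathcal Q^\Sigma_{\mu,\nu})$ used in the proof of Proposition~\ref{p_prod}: it is closed under finite intersections and satisfies $\sigma(\mathcal K)=\Xi^\Sigma_{\mu,\nu}$. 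For each $R\in\mathcal K$ the set $B\cap R$ lies in $\Delta^\Sigma_{\mu,\nu}$ and is $M$-null, so the $\Delta$-case gives that $(B\cap R)_s=B_s\cap R_s$ is $\mu(s)$-null for $\nu$-a.e.\ $s$. The collection $\mathcal L=\{A\in\Xi^\Sigma_{\mu,\nu}:(B\cap A)_s\text{ is }\mu(s)\text{-null for }\nu\text{-a.e. }s\}$ is readily checked to satisfy conditions (1)--(3) of Definition~\ref{d_alpha}, i.e.\ to be an $\alpha$-class, and it contains $\mathcal K$; hence by Lemma~\ref{l_alpha} it is a $\delta$-ring with $\sigma(\mathcal L)=\Xi^\Sigma_{\mu,\nu}$. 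Using the $\sigma$-finiteness of each $\mu(s)$ to exhibit $B$ as a countable union of $\mathcal L$-pieces then displays $B_s$ as a countable union of $\mu(s)$-null sections, hence $\mu(s)$-null, for $\nu$-a.e.\ $s$. The converse for $B\in\Xi^\Sigma_{\mu,\nu}$ runs symmetrically: if $B_s$ is $\mu(s)$-null a.e., then for any $A\in D_M$ one replaces $A$ by an $\Xi^\Sigma_{\mu,\nu}$-measurable set of $\Delta^\Sigma_{\mu,\nu}$ via compatibility, so that $(B\cap A)_s$ is $\mu(s)$-null a.e.\ and $M(B\cap A)=0$ by~(\ref{fub}); thus $B$ is $M$-null.

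To pass to an arbitrary $M$-measurable $N$, I would use compatibility to choose $B_1,B_2\in\Xi^\Sigma_{\mu,\nu}$ with $B_1\subset N\subset B_2$ and $M(B_2\setminus B_1)=0$. If $N$ is $M$-null then $B_1$ and $B_2\setminus B_1$ are $M$-null members of $\Xi^\Sigma_{\mu,\nu}$, so by the $\Xi$-case their sections are $\mu(s)$-null a.e.; hence $(B_2)_s=(B_1)_s\cup(B_2\setminus B_1)_s\supset N_s$ is $\mu(s)$-null a.e., and so is $N_s$. Conversely, if $N_s$ is $\mu(s)$-null a.e., then $(B_1)_s\subset N_s$ is $\mu(s)$-null a.e., so $B_1$ is $M$-null by the $\Xi$-case; since $N\subset B_1\cup(B_2\setminus B_1)$ is contained in the union of two $M$-null sets, $N$ is $M$-null.

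The hard part is the assembly step in the second paragraph. Fubini delivers, for each \emph{individual} generating set $R$, that $B_s\cap R_s$ is $\mu(s)$-null for $\nu$-almost every $s$, but the exceptional $\nu$-null set depends on $R$, and---because $M$ need not be $\sigma$-finite and the global $\delta$-ring $\mathcal Q^\Sigma_{\mu,\nu}$ need not exhaust the individual measures $\mu(s)$---one cannot simply pass to a countable subcover of $G$. This is precisely what the $\alpha$-class formalism of Lemma~\ref{l_alpha} is meant to circumvent: upgrading $\mathcal L$ to a $\delta$-ring reduces the task to a single countable exhaustion of $B$ inside $\mathcal L$, after which the quantifier order ``for $\nu$-a.e.\ $s$, for all pieces'' is legitimate. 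Verifying that $\mathcal L$ really is an $\alpha$-class and that the $\sigma$-finiteness of $\mu(s)$ (applied to $B_s\in\Sigma$) yields the required countable exhaustion---so that Lemma~\ref{lmm} and Corollary~\ref{cormm} can be invoked to keep the relevant section-integrals $\nu$-measurable throughout---is the one place where genuine care is needed; the rest is bookkeeping around~(\ref{fub}) and compatibility.
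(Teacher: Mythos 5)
Your proof is correct, but the step you single out as ``the hard part''---passing from the $\Delta^\Sigma_{\mu,\nu}$-case to a general $M$-null set $B\in\Xi^\Sigma_{\mu,\nu}$ via the generating class $\mathcal K$, the $\alpha$-class $\mathcal L$, and Lemma~\ref{l_alpha}---is unnecessary, and this is where your route diverges from the paper's. In the framework of Sec.~\ref{measures_sec}, an $M$-null set is by definition an element of $D_M$, so an $M$-null $B\in\Xi^\Sigma_{\mu,\nu}$ automatically lies in $\Xi^\Sigma_{\mu,\nu}\cap D_M$, which equals $\Delta^\Sigma_{\mu,\nu}$ by Proposition~\ref{p_prod}; your one-line $\Delta$-case therefore already covers every $M$-null set in $\Xi^\Sigma_{\mu,\nu}$, and the quantifier-exchange difficulty you describe never arises. (Your own first step of the detour secretly uses exactly this fact: to place $B\cap R$ in $\Delta^\Sigma_{\mu,\nu}$ you need $B\cap R\in D_M$, which holds because subsets of null sets lie in $D_M$ by completeness---and the same observation applied to $B$ itself collapses the whole construction.) The paper's proof is accordingly just your first and third paragraphs: take $N'\in\Xi^\Sigma_{\mu,\nu}$ with $N'\supset N$ and $N'\setminus N$ $M$-null (a one-sided approximation suffices, though your two-sided sandwich works equally well); if $N$ is $M$-null, then $N'$ is $M$-null, hence belongs to $\Delta^\Sigma_{\mu,\nu}$, and~(\ref{fub}) gives $\int\mu(s|N'_s)\,d\nu(s)=M(N')=0$, so $N_s\subset N'_s$ is $\mu(s)$-null for $\nu$-a.e.\ $s$; conversely, if $N_s$ is $\mu(s)$-null a.e., then, since $(N'\setminus N)_s$ is $\mu(s)$-null a.e.\ by the first direction, $N'$ has $\mu(s)$-null sections a.e., hence $N'\in\Delta^\Sigma_{\mu,\nu}$ with $M(N')=0$, and $N\subset N'$ is $M$-null. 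Beyond this, two small inaccuracies in your sketch: the countable decomposition of $B$ into $\mathcal L$-pieces comes from $\mathcal L$ being a $\delta$-ring with $\sigma(\mathcal L)\supset\Xi^\Sigma_{\mu,\nu}$, not from the $\sigma$-finiteness of the measures $\mu(s)$; and in your converse for $\Xi^\Sigma_{\mu,\nu}$-sets you should state explicitly that $B\in\Delta^\Sigma_{\mu,\nu}\subset D_M$ (null sections a.e.\ make $s\mapsto\mu(s|B_s)=0$ trivially $\nu$-integrable), since the paper's notion of null set requires membership in the domain, not merely that all measurable subsets have measure zero.
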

\begin{proof}
Let $\Xi=\Xi^\Sigma_{\mu,\nu}$ and $N$ be an $M$-measurable set. By the $\Xi$-compatibility of $M$, there is a set $N'\in \Xi$ such that $N'\supset N$ and $N'\setminus N$ is an $M$-null set. If $N$ is an $M$-null set, then $N'$ is also an $M$-null set, and it follows from Proposition~\ref{p_prod} that $N'_s\in D_{\mu(s)}$ for $\nu$-a.e. $s$ and $\int \mu(s|N_s)\,d\nu(s) = 0$, i.e., $N'_s$ is a $\mu(s)$-null set for $\nu$-a.e. $s$. Since $N_s\subset N'_s$, this implies that $N_s$ is a $\mu(s)$-null set for $\nu$-a.e. $s$. Conversely, suppose $N_s$ is a $\mu(s)$-null set for $\nu$-a.e. $s$. By the above $(N'\setminus N)_s$ is a $\mu(s)$-null set and, hence, $N'_s$ is a $\mu(s)$-null set for $\nu$-a.e. $s$. Proposition~\ref{p_prod} now implies that $M(N')=0$ and, therefore, $N$ is an $M$-null set.
\end{proof}

\begin{proposition}\label{p_fubini}
Let $\nu$, $\Sigma$, $\mu$, and $M$ be as in Lemma~$\mathrm{\ref{cor_prod}}$. For any $M$-integrable complex function $f$, the function $E\to f(s,E)$ is $\mu(s)$-integrable for $\nu$-a.e. $s$ and we have
\begin{equation}\label{fubini}
\int f(s,E)\,dM(s,E) = \int d\nu(s)\int f(s,E)\,d\mu(s|E).
\end{equation}
\end{proposition}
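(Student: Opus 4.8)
The plan is to establish~(\ref{fubini}) by the standard measure-theoretic bootstrapping: verify it first for indicator functions, extend by linearity to simple functions, pass to nonnegative integrable functions via the monotone convergence theorem, and finally decompose an arbitrary complex integrand into its positive and negative real and imaginary parts. The base case of this scheme is exactly the content of Proposition~\ref{p_prod}, so the task is really one of organized approximation.

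First I would treat indicators. For $A\in\Delta^\Sigma_{\mu,\nu}$, formula~(\ref{fub}) reads $M(A)=\int\mu(s|A_s)\,d\nu(s)$. Since $\int\chi_A(s,E)\,dM(s,E)=M(A)$ and $\int\chi_A(s,E)\,d\mu(s|E)=\mu(s|A_s)$ whenever $A_s\in D_{\mu(s)}$, this is precisely~(\ref{fubini}) for $f=\chi_A$; moreover the defining property of $\Delta^\Sigma_{\mu,\nu}$ guarantees that $s\mapsto\mu(s|A_s)$ is $\nu$-integrable, so the inner integral is a $\nu$-integrable function of $s$. By linearity,~(\ref{fubini}) then holds for every simple function whose nonzero level sets lie in $\Delta^\Sigma_{\mu,\nu}$.

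Next I would handle a nonnegative $\Xi^\Sigma_{\mu,\nu}$-measurable $M$-integrable function $f$. I would choose an increasing sequence $f_n\uparrow f$ of nonnegative simple functions whose nonzero level sets have finite $M$-measure, hence lie in $\Xi^\Sigma_{\mu,\nu}\cap D_M=\Delta^\Sigma_{\mu,\nu}$ by Proposition~\ref{p_prod}. Applying the monotone convergence theorem three times, to $M$ on the left of~(\ref{fubini}), to $\mu(s)$ for the inner integral at each fixed $s$, and to $\nu$ for the outer integral, and invoking the already established formula for each $f_n$, I would obtain~(\ref{fubini}) for $f$. Finiteness of $\int f\,dM$ forces $s\mapsto\int f(s,E)\,d\mu(s|E)$ to be finite for $\nu$-a.e.\ $s$, so $E\mapsto f(s,E)$ is $\mu(s)$-integrable for $\nu$-a.e.\ $s$, and this function of $s$ is $\nu$-measurable as the pointwise limit of the $\nu$-integrable functions $s\mapsto\int f_n(s,E)\,d\mu(s|E)$.

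Finally I would remove the two standing restrictions. An $M$-integrable $f$ is a priori only $M$-measurable, but by $\Xi^\Sigma_{\mu,\nu}$-compatibility of $M$ it coincides $M$-a.e.\ with some $\Xi^\Sigma_{\mu,\nu}$-measurable $f'$; the set $\{f\neq f'\}$ is $M$-null, so Lemma~\ref{cor_prod} shows its sections are $\mu(s)$-null for $\nu$-a.e.\ $s$, whence neither side of~(\ref{fubini}) changes upon replacing $f$ by $f'$. Writing $f'=(\operatorname{Re}f')^+-(\operatorname{Re}f')^-+i(\operatorname{Im}f')^+-i(\operatorname{Im}f')^-$ and applying the nonnegative case to each of the four pieces completes the argument. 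I expect the main obstacle to be the bookkeeping of exceptional $\nu$-null sets: the conditions ``$A_s\in D_{\mu(s)}$ for $\nu$-a.e.\ $s$'', the $\mu(s)$-a.e.\ agreement of $f$ with $f'$, and the $\nu$-a.e.\ integrability of the sections must be amalgamated into a single $\nu$-null exceptional set so that the inner integral is unambiguously defined and the appeal to Lemma~\ref{cor_prod} is legitimate; once this is arranged, the three monotone-convergence passages are routine.
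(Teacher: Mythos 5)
Your proof is correct, and its skeleton --- indicators, then simple functions, then nonnegative integrands via monotone convergence, then linearity --- is the same as the paper's, with the same key inputs (Proposition~\ref{p_prod}, Lemma~\ref{cor_prod}, monotone/dominated convergence). The one structural difference is where the completion issue is resolved. The paper proves the indicator case for an \emph{arbitrary} $A\in D_M$ right away, by writing $A=B\cup N$ with $B\in\Delta^\Sigma_{\mu,\nu}$ and $N$ an $M$-null set (using $\Xi^\Sigma_{\mu,\nu}$-compatibility of $M$ and Lemma~\ref{cor_prod} to get $\mu(s|A_s)=\mu(s|B_s)$ for $\nu$-a.e.\ $s$), and thereafter works with $M$-measurable functions throughout. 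You instead keep the indicator and simple stages confined to $\Delta^\Sigma_{\mu,\nu}$ and postpone the completion issue to the very end, replacing a general $M$-integrable $f$ by a $\Xi^\Sigma_{\mu,\nu}$-measurable $f'$ with $f=f'$ $M$-a.e. Both routes work. The paper's ordering avoids ever invoking the fact that an $M$-measurable function admits a $\Xi^\Sigma_{\mu,\nu}$-measurable $M$-a.e.\ modification; in this paper's nonstandard framework (measures on $\delta$-rings, with $\Xi^\Sigma_{\mu,\nu}$-compatibility playing the role of completion) that fact is true but needs a short argument --- approximate $f$ by $M$-measurable simple functions and replace each level set by a $\Xi^\Sigma_{\mu,\nu}$-set using compatibility --- which your write-up takes for granted. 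What your ordering buys is a cleaner middle stage: for a nonnegative $\Xi^\Sigma_{\mu,\nu}$-measurable integrable $f$, the standard approximating sequence converges pointwise everywhere and its nonzero level sets land in $\Xi^\Sigma_{\mu,\nu}\cap D_M=\Delta^\Sigma_{\mu,\nu}$ directly (membership in $D_M$ following from integrability of $f$ and completeness of $M$), so the null-set bookkeeping there is lighter than in the paper's version of that step.
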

\begin{proof}
Let $\Xi=\Xi^\Sigma_{\mu,\nu}$. As $M$ is $\Xi$-compatible, every $A\in D_M$ is representable in the form $A=B\cup N$, where $B\in D_M\cap \Xi$ and $N$ is an $M$-null set. By Proposition~\ref{p_prod}, we have $B\in\Delta^\Sigma_{\mu,\nu}$. Hence, $B_s\in D_{\mu(s)}$ for $\nu$-a.e. $s$, the function $s\to \mu(s|B_s)$ is $\nu$-integrable, and $M(B) = \int \mu(s|B_s)\,d\nu(s)$. Since $A_s = B_s\cup N_s$, it follows from Lemma~\ref{cor_prod} that $A_s\in D_{\mu(s)}$ and $\mu(s|A_s) = \mu(s|B_s)$ for $\nu$-a.e. $s$. Hence, the function $s\to \mu(s|A_s)$ is $\nu$-integrable, and $M(A) = \int \mu(s|A_s)\,d\nu(s)$. This proves the proposition for $f=X_A$, where $X_A$ is the function on $\s_M$ that is equal to unity on $A$ and vanishes on $\s_M\setminus A$. We say that a function $f$ on $\s_M$ is simple if it can be represented in the form $f = \sum_{i=1}^n c_i X_{A_i}$ with $c_i\in\C$ and $A_i\in D_M$. Clearly, the proposition is true if $f$ is a simple function. If $f$ is a nonnegative $M$-integrable function, then there is a nondecreasing sequence $f_k$ of nonnegative simple functions such that $f_k(s,E)\leq f(s,E)$ and $f_k(s,E)\to f(s,E)$ as $k\to\infty$ for $M$-a.e. $(s,E)$. Let $g_k$ be a $\nu$-integrable function such that $g_k(s) = \int f_k(s,E)\,d\mu(s|E)$ for $\nu$-a.e. $s$. The dominated convergence theorem implies that
\[
\int f(s,E) \,dM(s,E) = \lim_{k\to\infty} \int f_k(s,E) \,dM(s,E) = \lim_{k\to\infty} \int g_k(s) \,d\nu(s).
\]
By the monotone convergence theorem, there exists a $\nu$-integrable function $g$ such that $g(s)=\lim_{k\to\infty} g_k(s)$ for $\nu$-a.e. $s$, and we have
\begin{equation}\label{limint}
\int f(s,E) \,dM(s,E) = \int g(s)\,d\nu(s).
\end{equation}
By Lemma~\ref{cor_prod}, for $\nu$-a.e. $s$, the relations $f_k(s,E)\leq f(s,E)$ and $f_k(s,E)\to f(s,E)$ hold for $\mu(s)$-a.e. $E$. In view of the existence of $\lim_{k\to\infty}g_k(s)$, the monotone convergence theorem implies that the function $E\to f(s,E)$ is $\mu(s)$-integrable and $\int f(s,E)\,d\mu(s|E)=g(s)$ for $\nu$-a.e. $s$. Substituting this equality in~(\ref{limint}) yields~(\ref{fubini}). To complete the proof, it remains to note that every $M$-integrable function is a linear combination of nonnegative $M$-integrable functions.
\end{proof}

\begin{corollary}\label{cor_fubini}
Let $\nu$, $\Sigma$, $\mu$, and $M$ be as in Lemma~$\mathrm{\ref{cor_prod}}$ and $f$ be an $M$-measurable complex function. Then $E\to f(s,E)$ is a $\mu(s)$-measurable complex function for $\nu$-a.e. $s$. If $E\to f(s,E)$ is $\mu(s)$-integrable for $\nu$-a.e. $s$, then $s\to\int f(s,E)\,d\mu(s|E)$ is a $\nu$-measurable function. If, in addition, $f$ is nonnegative and $s\to\int f(s,E)\,d\mu(s|E)$ is a $\nu$-integrable function, then $f$ is $M$-integrable.
\end{corollary}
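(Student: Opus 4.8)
The plan is to bootstrap from indicator functions to simple functions and finally to an arbitrary $M$-measurable $f$ by monotone approximation, using Lemma~\ref{cor_prod} to pass from $M$-almost-everywhere statements to fibrewise $\mu(s)$-almost-everywhere statements and Proposition~\ref{p_fubini} on finite-measure pieces to transport integrability between $M$ and $\nu$. Throughout I would repeatedly invoke the fact, already extracted in the proof of Proposition~\ref{p_fubini}, that every $A\in D_M$ admits a decomposition $A=B\cup N$ with $B\in\Delta^\Sigma_{\mu,\nu}$ and $N$ an $M$-null set; by Lemma~\ref{cor_prod} this yields $A_s\in D_{\mu(s)}$ and $\mu(s|A_s)=\mu(s|B_s)$ for $\nu$-a.e. $s$, so in particular $s\to\mu(s|A_s)$ is $\nu$-measurable for every $A\in D_M$.

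For the first assertion I would consider the collection $\mathcal C$ of $M$-measurable sets $A$ for which $A_s$ is $\mu(s)$-measurable for $\nu$-a.e. $s$. Since $(\s_M)_s=G$ is $\mu(s)$-measurable for $\nu$-a.e. $s$, since sections commute with complements and countable unions, and since a countable union of $\nu$-null sets is $\nu$-null, $\mathcal C$ is a $\sigma$-algebra; as it contains $D_M$ by the remark above, it exhausts the $M$-measurable sets. Hence $E\to X_A(s,E)$ is $\mu(s)$-measurable for $\nu$-a.e. $s$ for every $M$-measurable $A$, and the same holds for every simple function. For general $M$-measurable $f$ I would take simple functions $f_k\to f$ off an $M$-null set; Lemma~\ref{cor_prod} makes the sections of this null set $\mu(s)$-null for $\nu$-a.e. $s$, so $E\to f(s,E)$ is a $\mu(s)$-a.e. limit of $\mu(s)$-measurable functions, hence $\mu(s)$-measurable, for $\nu$-a.e. $s$.

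For the remaining two assertions, splitting $f$ into its real and imaginary and positive and negative parts, I may assume $f\ge 0$. Using the $\sigma$-finiteness of $M$ I would pick nonnegative $M$-integrable simple functions $f_k\uparrow f$ $M$-a.e. Writing $f_k=\sum_i c_i X_{A_i}$ with $A_i\in D_M$, the function $g_k(s)=\int f_k(s,E)\,d\mu(s|E)=\sum_i c_i\,\mu(s|(A_i)_s)$ is $\nu$-measurable by the remark above. By Lemma~\ref{cor_prod} one has $f_k(s,\cdot)\uparrow f(s,\cdot)$ $\mu(s)$-a.e. for $\nu$-a.e. $s$, so the monotone convergence theorem in the fibre gives $g_k(s)\uparrow\int f(s,E)\,d\mu(s|E)$ for $\nu$-a.e. $s$; being a pointwise limit of $\nu$-measurable functions, $s\to\int f(s,E)\,d\mu(s|E)$ is $\nu$-measurable, which is the second assertion. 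For the third assertion I would apply Proposition~\ref{p_fubini} to each $f_k$ to get $\int f_k\,dM=\int g_k(s)\,d\nu(s)$; since $g_k\uparrow g$ with $g(s)=\int f(s,E)\,d\mu(s|E)$ now assumed $\nu$-integrable, monotone convergence in the base yields $\int g_k\,d\nu\uparrow\int g\,d\nu<\infty$, so the numbers $\int f_k\,dM$ are bounded, and monotone convergence in $M$ then forces $\int f\,dM=\lim_k\int f_k\,dM\le\int g\,d\nu<\infty$; thus $f$ is $M$-integrable.

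The main obstacle is the bookkeeping of the three independent monotone limits --- one against each of $\mu(s)$, $\nu$, and $M$ --- together with the countably many ``$\nu$-a.e.'' clauses, which must all be arranged to hold off a single $\nu$-null set. The conceptual engine is Lemma~\ref{cor_prod}, which is exactly what converts $M$-a.e. convergence of the approximants into fibrewise $\mu(s)$-a.e. convergence; Proposition~\ref{p_fubini}, valid a priori only for $M$-integrable functions, is applied only to the finite-measure approximants $f_k$, and it is the finiteness of $\int g\,d\nu$ that propagates back through this identity to the sought $M$-integrability of $f$.
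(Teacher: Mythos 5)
Your proof is correct, and it runs on the same engine as the paper's: approximate $f$ by $M$-integrable functions, apply Proposition~\ref{p_fubini} to the approximants, convert $M$-null sets into fibrewise $\mu(s)$-null sets via Lemma~\ref{cor_prod}, and finish with convergence theorems; your last step (bounding $\int f_k\,dM$ by $\int d\nu(s)\int f(s,E)\,d\mu(s|E)$ and invoking monotone convergence for $M$) is literally the paper's. The difference lies in the choice of approximants and in how the first assertion is obtained. The paper takes an exhaustion $A_1\subset A_2\subset\ldots$ of $\s_M$ by elements of $D_M$, sets $B_k=A_k\cap\{(s,E)\in\s_M:|f(s,E)|\leq k\}$ and $f_k=X_{B_k}f$, and lets this single sequence of truncations serve all three assertions: the first assertion then comes for free, since the fibre of $f$ is a $\mu(s)$-a.e.\ limit of the $\mu(s)$-integrable fibres of $f_k$, so no separate argument about sections of $M$-measurable sets is needed; and since $|f_k(s,\cdot)|\leq|f(s,\cdot)|$, the second assertion follows from dominated convergence applied to complex $f$ directly, with no splitting into positive and negative parts. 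Your route---the $\sigma$-algebra of $M$-measurable sets whose sections are $\nu$-a.e.\ $\mu(s)$-measurable, then simple functions, then monotone approximation from below---is sound; what it costs is that extra set-level (monotone-class-style) argument plus the reduction to nonnegative $f$, and what it buys is elementary transparency at each step, e.g.\ the $\nu$-measurability of $g_k(s)=\sum_i c_i\,\mu(s|(A_i)_s)$ is visible by hand from the decomposition $A=B\cup N$, $B\in\Delta^\Sigma_{\mu,\nu}$, rather than funneled through Proposition~\ref{p_fubini}.
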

\begin{proof}
Let $A_1\subset A_2\subset\ldots$ be a sequence of elements of $D_M$ such that $\s_M = \bigcup_{k=1}^\infty A_k$. Let $B_k = A_k\cap\{(s,E)\in \s_M: |f(s,E)|\leq k\}$ and $f_k =X_{B_k}f$, where $X_{B_k}$ is as in the proof of Proposition~\ref{p_fubini}. Then $\s_M$ coincides with $\bigcup_{k=1}^\infty B_k$ up to an $M$-null set and, therefore, $f_k(s,E)\to f(s,E)$ as $k\to\infty$ for $M$-a.e. $(s,E)$. Since $f_k$ are $M$-integrable for all $k$, Lemma~\ref{cor_prod} and Proposition~\ref{p_fubini} imply that, for $\nu$-a.e. $s$, the functions $E\to f_k(s,E)$ are $\mu(s)$-integrable and converge $\mu(s)$-a.e. to the function $E\to f(s,E)$. Hence, the latter is $\mu(s)$-measurable for $\nu$-a.e. $s$. If it is also $\mu(s)$-integrable for $\nu$-a.e. $s$, then it follows from the dominated convergence theorem that $\int f(s,E)\,d\mu(s|E) = \lim_{k\to\infty} \int f_k(s,E)\,d\mu(s|E)$ for $\nu$-a.e. $s$. Since $s\to \int f_k(s,E)\,d\mu(s|E)$ are $\nu$-measurable functions by Proposition~\ref{p_fubini}, we conclude that $s\to \int f(s,E)\,d\mu(s|E)$ is $\nu$-measurable. If $f$ is nonnegative and the function $s\to\int f(s,E)\,d\mu(s|E)$ is $\nu$-integrable, then Proposition~\ref{p_fubini} implies that
\[
\int f_k(s,E)\,dM(s,E) \leq \int d\nu(s)\int f(s,E)\,d\mu(s|E)
\]
for all $k$. The $M$-integrability of $f$ therefore follows from the monotone convergence theorem.
\end{proof}

\begin{lemma}\label{l_vanish}
Let $\mathcal K$ be a set of sets closed under finite intersections and $\nu$ be a positive $\sigma$-finite $\sigma(\mathcal K)$-compatible measure. Suppose $f$ is a $\nu$-measurable complex function that is $\nu$-integrable on every set in $\mathcal K$. If $\int_A f(s)\,d\nu(s) = 0$ for every $A\in \mathcal K$, then $f(s) = 0$ for $\nu$-a.e. $s$.
\end{lemma}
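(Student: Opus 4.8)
The plan is to run a Dynkin-type argument to upgrade the vanishing of $\int_A f\,d\nu$ from the $\pi$-system $\mathcal K$ to all of $\sigma(\mathcal K)$, and then to extract the pointwise conclusion by testing against the sign sets of $f$. Since $\int_A f\,d\nu=0$ forces both $\int_A\operatorname{Re}f\,d\nu=0$ and $\int_A\operatorname{Im}f\,d\nu=0$, and $\operatorname{Re}f,\operatorname{Im}f$ are again $\nu$-measurable and $\nu$-integrable on every set of $\mathcal K$, I may assume from the outset that $f$ is real-valued.

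First I would fix $A_0\in\mathcal K$ and consider the collection $\mathcal L_{A_0}=\{B\in\sigma(\mathcal K):\int_{A_0\cap B}f\,d\nu=0\}$. Because $f$ is integrable on $A_0$, the function $\chi_{A_0}|f|$ is a fixed integrable majorant, so $\mathcal L_{A_0}$ is closed under proper differences (by additivity of the integral) and under countable increasing unions (by dominated convergence), and it contains the top element $S_\nu$ of $\sigma(\mathcal K)$ since $A_0\cap S_\nu=A_0\in\mathcal K$ gives $\int_{A_0}f\,d\nu=0$. As $\mathcal K$ is closed under finite intersections and $A_0\cap B\in\mathcal K$ for $B\in\mathcal K$, the $\pi$-system $\mathcal K$ is contained in $\mathcal L_{A_0}$. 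The Dynkin $\pi$--$\lambda$ theorem then yields $\mathcal L_{A_0}=\sigma(\mathcal K)$, i.e.\ $\int_{A_0\cap B}f\,d\nu=0$ for every $B\in\sigma(\mathcal K)$. (If one prefers to stay within the machinery of the paper, the same conclusion follows by noting that $\mathcal L_{A_0}$ is an $\alpha$-class that is in addition closed under countable increasing unions and contains $S_\nu$, hence is a $\sigma$-algebra containing $\mathcal K$.)

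Next I would feed in the sign sets. The set $P=\{s:f(s)>0\}$ is $\nu$-measurable, so by $\sigma(\mathcal K)$-compatibility it agrees with some $\tilde P\in\sigma(\mathcal K)$ up to a $\nu$-null set; hence $\int_{A_0\cap P}f\,d\nu=\int_{A_0\cap\tilde P}f\,d\nu=0$. Since the integrand $\chi_{A_0\cap P}f$ is nonnegative, it must vanish $\nu$-a.e., which forces $A_0\cap P$ to be a $\nu$-null set; the set $\{s:f(s)<0\}$ is handled identically. Thus $f=0$ $\nu$-a.e.\ on every $A_0\in\mathcal K$.

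It remains to globalize, and this is where I expect the real work to lie. Writing $Z=\{s:f(s)\neq0\}$, the previous step says $A_0\cap Z$ is $\nu$-null for every $A_0\in\mathcal K$. I would show that the collection $\mathcal N=\{B\in\sigma(\mathcal K):B\cap Z\text{ is }\nu\text{-null}\}$ is an $\alpha$-class containing $\mathcal K$: closure under disjoint unions, proper differences, and decreasing intersections each follows from the fact that a subset or a countable union of $\nu$-null sets is $\nu$-null. By Lemma~\ref{l_alpha}, $\mathcal N$ is therefore a $\delta$-ring, and in particular it contains the $\delta$-ring generated by $\mathcal K$. The crux is then to cover $S_\nu$, up to a $\nu$-null set, by countably many members of this $\delta$-ring: using that $\nu$ is $\sigma$-finite and $\sigma(\mathcal K)$-compatible, $S_\nu$ is a countable union of finite-measure sets in $\sigma(\mathcal K)$, and each such set is exhausted up to a $\nu$-null set by sets from the ring generated by $\mathcal K$ (the standard approximation of a finite measure by its generating ring). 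Once this covering is in hand, $Z$ is a countable union of sets of the form $C\cap Z$ with $C\in\mathcal N$, each $\nu$-null, so $Z$ itself is $\nu$-null and $f=0$ $\nu$-a.e. The delicate point throughout is precisely this passage from the $\delta$-ring to all of $S_\nu$: $\mathcal K$ is merely a $\pi$-system, so it need contain no exhausting sequence of its own, and it is the combination of $\sigma$-finiteness with $\sigma(\mathcal K)$-compatibility that supplies the required countable cover.
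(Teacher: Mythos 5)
Your overall architecture is sound and genuinely different from the paper's. Where you localize at a fixed $A_0\in\mathcal K$, run the classical $\pi$--$\lambda$ theorem on $\mathcal L_{A_0}$, and only then globalize, the paper works globally in one sweep: it sets $\mathcal Q=\{A\in\sigma(\mathcal K): f\text{ is $\nu$-integrable on }A,\ \int_A f\,d\nu=0\}$, observes that $\mathcal Q$ is an $\alpha$-class with $\mathcal K\subset\mathcal Q\subset\sigma(\mathcal K)$ and hence a $\delta$-ring by Lemma~\ref{l_alpha}, splits each $A\in\mathcal Q$ into sign sets $A_\pm\in\sigma(\mathcal K)$ via compatibility (these lie in $\mathcal Q$ because a set of $\sigma(\mathcal Q)$ contained in an element of the $\delta$-ring $\mathcal Q$ belongs to $\mathcal Q$), and finishes by covering $\s_\nu$ up to a null set by countably many elements of $\mathcal Q$. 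Your reduction to real $f$, your Dynkin step, and your sign-set step are all correct; note only that both of the latter silently use that $\s_\nu\in\sigma(\mathcal K)$ and that an arbitrary $\nu$-measurable set (not just an element of $D_\nu$, to which the compatibility definition literally applies) agrees with a $\sigma(\mathcal K)$-set up to a null set. Both facts do follow from $\sigma$-finiteness plus compatibility, by writing the set as a countable union of $D_\nu$-sets and enlarging each by a null set, but they deserve a line.

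There are two defects, and the serious one sits exactly in the step you call the crux. The appeal to ``the standard approximation of a finite measure by its generating ring'' is not justified here: that theorem needs $\nu$ to be $\sigma$-finite on the ring generated by $\mathcal K$, and the hypotheses give no such thing --- elements of $\mathcal K$ may have infinite measure, since the assumption is only that $f$ is $\nu$-integrable on them. Fortunately the covering you need is true for a far more elementary reason, which also makes the entire $\alpha$-class/$\delta$-ring apparatus of your last step unnecessary: the collection of all sets contained in a countable union of elements of $\mathcal K$ is a $\sigma$-ring containing $\mathcal K$, hence contains $\sigma(\mathcal K)$; since $\s_\nu\in\sigma(\mathcal K)$, the whole space is covered outright by countably many $C_n\in\mathcal K$, and $f=0$ $\nu$-a.e.\ on each $C_n$ by your third step, so $f=0$ $\nu$-a.e.\ (equivalently, every element of $\sigma(\mathcal K)$ is a countable union of elements of the $\delta$-ring $\delta(\mathcal K)$, which your $\mathcal N$ contains --- no approximation and no finiteness of measure is involved). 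The minor defect is the parenthetical in your second step: an $\alpha$-class that is closed under countable increasing unions and contains $\s_\nu$ is precisely a Dynkin ($\lambda$-) system, and such a system need not be a $\sigma$-algebra (the even-cardinality subsets of a four-point set are the standard counterexample); what closes that argument is the $\pi$--$\lambda$ theorem itself, i.e., your main route, not those closure properties alone.
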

\begin{proof}
If $f$ satisfies the conditions of the lemma, then the same is true for its real and imaginary parts. So we can assume that $f$ is real. Let
\[
\mathcal Q = \left\{ A\in \sigma(\mathcal K): \mbox{$f$ is $\nu$-integrable on $A$ and $\int_A f(s)\,d\nu(s) = 0$}\right\}.
\]
Clearly, $\mathcal Q$ is an $\alpha$-class satisfying $\mathcal K\subset \mathcal Q\subset \sigma(\mathcal K)$ and, therefore, is a $\delta$-ring by Lemma~\ref{l_alpha}. As $\nu$ is $\sigma(\mathcal K)$-compatible and $f$ is real, every $A\in \mathcal Q$ can be represented in the form $A=A_+\cup A_-$, where $A_\pm\in \sigma(\mathcal K)$ are such that $f(s)\geq 0$ for $\nu$-a.e. $s\in A_+$ and $f(s)\leq 0$ for $\nu$-a.e. $s\in A_-$. Since $A_\pm$ are elements of $\sigma(\mathcal Q)$ contained in an element of $\mathcal Q$, we conclude that $A_\pm\in \mathcal Q$ and, therefore, $\int_{A_\pm} f(s)\,d\nu(s) = 0$. This implies that $f(s) = 0$ for $\nu$-a.e. $s\in A$. Since $\s_\nu$ is, up to a $\nu$-null set, a countable union of elements of $\mathcal Q$, we have $f(s)=0$ for $\nu$-a.e. $s$.
\end{proof}

\begin{proposition}\label{propmm}
Let $\nu$ be a $\sigma$-finite positive measure, $\Sigma$ be a countably generated $\sigma$-algebra, and $\mu$ be a $(\nu,\Sigma)$-measurable family of measures such that $\mu(s)$ is $\Sigma$-compatible for $\nu$-a.e. $s$.
Then there is a unique (up to $\nu$-identity) $\nu$-measurable family $\mathfrak S$ of Hilbert spaces satisfying the conditions:
\begin{enumerate}
\item[$(1)$] $\mathfrak S(s) = L_2(G,\mu(s))$ for $\nu$-a.e. $s$, where $G$ is the largest set in $\Sigma$.
\item[$(2)$] A $\nu$-a.e. defined section $\xi$ of $\mathfrak S$ is $\nu$-measurable if and only if \\$s\to \int_A \xi(s|E)\,d\mu(s|E)$ is a $\nu$-measurable function for any $A\in \mathcal Q_{\mu,\nu}^\Sigma$.
\end{enumerate}
Let $\mathcal K\subset \mathcal Q^\Sigma_{\mu,\nu}$ be closed under finite intersections and satisfy $\sigma(\mathcal K) = \Sigma$. If $\xi$ is a $\nu$-a.e. defined section of $\mathfrak S$ such that $s\to \int_A \xi(s|E)\,d\mu(s|E)$ is a $\nu$-measurable function for any $A\in \mathcal K$, then $\xi$ is $\nu$-measurable.
\end{proposition}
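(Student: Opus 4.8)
The plan is to build $\mathfrak S$ from an explicit fundamental sequence of measurable sections consisting of indicator classes, and then to read off both conditions $(1)$ and $(2)$ directly from the construction, using Lemmas~\ref{lmm} and~\ref{l_vanish} to pass back and forth between a countable generating family and the whole $\delta$-ring $\mathcal Q^\Sigma_{\mu,\nu}$.

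First I would fix a countable family $\mathcal K_0\subset \mathcal Q^\Sigma_{\mu,\nu}$ that is closed under finite intersections and satisfies $\sigma(\mathcal K_0)=\Sigma$. Such a family exists because $\Sigma$ is countably generated and, $\mathcal Q^\Sigma_{\mu,\nu}$ being a $\delta$-ring with $\sigma(\mathcal Q^\Sigma_{\mu,\nu})=\Sigma$, every element of $\Sigma$ is a countable union of elements of $\mathcal Q^\Sigma_{\mu,\nu}$; I would collect the $\mathcal Q^\Sigma_{\mu,\nu}$-sets occurring in such representations of a countable generating family of $\Sigma$, throw in a countable cover $G_k\in\mathcal Q^\Sigma_{\mu,\nu}$ of $G$, and close under finite unions and intersections. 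Enumerating $\mathcal K_0=\{A_1,A_2,\ldots\}$, I set $e_n(s)=[\chi_{A_n}]_{\mu(s)}$, which lies in $L_2(G,\mu(s))$ for $\nu$-a.e.\ $s$ because $\mu(s|A_n)<\infty$. The inner products satisfy $\langle e_m(s),e_n(s)\rangle=\mu(s|A_m\cap A_n)$, a $\nu$-measurable function of $s$ since $A_m\cap A_n\in\mathcal K_0\subset\mathcal Q^\Sigma_{\mu,\nu}$ and $\mu$ is a $(\nu,\Sigma)$-measurable family (Definition~\ref{d_meas}).

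The key step is totality of $\{e_n(s)\}$: I claim that for $\nu$-a.e.\ $s$ the classes $e_n(s)$ span a dense subspace of $L_2(G,\mu(s))$, so that $\{e_n\}$ is a genuine fundamental sequence and therefore defines a $\nu$-measurable family $\mathfrak S$ of Hilbert spaces with $\mathfrak S(s)=L_2(G,\mu(s))$ in the sense of Appendix~\ref{app1}. To see this, I fix an $s$ at which $\mu(s)$ is $\sigma$-finite and $\Sigma$-compatible (which holds a.e.\ by hypothesis) and suppose $g\in L_2(G,\mu(s))$ is orthogonal to every $e_n(s)$; then $g$ is $\mu(s)$-integrable on each $A\in\mathcal K_0$ by the Cauchy--Bunyakovsky inequality (as $\chi_A\in L_2(G,\mu(s))$), and $\int_A g\,d\mu(s)=0$ for all $A\in\mathcal K_0$. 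Since $\mathcal K_0$ is closed under finite intersections and $\mu(s)$ is $\sigma(\mathcal K_0)$-compatible (recall $\sigma(\mathcal K_0)=\Sigma$), Lemma~\ref{l_vanish} forces $g=0$ $\mu(s)$-a.e. This is the step I expect to require the most care, precisely because one fixed countable family must serve simultaneously for $\nu$-a.e.\ $s$; invoking Lemma~\ref{l_vanish} fiberwise is exactly what makes this uniform choice go through.

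With $\mathfrak S$ in hand, condition $(1)$ holds by construction. For condition $(2)$, a section $\xi$ is $\nu$-measurable if and only if $s\to\langle\xi(s),e_n(s)\rangle=\int_{A_n}\xi(s|E)\,d\mu(s|E)$ is $\nu$-measurable for every $n$, i.e.\ for every $A\in\mathcal K_0$; the reverse implication in $(2)$ is then immediate, and the forward implication follows by applying Lemma~\ref{lmm} with $\mathcal K=\mathcal K_0$ (its integrability hypothesis again holding by Cauchy--Bunyakovsky) to extend measurability from $\mathcal K_0$ to all of $\mathcal Q^\Sigma_{\mu,\nu}$. The final assertion of the proposition is handled the same way: for any $\mathcal K$ as in the statement, measurability of $s\to\int_A\xi\,d\mu(s)$ on $\mathcal K$ extends via Lemma~\ref{lmm} to all of $\mathcal Q^\Sigma_{\mu,\nu}$, in particular to $\mathcal K_0$, whence $\xi$ is $\nu$-measurable. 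Uniqueness up to $\nu$-identity is then automatic, since $(1)$ and $(2)$ determine both the fibers and the class of $\nu$-measurable sections of $\mathfrak S$ purely in terms of $\nu$, $\mu$, and $\Sigma$, and a $\nu$-measurable family of Hilbert spaces is fixed up to $\nu$-identity by these data (Appendix~\ref{app1}).
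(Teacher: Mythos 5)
Your proof is correct and takes essentially the same route as the paper's: the same fundamental sequence of indicator classes $[\chi_{A_n}]_{\mu(s)}$ over a countable, intersection-closed subfamily of $\mathcal Q^\Sigma_{\mu,\nu}$ generating $\Sigma$, Lemma~\ref{l_vanish} applied fiberwise for totality, and Lemma~\ref{lmm} to pass from the countable family to all of $\mathcal Q^\Sigma_{\mu,\nu}$, with uniqueness read off from the definition of $\nu$-identity. The only (immaterial) difference is how the countable family is manufactured: the paper intersects a countable generating family with a cover of $G$ by elements of $\mathcal Q^\Sigma_{\mu,\nu}$, while you decompose the generators into countable unions of $\delta$-ring elements and close under finite unions and intersections.
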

\begin{proof}
Let $\mathcal L_0$ be a countable set generating $\Sigma$. Without loss of generality, we can assume that $\mathcal L_0$ is closed under finite intersections. Let $B_1\subset B_2\subset \ldots$ be a sequence of elements of $\mathcal Q^\Sigma_{\mu,\nu}$ such that $G=\bigcup_{i=1}^\infty B_i$ and let $\mathcal L$ be the set of all sets of the form $A\cap B_i$ for some $A\in \mathcal L_0$ and $i=1,2,\ldots$. Then $\mathcal L$ is a countable subset of $\mathcal Q^\Sigma_{\mu,\nu}$ that is closed under finite intersections and satisfies $\Sigma =\sigma(\mathcal L)$. Let $\mathfrak S$ be $\nu$-a.e. defined family of Hilbert spaces satisfying~(1). We choose a numbering $A_1,A_2,\ldots$ of the set $\mathcal L$ and endow $\mathfrak S$ with a $\nu$-measurable structure by setting
\[
\xi_i^{\mathfrak S}(s) = [\chi_{A_i}]_{\mu(s)},\quad i=1,2,\ldots,
\]
for $\nu$-a.e. $s$, where the function $\chi_{A_i}$ is equal to unity on $A_i$ and vanishes on $G\setminus A_i$. Since $\langle \xi_i^{\mathfrak S}(s), \xi_j^{\mathfrak S}(s)\rangle = \mu(s|A_i\cap A_j)$ and $\mathcal Q^\Sigma_{\mu,\nu}$ is a ring, $s\to \langle \xi_i^{\mathfrak S}(s), \xi_j^{\mathfrak S}(s)\rangle$ is a $\nu$-measurable function for any $i,j=1,2,\ldots$. If $\psi\in \mathfrak S(s)$ is orthogonal to $\xi_i^{\mathfrak S}(s)$ for all $i$, then Lemma~\ref{l_vanish} implies that $[\psi]_{\mu(s)} = 0$. This means that the linear span of the sequence $\xi_1^{\mathfrak S}(s),\xi_2^{\mathfrak S}(s),\ldots$ is dense in $\mathfrak S(s)$ for $\nu$-a.e. $s$. Thus, $\mathfrak S$ endowed with the sequence $\xi_1^{\mathfrak S},\xi_2^{\mathfrak S},\ldots$ is indeed a $\nu$-measurable family of Hilbert spaces. If $\xi$ is a $\nu$-measurable section of $\mathfrak S$, then $s\to \langle\xi(s),[\chi_A]_{\mu(s)}\rangle=\int_A\xi(s|E)\,d\mu(s|E)$ is a $\nu$-measurable function for any $A\in \mathcal L$, and Lemma~\ref{lmm} implies that this is also true for every $A\in \mathcal Q^\Sigma_{\mu,\nu}$. Thus, $\mathfrak S$ satisfies~(2). Suppose now that $\mathcal K\subset \mathcal Q^\Sigma_{\mu,\nu}$ is closed under finite intersections and satisfies $\sigma(\mathcal K) = \Sigma$. By Lemma~\ref{lmm}, if $\xi$ is a $\nu$-a.e. defined section of $\mathfrak S$ such that $s\to \int_A \xi(s|E)\,d\mu(s|E)$ is a $\nu$-measurable function for any $A\in \mathcal K$, then this is also true for any $A\in \mathcal Q^\Sigma_{\mu,\nu}$ and it follows from~(2) that $\xi$ is $\nu$-measurable.
\end{proof}

\begin{proposition}\label{puo}
Let $\nu$, $\mu$, $\Sigma$, and $\mathfrak S$ be as in Proposition~$\mathrm{\ref{propmm}}$. Let $\mathfrak H = \int^\oplus \mathfrak S(s)\,d\nu(s)$ and $M = (\Sigma)\mbox{-}\!\int^\oplus\mu(s)\,d\nu(s)$. Given an $M$-measurable function $f$, let $\hat f$ denote the $\nu$-equivalence class such that $\hat f(s)$ is the $\mu(s)$-equivalence class of the map $E\to f(s,E)$ for $\nu$-a.e. $s$. Then the following statements hold:
\begin{enumerate}
\item[$1.$] $M$-measurable functions $f_1$ and $f_2$ are $M$-equivalent if and only if $\hat f_1=\hat f_2$.

\item[$2.$] If $f\in L_2(\s_M,M)$, then $\hat f \in \mathfrak H$ and the operator $Q\colon L_2(\s_M,M)\to \mathfrak H$ taking $f$ to $\hat f$ is unitary.

\item[$3.$] If $g$ is an $M$-measurable complex function, then $\hat g(s)$ is $\mu(s)$-measurable for $\nu$-a.e. $s$, $s\to \mathcal T^{\mu(s)}_{\hat g(s)}$ is a $\nu$-measurable family of operators in $\mathfrak S$, and
\[
Q\mathcal T^M_g Q^{-1} = \int^\oplus \mathcal T^{\mu(s)}_{\hat g(s)}\,d\nu(s).
\]
\end{enumerate}
\end{proposition}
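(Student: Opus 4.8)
The plan is to establish the three statements in the order given, since each relies on the previous ones, using as principal tools the null-set criterion of Lemma~\ref{cor_prod}, the Fubini identity of Proposition~\ref{p_fubini}, its consequence Corollary~\ref{cor_fubini}, and the fundamental sequence $\xi_i^{\mathfrak S}(s)=[\chi_{A_i}]_{\mu(s)}$ supplied by the construction in Proposition~\ref{propmm}, where $A_1,A_2,\ldots$ enumerate a countable subfamily $\mathcal K\subset\mathcal Q^\Sigma_{\mu,\nu}$ that is closed under finite intersections and satisfies $\sigma(\mathcal K)=\Sigma$. Throughout, Corollary~\ref{cor_fubini} guarantees that $E\mapsto f(s,E)$ is $\mu(s)$-measurable for $\nu$-a.e.\ $s$ whenever $f$ is $M$-measurable, so that $\hat f(s)$ is well defined $\nu$-a.e.

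Statement~1 is immediate: for $M$-measurable $f_1,f_2$ the set $N=\{(s,E):f_1(s,E)\neq f_2(s,E)\}$ is $M$-measurable with sections $N_s=\{E:f_1(s,E)\neq f_2(s,E)\}$, and $f_1$ is $M$-equivalent to $f_2$ exactly when $N$ is an $M$-null set, while $\hat f_1=\hat f_2$ exactly when $N_s$ is a $\mu(s)$-null set for $\nu$-a.e.\ $s$; these two conditions coincide by Lemma~\ref{cor_prod}. For statement~2, suppose $f\in L_2(\s_M,M)$. Applying Proposition~\ref{p_fubini} to $|f|^2$ shows that $\hat f(s)\in L_2(G,\mu(s))=\mathfrak S(s)$ for $\nu$-a.e.\ $s$ and that $\int\|\hat f(s)\|^2\,d\nu(s)=\int|f|^2\,dM=\|f\|^2$, which simultaneously yields square-integrability of $\hat f$ and the isometry of $Q$. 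To see that $\hat f$ is a $\nu$-measurable section I would check the criterion in statement~(2) of Proposition~\ref{propmm}: for $A\in\mathcal Q^\Sigma_{\mu,\nu}$ the function $\chi_A(E)f(s,E)$ is $M$-measurable and, since $\mu(s|A)<\infty$, $\mu(s)$-integrable in $E$ for $\nu$-a.e.\ $s$ by Cauchy--Bunyakovsky, so Corollary~\ref{cor_fubini} makes $s\mapsto\int_A f(s,E)\,d\mu(s|E)$ $\nu$-measurable. Thus $Q$ is a well-defined isometry, injective by statement~1. For surjectivity I would verify that its range has trivial orthogonal complement: if $\psi\in\mathfrak H$ is orthogonal to every $\hat f$, then testing against the vectors $Q(c\,\chi_{A_i})=c\,\xi_i^{\mathfrak S}$, with $c$ ranging over $\nu$-measurable functions supported where $\mu(\cdot\,|A_i)$ is bounded and $\nu$ is finite, forces $\langle\xi_i^{\mathfrak S}(s),\psi(s)\rangle=0$ for $\nu$-a.e.\ $s$ and every $i$; density of the span of the $\xi_i^{\mathfrak S}(s)$ then gives $\psi(s)=0$ $\nu$-a.e. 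Since the range of an isometry is closed, $Q$ is unitary.

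For statement~3, the $\mu(s)$-measurability of $\hat g(s)$ is again Corollary~\ref{cor_fubini}. To prove that $s\mapsto\mathcal T^{\mu(s)}_{\hat g(s)}$ is a $\nu$-measurable family I would exhibit $\nu$-measurable sections whose graphs span: for the truncations $\zeta_{i,n}(s)=[\chi_{A_i\cap\{|g(s,\cdot)|\le n\}}]_{\mu(s)}$ both $\zeta_{i,n}$ and $\hat g(s)\zeta_{i,n}(s)=\widehat{g\,\chi_{A_i}\chi_{\{|g|\le n\}}}(s)$ are $\nu$-measurable sections by the measurability criterion used for $\hat f$ in statement~2 (the integrands being $M$-measurable and, on the finite-measure set $A_i$ where they are bounded, $\mu(s)$-integrable), so the pairs $(\zeta_{i,n}(s),\hat g(s)\zeta_{i,n}(s))$ lie in the graph of $\mathcal T^{\mu(s)}_{\hat g(s)}$; their span is graph-norm dense for $\nu$-a.e.\ $s$, because truncating an arbitrary $\eta\in D_{\mathcal T^{\mu(s)}_{\hat g(s)}}$ to $\{|g(s,\cdot)|\le n\}$ converges in graph norm by dominated convergence, and on that set, where $\hat g(s)$ is bounded, $L_2$-approximation by combinations of the $\chi_{A_i}$ is automatically graph-norm approximation. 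With the family shown measurable, the operator identity follows from the unitarity of $Q$ and Proposition~\ref{p_fubini}: for $f\in D_{\mathcal T^M_g}$ one has $\widehat{gf}(s)=\hat g(s)\hat f(s)=\mathcal T^{\mu(s)}_{\hat g(s)}\hat f(s)$ with $\hat f(s)\in D_{\mathcal T^{\mu(s)}_{\hat g(s)}}$ for $\nu$-a.e.\ $s$, giving $Q\mathcal T^M_g Q^{-1}\subset\int^\oplus\mathcal T^{\mu(s)}_{\hat g(s)}\,d\nu(s)$; the reverse inclusion comes from running the same computation backwards, Fubini converting the a.e.\ membership $\hat g(s)\hat f(s)\in\mathfrak S(s)$ with $\nu$-integrable norm into $gf\in L_2(\s_M,M)$.

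I expect the substantive work to lie not in the algebraic identities but in the measure-theoretic bookkeeping of the two density arguments --- triviality of the orthogonal complement in statement~2 and graph-norm density in statement~3 --- together with the recurring integrability side conditions that must be verified before Corollary~\ref{cor_fubini} and Proposition~\ref{p_fubini} apply. The systematic device in both is the same: reduce to the generators $\chi_{A_i}$ of $\Sigma$ and localize to sets where the relevant function is bounded and $\nu$ has finite mass, so that all integrals in sight are finite and the cited Fubini-type results are available.
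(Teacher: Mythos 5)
Your proposal is correct and follows essentially the same route as the paper's own proof: statement~1 via Lemma~\ref{cor_prod}, statement~2 via Proposition~\ref{p_fubini} for the isometry, Corollary~\ref{cor_fubini} together with Proposition~\ref{propmm}(2) for measurability of $\hat f$, and a localization-plus-density-of-$[\chi_{A_i}]_{\mu(s)}$ argument for surjectivity, and statement~3 via truncation to sets where $g$ is bounded followed by $L_2$-approximation by the $\chi_{A_i}$, with the two inclusions of the operator identity handled exactly as in the paper. The only deviations are cosmetic: you truncate by the explicit level sets $\{|g|\le n\}$ where the paper uses an abstract exhaustion $C_1\subset C_2\subset\ldots$ of $\s_M$ by finite-measure sets on which $g$ is essentially bounded, and your dense-range test functions replace the paper's explicit appeal to Lemma~\ref{l_vanish} with rectangles $B\times A$, but the underlying arguments coincide.
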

\begin{proof} 1. Since $f_1$ and $f_2$ are $M$-measurable, the set $N = \{(s,E)\in \s_M: f_1(s,E)\neq f_2(s,E)\}$ is $M$-measurable. By Lemma~\ref{cor_prod}, $f_1$ and $f_2$ are $M$-equivalent if and only if $N_s$ is a $\mu(s)$-null set for $\nu$-a.e. $s$. Clearly, this condition holds if and only if $\hat f_1(s) = \hat f_2(s)$ for $\nu$-a.e. $s$ and, hence, $\hat f_1 = \hat f_2$.

\par\medskip\noindent 2.
Let $f\in L_2(\s_M,M)$. By Corollary~\ref{cor_fubini}, $\hat f(s)$ is a $\mu(s)$-measurable function for $\nu$-a.e. $s$. Since $|f|^2$ is $M$-integrable, Proposition~\ref{p_fubini} implies that $|\hat f(s)|^2$ is $\mu(s)$-integrable and, therefore, $\hat f(s)\in \mathfrak S(s)$ for $\nu$-a.e. $s$. For every $A\in \mathcal Q^\Sigma_{\mu,\nu}$, the function $\chi_A\hat f(s)$, where $\chi_A$ is as in the proof of Proposition~\ref{propmm}, is a product of two $\mu(s)$-square-integrable functions and, hence, is $\mu(s)$-integrable for $\nu$-a.e. $s$. Applying Corollary~\ref{cor_fubini} to the function $(s,E)\to \chi_A(E)f(s,E)$, we conclude that $s\to \int_A \hat f(s|E)\,d\mu(s|E)$ is a $\nu$-measurable function for any $A\in \mathcal Q^\Sigma_{\mu,\nu}$. In view of Proposition~\ref{propmm}, this means that $\hat f$ is a $\nu$-measurable section of $\mathfrak S$. By Proposition~\ref{p_fubini}, $s\to \|\hat f(s)\|^2$ is a $\nu$-integrable function and we have
$\|f\|^2 = \int \|\hat f(s)\|^2\,d\nu(s)$. This means that $\hat f\in \mathfrak H$ and the operator $Q$ is isometric. We now prove that $Q$ is unitary. For this, it suffices to show that its image is dense in $\mathfrak H$. In other words, we have to show that every $\xi\in \mathfrak H$ that is orthogonal to $\hat f$ for every $f\in L_2(\s_M,M)$ is equal to zero. Given $A\in \mathcal Q^\Sigma_{\mu,\nu}$, let $\mathcal K_A$ denote the set of all $B\in D_\nu$ such that $B\times A\in D_M$. It is clear that $\mathcal K_A$ is closed under finite intersections and $\sigma(\mathcal K_A) = \sigma(D_\nu)$. For a set $C\subset \s_M$, let $X_C$ denote the function on $\s_M$ that is equal to unity on $C$ and vanishes on $\s_M\setminus C$. For $A\in \mathcal Q^\Sigma_{\mu,\nu}$ and $B\in \mathcal K_A$, the function $X_{B\times A}$ is obviously $M$-square-integrable and, therefore, we have
\[
\int_B \langle[\chi_{A}]_{\mu(s)},\xi(s)\rangle\,d\nu(s) = \int \langle\hat X_{B\times A}(s),\xi(s)\rangle\,d\nu(s) = \langle \hat X_{B\times A},\xi\rangle = 0.
\]
Applying Lemma~\ref{l_vanish} to $\mathcal K = \mathcal K_A$, we conclude that $\langle[\chi_{A}]_{\mu(s)},\xi(s)\rangle = 0$ for $\nu$-a.e. $s$ for every $A \in \mathcal Q^\Sigma_{\mu,\nu}$. Acting as in the proof of Proposition~\ref{propmm}, we choose a sequence $A_1,A_2,\ldots$ of elements of $\mathcal Q^\Sigma_{\mu,\nu}$ such that the linear span of $[\chi_{A_i}]_{\mu(s)}$ is dense in $\mathfrak S(s)$ for $\nu$-a.e. $s$. Then, for $\nu$-a.e. $s$, $\xi(s)$ is orthogonal to all vectors $[\chi_{A_i}]_{\mu(s)}$ and, hence, is equal to zero. Thus, $\xi=0$ and the unitarity of $Q$ is proved.

\par\medskip\noindent 3.
By Corollary~\ref{cor_fubini}, $\hat g(s)$ is $\mu(s)$-measurable for $\nu$-a.e. $s$. Let $C_1\subset C_2\subset\ldots$ be a sequence of elements of $D_M$ such that $\s_M = \bigcup_{j=1}^\infty C_j$ and $g$ is $M$-essentially bounded on $C_j$ for every $j=1,2,\ldots$. For $i,j=1,2,\ldots$, we define the function $h_{ij}$ on $\s_M$ by setting $h_{ij}(s,E) = \chi_{A_i}(E)X_{C_j}(s,E)$, where $A_i$ are as in the proof of~(2). It is clear that $h_{ij}$ is $M$-square-integrable and, therefore, $\hat h_{ij}$ is a $\nu$-measurable section of $\mathfrak S$ for all $i,j$. To prove the $\nu$-measurability of the family $s\to \mathcal T^{\mu(s)}_{\hat g(s)}$, it suffices to show that $\hat h_{ij}(s)$ belongs to the domain of $\mathcal T^{\mu(s)}_{\hat g(s)}$ and the linear span of the vectors $(\hat h_{ij}(s),\mathcal T^{\mu(s)}_{\hat g(s)}\hat h_{ij}(s))$ is dense in the graph of $\mathcal T^{\mu(s)}_{\hat g(s)}$ for $\nu$-a.e. $s$. For $j=1,2,\ldots$, let $P_j$ be a $\nu$-a.e. defined map such that $P_j(s) = \mathcal T^{\mu(s)}_{\chi_{C_{j,s}}}$ for $\nu$-a.e. $s$, where $C_{j,s}=\{E: (s,E)\in C_j\}$. Then, for $\nu$-a.e. $s$, $P_j(s)$ is an orthogonal projection commuting with $\mathcal T^{\mu(s)}_{\hat g(s)}$ and satisfying the equality
\begin{equation}\label{gij}
\hat h_{ij}(s) = P_j(s) [\chi_{A_i}(s)]_{\mu(s)}
\end{equation}
for all $i,j$. In view of Lemma~\ref{cor_prod}, $\hat g(s)$ is $\mu(s)$-essentially bounded on $C_{j,s}$ for $\nu$-a.e. $s$. Hence $\mathrm{Im}\,P_j(s)$ is contained in the domain of $\mathcal T^{\mu(s)}_{\hat g(s)}$ and $\mathcal T^{\mu(s)}_{\hat g(s)}P_j(s)$ is a bounded operator for $\nu$-a.e. $s$. In particular, it follows from~(\ref{gij}) that $\hat h_{ij}(s)$ is in the domain of $\mathcal T^{\mu(s)}_{\hat g(s)}$ for $\nu$-a.e. $s$. Let $\mathcal G_s$ be the subset of the graph of $\mathcal T^{\mu(s)}_{\hat g(s)}$ consisting of all its elements $(\psi,\tilde\psi)$ such that $\psi,\tilde\psi\in \mathrm{Im}\,P_j(s)$ for some $j=1,2,\ldots$. For $\nu$-a.e. $s$, we have $\lim_{j\to\infty} P_j(s)\psi=\psi$ for every $\psi\in \mathfrak S(s)$. As $P_j(s)$ commute with $\mathcal T^{\mu(s)}_{\hat g(s)}$, this implies that $\mathcal G_s$ is dense in the graph of $\mathcal T^{\mu(s)}_{\hat g(s)}$ for $\nu$-a.e. $s$. Given $(\psi,\tilde\psi)\in \mathcal G_s$ and $\varepsilon>0$, we can find a finite linear combination $\tau$ of vectors $[\chi_{A_i}]_{\mu(s)}$ such that $\|\psi-\tau\|<\varepsilon$. Then we have
\begin{align}
&\|\psi - P_j(s)\tau\|=\|P_j(s)(\psi-\tau)\|<\varepsilon,\nonumber\\
&\left\|\tilde\psi - \mathcal T^{\mu(s)}_{\hat g(s)} P_j(s)\tau\right\|=\left\|\mathcal T^{\mu(s)}_{\hat g(s)} P_j(s)(\psi-\tau)\right\|< \left\|\mathcal T^{\mu(s)}_{\hat g(s)} P_j(s)\right\|\varepsilon,\nonumber
\end{align}
where $j$ is such that $\psi,\tilde\psi\in \mathrm{Im}\,P_j(s)$.
In view of~(\ref{gij}), $P_j(s)\tau$ is a linear combination of $\hat h_{ij}(s)$, and it follows from the above inequalities that the linear span of the vectors $(\hat h_{ij}(s),\mathcal T^{\mu(s)}_{\hat g(s)}\hat h_{ij}(s))$ is dense in $\mathcal G_s$ and, hence, in the graph of $\mathcal T^{\mu(s)}_{\hat g(s)}$ for $\nu$-a.e. $s$. The $\nu$-measurability of the family $s\to \mathcal T^{\mu(s)}_{\hat g(s)}$ is thus proved. Set $T = \int^\oplus \mathcal T^{\mu(s)}_{\hat g(s)}\,d\nu(s)$. Let $f$ belong to the domain of $\mathcal T^M_g$ and $F = \mathcal T^M_g f$. In view of Lemma~\ref{cor_prod}, $\hat F(s)$ is $\mu(s)$-equivalent to $\hat g(s)\hat f(s)$ for $\nu$-a.e. $s$. This means that $Qf=\hat f\in D_T$ and $Q\mathcal T^M_g f = TQf$. It follows that $T$ is an extension of $Q\mathcal T^M_g Q^{-1}$. To finish the proof, we have to show that $Q^{-1}\xi$ belongs to the domain of $\mathcal T^M_g$ for any $\xi\in D_T$. Let $\varphi = g Q^{-1}\xi$. For $\nu$-a.e. $s$, we have $(T\xi)(s|E) = \hat g(s|E)\xi(s|E) = \varphi(s,E)$ for $\mu(s)$-a.e. $E$. As $T\xi\in \mathfrak H$, it follows that the function $E\to \varphi(s,E)$ is $\mu(s)$-square-integrable for $\nu$-a.e. $s$ and the function $s\to \int |\varphi(s,E)|^2 \,d\mu(s|E)$ is $\nu$-integrable. In view of Corollary~\ref{cor_fubini}, this implies that $\varphi$ is $M$-square-integrable, i.e., $Q^{-1}\xi$ is in the domain of $\mathcal T^M_g$.
\end{proof}

\section{Eigenfunction expansions}
\label{s11}

In this section, we prove the results formulated in Sec.~\ref{sec_results}.

Let $\mathcal B_\R$ denote the Borel $\sigma$-algebra on $\R$. Proposition~\ref{prop1} follows immediately from the next lemma.
\begin{lemma}\label{lmutheta12}
Let $\phi\in\R$ and $\theta$ be a Borel real function on $A^\phi$. Then $\mu^\phi_\theta$ is a $(\nu_0,\mathcal B_\R)$-measurable family of measures and $M=(\mathcal B_\R)\mbox{-}\!\int^\oplus\mu^\phi_\theta(s)\,d\nu_0(s)$ is the unique Borel measure on $S\times\R$ satisfying the conditions of Proposition~$\mathrm{\ref{prop1}}$.
\end{lemma}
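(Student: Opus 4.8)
The plan is to obtain Lemma~\ref{lmutheta12} as a direct application of the general machinery of Sec.~\ref{s9}, with the input measurability supplied by Lemma~\ref{c_borel}. I would proceed in three stages: first verify that $\mu^\phi_\theta$ is a $(\nu_0,\mathcal B_\R)$-measurable family of measures, then read off the two integral formulas of Proposition~\ref{prop1} for $M=(\mathcal B_\R)\mbox{-}\!\int^\oplus\mu^\phi_\theta(s)\,d\nu_0(s)$, and finally prove uniqueness by a $\pi$-system argument.

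For the measurability, take $\Sigma=\mathcal B_\R$ and let $\mathcal K$ be the family of all compact subsets of $\R$. This $\mathcal K$ is closed under finite intersections and satisfies $\sigma(\mathcal K)=\mathcal B_\R$, since every closed subset of $\R$ is a countable union of compact sets. By Lemma~\ref{c_borel}, for each compact $K$ the function $s\to\mu^\phi_\theta(s|K)$ is Borel, hence $\nu_0$-measurable, and is bounded on compact subsets of $S$; in particular $\mu^\phi_\theta(s|K)<\infty$ for every $s$, so $K\in\mathcal Q^{\mathcal B_\R}_{\mu^\phi_\theta,\nu_0}$. Since $\mu^\phi_\theta(s)$ is finite on compacts and $\R=\bigcup_n[-n,n]$, each $\mu^\phi_\theta(s)$ is $\sigma$-finite. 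Corollary~\ref{cormm}, applied with this $\mathcal K$, then yields at once that $\mu^\phi_\theta$ is a $(\nu_0,\mathcal B_\R)$-measurable family of measures.

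With this in hand, Proposition~\ref{p_prod} and Definition~\ref{d_dirm} produce the measure $M$. Because $\nu_0$ is a Borel measure on $S$, it is $\mathcal B_S$-compatible, so the final assertion of Proposition~\ref{p_prod} makes $M$ $\sigma(\mathcal B_S\boxtimes\mathcal B_\R)$-compatible; as $S$ and $\R$ are second countable, $\sigma(\mathcal B_S\boxtimes\mathcal B_\R)=\mathcal B_{S\times\R}$, so $M$ is a Borel measure on $S\times\R$. To check~(\ref{M(K'timesK)}), fix compact $K'\subset S$ and $K\subset\R$: the section $(K'\times K)_s$ equals $K$ for $s\in K'$ and is empty otherwise, whence $s\to\mu^\phi_\theta(s|(K'\times K)_s)=\chi_{K'}(s)\,\mu^\phi_\theta(s|K)$ is $\nu_0$-integrable (as $\nu_0(K')<\infty$ and $\mu^\phi_\theta(\cdot|K)$ is bounded on $K'$). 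Thus $K'\times K\in\Delta^{\mathcal B_\R}_{\mu^\phi_\theta,\nu_0}$, and formula~(\ref{fub}) gives $M(K'\times K)=\int_{K'}\mu^\phi_\theta(s|K)\,d\nu_0(s)$, which is~(\ref{M(K'timesK)}). The remaining Fubini-type assertion of Proposition~\ref{prop1}, including the $\nu_0$-integrability of $s\to\int f(s,E)\,d\mu^\phi_\theta(s|E)$ and identity~(\ref{intf(s,E)}), is exactly Proposition~\ref{p_fubini} specialized to $(\nu_0,\mathcal B_\R,\mu^\phi_\theta,M)$.

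For uniqueness, suppose $M'$ is another positive Borel measure on $S\times\R$ obeying~(\ref{M(K'timesK)}). Then $M$ and $M'$ coincide on the family $\mathcal C$ of products $K'\times K$ of compact sets, which is stable under finite intersections, generates $\mathcal B_{S\times\R}$, and contains the increasing sequence $(\{-n,\dots,n\}\times[-n,n])\times[-n,n]$ exhausting $S\times\R$ with finite $M$-measure. The standard uniqueness theorem for $\sigma$-finite measures agreeing on a generating $\pi$-system therefore forces $M'=M$ on $\mathcal B_{S\times\R}$. I expect this last, bookkeeping-heavy point to be the only real obstacle: one must keep straight that ``Borel measure'' means $\mathcal B_{S\times\R}$-compatibility and confirm that the exhausting compact boxes have finite measure, so that the $\pi$-system uniqueness principle indeed applies; the measurability and the two integral formulas are then immediate from Lemma~\ref{c_borel}, Corollary~\ref{cormm}, Proposition~\ref{p_prod}, and Proposition~\ref{p_fubini}.
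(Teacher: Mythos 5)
Your proof is correct and follows essentially the same route as the paper's: the measurability via Lemma~\ref{c_borel} and Corollary~\ref{cormm} with $\mathcal K$ the compact subsets of $\R$, and the existence and formulas via Definition~\ref{d_dirm}, Proposition~\ref{p_prod} (with $\tilde\Sigma=\mathcal B_S$) and Proposition~\ref{p_fubini}, are exactly the paper's argument. The only divergence is the uniqueness step: you invoke the textbook $\pi$-system (Dynkin) uniqueness theorem for $\sigma$-finite measures, whereas the paper runs the same idea internally, showing that the collection of Borel sets on which $M$ and $\tilde M$ agree is an $\alpha$-class containing the products of compacts, hence a $\delta$-ring by Lemma~\ref{l_alpha}, and then concluding via the completion-uniqueness Lemma~\ref{compl}; this internal detour is needed because in this paper's framework measures live on $\delta$-rings and never take infinite values, so equality of $M$ and $\tilde M$ requires matching their domains, not just their values on Borel sets. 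Your appeal to the standard theorem is legitimate provided one passes through the identification of Remark~\ref{rem1} and then uses $\mathcal B_{S\times\R}$-compatibility of both measures to recover equality of domains, which is precisely the role Lemma~\ref{compl} plays in the paper.
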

\begin{proof}
Let $\mathcal K$ be the set of compact subsets of $\R$. Clearly, $\mathcal K$ is contained in the domain of $\mu^\phi_\theta(s)$ for all $s\in S$. Hence, the $(\nu_0,\mathcal B_\R)$-measurability of $\mu^\phi_\theta$ is ensured by Lemma~\ref{c_borel} and Corollary~\ref{cormm} (note that $\sigma(\mathcal K)=\mathcal B_\R$).

Let $\mathcal B_S$ and $\mathcal B_{S\times\R}$ denote the Borel $\sigma$-algebras of $S$ and $S\times \R$ respectively. By Definition~\ref{d_dirm}, $M$ satisfies the conditions of Proposition~\ref{p_prod} for $\mu=\mu^\phi_\theta$, $\nu=\nu_0$, $\Sigma = \mathcal B_\R$, and $\tilde\Sigma = \mathcal B_S$. By Lemma~\ref{c_borel}, we have $K'\times K\in \Delta^\Sigma_{\mu,\nu}$ for any compact sets $K'\subset S$ and $K\subset \R$. Since $\mathcal B_{S\times \R} = \sigma(\mathcal B_S\boxtimes \mathcal B_\R)$, Proposition~\ref{p_prod} implies that $M$ is a Borel measure on $S\times \R$ satisfying~(\ref{M(K'timesK)}). If $f$ is an $M$-measurable function, then equality~(\ref{intf(s,E)}) is ensured by Proposition~\ref{p_fubini}. Thus, $M$ satisfies the conditions of Proposition~\ref{prop1}.

Suppose $\tilde M$ is another Borel measure on $S$ satisfying~(\ref{M(K'timesK)}). Let $\mathcal L$ be the set of all sets $A\in \mathcal B_{S\times \R}\cap D_M\cap D_{\tilde M}$ such that $M(A)=\tilde M(A)$. Let $\tilde{\mathcal K}$ be the set of all sets $K'\times K$, where $K'\subset S$ and $K\subset \R$ are compact sets. Then $\tilde{\mathcal K}$ is closed under finite intersections and $\mathcal L$ is an $\alpha$-class containing $\tilde{\mathcal K}$. Since $\mathcal L\subset \sigma(\tilde{\mathcal K})=\mathcal B_{S\times\R}$, Lemma~\ref{l_alpha} implies that $\mathcal L$ is a $\delta$-ring. We hence have $M=\tilde M$ by Lemma~\ref{compl}.
\end{proof}

Let $\phi\in \R$ and $\theta$ be a Borel real function on $A^\phi$. We define the operator-valued map $\mathcal U^\phi_{\theta}$ on $S$ by setting
\begin{equation}\label{Utheta12}
\mathcal U^\phi_{\theta}(s) = \left\{
\begin{matrix}
U_{m+\phi},& s\in S\setminus A^\phi,\\
U_{m+\phi,\theta(s)},& s\in A^\phi,
\end{matrix}
\right.
\end{equation}
for all $s=(m,p)\in S$, where the operators $U_\kappa$ and $U_{\kappa,\vartheta}$ satisfy the conditions of Proposition~\ref{leig1} and Proposition~\ref{leig2} respectively. By Lemma~\ref{lmutheta12}, the assumptions of Proposition~\ref{propmm} are fulfilled for $\Sigma=\mathcal B_\R$, $\mu=\mu^\phi_{\theta}$, and $\nu=\nu_0$. A $\nu_0$-measurable family $\mathfrak S$ of Hilbert spaces satisfying the conditions of Proposition~\ref{propmm} for such $\Sigma$, $\mu$, and $\nu$ will be denoted by $\mathfrak S^\phi_{\theta}$.

\begin{lemma}\label{l_utheta}
Let $\phi\in \R$, $\theta$ be a Borel real function on $A^\phi$, and $\psi\in L_2^c(\R_+)$. Then $s\to \mathcal U^\phi_{\theta}(s)\,\psi$ is a $\nu_0$-measurable section of $\mathfrak S^\phi_{\theta}$.
\end{lemma}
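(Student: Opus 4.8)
The plan is to verify measurability through the characterization given by the final assertion of Proposition~\ref{propmm}, applied with the class $\mathcal K$ of compact subsets of $\R$: this $\mathcal K$ is closed under finite intersections, satisfies $\sigma(\mathcal K)=\mathcal B_\R$, and is contained in $\mathcal Q^{\mathcal B_\R}_{\mu^\phi_\theta,\nu_0}$ by Lemma~\ref{c_borel}. Writing $\xi(s)=\mathcal U^\phi_\theta(s)\psi$, I first observe that $\xi$ is indeed a section of $\mathfrak S^\phi_\theta$: by Propositions~\ref{leig1} and~\ref{leig2} together with~(\ref{mutheta12}), $\mathcal U^\phi_\theta(s)$ maps $L_2(\R_+)$ unitarily onto $L_2(\R,\mu^\phi_\theta(s))=\mathfrak S^\phi_\theta(s)$ for every $s$. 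It then suffices to show that for every compact $K\subset\R$ the function $s\to\int_K\xi(s|E)\,d\mu^\phi_\theta(s|E)$ is $\nu_0$-measurable.

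The key identification is that, for every $s=(m,p)\in S$, one has $\xi(s|E)=F(s,E)$ for $\mu^\phi_\theta(s)$-a.e.\ $E$, where
\[
F(s,E)=\int_0^\infty \mathcal J^\phi_\theta(s,E|r)\psi(r)\,dr .
\]
Indeed, this is exactly the content of~(\ref{Ukappa}) and~(\ref{Ukappatheta}) combined with the definitions~(\ref{Jphitheta}) and~(\ref{Utheta12}): the integral kernel is $u^{|m+\phi|}(E|r)$ on $S\setminus A^\phi$ and $u^{m+\phi}_{\theta(s)}(E|r)$ on $A^\phi$. Since $\psi\in L_2^c(\R_+)$, the integral extends only over a compact subset of $\R_+$ and is finite.

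Next I would establish that $F$ is a Borel function on $S\times\R$. As $S\times\R$ is the countable disjoint union of the clopen slices $\{m\}\times\R\times\R$, it is enough to treat each fixed $m$. For $m$ with $|m+\phi|\geq1$, the value $F((m,p),E)=\int_0^\infty u^{|m+\phi|}(E|r)\psi(r)\,dr$ is independent of $p$, and by the joint continuity of $(E,r)\to u^{|m+\phi|}(E|r)$ (clear from~(\ref{ukappa}) and~(\ref{Xkappa})) together with dominated convergence it is continuous in $E$, hence Borel. For $m$ with $|m+\phi|<1$, I would first show that $\Psi_m(\vartheta,E):=\int_0^\infty u^{m+\phi}_\vartheta(E|r)\psi(r)\,dr$ is continuous on $\R^2$; this follows by dominated convergence from the joint continuity of $(\vartheta,E,r)\to u^{m+\phi}_\vartheta(E|r)$, which holds by~(\ref{wkappa}) and~(\ref{ukappavartheta}), and at $m+\phi=0$ by~(\ref{w(z)}). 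Since $F((m,p),E)=\Psi_m(\theta(m,p),E)$ and $\theta$ is Borel on $A^\phi$, the restriction of $F$ to this slice is the composition of the Borel map $(p,E)\to(\theta(m,p),E)$ with the continuous map $\Psi_m$, hence Borel.

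Finally, I would conclude using the machinery of Section~\ref{s9}. Fix a compact $K\subset\R$ and set $f=\chi_K F$. Then $f$ is Borel and hence $M^\phi_\theta$-measurable, where $M^\phi_\theta=(\mathcal B_\R)\mbox{-}\!\int^\oplus\mu^\phi_\theta(s)\,d\nu_0(s)$ by Lemma~\ref{lmutheta12}. For $\nu_0$-a.e.\ $s$, the function $E\to f(s,E)$ coincides $\mu^\phi_\theta(s)$-a.e.\ with $\chi_K\,\xi(s)$; since $\xi(s)\in L_2(\R,\mu^\phi_\theta(s))$ and $\mu^\phi_\theta(s|K)<\infty$, this is a product of two $\mu^\phi_\theta(s)$-square-integrable functions and is therefore $\mu^\phi_\theta(s)$-integrable. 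Corollary~\ref{cor_fubini} then yields that $s\to\int f(s,E)\,d\mu^\phi_\theta(s|E)=\int_K\xi(s|E)\,d\mu^\phi_\theta(s|E)$ is $\nu_0$-measurable, and the final assertion of Proposition~\ref{propmm} gives that $\xi$ is a $\nu_0$-measurable section of $\mathfrak S^\phi_\theta$. I expect the main technical obstacle to be the Borel measurability of $F$ in the third paragraph, namely reducing it to the joint continuity of the generalized eigenfunctions in $(\vartheta,E,r)$ and correctly handling the Borel dependence on $\theta$; the remaining steps are routine applications of the direct-integral-of-measures results.
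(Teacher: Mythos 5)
Your proof is correct, and it shares the paper's skeleton---both reduce the claim via the final assertion of Proposition~\ref{propmm} applied to the class of compact subsets of $\R$, so that everything comes down to the $\nu_0$-measurability of $s\mapsto\int_K(\mathcal U^\phi_{\theta}(s)\psi)(E)\,d\mu^\phi_{\theta}(s|E)$---but you resolve that step by a different mechanism. The paper uses~(\ref{ukappavartheta}) to write this quantity on $A^\phi$ as $\cos(\theta(s)-\vartheta_\kappa)\int_K F^{(1)}_\kappa\,d\mathcal V_{\kappa,\theta(s)}+\sin(\theta(s)-\vartheta_\kappa)\int_K F^{(2)}_\kappa\,d\mathcal V_{\kappa,\theta(s)}$ with continuous, $\vartheta$-independent $F^{(1)}_\kappa,F^{(2)}_\kappa$, so that the only nontrivial dependence on $\theta(s)$ sits in the measure $\mathcal V_{\kappa,\vartheta}$; that dependence is then handled by Lemma~\ref{l_borel} in its full strength (Borel dependence of $\vartheta\mapsto\int F\,d\mathcal V_{\kappa,\vartheta}$ for bounded Borel compactly supported $F$), followed by composition with the Borel function $\theta$. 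You instead establish joint Borel measurability of the kernel integral $F(s,E)=\int_0^\infty\mathcal J^\phi_\theta(s,E|r)\psi(r)\,dr$ on $S\times\R$ (slice by slice in $m$, via joint continuity of the generalized eigenfunctions and composition with $\theta$), and then let the direct-integral-of-measures machinery absorb the $s$-dependence of the measure: Lemma~\ref{lmutheta12} makes $\chi_K F$ an $M^\phi_{\theta}$-measurable function, and Corollary~\ref{cor_fubini} yields the $\nu_0$-measurability of the partial integrals. This is legitimate and non-circular, since Lemma~\ref{lmutheta12} and all of Sec.~\ref{s9} precede Lemma~\ref{l_utheta} in the paper's logical order; note that your route still relies on Lemma~\ref{l_borel}, but only through Lemma~\ref{c_borel}, i.e., for indicator functions of compact sets. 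What each approach buys: the paper's argument is shorter and purely one-dimensional in $\vartheta$ once Lemma~\ref{l_borel} is granted; yours requires the extra joint-continuity bookkeeping for the kernels, but it demands strictly weaker input from the one-dimensional spectral theory, with the abstract Fubini-type result doing the work that Lemma~\ref{l_borel} does in the paper.
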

\begin{proof}
As follows from Proposition~\ref{propmm} applied to the set $\mathcal K$ of all compact subsets of $\R$, it suffices to show that
\begin{equation}\nonumber
I_{K}(s)= \int_{K} (\mathcal U^\phi_{\theta}(s)\psi)(E)\,d \mu^\phi_{\theta}(s|E)
\end{equation}
is a Borel function on $S$ for any compact set $K\subset \R$. Given $-1<\kappa<1$, let the function $f^\kappa_K$ on $\R$ be defined by the relation
\[
f^\kappa_K(\vartheta) = \cos(\vartheta-\vartheta_\kappa)\int_K F^{(1)}_\kappa(E)\,d\mathcal V_{\kappa,\vartheta}(E) + \sin(\vartheta-\vartheta_\kappa) \int_K F^{(2)}_\kappa(E)\,d\mathcal V_{\kappa,\vartheta}(E),
\]
where $\vartheta_\kappa$ is defined by~(\ref{varthetakappa}) and the continuous functions $F^{(1)}_\kappa$ and $F^{(2)}_\kappa$ on $\R$ are given by
\[
F^{(1)}_\kappa(E) = \int_0^\infty u^\kappa(E|r)\psi(r)\,dr,\quad F^{(2)}_\kappa(E) = \int_0^\infty w^\kappa(E|r)\psi(r).
\]
It follows from~(\ref{mutheta12}), (\ref{ukappavartheta}), (\ref{Ukappatheta}), and~(\ref{Utheta12}) that $I_K(s) = f^{m+\phi}_K(\theta(s))$ for every $s=(m,p)\in A^\phi$. On the other hand, $p\to I_K(m,p)$ is a constant function on $\R$ for every $m$ satisfying $|m+\phi|\geq 1$. Since $f^\kappa_K$ is a Borel function on $\R$ by Lemma~\ref{l_borel},
we conclude that $I_K$ is a Borel function on $S$.
\end{proof}

In what follows, we set $\mathfrak h=L_2(\R_+)$.

\begin{corollary}\label{c_utheta}
Let $\phi\in \R$ and $\theta$ be a Borel real function on $A^\phi$. Then $\mathcal U^\phi_{\theta}$ is a $\nu_0$-measurable family of operators from $\mathcal I_{\,\mathfrak h,\nu_0}$ to $\mathfrak S^\phi_{\theta}$.
\end{corollary}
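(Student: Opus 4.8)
The plan is to unwind the definition of a $\nu_0$-measurable family of operators from the constant field $\mathcal I_{\mathfrak h,\nu_0}$ to $\mathfrak S^\phi_\theta$ (Appendix~\ref{app1}) and reduce everything to Lemma~\ref{l_utheta}. By that definition it suffices to show that $s\to \mathcal U^\phi_\theta(s)\xi(s)$ is a $\nu_0$-measurable section of $\mathfrak S^\phi_\theta$ for every $\nu_0$-measurable section $\xi$ of $\mathcal I_{\mathfrak h,\nu_0}$, i.e., for every $\mathfrak h$-valued $\nu_0$-measurable map $\xi$. The two facts I intend to exploit are that each $\mathcal U^\phi_\theta(s)$ is unitary (Propositions~\ref{leig1} and~\ref{leig2}), hence isometric and of norm one, and that Lemma~\ref{l_utheta} already supplies the measurability of $s\to \mathcal U^\phi_\theta(s)\psi$ for the dense class $\psi\in L_2^c(\R_+)$.

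First I would fix a countable set $\{\psi_j\}_{j=1}^\infty\subset L_2^c(\R_+)$ that is dense in $\mathfrak h$; this is possible because $L_2^c(\R_+)$ is dense in the separable space $\mathfrak h=L_2(\R_+)$. By Lemma~\ref{l_utheta}, $s\to \mathcal U^\phi_\theta(s)\psi_j$ is a $\nu_0$-measurable section of $\mathfrak S^\phi_\theta$ for every $j$. Next, given an arbitrary $\nu_0$-measurable $\mathfrak h$-valued map $\xi$, I would approximate it by $\nu_0$-measurable simple sections $\xi_n(s)=\sum_k \chi_{A_{n,k}}(s)\,\psi_{j(n,k)}$, where the $A_{n,k}$ are $\nu_0$-measurable sets, chosen so that $\|\xi_n(s)-\xi(s)\|\to 0$ for $\nu_0$-a.e. $s$ (such an approximation exists by density of $\{\psi_j\}$ together with the strong measurability of $\xi$). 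For each $n$ the section $s\to\mathcal U^\phi_\theta(s)\xi_n(s)=\sum_k \chi_{A_{n,k}}(s)\,\mathcal U^\phi_\theta(s)\psi_{j(n,k)}$ is $\nu_0$-measurable, being a combination of the measurable sections from the previous step with $\nu_0$-measurable scalar coefficients. Finally, since $\mathcal U^\phi_\theta(s)$ is isometric, $\|\mathcal U^\phi_\theta(s)\xi_n(s)-\mathcal U^\phi_\theta(s)\xi(s)\|=\|\xi_n(s)-\xi(s)\|\to 0$ for $\nu_0$-a.e. $s$, so $s\to\mathcal U^\phi_\theta(s)\xi(s)$ is the $\nu_0$-a.e. pointwise-norm limit of $\nu_0$-measurable sections and is therefore itself a $\nu_0$-measurable section of $\mathfrak S^\phi_\theta$.

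The only genuinely delicate point is the passage from the fixed vectors $\psi_j$ handled by Lemma~\ref{l_utheta} to arbitrary measurable sections $\xi$; everything hinges on the uniform bound $\|\mathcal U^\phi_\theta(s)\|=1$ and on the stability of measurable sections under $\nu_0$-a.e. pointwise-norm limits. If Appendix~\ref{app1} already records the criterion that a uniformly bounded family is measurable as soon as $s\to\mathcal U^\phi_\theta(s)\xi_j(s)$ is measurable for the members $\xi_j$ of a fundamental sequence of the source field, then the proof collapses: the constant sections $s\to\psi_j$ form such a fundamental sequence for $\mathcal I_{\mathfrak h,\nu_0}$ because $\{\psi_j\}$ is total in $\mathfrak h$, and the conclusion is immediate from Lemma~\ref{l_utheta}. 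I would phrase the argument so that it remains valid under either formulation of the definition.
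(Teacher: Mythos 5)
Your proposal is correct, and its closing paragraph is precisely the paper's proof: the paper obtains the corollary by combining Lemma~\ref{l_utheta} with statement~2 of Lemma~\ref{l_BBB}, the $\nu_0$-measurable basis of $\mathcal I_{\mathfrak h,\nu_0}$ being the constant sections $s\to\psi_j$ for a countable set $\{\psi_j\}\subset L_2^c(\R_+)$ dense in $\mathfrak h$. The bulk of your argument, however, proves more than is needed and starts from a mis-stated definition: in this paper a family of operators is $\nu_0$-measurable when there exist sequences of measurable sections whose pairs span dense subspaces of the graphs, \emph{not} when it maps every measurable section to a measurable section; the latter property is what the paper calls weak measurability, which for general closed operators is strictly weaker (the appendix cites a counterexample). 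For a family of unitaries the two notions do coincide, but the bridge is exactly statement~2 of Lemma~\ref{l_BBB}, so your approximation-by-simple-sections and isometry argument, while sound, must still be routed through that criterion — as you anticipate in your final paragraph. Once that criterion is invoked, applying it directly to the constant sections $s\to\psi_j$ together with Lemma~\ref{l_utheta} finishes the proof in one line, and the middle portion of your argument becomes redundant.
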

\begin{proof}
The statement follows immediately from Lemma~\ref{l_utheta} and Lemma~\ref{l_BBB}.
\end{proof}

Given $\phi\in \R$ and a Borel real function $\theta$ on $A^\phi$, we define the linear operator $U^\phi_{\theta}$ from $L_2(S,\mathfrak h,\nu_0)$ to $\int^\oplus \mathfrak S^\phi_{\theta}(s)\,d\nu_0(s)$ by setting
\begin{equation}\label{Uphi}
U^\phi_{\theta} = \int^\oplus \mathcal U^\phi_{\theta}(s)\,d\nu_0(s).
\end{equation}
Since $\mathcal U^\phi_{\theta}(s)$ is a unitary operator from $\mathfrak h$ to $\mathfrak S^\phi_{\theta}(s)$ for $\nu_0$-a.e. $s$, it follows from Corollary~\ref{c_utheta} and Proposition~\ref{p_di2} that $U^\phi_{\theta}$ is a unitary operator.
If $f$ is an $M^\phi_{\theta}$-measurable complex function, we denote by $f|^\phi_{\theta}$ the $\nu_0$-equivalence class such that $f|^\phi_{\theta}(s)$ is the $\mu^\phi_{\theta}(s)$-equivalence class of the map $E\to f(s,E)$ for $\nu_0$-a.e. $s$. In view of Lemma~\ref{lmutheta12}, all conditions of Proposition~\ref{puo} are fulfilled for $\Sigma=\mathcal B_\R$, $\nu=\nu_0$, $\mu=\mu^\phi_{\theta}$, $M=M^\phi_{\theta}$, $\mathfrak S = \mathfrak S^\phi_{\theta}$, and $\hat f = f|^\phi_{\theta}$. It follows from statement~2 of Proposition~\ref{puo} that the correspondence $f\to f|^\phi_{\theta}$ induces a unitary operator $Q^\phi_{\theta}$ from $L_2(S\times\R,M^\phi_{\theta})$ to $\int^\oplus \mathfrak S^\phi_{\theta}(s)\,d\nu_0(s)$.

Note that the uniqueness of $W$ in Proposition~\ref{propW} is ensured by the density of $L_2^c(\R^3)$ in $L_2(R^3)$. Hence, Proposition~\ref{propW} follows from the next lemma.

\begin{lemma}\label{lemmaW}
Let $\phi\in \R$, $\theta$ be a Borel real function on $A^\phi$, and $V$ be as in Lemma~$\mathrm{\ref{l_unitary}}$. Then the operator $W = (Q^\phi_{\theta})^{-1} U^\phi_{\theta} V$ satisfies the conditions of Proposition~$\mathrm{\ref{propW}}$.
\end{lemma}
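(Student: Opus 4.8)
The plan is to read the unitarity of $W$ off its construction and then to verify the integral formula~(\ref{WWtheta12}) first on the dense set of classes $[\Phi]_\Lambda$ with $\Phi\in C_0^\infty(\mathscr O)$ and afterwards on all of $L_2^c(\R^3)$ by approximation. For unitarity I would simply observe that $W$ is the composition
\[
L_2(\R^3)\xrightarrow{\ V\ } L_2(S,\mathfrak h,\nu_0)\xrightarrow{\ U^\phi_\theta\ }\int^\oplus\mathfrak S^\phi_\theta(s)\,d\nu_0(s)\xrightarrow{\ (Q^\phi_\theta)^{-1}\ } L_2(S\times\R,M^\phi_\theta),
\]
where $V$ is unitary by Lemma~\ref{l_unitary}, $U^\phi_\theta$ is unitary by Corollary~\ref{c_utheta} and Proposition~\ref{p_di2} (as noted just before~(\ref{Uphi})), and $Q^\phi_\theta$ is unitary by statement~2 of Proposition~\ref{puo}. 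Hence $W$ is unitary, as required.

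For the formula, fix $\Phi\in C_0^\infty(\mathscr O)$ and set $\Psi=[\Phi]_\Lambda$. Since $Q^\phi_\theta W=U^\phi_\theta V$ and $Q^\phi_\theta$ sends a function to its fibrewise $\mu^\phi_\theta(s)$-class, it suffices to identify the section $U^\phi_\theta V\Psi$. By~(\ref{def_v}) we have $(V\Psi)(s)=[\tilde\Phi(s)]_{\lambda_+}$ with $\tilde\Phi(s)\in C_0^\infty(\R_+)\subset L_2^c(\R_+)$, so Propositions~\ref{leig1} and~\ref{leig2}, combined with~(\ref{Utheta12}), (\ref{Jphitheta}), and~(\ref{mutheta12}), give, for $\mu^\phi_\theta(s)$-a.e.\ $E$,
\[
\bigl(\mathcal U^\phi_\theta(s)[\tilde\Phi(s)]_{\lambda_+}\bigr)(E)=\int_0^\infty\mathcal J^\phi_\theta(s,E|r)\,\tilde\Phi(s|r)\,dr.
\]
The remaining computation is routine: inserting~(\ref{Wtheta12}), passing to cylindrical coordinates in $\int\overline{\mathcal W^\phi_\theta(s,E|x)}\Phi(x)\,dx$, and using that $\mathcal J^\phi_\theta(s,E)$ is real together with the definition~(\ref{tilde}) of $\tilde\Phi$, I would show that this integral equals exactly the right-hand side above. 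Thus the fibrewise $\mu^\phi_\theta(s)$-classes of $E\mapsto(W\Psi)(s,E)$ and of $E\mapsto\int\overline{\mathcal W^\phi_\theta(s,E|x)}\Phi(x)\,dx$ coincide for $\nu_0$-a.e.\ $s$, and by the characterization of $M^\phi_\theta$-null sets in Lemma~\ref{cor_prod} this is precisely the assertion that~(\ref{WWtheta12}) holds $M^\phi_\theta$-a.e.

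To pass to a general $\Psi\in L_2^c(\R^3)$, supported in a compact set $K$, I would choose $\Phi_n\in C_0^\infty(\mathscr O)$ supported in a fixed compact neighbourhood $K''$ of $K$ with $[\Phi_n]_\Lambda\to\Psi$ in $L_2(\R^3)$ (possible because the excluded $x_3$-axis is $\Lambda$-null). Writing $g_\Upsilon(s,E)=\int\overline{\mathcal W^\phi_\theta(s,E|x)}\Upsilon(x)\,dx$, the previous step gives $W[\Phi_n]_\Lambda=g_{\Phi_n}$ in $L_2(S\times\R,M^\phi_\theta)$, so by continuity of $W$ the functions $g_{\Phi_n}$ converge to $W\Psi$ in $L_2(M^\phi_\theta)$, hence along a subsequence $M^\phi_\theta$-a.e. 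On the other hand, since every $\mathcal W^\phi_\theta(s,E)$ is locally square-integrable (condition~(a)) and all $\Phi_n-\Psi$ are supported in $K''$, the Cauchy--Schwarz bound
\[
|g_{\Phi_n}(s,E)-g_\Psi(s,E)|\le\|\mathcal W^\phi_\theta(s,E)\|_{L_2(K'')}\,\|\Phi_n-\Psi\|_{L_2(\R^3)}
\]
shows that $g_{\Phi_n}(s,E)\to g_\Psi(s,E)$ for \emph{every} $(s,E)$. Comparing the two limits yields $(W\Psi)(s,E)=g_\Psi(s,E)$ for $M^\phi_\theta$-a.e.\ $(s,E)$, which is~(\ref{WWtheta12}).

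The main obstacle is this last paragraph. One has to combine the per-fibre almost-everywhere identities coming from the one-dimensional expansions into a genuine $M^\phi_\theta$-almost-everywhere statement (the role of Lemma~\ref{cor_prod}), and then control the limit in the extension step, reconciling the pointwise bound from local square-integrability with the $L_2(M^\phi_\theta)$-convergence furnished by the unitarity of $W$; the uniform choice of the compact support $K''$ is exactly what makes the Cauchy--Schwarz estimate independent of $n$.
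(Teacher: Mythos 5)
Your overall route is the same as the paper's: unitarity of $W$ by composition, identification of $U^\phi_\theta V[\Phi]_\Lambda$ fibrewise via the one-dimensional expansions of Propositions~\ref{leig1} and~\ref{leig2} after passing to cylindrical coordinates, and extension to $L_2^c(\R^3)$ by approximating with $\Phi_n\in C_0^\infty(\mathscr O)$ of uniformly compact support and reconciling $L_2(M^\phi_\theta)$-convergence with the pointwise convergence supplied by local square-integrability of $\mathcal W^\phi_\theta(s,E)$. However, there is a genuine gap at the pivotal step where you pass from fibrewise equality to an $M^\phi_\theta$-a.e.\ statement. Both Lemma~\ref{cor_prod} and statement~1 of Proposition~\ref{puo} have a joint-measurability hypothesis: Lemma~\ref{cor_prod} characterizes $M$-null sets only among \emph{$M$-measurable} sets, and statement~1 of Proposition~\ref{puo} compares only \emph{$M$-measurable} functions (indeed, the map $f\to\hat f=f|^\phi_\theta$ is defined only for such $f$). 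You apply these to the disagreement set between $(W\Psi)(s,E)$ and $h_\Psi(s,E)=\int\overline{\mathcal W^\phi_\theta(s,E|x)}\Phi(x)\,dx$, but you never verify that $h_\Psi$ is $M^\phi_\theta$-measurable as a function on $S\times\R$. Without that, the set where the two functions differ need not be $M^\phi_\theta$-measurable, fibrewise nullity of its sections does not imply it is contained in an $M^\phi_\theta$-null set, and the conclusion ``(\ref{WWtheta12}) holds $M^\phi_\theta$-a.e.'' is not justified.

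This measurability is not automatic, and the paper devotes the entire first paragraph of its proof to it. The subtlety is that on $A^\phi$ the eigenfunctions depend on $s$ through the merely Borel function $\theta$, so one cannot argue by continuity in $(s,E)$ alone. The paper's fix is to introduce, for each $m$ and $\kappa=m+\phi$, the jointly continuous function $f^{\kappa}_{m,\Psi}(E,\vartheta)=\int F^\kappa_m(E,\vartheta,x)\Psi(x)\,dx$ and to observe that $h_\Psi(s,E)=f^{m+\phi}_{m,\Psi}(E,\theta(s))$ for $s=(m,p)\in A^\phi$, which is Borel as a composition of a continuous function with a Borel one, while off $A^\phi$ the function $(p,E)\to h_\Psi(m,p\,;E)$ is continuous; together these give that $h_\Psi$ is Borel, hence $M^\phi_\theta$-measurable. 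Once you insert this argument before invoking Lemma~\ref{cor_prod} (or Proposition~\ref{puo}), your proof closes: in the smooth case the comparison of fibrewise classes becomes legitimate, and in the extension step the needed measurability of $g_\Psi$ follows for free since it is an everywhere-pointwise limit of the Borel functions $g_{\Phi_n}$.
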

\begin{proof}
Let $\Phi\in C_0^\infty(\mathscr O)$ and $\Psi = [\Phi]_\Lambda$. For $s\in S$ and $E\in\R$, let $h_\Psi(s,E)$ denote the right-hand side of~(\ref{WWtheta12}). For $-1<\kappa<1$ and $m\in\Z$, we define the continuous function $F_m^\kappa$ on $\R\times\R\times(\mathscr O)$ by the relation
\[
F_m^\kappa(E,\vartheta,x) = \frac{e^{-ipx_3}}{2\pi\sqrt{r_x}}\left(\frac{x_1-ix_2}{r_x}\right)^{m} u^{\kappa}_{\vartheta}(E|r_x)
\]
for all $E,\vartheta\in\R$ and $x=(x_1,x_2,x_3)\in \mathscr O$, where $r_x=\sqrt{x_1^2+x_2^2}$. Set $f^\kappa_{m,\Psi}(E,\vartheta) = \int F^\kappa_m(E,\vartheta,x)\Psi(x)\,dx$.
By~(\ref{Wtheta12}), (\ref{Jphitheta}), and~(\ref{WWtheta12}), we have $h_\Psi(s,E)=f^{m+\phi}_{m,\Psi}(E,\theta(s))$ for all $s=(m,p)\in A^\phi$ and $E\in\R$. If $|m+\phi|>1$, then (\ref{Wtheta12}), (\ref{Jphitheta}), and (\ref{WWtheta12}) imply that $(p,E)\to h_\Psi(m,p\,;E)$ is a continuous function on $\R^2$. Since $f^\kappa_{m,\Psi}$ is a continuous function on $\R^2$ for all $m\in \Z$ and $-1<\kappa<1$, it follows that $(p,E)\to h_\Psi(m,p\,;E)$ is a Borel function on $\R^2$ for every $m\in\Z$. Thus, $h_\Psi$ is a Borel (and, hence, $M^\phi_{\theta}$-measurable) function on $S\times\R$.

In view of~(\ref{Wtheta12}), passing to the polar coordinates in the $(x_1,x_2)$-plane in the integral in~(\ref{WWtheta12}) yields
\begin{equation}\label{h_Psi}
h_\Psi(s,E) = \int_0^\infty \mathcal J^\phi_\theta(s,E|r) \tilde\Phi(s|r)\,dr
\end{equation}
for all $s\in S$ and $E\in\R$, where the $C_0^\infty(\R_+)$-valued map $\tilde\Phi$ on $S$ is given by~(\ref{tilde}). It follows from~(\ref{Jphitheta}), (\ref{mutheta12}), (\ref{Utheta12}), Proposition~\ref{leig1}, and Proposition~\ref{leig2} that, for every $s\in S$ and $\psi\in L_2^c(\R_+)$, the equality
\begin{equation}\label{(s)psi}
(\mathcal U^\phi_{\theta}(s)\psi)(E)=\int_0^\infty \mathcal J^\phi_\theta(s,E|r) \psi(r)\,dr
\end{equation}
holds for $\mu^\phi_{\theta}(s)$-a.e. $E$. In view of~(\ref{def_v}), equalities~(\ref{Uphi}), (\ref{h_Psi}) and~(\ref{(s)psi}) imply that the function $E\to h_\Psi(s,E)$ is $\mu^\phi_{\theta}(s)$-equivalent to $(U^\phi_{\theta}V\Psi)(s)$ for $\nu_0$-a.e. $s$, whence $U^\phi_{\theta}V\Psi = h_\Psi|^\phi_{\theta}$. Since $U^\phi_{\theta}V\Psi = Q^\phi_{\theta}W\Psi = (W\Psi)|^\phi_{\theta}$, it follows from statement~1 of Proposition~\ref{puo} that $W\Psi$ is $M^\phi_{\theta}$-equivalent to $h_\Psi$. Thus, (\ref{WWtheta12}) holds for $\Psi = [\Phi]_\Lambda$ with $\Phi\in C_0^\infty(\mathscr O)$.

For a general $\Psi\in L_2^c(\R^3)$, we choose a sequence $\Phi_n\in C_0^\infty(\mathscr O)$ such that $\Psi_n = [\Phi_n]_\Lambda$ converge to $\Psi$ in $L_2(\R^3)$ and the supports of all $\Phi_n$ are contained in a fixed compact set for all $n$. Since $\mathcal W^\phi_{\theta}(s,E)$ is locally square-integrable, we have $h_{\Psi_n}(s,E)\to h_\Psi(s,E)$ for all $s\in S$ and $E\in\R$. As $W\Psi_n\to W\Psi$ in $L_2(S\times\R,M^\phi_{\theta})$ by continuity of $W$, it follows that $W\Psi$ is $M^\phi_{\theta}$-equivalent to $h_\Psi$.
\end{proof}

Lemma~\ref{lemmaW} and the uniqueness statement of Proposition~\ref{propW} imply that
\begin{equation}\label{Wthetaphi12}
W^\phi_{\theta} = (Q^\phi_{\theta})^{-1} U^\phi_{\theta} V.
\end{equation}

Theorem~\ref{main_theorem} follows from Proposition~\ref{t2} and the next statement.

\begin{proposition}\label{p_final}
Let $\phi\in\R$, $\theta$ be a Borel real function on $A^\phi$, and $R^\phi_{\theta}$ be defined by~$(\ref{301})$. Then we have $H^\phi_{\theta} = R^\phi_{\theta}$.
\end{proposition}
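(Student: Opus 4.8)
The plan is to reduce the operator identity $H^\phi_\theta=R^\phi_\theta$ to a fibrewise statement about one-dimensional Schr\"odinger operators by inserting the factorization of $W^\phi_\theta$ recorded in~(\ref{Wthetaphi12}). Writing $W^\phi_\theta=(Q^\phi_\theta)^{-1}U^\phi_\theta V$ and using the definitions~(\ref{Htheta12}) and~(\ref{mathfrakf}) of $H^\phi_\theta$, I would compute
\[
H^\phi_\theta=(W^\phi_\theta)^{-1}\mathcal T^M_{\mathfrak f}W^\phi_\theta=V^{-1}(U^\phi_\theta)^{-1}Q^\phi_\theta\,\mathcal T^M_{\mathfrak f}\,(Q^\phi_\theta)^{-1}U^\phi_\theta\,V,
\]
where $M=M^\phi_\theta$. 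Comparing this with the definition~(\ref{301}) of $R^\phi_\theta$ and cancelling the unitary $V$ on both sides, the claim becomes the equality
\[
(U^\phi_\theta)^{-1}Q^\phi_\theta\,\mathcal T^M_{\mathfrak f}\,(Q^\phi_\theta)^{-1}U^\phi_\theta=\int^\oplus \mathcal H^\phi_\theta(s)\,d\nu_0(s).
\]

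The next step is to diagonalize the middle operator. By Lemma~\ref{lmutheta12} we have $M^\phi_\theta=(\mathcal B_\R)\mbox{-}\!\int^\oplus\mu^\phi_\theta(s)\,d\nu_0(s)$, so all the hypotheses of Proposition~\ref{puo} hold for $\Sigma=\mathcal B_\R$, $\nu=\nu_0$, $\mu=\mu^\phi_\theta$, $M=M^\phi_\theta$, $\mathfrak S=\mathfrak S^\phi_\theta$, with $\widehat{\,\cdot\,}=(\,\cdot\,)|^\phi_\theta$ and $Q=Q^\phi_\theta$. Applying statement~3 of Proposition~\ref{puo} to $g=\mathfrak f$ gives
\[
Q^\phi_\theta\,\mathcal T^M_{\mathfrak f}\,(Q^\phi_\theta)^{-1}=\int^\oplus \mathcal T^{\mu^\phi_\theta(s)}_{\mathfrak f|^\phi_\theta(s)}\,d\nu_0(s),
\]
where, for $s=(m,p)$, the class $\mathfrak f|^\phi_\theta(s)$ is that of the function $E\mapsto p^2+E$. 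Since $U^\phi_\theta=\int^\oplus \mathcal U^\phi_\theta(s)\,d\nu_0(s)$ by~(\ref{Uphi}) and the $\mathcal U^\phi_\theta(s)$ are unitary for $\nu_0$-a.e.\ $s$ (Corollary~\ref{c_utheta} and Proposition~\ref{p_di2}), I would invoke the composition rule for direct integrals of operators from Appendix~\ref{app1} to pull the conjugation inside the integral, obtaining
\[
(U^\phi_\theta)^{-1}\Bigl(\int^\oplus \mathcal T^{\mu^\phi_\theta(s)}_{\mathfrak f|^\phi_\theta(s)}\,d\nu_0(s)\Bigr)U^\phi_\theta=\int^\oplus \mathcal U^\phi_\theta(s)^{-1}\,\mathcal T^{\mu^\phi_\theta(s)}_{\mathfrak f|^\phi_\theta(s)}\,\mathcal U^\phi_\theta(s)\,d\nu_0(s).
\]

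It then remains to identify the integrand pointwise as $\mathcal H^\phi_\theta(s)$. Fix $s=(m,p)$ and recall that $\mathcal T^{\mu^\phi_\theta(s)}_{\mathfrak f|^\phi_\theta(s)}$ is multiplication by $E\mapsto p^2+E$, i.e.\ $\mathcal T^{\mu^\phi_\theta(s)}_\iota+p^2 1$. For $s\in S\setminus A^\phi$ (so $|m+\phi|\geq1$) we have $\mathcal U^\phi_\theta(s)=U_{m+\phi}$ and $\mu^\phi_\theta(s)=\mathcal V_{m+\phi}$ by~(\ref{Utheta12}) and~(\ref{mutheta12}), and Proposition~\ref{leig1} gives $U_{m+\phi}h_{m+\phi}U_{m+\phi}^{-1}=\mathcal T^{\mu^\phi_\theta(s)}_\iota$; since conjugation fixes the scalar operator $p^2 1_{\mathfrak h}$, we get $\mathcal U^\phi_\theta(s)^{-1}\mathcal T^{\mu^\phi_\theta(s)}_{\mathfrak f|^\phi_\theta(s)}\mathcal U^\phi_\theta(s)=h_{m+\phi}+p^2 1_{\mathfrak h}=\mathcal H^\phi_\theta(s)$ by~(\ref{mathcalHphi}) and~(\ref{htau12}). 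For $s\in A^\phi$ one argues identically, replacing Proposition~\ref{leig1} by Proposition~\ref{leig2}, which yields $U_{m+\phi,\theta(s)}h_{m+\phi,\theta(s)}U_{m+\phi,\theta(s)}^{-1}=\mathcal T^{\mu^\phi_\theta(s)}_\iota$ and hence $\mathcal H^\phi_\theta(s)=h_{m+\phi,\theta(s)}+p^2 1_{\mathfrak h}$. Substituting back completes the chain of equalities and gives $H^\phi_\theta=R^\phi_\theta$.

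I expect the only genuinely delicate point to be the direct-integral bookkeeping in the second paragraph: moving the conjugation by the (unbounded-target) factor $U^\phi_\theta$ through the direct integral of the \emph{unbounded} multiplication operators $\mathcal T^{\mu^\phi_\theta(s)}_{\mathfrak f|^\phi_\theta(s)}$, so that domains match on the nose. This is exactly the content guaranteed by Proposition~\ref{puo} together with the direct-integral calculus of Appendix~\ref{app1}, so once those are cited the remaining fibrewise computation is routine algebra driven by Propositions~\ref{leig1} and~\ref{leig2}.
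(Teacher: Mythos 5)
Your proposal is correct and follows essentially the same route as the paper's proof: both rest on the factorization $W^\phi_\theta=(Q^\phi_\theta)^{-1}U^\phi_\theta V$ from~(\ref{Wthetaphi12}), statement~3 of Proposition~\ref{puo} to diagonalize $\mathcal T^{M}_{\mathfrak f}$, the direct-integral calculus of Propositions~\ref{p_di1} and~\ref{p_di2} to move the conjugation by $U^\phi_\theta$ through the integral, and Propositions~\ref{leig1} and~\ref{leig2} for the fibrewise identity $\mathcal U^\phi_\theta(s)\mathcal H^\phi_\theta(s)\mathcal U^\phi_\theta(s)^{-1}=\mathcal T^{\mu^\phi_\theta(s)}_{\mathfrak f|^\phi_\theta(s)}$. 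The only difference is direction — the paper starts from the fibrewise identity and assembles upward to $R^\phi_\theta$, while you decompose $H^\phi_\theta$ downward — which is an inessential reorganization of the same argument.
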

\begin{proof}
For brevity, we set $\mu = \mu^\phi_{\theta}$, $\mathcal U = \mathcal U^\phi_{\theta}$, $U = U^\phi_{\theta}$, $Q=Q^\phi_{\theta}$, $M = M^\phi_{\theta}$, and $\mathfrak g= \mathfrak f|^\phi_{\theta}$. It follows from~(\ref{mutheta12}), (\ref{mathfrakf}), (\ref{mathcalHphi}), (\ref{htau12}), (\ref{Utheta12}), and Propositions~\ref{leig1} and~\ref{leig2} that
\[
\mathcal U(s)\mathcal H^\phi_{\theta}(s)\mathcal U(s)^{-1}=\mathcal T^{\mu(s)}_{\mathfrak  g(s)}
\]
for $\nu_0$-a.e. $s$. By Lemma~\ref{l_famsaext}, $\mathcal H^\phi_{\theta}$ is a $\nu_0$-measurable family of operators in $\mathfrak h$. Hence, it follows from~(\ref{301}), (\ref{Uphi}), Corollary~\ref{c_utheta}, statement~1 of Proposition~\ref{p_di2}, and Proposition~\ref{p_di1} that $s\to \mathcal T^{\mu(s)}_{\mathfrak  g(s)}$ is a $\nu_0$-measurable family of operators in $\mathfrak S^\phi_{\theta}$ and
\[
UV R^\phi_{\theta} (UV)^{-1} = \int^\oplus \mathcal T^{\mu(s)}_{\mathfrak  g(s)}\,d\nu_0(s).
\]
By statement~3 of Proposition~\ref{puo}, the operator in the right-hand side is equal to $Q\mathcal T^{M}_{\mathfrak f} Q^{-1}$. Equalities~(\ref{Htheta12}) and~(\ref{Wthetaphi12}) therefore imply that $H^\phi_{\theta} = R^\phi_{\theta}$.
\end{proof}

\section*{Acknowledgments}
The author is grateful to I.V.~Tyutin and B.L.~Voronov for useful discussions.

\appendix
\section{Measure theory}
\label{app0}

In this Appendix, we briefly describe the measure-theoretic concepts used in this paper. Our aim here is not only to fix the notation but also to formulate definitions in such a way as to provide a unified treatment of positive and spectral measures. In particular, in contrast to the standard approach (see, e.g., Refs.~\onlinecite{Bauer,Cohn}), we define positive measures as functions on $\delta$-rings rather than $\sigma$-algebras and do not allow them to take infinite values (see Sec.~\ref{measures_sec} below).

\subsection{Rings and algebras of sets}
Recall that a nonempty set of sets $\mathcal Q$ is called a ring of sets if $A\cup B\in \mathcal Q$ and $A\setminus B\in \mathcal Q$ for any $A,B\in \mathcal Q$. As $A\cap B = A\setminus(A\setminus B)$, every ring is closed under finite intersections. A ring $\mathcal Q$ is called a $\sigma$-ring (a $\delta$-ring) if it is closed under countable unions (resp., under countable intersections). For a set of sets $\mathcal Q$, we denote by $\sigma(\mathcal Q)$ ($\delta(\mathcal Q)$) the $\sigma$-ring (resp., $\delta$-ring) generated by $\mathcal Q$, i.e., the smallest $\sigma$-ring (resp., $\delta$-ring) containing $\mathcal Q$.
If $\mathcal Q$ is a $\delta$-ring, then $\sigma(\mathcal Q)$ is just the set of all countable unions of elements of $\mathcal Q$.
A $\sigma$-ring $\mathcal Q$ is said to be countably generated if there is a countable set $\tilde{\mathcal Q}\subset \mathcal Q$ such that $\mathcal Q = \sigma(\tilde{\mathcal Q})$.
A $\sigma$-ring $\mathcal Q$ is called a $\sigma$-algebra if it has the largest element with respect to inclusion.

The $\sigma$-ring generated by all open sets of a topological space $X$ is obviously a $\sigma$-algebra. It is called the Borel $\sigma$-algebra of $X$ and its elements are called Borel subsets of $X$.  A map $f$ is called a Borel map from $X$ to a
topological space $Y$ if $X$ is contained in the domain $D_f$ of $f$, $f(X)\subset Y$, and  $f^{-1}(A)\cap X$ is a Borel subset of $X$ for any Borel subset $A$ of $Y$.

\subsection{Measures}
\label{measures_sec}
Let $\mathfrak A$ be a topological Abelian group and $\nu$ be an $\mathfrak A$-valued map. Let $\mathcal N_\nu$ denote the set of all $N\in D_\nu$ with the property: if $N'\subset N$ and $N'\in D_\nu$, then $\nu(N') = 0$. We say that $\nu$ is an $\mathfrak A$-valued $\sigma$-additive function if $\nu(A) = \sum_{i\in I} \nu(A_i)$ for any $A\in D_\nu$ and any countable partition $A = \bigcup_{i\in I} A_i$ with $A_i\in D_\nu$. An $\mathfrak A$-valued $\sigma$-additive function is called an $\mathfrak A$-valued measure if its domain $D_\nu$ is a $\delta$-ring and the following completeness conditions are satisfied:
\begin{itemize}
\item[(a)] If $A\in \sigma(D_\nu)$
and the family $\{\nu(A_i)\}_{i\in I}$ is summable in $\mathfrak A$ for any countable partition $A = \bigcup_{i\in I} A_i$ with $A_i\in D_\nu$, then $A\in D_\nu$.

\item[(b)] If $N\in \mathcal N_\nu$ and $N'\subset N$, then $N'\in D_\nu$ (and, hence, $N'\in \mathcal N_\nu$).
\end{itemize}

Elements of $\mathcal N_\nu$ are called $\nu$-null sets and elements of $\sigma(D_\nu)$ are called $\nu$-measur\-able sets.
A measure $\nu$ is called $\sigma$-finite if $\sigma(D_\nu)$ is a $\sigma$-algebra. In this case, we denote by $\s_\nu$ the largest element of $\sigma(D_\nu)$ with respect to inclusion. A measure $\nu$ is called finite if it is $\sigma$-finite and $\s_\nu\in D_\nu$. A measure $\nu$ is called positive if it is $\R$-valued and $\nu(A)\geq 0$ for any $A\in D_\nu$.

\begin{remark}
\label{rem1}
In most expositions of measure theory (see, e.g., Refs~\onlinecite{Bauer,Cohn}), a positive measure is defined as a $\sigma$-additive function on a $\sigma$-algebra taking values in the extended real semi-axis $\bar \R_+=[0,\infty]$. Such an $\bar\R_+$-valued measure $\nu$ is called $\sigma$-finite if the largest set in $D_\nu$ is a countable union of elements of $D_\nu$ with a finite measure. It is called complete if every subset of an element of $D_\nu$ with zero measure also belongs to $D_\nu$.  Complete $\sigma$-finite $\bar\R_+$-valued measures can be naturally identified with positive $\sigma$-finite measures in our sense (note that a complete $\sigma$-finite $\bar\R_+$-valued measure $\nu$ restricted to the set of all elements of $D_\nu$ with a finite measure is a positive $\sigma$-finite measure in our sense). The advantage of our definition is that it does not involve the extended real axis and, therefore, makes it possible to treat positive measures in the same way as vector-valued measures.
\end{remark}

Given a $\sigma$-ring $\mathcal Q$, a measure $\nu$ is called $\mathcal Q$-compatible if $\mathcal Q\subset \sigma(D_\nu)$ and for every $A\in D_\nu$, there is an $N\in \mathcal N_\nu$ such that $A\cup N\in \mathcal Q$. In this case, $D_\nu$ consists of all sets of the form $B\cup N$, where $B\in \mathcal Q\cap D_\nu$ and $N\in \mathcal N_\nu$. If $X$ is a topological space and $\mathcal B$ is its Borel $\sigma$-algebra, then a $\mathcal B$-compatible measure is called a Borel measure on $X$.

\begin{remark}
Positive Borel measures on a topological space $X$ are usually defined as $\bar\R_+$-valued $\sigma$-additive functions on the Borel $\sigma$-algebra of $X$. If such a measure is $\sigma$-finite, then its completion corresponds (as described in Remark~\ref{rem1}) to a positive Borel measure on $X$ in our sense. Since we consider only complete measures (see condition~(b) above), we introduce the notion of $\mathcal Q$-compatibility to relate measure and topology.
\end{remark}

\begin{lemma}\label{compl}
Let $\mathfrak A$ be a sequentially complete topological Abelian group and $\nu$ be an $\mathfrak A$-valued $\sigma$-additive function such that $D_\nu$ is a $\delta$-ring. Then there is a unique $\sigma(D_\nu)$-compatible $\mathfrak A$-valued measure $\hat\nu$ such that $D_\nu\subset D_{\hat \nu}$ and $\hat\nu$ coincides with $\nu$ on $D_\nu$. If $\nu$ satisfies~$(\mathrm a)$, then $D_\nu = D_{\hat\nu}\cap \sigma(D_\nu)$.
\end{lemma}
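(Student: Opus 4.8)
The plan is to build $\hat\nu$ in two stages—first extend $\nu$ from the $\delta$-ring $D_\nu$ to the largest class of $\sigma(D_\nu)$-measurable sets on which $\sigma$-additivity already forces a value, and then complete the result by adjoining subsets of null sets—and afterwards to prove uniqueness by showing that any $\sigma(D_\nu)$-compatible extension is pinned down on this intermediate class and has exactly the same null sets. Throughout I would lean on two standing facts. First, in the sequentially complete group $\mathfrak A$ a summable family stays summable on every subfamily and may be summed in blocks (associativity of summable families); sequential completeness is what guarantees the relevant block-sums exist. Second, any $A\in\sigma(D_\nu)$ with $A\subset E$ for some $E\in D_\nu$ already lies in $D_\nu$, because $E\setminus A=\bigcap_i(E\setminus A_i)$ is a countable intersection of elements of the $\delta$-ring $D_\nu$, whence $A=E\setminus(E\setminus A)\in D_\nu$.

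For the first stage, let $\mathcal M$ consist of all $A\in\sigma(D_\nu)$ such that $\{\nu(A_i)\}_i$ is summable for every countable partition $A=\bigsqcup_i A_i$ with $A_i\in D_\nu$, and set $\hat\nu_0(A)=\sum_i\nu(A_i)$. I would first check independence of the partition by passing to a common refinement $\{A_i\cap B_j\}$ and invoking associativity of summable families together with the $\sigma$-additivity of $\nu$. The key structural property is that $\mathcal M$ is hereditary inside $\sigma(D_\nu)$: if $A\in\mathcal M$ and $C\in\sigma(D_\nu)$ with $C\subset A$, then any $D_\nu$-partition of $C$ extends to one of $A$, so its values form a subfamily of a summable family and $C\in\mathcal M$. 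From this it follows routinely that $\mathcal M$ is a $\delta$-ring ($D_\nu\subset\mathcal M\subset\sigma(D_\nu)$, closed under differences, finite unions and countable intersections), that $\hat\nu_0$ is $\sigma$-additive on it, and that $\hat\nu_0$ satisfies condition~(a) since $\sigma(\mathcal M)=\sigma(D_\nu)$ and every $D_\nu$-partition is in particular an $\mathcal M$-partition. Note already that membership $A\in\mathcal M$ is exactly the hypothesis of condition~(a) for $\nu$; hence $\mathcal M=D_\nu$ precisely when $\nu$ satisfies~(a), which will deliver the final assertion once $D_{\hat\nu}\cap\sigma(D_\nu)=\mathcal M$ is known.

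For the second stage I would put $\mathcal N^\ast=\{N: N\subset N_0\text{ for some }\hat\nu_0\text{-null }N_0\in\mathcal M\}$, set $D_{\hat\nu}=\{B\cup N:B\in\mathcal M,\ N\in\mathcal N^\ast\}$, and define $\hat\nu(B\cup N)=\hat\nu_0(B)$. Well-definedness holds because two representations differ by $\mathcal M$-sets contained in null sets, which carry $\hat\nu_0$-measure zero; that $D_{\hat\nu}$ is a $\delta$-ring and $\hat\nu$ is $\sigma$-additive are the standard completion computations. Condition~(b) is immediate from the definition of $\mathcal N^\ast$, and $\sigma(D_\nu)$-compatibility is clear, since for $A=B\cup N$ with $N\subset N_0$ one has $A\cup N_0=B\cup N_0\in\mathcal M\subset\sigma(D_\nu)$ with $N_0$ null. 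That $\hat\nu$ also satisfies~(a) I would defer and extract as a by-product of the uniqueness argument.

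Uniqueness is where the real work lies. Given any $\sigma(D_\nu)$-compatible measure $\hat\nu'$ agreeing with $\nu$ on $D_\nu$, the decisive claim is $\sigma(D_\nu)\cap D_{\hat\nu'}=\mathcal M$ with $\hat\nu'=\hat\nu_0$ there. The inclusion $\subset$ and the equality of values follow from $\sigma$-additivity: for $A\in\sigma(D_\nu)\cap D_{\hat\nu'}$ every $D_\nu$-partition is summable with $\hat\nu'$-value $\sum_i\nu(A_i)$, so $A\in\mathcal M$ and $\hat\nu'(A)=\hat\nu_0(A)$. The reverse inclusion $\mathcal M\subset D_{\hat\nu'}$ is the main obstacle, because to apply condition~(a) to $\hat\nu'$ I must control an \emph{arbitrary} partition $\{C_j\}$ of $A\in\mathcal M$ into elements of $D_{\hat\nu'}$, not merely $D_\nu$-partitions. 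Here I would use compatibility to write each $C_j=B_j\sqcup W_j$ with $B_j\in\sigma(D_\nu)$ and $W_j$ null, note that $\bigsqcup_j B_j=A\setminus W\in\sigma(D_\nu)$ lies in $\mathcal M$ by heredity, refine the $B_j$ into $D_\nu$-pieces to obtain a summable family, and re-block by associativity to conclude that $\{\hat\nu'(C_j)\}_j$ is summable; condition~(a) then yields $A\in D_{\hat\nu'}$. Applying this with $\hat\nu'=\hat\nu$ gives condition~(a) for $\hat\nu$ and the identity $D_{\hat\nu}\cap\sigma(D_\nu)=\mathcal M$. Finally, since $\hat\nu'=\hat\nu_0$ on $\mathcal M$, the $\hat\nu'$-null sets are exactly the subsets of $\mathcal M$-sets of $\hat\nu_0$-measure zero, so $\mathcal N_{\hat\nu'}=\mathcal N^\ast$; compatibility then forces $D_{\hat\nu'}=D_{\hat\nu}$ and $\hat\nu'=\hat\nu$, and the equality $D_\nu=D_{\hat\nu}\cap\sigma(D_\nu)$ under~(a) follows from $\mathcal M=D_\nu$ noted above.
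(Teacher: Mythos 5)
Your construction is sound in its main lines and is in fact more self-contained and more general than the paper's own proof, which treats only positive $\nu$ and appeals to the classical extension and completion theorems: the definition of $\mathcal M$, its heredity inside $\sigma(D_\nu)$, the completion step, and the comparison of two compatible measures are all correct (granting the standard facts about summable families in sequentially complete groups that you isolate at the start). The genuine gap is condition~(a) for $\hat\nu$, which you defer and then claim to extract by ``applying the uniqueness argument with $\hat\nu'=\hat\nu$''. That extraction fails twice over. First, it is circular: in your uniqueness argument the conclusion ``$A\in D_{\hat\nu'}$'' is obtained precisely by invoking condition~(a) for $\hat\nu'$, so the argument presupposes~(a) for whatever measure it is applied to and cannot produce~(a) for $\hat\nu$. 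What survives when you specialize to $\hat\nu'=\hat\nu$ is only the summability half: for $A\in\mathcal M$, every $D_{\hat\nu}$-partition of $A$ has summable values. But this verifies the \emph{hypothesis} of~(a) for sets already lying in $\mathcal M\subset D_{\hat\nu}$, where the implication asserted by~(a) is vacuous; condition~(a) has content only for sets not yet known to belong to $D_{\hat\nu}$. Second, the scope is wrong: condition~(a) for $\hat\nu$ quantifies over all $A\in\sigma(D_{\hat\nu})$, which is strictly larger than $\sigma(D_\nu)$ (its elements are unions $B\cup N$ with $B\in\sigma(D_\nu)$ and $N$ a subset of a null set), whereas your whole apparatus ($\mathcal M$, heredity, the partition manipulations) lives inside $\sigma(D_\nu)$. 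Until~(a) is proved, $\hat\nu$ has not been shown to be a measure in the paper's sense, so existence is incomplete, and the uniqueness argument cannot even be applied with $\hat\nu$ as one of its inputs.

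The hole can be closed with tools you already have. Note first that a countable union $N_0$ of $\hat\nu_0$-null sets of $\mathcal M$ is again an $\hat\nu_0$-null set of $\mathcal M$: every $D_\nu$-subset of $N_0$ splits into $D_\nu$-pieces contained in the individual null sets, so every $D_\nu$-partition of $N_0$ has all values equal to $0$, which is trivially summable. Now take $A\in\sigma(D_{\hat\nu})$ all of whose countable $D_{\hat\nu}$-partitions have summable values; writing $A$ as a countable union of sets $B_i\cup N_i$ with $B_i\in\mathcal M$ and $N_i$ contained in null sets $N_{0,i}\in\mathcal M$, you get $A=B\cup N$ with $B\in\sigma(D_\nu)$ and $N\subset N_0:=\bigcup_i N_{0,i}$, where $N_0\in\mathcal M$ is $\hat\nu_0$-null by the previous remark. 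For any $D_\nu$-partition $\{E_k\}_k$ of $A\setminus N_0=B\setminus N_0\in\sigma(D_\nu)$, the family $\{E_k\}_k\cup\{A\cap N_0\}$ is a countable $D_{\hat\nu}$-partition of $A$ (here $A\cap N_0\in\mathcal N^\ast\subset D_{\hat\nu}$ because $A\cap N_0\subset N_0$); by hypothesis its values are summable, and subfamily summability gives that $\{\nu(E_k)\}_k$ is summable. Hence $A\setminus N_0\in\mathcal M$ and $A=(A\setminus N_0)\cup(A\cap N_0)\in D_{\hat\nu}$, which is exactly condition~(a) for $\hat\nu$. With this inserted, your existence proof is complete and the rest of your argument goes through unchanged.
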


The measure $\hat \nu$ is called the completion of the $\sigma$-additive function $\nu$.

\begin{proof}
We give the proof only for positive $\nu$ because the lemma is used in this paper only in this case. The existence and uniqueness of $\hat\nu$ is then guaranteed by the well-known results on the extension of positive measures. Let $\nu$ satisfy~(a) and $A\in D_{\hat\nu}\cap\sigma(D_\nu)$. Let $\{A_i\}_{i\in I}$ be a countable partition of $A$ such that $A_i\in D_\nu$ for all $i\in I$. Since $\hat\nu(A_i)=\nu(A_i)$ for all $i\in I$, the $\sigma$-additivity of $\hat\nu$ implies that the family $\{\nu(A_i)\}_{i\in I}$ is summable. Condition~(a) hence ensures that $A\in D_\nu$ and, therefore, $D_\nu \supset D_{\hat\nu}\cap \sigma(D_\nu)$. As the opposite inclusion also obviously holds, the lemma is proved.
\end{proof}

Let $\nu$ be an $\mathfrak A$-valued measure and $A$ be a $\nu$-measurable set. The restriction $\nu|_A$ of $\nu$ to $A$ is, by definition, the restriction of the map $\nu$ to the domain $D_{\nu|_A}$ consisting of all elements of $D_\nu$ that are contained in $A$. Clearly, $\nu|_A$ is a $\sigma$-finite measure for any $\nu$-measurable set $A$ and $\s_{\nu|_A} = A$.

Let $\nu_1$ and $\nu_2$ be positive measures and $\mathcal Q$ be the $\sigma$-ring generated by all sets of the form $A_1\times A_2$, where $A_1\in D_{\nu_1}$ and $A_2\in D_{\nu_2}$. Then there is a unique positive $\mathcal Q$-compatible measure $\nu$ such that $\nu(A_1\times A_2)=\nu_1(A_1)\nu_2(A_2)$ for every $A_1\in D_{\nu_1}$ and $A_2\in D_{\nu_2}$. This measure is called the product of $\nu_1$ and $\nu_2$ and is denoted by $\nu_1\times\nu_2$.

\subsection{Standard measures}
\label{s_standard}

Let $\mathfrak A$ be a topological Abelian group.
An $\mathfrak A$-valued measure $\nu$ is called standard if it is $\sigma$-finite and there exists a complete separable metric space $X$ such that $X\subset \s_\nu$, $\s_\nu\setminus X$ is a $\nu$-null set, and $\nu|_X$ is a Borel measure on $X$.
If $\nu$ is a standard measure and $A$ is a $\nu$-measurable set, then $\nu|_A$ is also a standard measure.

\begin{remark}
Standard measures were first introduced in~Ref.~\onlinecite{Mackey}. Unlike Ref.~\onlinecite{Mackey}, we consider only complete measures, and standard measures in our sense are actually the completions of those in the sense of~Ref.~\onlinecite{Mackey}. In the probabilistic context, such complete measures were introduced in~Ref.~\onlinecite{Rohlin} under the name of Lebesgue measure spaces. The class of standard measures is broad enough to include most measures encountered in applications and, at the same time, is narrow enough to exclude measures with a pathological behavior.
\end{remark}

\subsection{Measurable maps}

Let $\nu$ be a $\sigma$-finite $\mathfrak A$-valued measure. A map $f$ is said to be defined $\nu$-a.e. if $\s_\nu\setminus D_f$ is a $\nu$-null set. Given a set $X$, a map $f$ is said to be an $X$-valued $\nu$-a.e. defined map if $\s_\nu\setminus f^{-1}(X)$ is a $\nu$-null set. If $X$ is a topological space, then a map $f$ is called an $X$-valued $\nu$-measurable map if $f$ is an $X$-valued $\nu$-a.e. defined map and $f^{-1}(B)\cap \s_\nu$ is a $\nu$-measurable set for any Borel subset $B$ of $X$. We say that some property $P(s)$ holds for $\nu$-a.e. $s$, if there is a $\nu$-null set $N$ such that $P(s)$ holds for all $s\in \s_\nu\setminus N$.

Given a $\nu$-measurable set $A$, we say that $f$ is an $X$-valued $\nu$-measurable map on $A$ if $f$ is an $X$-valued $\nu|_A$-measurable map. A property $P(s)$ is said to hold for $\nu$-a.e. $s\in A$ if it holds for $\nu|_A$-a.e. $s$.

A complex $\nu$-a.e. defined function $f$ is said to be $\nu$-essentially bounded if there is $C>0$ such that $|f(s)|\leq C$ for $\nu$-a.e. $s$.

\begin{remark}
\label{rem2}
Note that the domain $D_f$ of a $\nu$-a.e. defined or $\nu$-measurable map $f$ is not assumed to be contained in $\s_\nu$. Moreover, $X$-valued $\nu$-a.e. defined (and, in particular, $\nu$-measurable) maps are allowed to take values outside $X$ on some $\nu$-null set contained in $\s_\nu\cap D_f$. This implies in particular, that using ``the set of all $X$-valued $\nu$-measurable maps'' would lead to the same kind of set-theoretic problems as the use of ``the set of all sets''.
\end{remark}

Let $X$ be a complete separable metric space, $f$ be an $X$-valued $\nu$-a.e. defined map, and $f_1,f_2,\ldots$ be a sequence of $X$-valued $\nu$-measurable maps such that $f_n(s)$ converge to $f(s)$ in $X$ as $n\to\infty$ for $\nu$-a.e. $s$. Then $f$ is $\nu$-measurable.

\subsection{Equivalence classes}

Let $\nu$ be a $\sigma$-finite $\mathfrak A$-valued measure. Two $\nu$-a.e. defined maps $f$ and $g$ are called $\nu$-equivalent if $f(s)=g(s)$ for $\nu$-a.e. $s$. All $\nu$-a.e. defined maps fall into disjoint classes of $\nu$-equivalent maps, which are called $\nu$-equivalence classes. In every $\nu$-equivalence class, we choose an arbitrary fixed element whose domain of definition is contained in $\s_\nu$. Given a $\nu$-a.e. defined map $f$, we denote by $[f]_\nu$ such a chosen element belonging to the $\nu$-equivalence class containing $f$. Thus, the map $[f]_\nu$ is $\nu$-equivalent to $f$ for any $\nu$-a.e. defined map $f$, and we have $[f]_\nu=[g]_\nu$ for every pair of $\nu$-equivalent maps $f$ and $g$. If $X$ is a topological space, then $f$ is an $X$-valued $\nu$-measurable map if and only if so is $[f]_\nu$.

In this paper, $\nu$-equivalence classes \textit{per se} are not used. Whenever we speak of $\nu$-equivalence classes, we always refer to representatives of the form $[f]_\nu$, where $f$ is a $\nu$-a.e. defined map. There are two reasons for such a redefinition of the notion of a $\nu$-equivalence class. First, the same arguments as in Remark~\ref{rem2} show that the $\nu$-equivalence classes in the true sense cannot be considered as well-defined sets. Hence, using them as elements of sets is not satisfactory from the viewpoint of foundations of mathematics. Choosing a fixed representative in each class allows us to circumvent this difficulty. Second, since $\nu$-equivalence classes in our sense are just some maps, all definitions and notations introduced for maps become directly applicable to $\nu$-equivalence classes.

\subsection{Integrable functions}
Let $\nu$ be a $\sigma$-finite positive measure. A $\nu$-measurable complex function $f$ is called $\nu$-integrable if
\begin{equation}\nonumber
I_\nu(f)=\sup_{A_i\in D_\nu,\,c_i\in\R} \sum_{i=1}^n c_i\nu(A_i)<\infty,
\end{equation}
where the supremum is taken over all finite sets $A_1,\ldots,A_n$ of disjoint elements of $D_\nu$ and $c_1,\ldots,c_n\in\R$ such that $c_i\leq |f(s)|$ for $\nu$-a.e. $s\in A_i$. The integral $\int f(s)\,d\nu(s)$ of a $\nu$-integrable function $f$ is a complex number that is uniquely determined by the conditions that $\int f(s)\,d\nu(s)$ be linear in $f$ and coincide with $I_\nu(f)$ if $f(s)\geq 0$ for $\nu$-a.e. $s$.

Clearly, passing to a $\nu$-equivalent function does not affect its $\nu$-integrability. In particular, a $\nu$-a.e. defined function $f$ is $\nu$-integrable if and only if $[f]_\nu$ is $\nu$-integrable.
If $A$ is a $\nu$-measurable set and $f$ is a $\nu|_A$-integrable function, then we say that $f$ is a $\nu$-integrable function on $A$. In this case, we write $\int_A f(s)\,d\nu(s)$ in place of $\int f(s)\,d\nu|_A(s)$.

\begin{remark}
As for $\nu$-measurable functions, the domain of a $\nu$-integrable function is not assumed to be contained in $\s_\nu$.
\end{remark}

\subsection{$L_2$-spaces}\label{s_l2}

Let $\nu$ be a $\sigma$-finite positive measure and $\mathfrak h$ be a separable Hilbert space.\footnote{Note that a Hilbert space is separable if it is either finite-dimensional or infinite-dimensional with a countable orthonormal basis. In particular, we consider $\C$ as a one-dimensional Hilbert space with the scalar product ${\langle}\alpha,\beta{\rangle} =\bar \alpha\beta$,
$\alpha,\beta\in \C$.}
We denote by $\mathcal M(\mathfrak h,\nu)$ the set of all elements $[f]_\nu$, where $f$ is an $\mathfrak H$-valued $\nu$-measurable map. The set $\mathcal M(\mathfrak h,\nu)$ has a natural structure of a vector space over $\mathbb C$ (for any $\mathfrak h$-valued $\nu$-a.e. defined maps $f$ and $g$ and any $k\in\mathbb C$, we set $[f]_{\nu} + [g]_{\nu} = [f+g]_{\nu}$ and $k[f]_{\nu} = [kf]_{\nu}$).  Given a $\nu$-measurable set $A$, we denote by $L_2(A,\mathfrak h,\nu)$ the linear subspace of $\mathcal M(\mathfrak h,\nu|_A)$ consisting of all its elements $f$ such that $s\to \|f(s)\|^2$ is a $\nu$-integrable function. Defining the scalar product of $f,g\in L_2(A,\mathfrak h,\nu)$ by the formula
\[
\langle f,g\rangle = \int_A \langle f(s),g(s)\rangle\,d\nu(s),
\]
we make $L_2(A,\mathfrak h,\nu)$ into a Hilbert space.

For $\mathfrak h=\C$, we denote the space $L_2(A,\mathfrak h,\nu)$ by $L_2(A,\nu)$. If $A\subset \R^n$ and $\nu$ is the Lebesgue measure on $\R^n$, the space $L_2(A,\nu)$ is denoted by $L_2(A)$.

Given a $\nu$-measurable complex function $g$, we denote by $\mathcal T^{\nu}_g$ the operator of multiplication by $g$ in $L_2(\s_\nu,\nu)$. By definition, the graph of $\mathcal T^\nu_g$ consists of all pairs $(f_1,f_2)$ of elements of $L_2(\s_\nu,\nu)$ such that $f_2(s) = g(s)f_1(s)$ for $\nu$-a.e. $s$. The operator $\mathcal T^{\nu}_g$ is closed and densely defined and
its adjoint is equal to $\mathcal T^{\nu}_{\bar g}$, where $\bar g$ is the complex conjugate function of $g$. In particular, if $g$ is real, then $\mathcal T^{\nu}_{g}$ is self-adjoint.

\subsection{Spectral measures}
\label{s_spec_meas}
We refer the reader to Chapter~5 of~Ref.~\onlinecite{BirmanSolomjak} for a detailed exposition of the theory of spectral measures. Here, we only give a brief summary of the facts needed in this paper. Let $\mathfrak H$ be a separable Hilbert space and $L(\mathfrak H)$ be the space of bounded everywhere defined linear operators in $\mathfrak H$ endowed with the strong operator topology.
A finite $L(\mathfrak H)$-valued measure $\mathcal E$ is called a spectral measure in $\mathfrak H$ if $\mathcal E(A)$ is an orthogonal projection in $\mathfrak H$ for any $A\in D_{\mathcal E}$ and $\mathcal E(\s_{\mathcal E})$ is the identity operator in $\mathfrak H$.
For any $\Psi\in \mathfrak H$,
we define the positive measure $\mathcal E_\Psi$ as the completion of the positive $\sigma$-additive function $A\to \langle \mathcal E(A)\Psi,\Psi\rangle$ on $D_{\mathcal E}$, where $\langle
\cdot,\cdot\rangle$ is the scalar product on~$\mathfrak H$.

Let $\mathcal E$ be a spectral measure. Given an $\mathcal E$-measurable complex function $f$, the integral $J^{\mathcal E}_f$ of $f$ with
respect to $\mathcal E$ is defined as the unique linear operator in $\mathfrak H$ such that
\begin{align}
&D_{J^{\mathcal E}_f}=\left\{\Psi\in \mathfrak H : \int |f(s)|^2\,d\mathcal E_\Psi(s)<\infty \right\} \label{dom} \\
& \langle\Psi, J^{\mathcal E}_f \Psi\rangle = \int f(s)\,d\mathcal E_\Psi(s),\quad \Psi\in D_{J^{\mathcal E}_f}.\label{spectral_integral}
\end{align}
For any $\mathcal E$-measurable complex function $f$, the operator $J^{\mathcal E}_f$ is closed and densely defined and its adjoint is equal to $J^{\mathcal E}_{\bar f}$.

For every normal operator $T$, there is a unique Borel spectral measure $\mathcal E_T$ on $\C$ such that $J^{\mathcal E_T}_{g}=T$, where
$g$ is the identity function on $\C$. The operators $\mathcal E_T(A)$, where $A$ is a Borel subset of $\C$, are called the spectral
projections of $T$. If $f$ is an $\mathcal E_T$-measurable complex function, then the operator $J_f^{\mathcal E_T}$ is also denoted as $f(T)$.

\section{Direct integral decompositions of operators in Hilbert space}
\label{app1}

One of the main mathematical tools used in this paper are the direct integral decompositions of operators in Hilbert space. In its original form~\cite{Neumann}, the theory of such decompositions (also known as von~Neumann's reduction theory) is applicable only to bounded operators\footnote{A detailed account of the reduction theory for bounded operators can be found in~Ref.~\onlinecite{Dixmier}.} and, therefore, is insufficient for quantum-mechanical applications, where unbounded operators play a prominent role. An extension of the reduction theory to unbounded operators was given in~Ref.~\onlinecite{Nussbaum} (most results of~Ref.~\onlinecite{Nussbaum} were earlier formulated without proofs in~Ref.~\onlinecite{Barriere1951}). The approach proposed in~Ref.~\onlinecite{Nussbaum} is based on the observation~\cite{Stone1951} that the properties of a closed operator in a Hilbert space can be encoded in a set of four bounded operators forming its so called characteristic matrix. This allows one to reformulate problems concerning the direct integrals of arbitrary closed operators in terms of direct integrals of bounded operators. Although this approach makes it possible to extend the results of the original von~Neumann's theory to unbounded operators, it seems to be somewhat inconvenient for applications to concrete problems, where the computation of the characteristic matrices may turn out to be a difficult task. In particular, the definition of the measurability of families of operators given in~Ref.~\onlinecite{Nussbaum} may be difficult to verify for concrete examples. From the viewpoint of applications, the definition of measurability proposed in~Ref.~\onlinecite{Barriere1951} and formulated directly in terms of the graphs of operators seems to be more suitable (both definitions are actually equivalent, see Remark~\ref{rem_measurability} below).

In this appendix, we give a self-contained exposition of the theory of direct integral decompositions of (generally, unbounded) operators in Hilbert space. Our treatment is based on the formulation of measurability proposed in~Ref.~\onlinecite{Barriere1951} and does not involve characteristic matrices. Such an approach allows us to derive all results in a straightforward manner: unlike Ref.~\onlinecite{Nussbaum}, we do not need the theory for bounded operators as a prerequisite. Moreover, it becomes possible to give a concise treatment of direct integral decompositions of sums and products of operators (see Propositions~\ref{p_di1} and~\ref{p_di+} below), whose original analysis in~Ref.~\onlinecite{Lennon1974} involves lengthy computations of characteristic matrices.

In what follows, we use the measure-theoretic framework described in Appendix~\ref{app0}. Throughout this appendix, $\nu$ denotes a $\sigma$-finite positive measure.

\subsection{Measurable families of Hilbert spaces}
\label{app1_1}
A $\nu$-a.e. defined map $\mathfrak S$ is called a $\nu$-a.e. defined family of Hilbert spaces if $\mathfrak S(s)$ is a separable Hilbert space for $\nu$-a.e. $s$. A map $\xi$ is called a $\nu$-a.e. defined section of $\mathfrak S$ if $\xi(s)\in \mathfrak S(s)$ for $\nu$-a.e. $s$.

A $\nu$-a.e. defined family of Hilbert spaces $\mathfrak S$ is called $\nu$-measurable if it is equipped with a sequence $\{\xi_i^{\mathfrak S}\}_{i=1}^\infty$ of $\nu$-a.e. defined sections of $\mathfrak S$ such that the linear span of the vectors $\xi_1^{\mathfrak S}(s),\xi_2^{\mathfrak S}(s),\ldots$ is dense in $\mathfrak S(s)$ for $\nu$-a.e. $s$, and $s\to \langle \xi_i^{\mathfrak S}(s),\xi_j^{\mathfrak S}(s)\rangle$ is a $\nu$-measurable complex function for any $i,j=1,2,\ldots$. A $\nu$-a.e. defined section $\xi$ of $\mathfrak S$ is called $\nu$-measurable if $s\to \langle \xi_i^{\mathfrak S}(s),\xi(s)\rangle$ is a $\nu$-measurable complex function for any $i=1,2,\ldots$.

We say that $\nu$-measurable families $\mathfrak S_1$ and $\mathfrak S_2$ of Hilbert spaces are $\nu$-identical if they are $\nu$-equivalent and have the same $\nu$-measurable sections.

A sequence $e_1,e_2,\ldots$ of elements of a Hilbert space $\mathfrak H$ is said to be a generalized orthonormal system in $\mathfrak H$ if $\langle e_i,e_j\rangle=0$ for $i\neq j$ and $\|e_i\|$ is equal to either $1$ or $0$ for any $i=1,2,\ldots$.

Let $\mathfrak S$ be a $\nu$-a.e. defined family of Hilbert spaces and $\xi_1,\xi_2,\ldots$ be a sequence of $\nu$-a.e. defined sections of $\mathfrak S$. We say that $\nu$-a.e. defined sections $\tilde\xi_1,\tilde\xi_2,\ldots$ of $\mathfrak S$ constitute an orthonormal sequence associated with $\xi_1,\xi_2,\ldots$ if the following conditions hold
\begin{enumerate}
\item[(a)] For $\nu$-a.e. $s$, the sequence $\tilde \xi_1(s),\tilde\xi_2(s),\ldots$ is a generalized orthonormal system in $\mathfrak S(s)$ whose linear span coincides with that of $\xi_1(s),\xi_2(s),\ldots$.

\item[(b)] There exist $\nu$-measurable complex functions $f_{ij}$ defined for $i\leq j$ such that $\tilde\xi_j(s)=\sum_{i=1}^j f_{ij}(s)\xi_i(s)$
for all $j=1,2,\ldots$ and $\nu$-a.e. $s$.
\end{enumerate}

\begin{lemma}\label{l_orth}
Let $\mathfrak S$ be a $\nu$-a.e. defined family of Hilbert spaces and $\xi_1,\xi_2,\ldots$ be a sequence of $\nu$-a.e. defined sections of $\mathfrak S$ such that $s\to\langle \xi_i(s),\xi_j(s)\rangle$ is a $\nu$-measurable complex function for any $i,j=1,2,\ldots$. Then there exists an orthonormal sequence associated with $\xi_1,\xi_2,\ldots$.
\end{lemma}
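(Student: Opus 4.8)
The plan is to perform a measurable Gram--Schmidt orthogonalization, defining $\tilde\xi_1,\tilde\xi_2,\ldots$ by recursion on $j$ and maintaining throughout the coefficients that express the $\tilde\xi_k$ in terms of the $\xi_i$. Set $G_{ij}(s)=\langle\xi_i(s),\xi_j(s)\rangle$, so that each $G_{ij}$ is a $\nu$-measurable complex function by hypothesis. I would prove by induction the statement $P(j)$: there exist $\nu$-a.e. defined sections $\tilde\xi_1,\ldots,\tilde\xi_j$ of $\mathfrak S$ and $\nu$-measurable functions $f_{ik}$ (for $1\le i\le k\le j$) such that $\tilde\xi_k(s)=\sum_{i=1}^k f_{ik}(s)\xi_i(s)$ for $\nu$-a.e. $s$, and such that for $\nu$-a.e. $s$ the vectors $\tilde\xi_1(s),\ldots,\tilde\xi_j(s)$ form a generalized orthonormal system in $\mathfrak S(s)$ whose linear span equals that of $\xi_1(s),\ldots,\xi_j(s)$. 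The lemma then follows by collecting all the $\tilde\xi_j$ and taking the union of the countably many exceptional $\nu$-null sets, which is again a $\nu$-null set.

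For the recursive step, assuming $P(j-1)$, I would form the orthogonal part $\eta_j(s)=\xi_j(s)-\sum_{k=1}^{j-1}\langle\tilde\xi_k(s),\xi_j(s)\rangle\,\tilde\xi_k(s)$. Expanding each inner product through the induction hypothesis gives $\langle\tilde\xi_k(s),\xi_j(s)\rangle=\sum_{i\le k}\overline{f_{ik}(s)}\,G_{ij}(s)$, a $\nu$-measurable function; substituting the representations of the $\tilde\xi_k$ then yields $\eta_j(s)=\sum_{i=1}^j a_{ij}(s)\xi_i(s)$ with $a_{jj}=1$ and the remaining $a_{ij}$ $\nu$-measurable. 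Consequently $N_j(s):=\|\eta_j(s)\|^2=\sum_{i,i'}\overline{a_{ij}(s)}\,a_{i'j}(s)\,G_{ii'}(s)$ is a nonnegative $\nu$-measurable function. On the $\nu$-measurable set $B_j=\{s:N_j(s)>0\}$ I would set $\tilde\xi_j(s)=N_j(s)^{-1/2}\eta_j(s)$ and $f_{ij}(s)=N_j(s)^{-1/2}a_{ij}(s)$, and off $B_j$ I would set $\tilde\xi_j(s)=0$ and $f_{ij}(s)=0$; then $\tilde\xi_j(s)=\sum_{i=1}^j f_{ij}(s)\xi_i(s)$ holds in both cases, establishing the representation part of $P(j)$.

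It remains to check the geometric conditions. A direct computation, using that $\tilde\xi_1(s),\ldots,\tilde\xi_{j-1}(s)$ is a generalized orthonormal system (so $\langle\tilde\xi_l(s),\tilde\xi_k(s)\rangle=\delta_{lk}\|\tilde\xi_l(s)\|^2$ with $\|\tilde\xi_l(s)\|\in\{0,1\}$), shows $\eta_j(s)\perp\tilde\xi_l(s)$ for all $l<j$, whence $\tilde\xi_j(s)$ is orthogonal to the earlier vectors and has norm $1$ on $B_j$ and $0$ off it; this gives the generalized orthonormal property for $\tilde\xi_1(s),\ldots,\tilde\xi_j(s)$. For the spans, the representation furnishes the inclusion $\subseteq$, while writing $\xi_j=\eta_j+\sum_{k<j}\langle\tilde\xi_k,\xi_j\rangle\tilde\xi_k$ together with $\eta_j=N_j^{1/2}\tilde\xi_j$ on $B_j$ and $\eta_j=0$ off $B_j$ yields the reverse inclusion, using $P(j-1)$. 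The only delicate point is the handling of the degenerate locus where $\eta_j(s)=0$: there the usual normalization is impossible and the construction must insert a zero vector instead. This is precisely what the notion of a generalized orthonormal system accommodates, and measurability survives because $s\mapsto N_j(s)^{-1/2}$, being the composition of $N_j$ restricted to $B_j$ with the continuous map $t\mapsto t^{-1/2}$ on $(0,\infty)$, is $\nu$-measurable on $B_j$ and is extended by $0$ elsewhere. I expect this bookkeeping of the degenerate set, rather than any analytic difficulty, to be the main obstacle.
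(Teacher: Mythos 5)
Your proof is correct and is essentially the paper's own argument: the paper simply invokes the standard measurable Gram--Schmidt orthogonalization (citing Lemma~1 in Sec.~II.1.2 of Dixmier), which is exactly the inductive construction you carry out, including the zero-vector convention on the degenerate set where $\eta_j(s)=0$. Your explicit bookkeeping of the coefficients $f_{ij}$ and the exceptional null sets fills in precisely the details the paper leaves to the reference.
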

\begin{proof}
The required sequence is constructed by applying the standard orthogonalization procedure to $\xi_1(s),\xi_2(s),\ldots$, see the proof of Lemma~1 in Sec.~II.1.2 of~Ref.~\onlinecite{Dixmier} for details.
\end{proof}

If $\xi$ and $\eta$ are $\nu$-measurable sections of a $\nu$-measurable family $\mathfrak S$ of Hilbert spaces, then $s\to \langle \xi(s),\eta(s)\rangle$ is a $\nu$-measurable complex function. Indeed, by Lemma~\ref{l_orth}, there exists an orthonormal sequence $\tilde\xi_1,\tilde\xi_2,\ldots$ associated with $\xi_1^{\mathfrak S},\xi_2^{\mathfrak S},\ldots$. By~(a), the linear span of $\tilde\xi_1(s),\tilde\xi_2(s),\ldots$ is dense in $\mathfrak S(s)$ for $\nu$-a.e. $s$. This implies that $\langle \xi(s),\eta(s)\rangle = \sum_{i=1}^\infty \langle \xi(s),\tilde\xi_i(s)\rangle \langle \tilde\xi_i(s),\eta(s)\rangle$ for $\nu$-a.e. $s$. Hence the required statement follows because $s\to \langle \xi(s),\tilde\xi_i(s)\rangle$ and $s\to \langle \tilde\xi_i(s),\eta(s)\rangle$ are $\nu$-measurable complex functions by~(b) and the $\nu$-measurability of $\xi$ and $\eta$.

\begin{lemma}\label{l_meas_sec}
Let $\mathfrak S$ be a $\nu$-measurable family of Hilbert spaces and $\xi_1,\xi_2,\ldots$ be $\nu$-measurable sections of $\mathfrak S$. Let $\xi$ be a $\nu$-a.e. defined map such that $\xi(s)$ belongs to the closed linear span of $\xi_1(s),\xi_2(s),\ldots$ for $\nu$-a.e. $s$. If $s\to \langle \xi_j(s),\xi(s)\rangle$ is a $\nu$-measurable function for all $j=1,2,\ldots$, then $\xi$ is a $\nu$-measurable section of $\mathfrak S$.
\end{lemma}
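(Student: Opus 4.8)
The plan is to reduce the problem to an orthonormal sequence on which both $\xi$ and the defining sections $\xi_i^{\mathfrak S}$ can be tested. First I would use the fact, established just after Lemma~\ref{l_orth}, that $s\to\langle\xi_i(s),\xi_j(s)\rangle$ is a $\nu$-measurable complex function for all $i,j$; this lets me apply Lemma~\ref{l_orth} to the sequence $\xi_1,\xi_2,\ldots$ and obtain an associated orthonormal sequence $\tilde\xi_1,\tilde\xi_2,\ldots$. By property~(a), for $\nu$-a.e.\ $s$ the vectors $\tilde\xi_1(s),\tilde\xi_2(s),\ldots$ form a generalized orthonormal system whose linear span coincides with that of $\xi_1(s),\xi_2(s),\ldots$. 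In particular their closed linear span contains $\xi(s)$, so that
\[
\xi(s)=\sum_{k=1}^\infty \langle \tilde\xi_k(s),\xi(s)\rangle\,\tilde\xi_k(s)
\]
for $\nu$-a.e.\ $s$, where the null vectors among the $\tilde\xi_k(s)$ contribute nothing since the corresponding coefficients vanish.

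Next I would verify two measurability facts. Each $\tilde\xi_k$ is a $\nu$-measurable section: by property~(b) we have $\tilde\xi_k(s)=\sum_{j=1}^k f_{jk}(s)\xi_j(s)$ with $\nu$-measurable $f_{jk}$, whence
\[
\langle \xi_i^{\mathfrak S}(s),\tilde\xi_k(s)\rangle=\sum_{j=1}^k f_{jk}(s)\,\langle \xi_i^{\mathfrak S}(s),\xi_j(s)\rangle
\]
is $\nu$-measurable for every $i$, because each $\xi_j$ is a $\nu$-measurable section. The same expansion, together with the hypothesis that $s\to\langle\xi_j(s),\xi(s)\rangle$ is $\nu$-measurable, gives
\[
\langle \tilde\xi_k(s),\xi(s)\rangle=\sum_{j=1}^k \overline{f_{jk}(s)}\,\langle \xi_j(s),\xi(s)\rangle,
\]
which is $\nu$-measurable as well.

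Finally, I would take the inner product of the expansion of $\xi(s)$ against the fixed vector $\xi_i^{\mathfrak S}(s)$ and use continuity of the scalar product to obtain
\[
\langle \xi_i^{\mathfrak S}(s),\xi(s)\rangle=\sum_{k=1}^\infty \langle \tilde\xi_k(s),\xi(s)\rangle\,\langle \xi_i^{\mathfrak S}(s),\tilde\xi_k(s)\rangle
\]
for $\nu$-a.e.\ $s$. The right-hand side is a $\nu$-a.e.\ convergent series of $\nu$-measurable functions and is therefore $\nu$-measurable, being the pointwise limit of its measurable partial sums. Since this holds for every $i$, the section $\xi$ is $\nu$-measurable. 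The only point requiring care is the bookkeeping around the fact that Lemma~\ref{l_orth} produces a \emph{generalized} orthonormal system, i.e.\ some $\tilde\xi_k(s)$ may vanish; this is harmless, as such vectors contribute zero to every sum above, so no genuine obstacle arises and the argument is simply a measurable version of the Parseval expansion.
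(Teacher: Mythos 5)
Your proof is correct and follows essentially the same route as the paper's: both apply Lemma~\ref{l_orth} to produce an associated generalized orthonormal sequence, use property~(b) to get measurability of the $\tilde\xi_k$ and of the coefficients $\langle\tilde\xi_k(s),\xi(s)\rangle$, and then expand $\xi(s)$ in this system to conclude measurability. Your write-up merely makes explicit the bookkeeping (testing against the $\xi_i^{\mathfrak S}$ and passing to the a.e.\ pointwise limit of measurable partial sums) that the paper leaves implicit.
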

\begin{proof}
By Lemma~\ref{l_orth}, there exists an orthonormal sequence $\tilde \xi_1,\tilde\xi_2,\ldots$ associated with $\xi_1,\xi_2,\ldots$. By condition~(b), $\tilde \xi_j$ is a $\nu$-measurable section of $\mathfrak S$ for every $j=1,2,\ldots$. By condition~(a), $\tilde \xi_1(s),\tilde\xi_2(s),\ldots$ is a generalized orthonormal system whose closed linear span coincides with that of $\xi_1(s),\xi_2(s),\ldots$ and, hence, contains $\xi(s)$ for $\nu$-a.e. $s$. This implies that $\xi(s) = \sum_{j=1}^\infty \langle \tilde\xi_j(s),\xi(s)\rangle \tilde\xi_j(s)$ for $\nu$-a.e. $s$ and, therefore, $\xi$ is a $\nu$-measurable section of $\mathfrak S$.
\end{proof}

\subsection{Direct sums of measurable families} Let $\mathfrak S_1$ and $\mathfrak S_2$ be $\nu$-measurable families of Hilbert spaces. Then there is a unique (up to $\nu$-identity) $\nu$-measurable family $\mathfrak S$ of Hilbert spaces satisfying the conditions
\begin{enumerate}
\item[$(1)$] $\mathfrak S(s) = \mathfrak S_1(s)\oplus \mathfrak S_2(s)$ for $\nu$-a.e. $s$.

\item[$(2)$] A $\nu$-a.e. defined map $\zeta$ is a $\nu$-measurable section of $\mathfrak S$ if and only if there are $\nu$-measurable sections $\xi$ and $\eta$ of $\mathfrak S_1$ and $\mathfrak S_2$ respectively such that $\zeta(s)=(\xi(s),\eta(s))$ for $\nu$-a.e. $s$.
\end{enumerate}
Indeed, every $\nu$-a.e. defined map $\mathfrak S$ satisfying~(1) becomes a $\nu$-measurable family of Hilbert spaces satisfying~(2) if we choose $\xi^{\mathfrak S}_1,\xi^{\mathfrak S}_2,\ldots$ in such a way that $\xi^{\mathfrak S}_{2k}(s)=(\xi^{\mathfrak S_1}_{k}(s),0)$ and $\xi^{\mathfrak S}_{2k-1}(s)=(0,\xi^{\mathfrak S_2}_{k}(s))$ for $\nu$-a.e. $s$ and every $k=1,2,\ldots$. The $\nu$-measurable family $\mathfrak S$ of Hilbert spaces satisfying~(1) and~(2) is called the direct sum of $\mathfrak S_1$ and $\mathfrak S_2$ and is denoted by $\mathfrak S_1\oplus_\nu\mathfrak S_2$.

\subsection{Constant and discrete families}\label{s_const_fam}
Let $\mathfrak h$ be a separable Hilbert space. Then there exists a unique (up to $\nu$-identity) $\nu$-measurable family $\mathfrak S$ of Hilbert spaces satisfying the conditions
\begin{enumerate}
\item[$(1)$] $\mathfrak S(s) = \mathfrak h$ for $\nu$-a.e. $s$.

\item[$(2)$] A $\nu$-a.e. defined map $\xi$ is a $\nu$-measurable section of $\mathfrak S$ if and only if $\xi$ is an $\mathfrak h$-valued $\nu$-measurable map.
\end{enumerate}
Indeed, let $e_1,e_2,\ldots$ be an orthonormal basis in $\mathfrak h$ and $\mathfrak S$ be a $\nu$-a.e. defined map satisfying~(1). To make $\mathfrak S$ into a $\nu$-measurable family, we require $\xi^{\mathfrak S}_1,\xi^{\mathfrak S}_2,\ldots$ to be $\nu$-a.e. defined maps such that $\xi^{\mathfrak S}_i(s)=e_i$ for $\nu$-a.e. $s$. If $\xi$ is a $\nu$-measurable section of $\mathfrak S$, then $\xi$ is an $\mathfrak H$-valued $\nu$-measurable map because $\xi(s) = \sum_i \langle \xi^{\mathfrak S}_i(s),\xi(s)\rangle e_i$ for $\nu$-a.e. $s$ and, therefore, $\mathfrak S$ satisfies~(2). The $\nu$-measurable family $\mathfrak S$ of Hilbert spaces satisfying~(1) and~(2) is denoted by $\mathcal I_{\mathfrak h,\nu}$.

Suppose now that $\mathscr S$ is a countable set, $\nu$ is a counting measure on $\mathscr S$ and $\mathfrak S$ is a $\nu$-a.e. defined family of Hilbert spaces (i.e., $\mathscr S\subset D_{\mathfrak S}$ and $\mathfrak S(s)$ is a separable Hilbert space for every $s\in\mathscr S$). Since every $\nu$-a.e. defined complex function is $\nu$-measurable, there is a unique (up to $\nu$-identity) $\nu$-measurable family $\mathfrak S_{\!\!\mathscr S}$ of Hilbert spaces such that $\mathfrak S_{\!\!\mathscr S}(s)=\mathfrak S(s)$ for every $s\in\mathscr S$. Every $\nu$-a.e. defined section of $\mathfrak S_{\!\!\mathscr S}$ is $\nu$-measurable.

\subsection{Measurable families of operators}
Let $\mathfrak S$ be a $\nu$-measurable family of Hilbert spaces. A $\nu$-a.e. defined map $\mathfrak S'$ is said to be a $\nu$-a.e. defined family of subspaces of $\mathfrak S$ if $\mathfrak S'(s)$ is a linear (not necessarily closed) subspace of $\mathfrak S(s)$ for $\nu$-a.e. $s$. A $\nu$-a.e. defined family $\mathfrak S'$ of subspaces of $\mathfrak S$ is called $\nu$-measurable if there is a sequence $\xi_1,\xi_2,\ldots$ of $\nu$-measurable sections of $\mathfrak S$ such that the linear span of $\xi_1(s),\xi_2(s),\ldots$ is dense in $\mathfrak S'(s)$ for $\nu$-a.e. $s$ (such a sequence will be called a $\nu$-measurable basis in $\mathfrak S'$).

Let $\mathfrak S_1$ and $\mathfrak S_2$ be $\nu$-measurable families of Hilbert spaces. A $\nu$-a.e. defined map $\mathcal R$ is called a $\nu$-a.e. defined family of operators from $\mathfrak S_1$ to $\mathfrak S_2$ if $\mathcal R(s)$ is an operator (possibly not everywhere defined and unbounded) from $\mathfrak S_1(s)$ to $\mathfrak S_2(s)$ for $\nu$-a.e. $s$. A $\nu$-a.e. defined family $\mathcal R$ of operators from $\mathfrak S_1$ to $\mathfrak S_2$ is called $\nu$-measurable if there are sequences $\xi_1,\xi_2,\ldots$ and $\eta_1,\eta_2,\ldots$ of $\nu$-measurable sections of $\mathfrak S_1$ and $\mathfrak S_2$ respectively such that the linear span of the vectors $(\xi_j(s),\eta_j(s))$ is dense in the graph $G_{\mathcal R(s)}$ of the operator $\mathcal R(s)$ for $\nu$-a.e. $s$. In other words, $\mathcal R$ is $\nu$-measurable if $s\to G_{\mathcal R(s)}$ is a $\nu$-measurable family of subspaces of $\mathfrak S_1\oplus_\nu \mathfrak S_2$.

Given a $\nu$-measurable family $\mathfrak S$ of Hilbert spaces, we say that $\mathcal R$ is a $\nu$-a.e. defined ($\nu$-measurable) family of operators in $\mathfrak S$ if it is a $\nu$-a.e. defined (resp., $\nu$-measurable) family of operators from $\mathfrak S$ to $\mathfrak S$. If $\mathfrak h$ is a separable Hilbert space, then $\nu$-a.e. defined ($\nu$-measurable) families of operators in $\mathcal I_{\mathfrak h,\nu}$ are called $\nu$-a.e. defined (resp., $\nu$-measurable) families of operators in $\mathfrak h$.

\begin{lemma}\label{l_BBB}
Let $\mathfrak S_1$ and $\mathfrak S_2$ be $\nu$-measurable families of Hilbert spaces and $\mathcal R$ be a $\nu$-a.e. defined family of operators from $\mathfrak S_1$ to $\mathfrak S_2$ such that $\mathcal R(s)$ is an everywhere defined bounded operator for $\nu$-a.e. $s$. Then the following statements hold
\begin{enumerate}
\item[$1.$] If $\mathcal R$ is $\nu$-measurable then $s\to\mathcal R(s)\xi(s)$ is a $\nu$-measurable section of $\mathfrak S_2$ for any $\nu$-measurable section $\xi$ of $\mathfrak S_1$.
\item[$2.$] Suppose there exists a $\nu$-measurable basis $\xi_1,\xi_2,\ldots$ in $\mathfrak S_1$ such that $s\to\mathcal R(s)\xi_j(s)$ is a $\nu$-measurable section of $\mathfrak S_2$ for all $j=1,2,\ldots$. Then $\mathcal R$ is $\nu$-measurable.
\end{enumerate}
\end{lemma}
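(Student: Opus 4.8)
Let me think about how to prove Lemma~\ref{l_BBB}, which characterizes $\nu$-measurability of a family of everywhere-defined bounded operators in terms of its action on measurable sections.

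The statement has two directions. For statement~1, the hypothesis is that $\mathcal R$ is $\nu$-measurable, meaning that the graph family $s\to G_{\mathcal R(s)}$ is a $\nu$-measurable family of subspaces of $\mathfrak S_1\oplus_\nu\mathfrak S_2$; I must show $s\to\mathcal R(s)\xi(s)$ is a $\nu$-measurable section of $\mathfrak S_2$ for each fixed $\nu$-measurable section $\xi$ of $\mathfrak S_1$. For statement~2, the hypothesis is that such measurability of $s\to\mathcal R(s)\xi_j(s)$ holds for a $\nu$-measurable basis $\xi_1,\xi_2,\ldots$ in $\mathfrak S_1$, and I must conclude that $\mathcal R$ itself is $\nu$-measurable.

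\textbf{Proof of statement 1.} First I would use the $\nu$-measurability of $\mathcal R$ to produce sequences $\zeta_j(s)=(\alpha_j(s),\beta_j(s))$ of measurable sections of $\mathfrak S_1\oplus_\nu\mathfrak S_2$ whose linear span is dense in the graph $G_{\mathcal R(s)}$ for $\nu$-a.e.~$s$; by condition~(2) in the definition of $\oplus_\nu$, each $\alpha_j$ is a measurable section of $\mathfrak S_1$ and each $\beta_j$ a measurable section of $\mathfrak S_2$, with $\beta_j(s)=\mathcal R(s)\alpha_j(s)$. Since $\mathcal R(s)$ is everywhere defined and bounded, density of the $(\alpha_j(s),\beta_j(s))$ in the graph forces $\alpha_1(s),\alpha_2,\ldots$ to be dense in $\mathfrak S_1(s)$. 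Now for a given measurable section $\xi$, approximating $\xi(s)$ by the $\alpha_j(s)$ and invoking continuity of $\mathcal R(s)$ shows that $\mathcal R(s)\xi(s)$ lies in the closed linear span of $\beta_1(s),\beta_2(s),\ldots$; then I would verify that $s\to\langle\beta_j(s),\mathcal R(s)\xi(s)\rangle$ is $\nu$-measurable (using that scalar products of measurable sections are measurable, plus a limiting argument through the approximation), and conclude by Lemma~\ref{l_meas_sec} that $s\to\mathcal R(s)\xi(s)$ is a measurable section of $\mathfrak S_2$.

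\textbf{Proof of statement 2.} Set $\eta_j(s)=\mathcal R(s)\xi_j(s)$, which are measurable sections of $\mathfrak S_2$ by hypothesis. Then $(\xi_j(s),\eta_j(s))$ are measurable sections of $\mathfrak S_1\oplus_\nu\mathfrak S_2$ belonging to the graph $G_{\mathcal R(s)}$. Since $\xi_1(s),\xi_2(s),\ldots$ is a $\nu$-measurable basis, their span is dense in $\mathfrak S_1(s)$, which is exactly the domain of the everywhere-defined $\mathcal R(s)$; because $\mathcal R(s)$ is bounded, the linear span of $(\xi_j(s),\mathcal R(s)\xi_j(s))$ is then dense in the whole graph $G_{\mathcal R(s)}$ for $\nu$-a.e.~$s$. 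This is precisely the condition that $\mathcal R$ be a $\nu$-measurable family of operators. The main obstacle, and the one point requiring real care, is the density claim in statement~1 together with the measurability of $s\to\langle\beta_j(s),\mathcal R(s)\xi(s)\rangle$: one must be careful to arrange the approximation of $\xi(s)$ by the $\alpha_j(s)$ in a way that is uniform enough (or handle it via a measurable selection on a $\nu$-null-complemented set) so that the resulting scalar-product functions are genuinely measurable rather than merely pointwise limits of measurable functions along $s$-dependent index choices; invoking boundedness of $\mathcal R(s)$ and the closed-graph structure keeps this under control.
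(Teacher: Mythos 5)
Your proof of statement~2 is correct and coincides with the paper's: setting $\eta_j(s)=\mathcal R(s)\xi_j(s)$, the density of the span of $\xi_1(s),\xi_2(s),\ldots$ in $\mathfrak S_1(s)=D_{\mathcal R(s)}$ together with the continuity of $\mathcal R(s)$ makes the span of $(\xi_j(s),\eta_j(s))$ dense in $G_{\mathcal R(s)}$ for $\nu$-a.e.\ $s$, which is exactly the definition of $\nu$-measurability of $\mathcal R$.

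In statement~1, however, there is a genuine gap at precisely the step you flag as ``the main obstacle''; flagging it is not the same as closing it. You need $s\to\langle\beta_j(s),\mathcal R(s)\xi(s)\rangle$ to be $\nu$-measurable, but your only access to $\mathcal R(s)\xi(s)$ is through approximations of $\xi(s)$ by finite linear combinations of the $\alpha_j(s)$, and both the indices and the coefficients of such approximations depend on $s$ in an uncontrolled way. Consequently, a ``limiting argument through the approximation'' does not produce functions that are measurable in $s$, and neither the ``uniformity'' nor the ``measurable selection'' you invoke is actually constructed; boundedness of $\mathcal R(s)$ alone does not repair this. The device that closes the gap in the paper is Lemma~\ref{l_orth}: applying it to $\alpha_1,\alpha_2,\ldots$ yields an associated orthonormal sequence $\tilde\alpha_1,\tilde\alpha_2,\ldots$, i.e.\ a generalized orthonormal system with the same linear span and with $\tilde\alpha_j(s)=\sum_{i=1}^{j}f_{ij}(s)\alpha_i(s)$ for $\nu$-measurable coefficient functions $f_{ij}$. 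Then $\mathcal R(s)\tilde\alpha_j(s)=\sum_{i=1}^{j}f_{ij}(s)\beta_i(s)$ is a $\nu$-measurable section of $\mathfrak S_2$, and---this is the key point---the expansion of $\xi$ in the orthonormal system has \emph{canonical} coefficients, $\xi(s)=\sum_{j}\langle\tilde\alpha_j(s),\xi(s)\rangle\,\tilde\alpha_j(s)$, which are measurable because inner products of measurable sections are measurable. Continuity of $\mathcal R(s)$ then gives $\mathcal R(s)\xi(s)=\sum_{j}\langle\tilde\alpha_j(s),\xi(s)\rangle\,\mathcal R(s)\tilde\alpha_j(s)$ for $\nu$-a.e.\ $s$, exhibiting $s\to\mathcal R(s)\xi(s)$ as an a.e.\ pointwise limit of measurable sections, hence measurable; this also makes your appeal to Lemma~\ref{l_meas_sec} unnecessary. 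Without this orthogonalization (or an equivalent construction with $s$-independent, measurably-coefficiented projections onto the finite spans, which is the same idea), your argument for statement~1 does not go through.
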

\begin{proof}
1. Let $\xi$ be a $\nu$-measurable section of $\mathfrak S_1$ and $\xi_1,\xi_2,\ldots$ and $\eta_1,\eta_2,\ldots$ be $\nu$-measurable sections of $\mathfrak S_1$ and $\mathfrak S_2$ respectively such that the linear span of the vectors $(\xi_j(s),\eta_j(s))$ is dense in $G_{\mathcal R(s)}$ for $\nu$-a.e. $s$. By Lemma~~\ref{l_orth}, there exists an orthonormal sequence $\tilde\xi_1,\tilde\xi_2,\ldots$ associated with $\xi_1,\xi_2,\ldots$. By condition~(b) of Sec.~\ref{app1_1}, we have $\mathcal R(s)\tilde\xi_j(s) = \sum_{i=1}^j f_{ij}(s)\eta_i(s)$ with some $\nu$-measurable functions $f_{ij}$ for any $j=1,2,\ldots$ and $\nu$-a.e. $s$. Hence, $s\to \mathcal R(s)\tilde \xi_j(s)$ is a $\nu$-measurable section of $\mathfrak S_2$ for any $j=1,2,\ldots$. By condition~(a) of Sec.~\ref{app1_1}, the linear span of $\tilde\xi_1(s),\tilde\xi_2(s),\ldots$ coincides with that of $\xi_1(s),\xi_2(s),\ldots$ and, therefore, is dense in $\mathfrak S_1(s)$ for $\nu$-a.e. $s$. This implies that $\xi(s) = \sum_{j=1}^\infty \langle \tilde\xi_j(s),\xi(s)\rangle \tilde\xi_j(s)$ for $\nu$-a.e. $s$. By the continuity of $\mathcal R(s)$, it follows that $\mathcal R(s)\xi(s) = \sum_{j=1}^\infty \langle \tilde\xi_j(s),\xi(s)\rangle \mathcal R(s)\tilde\xi_j(s)$ for $\nu$-a.e. $s$ and, therefore, $s\to\mathcal R(s)\xi(s)$ is a $\nu$-measurable section of $\mathfrak S_2$.

\par\medskip\noindent 2. Let $\nu$-measurable sections $\eta_1,\eta_2,\ldots$ of $\mathfrak S_2$ be such that $\eta_j(s)=\mathcal R(s)\xi_j(s)$ for all $j=1,2,\ldots$ and $\nu$-a.e. $s$. As the linear span of $\xi_1(s),\xi_2(s),\ldots$ is dense in $\mathfrak S_1(s)$ and $\mathcal R(s)$ is everywhere defined and continuous, the linear span of $(\xi_j(s),\eta_j(s))$ is dense in $G_{\mathcal R(s)}$ for $\nu$-a.e. $s$. This means that $\mathcal R$ is $\nu$-measurable.
\end{proof}

\begin{lemma}\label{l_BBB0}
Let $\mathfrak S_1$ and $\mathfrak S_2$ be $\nu$-measurable families of Hilbert spaces and $\mathcal R$ be a $\nu$-measurable family of operators from $\mathfrak S_1$ to $\mathfrak S_2$ such that $\mathcal R(s)$ is an everywhere defined bounded operator for $\nu$-a.e. $s$. Let $\mathfrak S'_1$ be a $\nu$-measurable family of subspaces of $\mathfrak S_1$. Then the images of $\mathfrak S'_1(s)$ under $\mathcal R(s)$ constitute a $\nu$-measurable family of subspaces of $\mathfrak S_2$.
\end{lemma}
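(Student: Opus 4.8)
The plan is to push a $\nu$-measurable basis of $\mathfrak S'_1$ through $\mathcal R$ and show that the resulting sections form a $\nu$-measurable basis for the image family $s\mapsto\mathcal R(s)(\mathfrak S'_1(s))$. Since $\mathfrak S'_1$ is a $\nu$-measurable family of subspaces of $\mathfrak S_1$, there is a sequence $\xi_1,\xi_2,\ldots$ of $\nu$-measurable sections of $\mathfrak S_1$ whose linear span is dense in $\mathfrak S'_1(s)$ (in particular $\xi_j(s)\in\mathfrak S'_1(s)$) for $\nu$-a.e.\ $s$. I would set $\eta_j(s)=\mathcal R(s)\xi_j(s)$ and verify that $\eta_1,\eta_2,\ldots$ is the desired basis.

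The first step is to check that each $\eta_j$ is a $\nu$-measurable section of $\mathfrak S_2$. This is immediate from statement~1 of Lemma~\ref{l_BBB}: $\mathcal R$ is $\nu$-measurable and $\mathcal R(s)$ is everywhere defined and bounded for $\nu$-a.e.\ $s$, so $s\mapsto\mathcal R(s)\xi_j(s)$ is a $\nu$-measurable section of $\mathfrak S_2$ for every $j$.

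The second and main step is to show that, for $\nu$-a.e.\ $s$, the linear span of $\eta_1(s),\eta_2(s),\ldots$ is dense in $\mathcal R(s)(\mathfrak S'_1(s))$. Since $\xi_j(s)\in\mathfrak S'_1(s)$, we have $\eta_j(s)\in\mathcal R(s)(\mathfrak S'_1(s))$, and as the image of a subspace under a linear map is again a subspace, the linear span of the $\eta_j(s)$ is contained in $\mathcal R(s)(\mathfrak S'_1(s))$. For the reverse density, take $w=\mathcal R(s)v$ with $v\in\mathfrak S'_1(s)$ and choose $v_n$ in the linear span of the $\xi_k(s)$ with $v_n\to v$; writing $v_n$ as a finite linear combination of the $\xi_k(s)$ and applying $\mathcal R(s)$ shows that $\mathcal R(s)v_n$ lies in the linear span of the $\eta_k(s)$, while the boundedness (hence continuity) of $\mathcal R(s)$ gives $\mathcal R(s)v_n\to w$. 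Thus $w$ lies in the closure of the span of the $\eta_k(s)$, and the required density follows.

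I do not expect a serious obstacle here. The only ingredients are statement~1 of Lemma~\ref{l_BBB} for the measurability of the pushed-forward sections and the continuity of the bounded operators $\mathcal R(s)$ for the density; the hypothesis that $\mathcal R(s)$ is everywhere defined and bounded is precisely what makes both steps work, since an unbounded $\mathcal R(s)$ would break both the appeal to Lemma~\ref{l_BBB} and the limit argument. Once the two steps are in place, $\eta_1,\eta_2,\ldots$ is a $\nu$-measurable basis in $s\mapsto\mathcal R(s)(\mathfrak S'_1(s))$, so this family is a $\nu$-measurable family of subspaces of $\mathfrak S_2$, as required.
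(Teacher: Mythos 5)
Your proposal is correct and follows essentially the same route as the paper's proof: take a $\nu$-measurable basis $\xi_1,\xi_2,\ldots$ of $\mathfrak S'_1$, use statement~1 of Lemma~\ref{l_BBB} to see that $s\to\mathcal R(s)\xi_j(s)$ are $\nu$-measurable sections of $\mathfrak S_2$, and use the continuity of the everywhere defined bounded operators $\mathcal R(s)$ to conclude that their linear span is dense in the image $\mathcal R(s)(\mathfrak S'_1(s))$ for $\nu$-a.e.\ $s$. Your write-up merely makes the density argument explicit, which the paper leaves as a one-line appeal to continuity.
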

\begin{proof}
Let $\xi_1,\xi_2,\ldots$ be a $\nu$-measurable basis in $\mathfrak S'_1$. By the continuity of $\mathcal R(s)$, the linear span of $\mathcal R(s)\xi_1(s),\mathcal R(s)\xi_2(s),\ldots$ is dense in the image of $\mathfrak S'_1(s)$ under $\mathcal R(s)$ for $\nu$-a.e. $s$. Hence the statement follows because $s\to \mathcal R(s)\xi_j(s)$ is a $\nu$-measurable section of $\mathfrak S_2$ for any $j=1,2,\ldots$ by Lemma~\ref{l_BBB}.
\end{proof}

\begin{lemma}\label{l_BBB1}
Let $\mathfrak S$ be a $\nu$-measurable family of Hilbert spaces and $\mathfrak S'$ and $\mathcal P$ be $\nu$-a.e. defined maps such that $\mathfrak S'(s)$ is a closed subspace of $\mathfrak S(s)$ and $\mathcal P(s)$ is the orthogonal projection of $\mathfrak S(s)$ onto $\mathfrak S'(s)$ for $\nu$-a.e. $s$. Then $\mathcal P$ is a $\nu$-measurable family of operators in $\mathfrak S$ if and only if $\mathfrak S'$ is a $\nu$-measurable family of subspaces of~$\mathfrak S$.
\end{lemma}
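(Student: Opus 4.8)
The plan is to prove the two implications separately, using the orthonormalization machinery of Lemma~\ref{l_orth} for the harder direction. Throughout I write $\xi_1^{\mathfrak S},\xi_2^{\mathfrak S},\ldots$ for the defining sequence of $\nu$-measurable sections of $\mathfrak S$, whose span is dense in $\mathfrak S(s)$ for $\nu$-a.e. $s$; in particular $\mathfrak S$ is a $\nu$-measurable family of subspaces of itself.

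For the forward implication, suppose $\mathcal P$ is a $\nu$-measurable family of operators in $\mathfrak S$. Since $\mathcal P(s)$ is an everywhere defined bounded operator for $\nu$-a.e. $s$, I would apply Lemma~\ref{l_BBB0} with $\mathfrak S_1=\mathfrak S_2=\mathfrak S$, $\mathcal R=\mathcal P$, and $\mathfrak S'_1=\mathfrak S$. Because $\mathfrak S'(s)=\mathcal P(s)\mathfrak S(s)$ is exactly the image of $\mathfrak S(s)$ under $\mathcal P(s)$, Lemma~\ref{l_BBB0} immediately yields that $\mathfrak S'$ is a $\nu$-measurable family of subspaces of $\mathfrak S$.

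For the converse, suppose $\mathfrak S'$ is a $\nu$-measurable family of subspaces and let $\xi_1,\xi_2,\ldots$ be a $\nu$-measurable basis in $\mathfrak S'$. By Lemma~\ref{l_orth} there is an orthonormal sequence $\tilde\xi_1,\tilde\xi_2,\ldots$ associated with $\xi_1,\xi_2,\ldots$; condition~(b) of Sec.~\ref{app1_1} shows each $\tilde\xi_j$ is a $\nu$-measurable section of $\mathfrak S$, and condition~(a) shows that, for $\nu$-a.e. $s$, the $\tilde\xi_j(s)$ form a generalized orthonormal system whose linear span coincides with that of the $\xi_j(s)$ and is therefore dense in $\mathfrak S'(s)$. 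As $\mathfrak S'(s)$ is closed, the closed linear span of $\tilde\xi_1(s),\tilde\xi_2(s),\ldots$ equals $\mathfrak S'(s)$. I would then check that $s\to\mathcal P(s)\xi_k^{\mathfrak S}(s)$ is a $\nu$-measurable section of $\mathfrak S$ for every $k$: indeed $\mathcal P(s)\xi_k^{\mathfrak S}(s)$ lies in $\mathfrak S'(s)$, the closed span of the $\tilde\xi_j(s)$, while $\langle\tilde\xi_j(s),\mathcal P(s)\xi_k^{\mathfrak S}(s)\rangle=\langle\tilde\xi_j(s),\xi_k^{\mathfrak S}(s)\rangle$ (since $\tilde\xi_j(s)\in\mathfrak S'(s)$ is fixed by $\mathcal P(s)$), which is $\nu$-measurable, so Lemma~\ref{l_meas_sec} applies.

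Since $\xi_1^{\mathfrak S},\xi_2^{\mathfrak S},\ldots$ is a $\nu$-measurable basis in $\mathfrak S$ and each $s\to\mathcal P(s)\xi_k^{\mathfrak S}(s)$ is a $\nu$-measurable section, statement~2 of Lemma~\ref{l_BBB} then gives the $\nu$-measurability of $\mathcal P$, completing the argument. I expect the main obstacle to be this converse direction: one must manufacture genuinely $\nu$-measurable sections spanning each fiber $\mathfrak S'(s)$ through which $\mathcal P(s)$ can be expressed, and the orthonormalization of Lemma~\ref{l_orth} is precisely what converts the possibly non-orthogonal $\nu$-measurable basis of $\mathfrak S'$ into the orthonormal system that makes the projection measurable.
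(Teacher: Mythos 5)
Your proof is correct and follows essentially the same route as the paper: the forward direction via Lemma~\ref{l_BBB0} applied to the image of $\mathfrak S$ under $\mathcal P$, and the converse via the identity $\langle\tilde\xi_j(s),\mathcal P(s)\xi(s)\rangle=\langle\tilde\xi_j(s),\xi(s)\rangle$ combined with Lemma~\ref{l_meas_sec} and statement~2 of Lemma~\ref{l_BBB}. The only difference is that your explicit orthonormalization via Lemma~\ref{l_orth} is unnecessary, since Lemma~\ref{l_meas_sec} already accepts an arbitrary (non-orthonormal) $\nu$-measurable basis of $\mathfrak S'$ — its proof performs that orthogonalization internally — which is exactly how the paper shortcuts this step.
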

\begin{proof}
If $\mathcal P$ is $\nu$-measurable, then so is $\mathfrak S'$ by Lemma~\ref{l_BBB0}. Let $\mathfrak S'$ be $\nu$-measurable and $\xi_1,\xi_2,\ldots$ be a $\nu$-measurable basis in $\mathfrak S'$. Let $\xi$ be a $\nu$-measurable section of $\mathfrak S$. For $\nu$-a.e. $s$, we have $\langle \mathcal P(s)\xi(s),\xi_j(s)\rangle = \langle \xi(s),\xi_j(s)\rangle$. Hence, $s\to \langle \mathcal P(s)\xi(s),\xi_j(s)\rangle$ is a $\nu$-measurable section of $\mathfrak S$ for all $j=1,2,\ldots$. In view of Lemma~\ref{l_meas_sec}, this implies that $s\to \mathcal P(s)\xi(s)$ is a $\nu$-measurable section of $\mathfrak S$ and, therefore, $\mathcal P$ is $\nu$-measurable by Lemma~\ref{l_BBB}.
\end{proof}

Let $\mathfrak H'$ and $\mathfrak H''$ be closed subspaces of a Hilbert space $\mathfrak H$ and $P'$ and $P''$ be the orthogonal projections of $\mathfrak H$ onto $\mathfrak H'$ and $\mathfrak H''$ respectively. Then we have $\mathfrak H'\cap\mathfrak H'' = \mathrm{Ker}\,R$, where $R=1_{\mathfrak H}-(P'+P'')/2$ and $1_{\mathfrak H}$ denotes the identity operator in $\mathfrak H$. Indeed, suppose $\psi\in \mathrm{Ker}\,R$ and $\psi\notin \mathfrak H'\cap\mathfrak H''$. Assume, for definiteness, that $\psi\notin \mathfrak H'$. Then $\|P'\psi\|<\|\psi\|$. Since $\psi = (P'\psi+P''\psi)/2$ and $\|P''\psi\|\leq\|\psi\|$, this implies that $\|\psi\|<\|\psi\|$. We thus obtain a contradiction and the statement is proved.

\begin{lemma}\label{l_BBB2}
Let $\mathfrak S$ be a $\nu$-measurable family of Hilbert spaces and $\mathfrak S'$ and $\mathfrak S''$ be $\nu$-measurable families of closed subspaces of $\mathfrak S$. Then
\begin{enumerate}
\item[$1.$] If $\mathfrak S''(s)\subset \mathfrak S'(s)$ for $\nu$-a.e. $s$, then $s\to \mathfrak S'(s)\ominus \mathfrak S''(s)$, where $\mathfrak S'(s)\ominus \mathfrak S''(s)$ is the orthogonal complement of $\mathfrak S''(s)$ in $\mathfrak S'(s)$, is a $\nu$-measurable family of subspaces of $\mathfrak S$.
\item[$2.$] $s\to \mathfrak S'(s)\cap \mathfrak S''(s)$ is a $\nu$-measurable family of subspaces of $\mathfrak S$.
\end{enumerate}
\end{lemma}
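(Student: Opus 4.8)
The plan is to reduce both statements to the measurability criterion for orthogonal projections provided by Lemma~\ref{l_BBB1}. Let $\mathcal P'$ and $\mathcal P''$ denote the $\nu$-a.e.\ defined families of orthogonal projections of $\mathfrak S(s)$ onto $\mathfrak S'(s)$ and $\mathfrak S''(s)$ respectively. Since $\mathfrak S'$ and $\mathfrak S''$ are $\nu$-measurable families of closed subspaces, Lemma~\ref{l_BBB1} guarantees that $\mathcal P'$ and $\mathcal P''$ are $\nu$-measurable families of everywhere defined bounded operators in $\mathfrak S$. I will repeatedly use the fact that sums, scalar multiples, and products of $\nu$-measurable families of everywhere defined bounded operators are again $\nu$-measurable; each follows directly from Lemma~\ref{l_BBB}, since applying such a family to a $\nu$-measurable section yields a $\nu$-measurable section (statement~1), and conversely measurability of a family can be checked on a $\nu$-measurable basis (statement~2). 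I will also use that a $\nu$-a.e.\ strong limit of $\nu$-measurable families of bounded operators is $\nu$-measurable, which reduces, via the same criterion, to the fact (recorded in Appendix~\ref{app0}) that a $\nu$-a.e.\ pointwise limit of $\nu$-measurable sections is $\nu$-measurable.

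For statement~1, assume $\mathfrak S''(s)\subset \mathfrak S'(s)$ for $\nu$-a.e.\ $s$. Then $\mathcal P'(s)\mathcal P''(s)=\mathcal P''(s)\mathcal P'(s)=\mathcal P''(s)$ for $\nu$-a.e.\ $s$, so $\mathcal P'(s)-\mathcal P''(s)$ is the orthogonal projection of $\mathfrak S(s)$ onto $\mathfrak S'(s)\ominus\mathfrak S''(s)$ for $\nu$-a.e.\ $s$. The family $s\to\mathcal P'(s)-\mathcal P''(s)$ is $\nu$-measurable as a difference of $\nu$-measurable families, and Lemma~\ref{l_BBB1} then shows that $s\to\mathfrak S'(s)\ominus\mathfrak S''(s)$ is a $\nu$-measurable family of subspaces of $\mathfrak S$.

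For statement~2, I will produce the orthogonal projection onto $\mathfrak S'(s)\cap\mathfrak S''(s)$ as a strong limit of an explicit $\nu$-measurable family. Set $\mathcal Q(s)=(\mathcal P'(s)+\mathcal P''(s))/2$; this is a $\nu$-measurable family of self-adjoint operators satisfying $0\le\mathcal Q(s)\le 1$ for $\nu$-a.e.\ $s$. By the statement established just before the lemma, $\mathfrak S'(s)\cap\mathfrak S''(s)=\mathrm{Ker}\,R(s)$, where $R(s)=1_{\mathfrak S(s)}-\mathcal Q(s)$, so that $\mathfrak S'(s)\cap\mathfrak S''(s)$ is precisely the eigenspace of $\mathcal Q(s)$ for the eigenvalue $1$. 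By the spectral theorem, for any self-adjoint $\mathcal Q(s)$ with $0\le\mathcal Q(s)\le 1$ the powers $\mathcal Q(s)^n$ converge strongly as $n\to\infty$ to the orthogonal projection $\mathcal P(s)$ onto this eigenspace, since $\lambda^n\to\chi_{\{1\}}(\lambda)$ boundedly on $[0,1]$. Each family $s\to\mathcal Q(s)^n$ is $\nu$-measurable as a product of $\nu$-measurable families, so the strong limit $\mathcal P$ is a $\nu$-measurable family of operators in $\mathfrak S$. Applying Lemma~\ref{l_BBB1} once more, I conclude that $s\to\mathrm{Im}\,\mathcal P(s)=\mathfrak S'(s)\cap\mathfrak S''(s)$ is a $\nu$-measurable family of subspaces of $\mathfrak S$.

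The only genuinely analytic point is the strong convergence $\mathcal Q(s)^n\to\mathcal P(s)$, which I expect to be the step to justify with care; everything else is bookkeeping with the measurability criteria of Lemmas~\ref{l_BBB} and~\ref{l_BBB1}. One subtlety worth noting is that the exceptional $\nu$-null sets arising for the different values of $n$ and for the various pointwise operator identities must be combined into a single $\nu$-null set, which is immediate since a countable union of $\nu$-null sets is a $\nu$-null set.
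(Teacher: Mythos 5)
Your proof of statement~1 is exactly the paper's argument: form the difference $\mathcal P'(s)-\mathcal P''(s)$ of the projection families and combine Lemma~\ref{l_BBB} with Lemma~\ref{l_BBB1}. For statement~2, however, you take a genuinely different route, and it is correct. The paper also starts from the operator $\mathcal R(s)=1_{\mathfrak S(s)}-(\mathcal P'(s)+\mathcal P''(s))/2$ and the identity $\mathfrak S'(s)\cap\mathfrak S''(s)=\mathrm{Ker}\,\mathcal R(s)$ established just before the lemma, but then exploits self-adjointness to write $\mathrm{Ker}\,\mathcal R(s)=(\mathrm{Im}\,\mathcal R(s))^\bot$, obtains measurability of $s\to\mathrm{Im}\,\mathcal R(s)$ from Lemma~\ref{l_BBB0}, and closes with statement~1 applied to the orthogonal complement; this keeps the whole proof inside the algebraic subspace operations already developed and requires no limiting process. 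You instead construct the projection onto the intersection explicitly as the strong limit of the powers $\mathcal Q(s)^n$ with $\mathcal Q=(\mathcal P'+\mathcal P'')/2$: by the spectral theorem and dominated convergence, $\mathcal Q(s)^n\to E_{\mathcal Q(s)}(\{1\})$ strongly, and $E_{\mathcal Q(s)}(\{1\})$ projects onto the eigenvalue-one eigenspace of $\mathcal Q(s)$, which is precisely $\mathfrak S'(s)\cap\mathfrak S''(s)$ by the same pre-lemma observation. What your route buys is an explicit, von~Neumann-type averaging formula for the projection onto the intersection; what it costs is spectral calculus plus one extra general principle, namely that a $\nu$-a.e.\ strong limit of $\nu$-measurable families of bounded operators is again $\nu$-measurable. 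That principle is true, but your citation for it is slightly off: the statement recorded in Appendix~\ref{app0} concerns maps into a \emph{fixed} complete separable metric space, whereas here the Hilbert spaces vary with $s$. The fix is immediate from the definitions: for a $\nu$-measurable basis $\xi_j$ of $\mathfrak S$, each $s\to\mathcal Q(s)^n\xi_j(s)$ is a $\nu$-measurable section, its $\nu$-a.e.\ pointwise limit $s\to\mathcal P(s)\xi_j(s)$ is $\nu$-measurable because the scalar functions $s\to\langle\xi_i^{\mathfrak S}(s),\mathcal Q(s)^n\xi_j(s)\rangle$ converge $\nu$-a.e.\ to $s\to\langle\xi_i^{\mathfrak S}(s),\mathcal P(s)\xi_j(s)\rangle$, and then statement~2 of Lemma~\ref{l_BBB} yields measurability of the family $\mathcal P$, after which Lemma~\ref{l_BBB1} finishes the proof as you say. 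With that small repair your argument is complete.
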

\begin{proof}
1. Let $\mathcal P'$ and $\mathcal P''$ be $\nu$-a.e. defined maps such that $\mathcal P'(s)$ and $\mathcal P''(s)$ are orthogonal projections of $\mathfrak S(s)$ onto $\mathfrak S'(s)$ and $\mathfrak S''(s)$ respectively for $\nu$-a.e. $s$. By Lemma~\ref{l_BBB1}, $\mathcal P'$ and $\mathcal P''$ are $\nu$-measurable families of operators in $\mathfrak S$. It follows from Lemma~\ref{l_BBB} that $s\to \mathcal P'(s)-\mathcal P''(s)$ is also a $\nu$-measurable family of operators in $\mathfrak S$. As $\mathcal P'(s)-\mathcal P''(s)$ is the orthogonal projection of $\mathfrak S(s)$ onto $\mathfrak S'(s)\ominus \mathfrak S''(s)$ for $\nu$-a.e. $s$, the desired statement follows from Lemma~\ref{l_BBB1}.
\par\medskip\noindent
2. Let $\nu$-measurable families $\mathcal P'$ and $\mathcal P''$ of operators in $\mathfrak S$ be as in the proof of~(1) and $\mathcal R$ be a $\nu$-a.e. defined map such that $\mathcal R(s) = 1_{\mathfrak S(s)}- (\mathcal P'(s) + \mathcal P''(s))/2$ for $\nu$-a.e. $s$. It follows from Lemma~\ref{l_BBB} that $\mathcal R$ is a $\nu$-measurable family of operators in $\mathfrak S$. Since $\mathfrak S'(s)\cap \mathfrak S''(s) = \mathrm{Ker}\,\mathcal R(s) = (\mathrm{Im}\,\mathcal R(s))^\bot$ for $\nu$-a.e. $s$, the statement follows from Lemma~\ref{l_BBB0} and~(1).
\end{proof}

\begin{lemma}\label{l_BBB3}
Let $\mathfrak S_1$ and $\mathfrak S_2$ be $\nu$-measurable families of Hilbert spaces and $\mathcal R$ be a $\nu$-measurable family of operators from $\mathfrak S_1$ to $\mathfrak S_2$. Then
\begin{enumerate}
\item[$1.$] If $\mathcal R(s)$ is invertible for $\nu$-a.e. $s$, then $s\to \mathcal R(s)^{-1}$ is a $\nu$-measurable family of operators from $\mathfrak S_2$ to $\mathfrak S_1$.
\item[$2.$] If $\mathcal R(s)$ is closed for $\nu$-a.e. $s$, then $s\to\mathrm{Ker}\,\mathcal R(s)$ is a $\nu$-measurable family of subspaces of $\mathfrak S_1$.
\end{enumerate}
\end{lemma}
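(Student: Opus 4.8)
The plan is to reduce both statements to the measurability results already available for families of subspaces, namely Lemmas~\ref{l_BBB0} and~\ref{l_BBB2}, by recognizing that both the graph of the inverse and the kernel can be obtained from the graph $G_{\mathcal R(s)}$ by elementary, $s$-independent geometric operations inside $\mathfrak S_1\oplus_\nu\mathfrak S_2$ (and its flip). Since $\mathcal R$ is $\nu$-measurable, $s\to G_{\mathcal R(s)}$ is by definition a $\nu$-measurable family of subspaces of $\mathfrak S_1\oplus_\nu\mathfrak S_2$; this is the single input I would exploit throughout.

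For statement~1, I would first fix sequences $\xi_1,\xi_2,\ldots$ and $\eta_1,\eta_2,\ldots$ of $\nu$-measurable sections of $\mathfrak S_1$ and $\mathfrak S_2$ such that the linear span of the $(\xi_j(s),\eta_j(s))$ is dense in $G_{\mathcal R(s)}$ for $\nu$-a.e. $s$. Because $\mathcal R(s)$ is invertible, the graph of $\mathcal R(s)^{-1}$ is exactly the image of $G_{\mathcal R(s)}$ under the flip $(\psi_1,\psi_2)\mapsto(\psi_2,\psi_1)$, which is an isometric isomorphism of $\mathfrak S_1(s)\oplus\mathfrak S_2(s)$ onto $\mathfrak S_2(s)\oplus\mathfrak S_1(s)$ and in particular carries dense subspaces to dense subspaces. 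Consequently the span of the $(\eta_j(s),\xi_j(s))$ is dense in $G_{\mathcal R(s)^{-1}}$ for $\nu$-a.e. $s$. Since each $(\eta_j,\xi_j)$ is a $\nu$-measurable section of $\mathfrak S_2\oplus_\nu\mathfrak S_1$, this exhibits $s\to G_{\mathcal R(s)^{-1}}$ as a $\nu$-measurable family of subspaces, i.e. $s\to\mathcal R(s)^{-1}$ is $\nu$-measurable.

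For statement~2, I would realize $\mathrm{Ker}\,\mathcal R(s)$ through two measurability-preserving operations. First, the embedded subspace $\mathfrak S_1(s)\oplus\{0\}$ is a $\nu$-measurable family of closed subspaces of $\mathfrak S_1\oplus_\nu\mathfrak S_2$ (spanned by the sections $(\xi_i^{\mathfrak S_1}(s),0)$), and $s\to G_{\mathcal R(s)}$ is a $\nu$-measurable family of closed subspaces because $\mathcal R(s)$ is closed for $\nu$-a.e. $s$. By statement~2 of Lemma~\ref{l_BBB2}, their intersection $s\to G_{\mathcal R(s)}\cap(\mathfrak S_1(s)\oplus\{0\})$ is again a $\nu$-measurable family of subspaces; note that this intersection consists precisely of the vectors $(\psi,0)$ with $\psi\in\mathrm{Ker}\,\mathcal R(s)$. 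Second, applying the coordinate projection $(\psi_1,\psi_2)\mapsto\psi_1$, which is a $\nu$-measurable family of everywhere-defined bounded operators from $\mathfrak S_1\oplus_\nu\mathfrak S_2$ to $\mathfrak S_1$, and invoking Lemma~\ref{l_BBB0}, I conclude that the image of this intersection, which is exactly $\mathrm{Ker}\,\mathcal R(s)$, is a $\nu$-measurable family of subspaces of $\mathfrak S_1$.

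I expect no single step to present a genuine obstacle, since the real work has been front-loaded into Lemmas~\ref{l_BBB0} and~\ref{l_BBB2}; the only points requiring a little care are the bookkeeping that the flip and the coordinate projection are themselves $\nu$-measurable families of everywhere-defined bounded operators (so that Lemma~\ref{l_BBB0} applies) and the routine verification that $G_{\mathcal R(s)}\cap(\mathfrak S_1(s)\oplus\{0\})$ really does project onto $\mathrm{Ker}\,\mathcal R(s)$. The mild subtlety worth flagging is the precise meaning of ``invertible'' in statement~1: the flip-of-graph identification of $G_{\mathcal R(s)^{-1}}$ requires only that $\mathcal R(s)$ be injective, so I would read invertibility in that sense and not assume the inverse to be bounded or everywhere defined.
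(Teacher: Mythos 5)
Your proposal is correct and follows essentially the same route as the paper: statement~1 via the flip-of-graph identification (the paper phrases this as applying Lemma~\ref{l_BBB0} to the flip operator $\mathcal S(s)\colon(\psi_1,\psi_2)\mapsto(\psi_2,\psi_1)$, while you carry out the same argument directly on a measurable basis of $G_{\mathcal R(s)}$), and statement~2 via intersecting $G_{\mathcal R(s)}$ with $\mathfrak S_1(s)\times\{0\}$ using statement~2 of Lemma~\ref{l_BBB2} and then projecting onto the first coordinate. Your reading of invertibility as injectivity, and your explicit use of the coordinate projection with Lemma~\ref{l_BBB0} to finish statement~2 (which the paper leaves as ``whence~(2) obviously follows''), are both exactly right.
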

\begin{proof}
For $\nu$-a.e. $s$, let $\mathcal S(s)$ be everywhere defined bounded operator from $\mathfrak S_1(s)\oplus \mathfrak S_2(s)$ to $\mathfrak S_2(s)\oplus \mathfrak S_1(s)$ taking $(\psi_1,\psi_2)$ to $(\psi_2,\psi_1)$. Obviously, $\mathcal S$ is a $\nu$-measurable family of operators from $\mathfrak S_1\oplus_\nu \mathfrak S_2$ to $\mathfrak S_2\oplus_\nu \mathfrak S_1$. If $\mathcal R(s)$ is invertible for $\nu$-a.e. $s$, then $G_{\mathcal R(s)^{-1}}$ coincides with the image of $G_{\mathcal R(s)}$ under $\mathcal S(s)$ for $\nu$-a.e. $s$ and, therefore, (1) follows from Lemma~\ref{l_BBB0}. Suppose now that $\mathcal R(s)$ is closed for $\nu$-a.e. $s$.
Since $s\to \mathfrak S_1(s)\times\{0\}$ is a $\nu$-measurable family of closed subspaces of $\mathfrak S_1\oplus_\nu\mathfrak S_2$ and $G_{\mathcal R(s)}\cap (\mathfrak S_1(s)\times\{0\})=\mathrm{Ker}\,\mathcal R(s)\times\{0\}$ for $\nu$-a.e. $s$, statement~2 of Lemma~\ref{l_BBB2} implies that $s\to \mathrm{Ker}\,\mathcal R(s)\times\{0\}$ is a $\nu$-measurable family of closed subspaces of $\mathfrak S_1\oplus_\nu\mathfrak S_2$, whence (2) obviously follows.
\end{proof}

\begin{lemma}\label{l_BBB4}
Let $\mathfrak S_1$, $\mathfrak S_2$, and $\mathfrak S_3$ be $\nu$-measurable families of Hilbert spaces and $\mathcal R_1$ and $\mathcal R_2$ be  $\nu$-measurable families of closed operators from $\mathfrak S_1$ to $\mathfrak S_2$ and from $\mathfrak S_2$ to $\mathfrak S_3$ respectively. Then $s\to \mathcal R_2(s)\mathcal R_1(s)$ is a $\nu$-measurable family of operators from $\mathfrak S_1$ to $\mathfrak S_3$.
\end{lemma}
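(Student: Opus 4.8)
The plan is to express the graph of the composition $\mathcal R_2\mathcal R_1$ as the image, under a coordinate projection, of the intersection of two measurable families of closed subspaces living in the triple direct sum $\mathfrak S_1\oplus_\nu\mathfrak S_2\oplus_\nu\mathfrak S_3$. Once this is set up, the conclusion follows by combining statement~2 of Lemma~\ref{l_BBB2} (measurability of intersections) with Lemma~\ref{l_BBB0} (measurability of images under bounded families). The point is that the ``middle'' variable $\psi_2$, which is suppressed in the graph of the composition, can be kept as an auxiliary coordinate so that the two defining conditions $(\psi_1,\psi_2)\in G_{\mathcal R_1}$ and $(\psi_2,\psi_3)\in G_{\mathcal R_2}$ become intersection conditions for ordinary subspaces.

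First I would form the iterated direct sum $\mathfrak T=\mathfrak S_1\oplus_\nu\mathfrak S_2\oplus_\nu\mathfrak S_3$, so that $\mathfrak T(s)=\mathfrak S_1(s)\oplus\mathfrak S_2(s)\oplus\mathfrak S_3(s)$ for $\nu$-a.e.\ $s$ and a section $(\xi,\eta,\zeta)$ is $\nu$-measurable precisely when $\xi,\eta,\zeta$ are $\nu$-measurable sections of $\mathfrak S_1,\mathfrak S_2,\mathfrak S_3$ respectively. Next I would introduce the $\nu$-a.e.\ defined families of subspaces
\[
\mathfrak A(s)=\{(\psi_1,\psi_2,\psi_3)\in\mathfrak T(s):(\psi_1,\psi_2)\in G_{\mathcal R_1(s)}\},\qquad
\mathfrak B(s)=\{(\psi_1,\psi_2,\psi_3)\in\mathfrak T(s):(\psi_2,\psi_3)\in G_{\mathcal R_2(s)}\}.
\]
Since $\mathcal R_1$ is a $\nu$-measurable family of closed operators, $s\to G_{\mathcal R_1(s)}$ is a $\nu$-measurable family of closed subspaces of $\mathfrak S_1\oplus_\nu\mathfrak S_2$; choosing a $\nu$-measurable basis $(\xi_j,\eta_j)$ of it and adjoining the measurable sections $\xi^{\mathfrak S_3}_k$ of $\mathfrak S_3$ yields the $\nu$-measurable basis $\{(\xi_j,\eta_j,0)\}_j\cup\{(0,0,\xi^{\mathfrak S_3}_k)\}_k$ for $\mathfrak A$, while closedness of $\mathfrak A(s)=G_{\mathcal R_1(s)}\oplus\mathfrak S_3(s)$ follows from closedness of $G_{\mathcal R_1(s)}$. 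The family $\mathfrak B$ is treated analogously. By statement~2 of Lemma~\ref{l_BBB2}, $s\to\mathfrak A(s)\cap\mathfrak B(s)$ is then a $\nu$-measurable family of subspaces of $\mathfrak T$; note that this intersection consists exactly of the triples $(\psi_1,\mathcal R_1(s)\psi_1,\mathcal R_2(s)\mathcal R_1(s)\psi_1)$ with $\psi_1\in D_{\mathcal R_2(s)\mathcal R_1(s)}$.

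To finish, I would consider the coordinate projection $P(s)\colon\mathfrak T(s)\to\mathfrak S_1(s)\oplus\mathfrak S_3(s)$ taking $(\psi_1,\psi_2,\psi_3)$ to $(\psi_1,\psi_3)$. This is an everywhere defined bounded family, and it is $\nu$-measurable by statement~2 of Lemma~\ref{l_BBB} since it maps the canonical measurable basis sections of $\mathfrak T$ to measurable sections of $\mathfrak S_1\oplus_\nu\mathfrak S_3$. Applying Lemma~\ref{l_BBB0} with $\mathfrak S'_1=\mathfrak A\cap\mathfrak B$ shows that $s\to P(s)(\mathfrak A(s)\cap\mathfrak B(s))$ is a $\nu$-measurable family of subspaces of $\mathfrak S_1\oplus_\nu\mathfrak S_3$. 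By the description of the intersection above, $P(s)(\mathfrak A(s)\cap\mathfrak B(s))=G_{\mathcal R_2(s)\mathcal R_1(s)}$ for $\nu$-a.e.\ $s$, which is precisely the statement that $s\to\mathcal R_2(s)\mathcal R_1(s)$ is a $\nu$-measurable family of operators from $\mathfrak S_1$ to $\mathfrak S_3$. I expect the only genuinely delicate point to be the verification that $\mathfrak A$ and $\mathfrak B$ are $\nu$-measurable families of \emph{closed} subspaces of $\mathfrak T$ (exhibiting the explicit bases and checking closedness), since it is there that the closedness hypothesis on $\mathcal R_1$ and $\mathcal R_2$ is used to make Lemma~\ref{l_BBB2} applicable; the remaining steps are routine applications of the preceding lemmas.
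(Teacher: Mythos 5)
Your proposal is correct and follows essentially the same route as the paper's proof: the paper also realizes $G_{\mathcal R_2(s)\mathcal R_1(s)}$ as the image, under the bounded projection $(\psi_1,\psi_2,\psi_3)\to(\psi_1,\psi_3)$, of the intersection of the two subspace families $\{(\psi_1,\psi_2,\psi_3):(\psi_1,\psi_2)\in G_{\mathcal R_1(s)}\}$ and $\{(\psi_1,\psi_2,\psi_3):(\psi_2,\psi_3)\in G_{\mathcal R_2(s)}\}$ in $\mathfrak S_1\oplus_\nu\mathfrak S_2\oplus_\nu\mathfrak S_3$, invoking exactly statement~2 of Lemma~\ref{l_BBB2} and Lemma~\ref{l_BBB0}. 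The only difference is that you spell out the measurable bases and closedness verification that the paper dismisses with ``clearly,'' which is a harmless (indeed welcome) elaboration.
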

\begin{proof}
Let $\mathfrak S'$ and $\mathfrak S''$ be $\nu$-a.e. defined maps such that, for $\nu$-a.e. $s$, $\mathfrak S'(s)$ and $\mathfrak S''(s)$ are subspaces of $\mathfrak S_1(s)\oplus \mathfrak S_2(s)\oplus \mathfrak S_3(s)$ consisting of all $(\psi_1,\psi_2,\psi_3)$ with $(\psi_1,\psi_2)\in G_{\mathcal R_1(s)}$ and $(\psi_2,\psi_3)\in G_{\mathcal R_2(s)}$ respectively. Clearly, both $\mathfrak S'$ and $\mathfrak S''$ are $\nu$-measurable families of closed subspaces of $\mathfrak S_1\oplus_\nu \mathfrak S_2\oplus_\nu \mathfrak S_3$. Let $\mathcal S$ be a $\nu$-a.e. defined map such that, for $\nu$-a.e. $s$, $\mathcal S(s)$ is the everywhere defined bounded operator from $\mathfrak S_1(s)\oplus \mathfrak S_2(s)\oplus \mathfrak S_3(s)$ to $\mathfrak S_1(s)\oplus\mathfrak S_3(s)$ taking $(\psi_1,\psi_2,\psi_3)$ to $(\psi_1,\psi_3)$. Then $G_{\mathcal R_2(s)\mathcal R_1(s)}$ is the image of $\mathfrak S'(s)\cap\mathfrak S''(s)$ under $\mathcal S(s)$ for $\nu$-a.e. $s$. Hence, the result follows from Lemma~\ref{l_BBB0} and statement~2 of Lemma~\ref{l_BBB2}.
\end{proof}

\begin{lemma}\label{l_BBB4a}
Let $\mathfrak S_1$ and $\mathfrak S_2$ be $\nu$-measurable families of Hilbert spaces and $\mathcal R_1$ and $\mathcal R_2$ be $\nu$-measurable families of closed operators from $\mathfrak S_1$ to $\mathfrak S_2$. Then $s\to \mathcal R_2(s)+\mathcal R_1(s)$ is a $\nu$-measurable family of operators from $\mathfrak S_1$ to $\mathfrak S_2$.
\end{lemma}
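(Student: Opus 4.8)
The plan is to mimic the proof of Lemma~\ref{l_BBB4} for products, realizing the graph of the pointwise sum as the image of an intersection of two $\nu$-measurable families of closed subspaces under a fixed bounded ``addition'' map. The extra feature compared with the product case is that the two operators share the same target space $\mathfrak S_2$, so the auxiliary family of Hilbert spaces in which I would work is $\mathfrak S_1\oplus_\nu\mathfrak S_2\oplus_\nu\mathfrak S_2$, the two copies of $\mathfrak S_2$ recording the values $\mathcal R_1(s)\psi$ and $\mathcal R_2(s)\psi$ separately before they are added.

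Concretely, I would introduce $\nu$-a.e. defined maps $\mathfrak S'$ and $\mathfrak S''$ such that, for $\nu$-a.e. $s$, $\mathfrak S'(s)$ consists of all $(\psi_1,\psi_2,\psi_3)$ with $(\psi_1,\psi_2)\in G_{\mathcal R_1(s)}$ and $\psi_3$ arbitrary, while $\mathfrak S''(s)$ consists of all $(\psi_1,\psi_2,\psi_3)$ with $(\psi_1,\psi_3)\in G_{\mathcal R_2(s)}$ and $\psi_2$ arbitrary. Since $G_{\mathcal R_1(s)}$ and $G_{\mathcal R_2(s)}$ are closed and the families $\mathcal R_1$, $\mathcal R_2$ are $\nu$-measurable, both $\mathfrak S'$ and $\mathfrak S''$ are $\nu$-measurable families of closed subspaces of $\mathfrak S_1\oplus_\nu\mathfrak S_2\oplus_\nu\mathfrak S_2$: a $\nu$-measurable basis for $\mathfrak S'$ is obtained by adjoining the sections $s\mapsto(0,0,\xi_k^{\mathfrak S_2}(s))$ to the sections $s\mapsto(\xi_j(s),\eta_j(s),0)$ coming from a $\nu$-measurable basis $(\xi_j,\eta_j)$ of $G_{\mathcal R_1(s)}$, and symmetrically for $\mathfrak S''$. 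By statement~2 of Lemma~\ref{l_BBB2}, $s\to\mathfrak S'(s)\cap\mathfrak S''(s)$ is then a $\nu$-measurable family of subspaces, and a direct computation shows that this intersection equals $\{(\psi_1,\mathcal R_1(s)\psi_1,\mathcal R_2(s)\psi_1):\psi_1\in D_{\mathcal R_1(s)}\cap D_{\mathcal R_2(s)}\}$.

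To finish, I would let $\mathcal S$ be the $\nu$-a.e. defined map sending, for $\nu$-a.e. $s$, the everywhere defined bounded operator $(\psi_1,\psi_2,\psi_3)\mapsto(\psi_1,\psi_2+\psi_3)$ from $\mathfrak S_1(s)\oplus\mathfrak S_2(s)\oplus\mathfrak S_2(s)$ to $\mathfrak S_1(s)\oplus\mathfrak S_2(s)$; this is a $\nu$-measurable family of operators by statement~2 of Lemma~\ref{l_BBB}, since it carries the measurable basis sections to measurable sections. Applying $\mathcal S(s)$ to the intersection above yields exactly $G_{\mathcal R_2(s)+\mathcal R_1(s)}$, because $D_{\mathcal R_2(s)+\mathcal R_1(s)}=D_{\mathcal R_1(s)}\cap D_{\mathcal R_2(s)}$ and $(\mathcal R_2(s)+\mathcal R_1(s))\psi_1=\mathcal R_1(s)\psi_1+\mathcal R_2(s)\psi_1$. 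Hence Lemma~\ref{l_BBB0} shows that $s\to G_{\mathcal R_2(s)+\mathcal R_1(s)}$ is a $\nu$-measurable family of subspaces of $\mathfrak S_1\oplus_\nu\mathfrak S_2$, which is precisely the assertion that $s\to\mathcal R_2(s)+\mathcal R_1(s)$ is $\nu$-measurable.

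The only point that is not completely automatic, and which I expect to be the main obstacle, is verifying that the image under $\mathcal S(s)$ really is the graph of the sum and not a larger set: one must check that $\mathcal S(s)$ is injective on $\mathfrak S'(s)\cap\mathfrak S''(s)$, which holds because on that intersection both $\psi_2$ and $\psi_3$ are determined by $\psi_1$. It is worth noting that, as throughout this appendix, a ``$\nu$-measurable family of subspaces'' is not required to be closed, so there is no difficulty with the image of a closed subspace under a bounded operator possibly failing to be closed; this is exactly what allows the sum $\mathcal R_2(s)+\mathcal R_1(s)$, which need not be a closed operator, to fit into the framework without any additional hypotheses.
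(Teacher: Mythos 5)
Your proof is correct and is essentially identical to the paper's own argument: the paper uses exactly the same auxiliary subspaces $\mathfrak S'(s)$, $\mathfrak S''(s)$ of $\mathfrak S_1\oplus_\nu\mathfrak S_2\oplus_\nu\mathfrak S_2$, the same addition map $\mathcal S(s)\colon(\psi_1,\psi_2,\psi_3)\mapsto(\psi_1,\psi_2+\psi_3)$, and concludes via statement~2 of Lemma~\ref{l_BBB2} and Lemma~\ref{l_BBB0}. Your extra remarks (the explicit measurable bases and the injectivity check) are fine but not needed, since the image of the intersection is computed directly to be $G_{\mathcal R_2(s)+\mathcal R_1(s)}$.
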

\begin{proof}
Let $\mathfrak S'$ and $\mathfrak S''$ be $\nu$-a.e. defined maps such that, for $\nu$-a.e. $s$, $\mathfrak S'(s)$ and $\mathfrak S''(s)$ are subspaces of $\mathfrak S_1(s)\oplus \mathfrak S_2(s)\oplus \mathfrak S_2(s)$ consisting of all $(\psi_1,\psi_2,\psi_3)$ with $(\psi_1,\psi_2)\in G_{\mathcal R_1(s)}$ and $(\psi_1,\psi_3)\in G_{\mathcal R_2(s)}$ respectively. Clearly, both $\mathfrak S'$ and $\mathfrak S''$ are $\nu$-measurable families of closed subspaces of $\mathfrak S_1\oplus_\nu \mathfrak S_2\oplus_\nu \mathfrak S_2$. Let $\mathcal S$ be a $\nu$-a.e. defined map such that, for $\nu$-a.e. $s$, $\mathcal S(s)$ is the everywhere defined bounded operator from $\mathfrak S_1(s)\oplus \mathfrak S_2(s)\oplus \mathfrak S_2(s)$ to $\mathfrak S_1(s)\oplus\mathfrak S_2(s)$ taking $(\psi_1,\psi_2,\psi_3)$ to $(\psi_1,\psi_2+\psi_3)$. Then $G_{\mathcal R_2(s)+\mathcal R_1(s)}$ is the image of $\mathfrak S'(s)\cap\mathfrak S''(s)$ under $\mathcal S(s)$ for $\nu$-a.e. $s$. Hence, the result follows from Lemma~\ref{l_BBB0} and statement~2 of Lemma~\ref{l_BBB2}.
\end{proof}

\begin{lemma}\label{l_BBB5}
Let $\mathfrak S_1$ and $\mathfrak S_2$ be $\nu$-measurable families of Hilbert spaces and $\mathcal R$ be a $\nu$-measurable family of closed operators from $\mathfrak S_1$ to $\mathfrak S_2$. Let $\xi$ be a $\nu$-measurable section of $\mathfrak S_1$ such that $\xi(s)\in D_{\mathcal R(s)}$ for $\nu$-a.e. $s$. Then $s\to \mathcal R(s)\xi(s)$ is a $\nu$-measurable section of $\mathfrak S_2$.
\end{lemma}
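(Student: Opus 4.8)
The plan is to realize the section $s\mapsto\mathcal R(s)\xi(s)$ as the value of a \emph{measurable family of bounded} operators, so that statement~1 of Lemma~\ref{l_BBB} (which only handles the bounded everywhere defined case) becomes applicable. The device is to encode $\xi$ itself as a one-parameter family of operators out of the constant scalar field $\mathcal I_{\C,\nu}$ (see Sec.~\ref{s_const_fam}). Let $\mathcal K$ be the $\nu$-a.e.\ defined family of operators from $\mathcal I_{\C,\nu}$ to $\mathfrak S_1$ given for $\nu$-a.e.\ $s$ by $\mathcal K(s)c=c\,\xi(s)$, $c\in\C$. Each $\mathcal K(s)$ is everywhere defined and bounded, hence closed, and its graph $G_{\mathcal K(s)}$ is the one-dimensional span of $(1,\xi(s))$. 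Since $s\mapsto 1$ is a $\nu$-measurable section of $\mathcal I_{\C,\nu}$ and $\xi$ is a $\nu$-measurable section of $\mathfrak S_1$, the single section $s\mapsto(1,\xi(s))$ is a $\nu$-measurable section of $\mathcal I_{\C,\nu}\oplus_\nu\mathfrak S_1$ whose span is dense in (indeed equal to) $G_{\mathcal K(s)}$; thus $\mathcal K$ is a $\nu$-measurable family of operators.

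Next I would compose with $\mathcal R$. Both $\mathcal K$ (from $\mathcal I_{\C,\nu}$ to $\mathfrak S_1$) and $\mathcal R$ (from $\mathfrak S_1$ to $\mathfrak S_2$) are $\nu$-measurable families of closed operators, so Lemma~\ref{l_BBB4} gives at once that $s\mapsto\mathcal R(s)\mathcal K(s)$ is a $\nu$-measurable family of operators from $\mathcal I_{\C,\nu}$ to $\mathfrak S_2$. The crucial point of the construction is that this composite is everywhere defined and bounded: for $\nu$-a.e.\ $s$ we have $\xi(s)\in D_{\mathcal R(s)}$, and since $D_{\mathcal R(s)}$ is a linear subspace, $c\,\xi(s)\in D_{\mathcal R(s)}$ for every $c\in\C$. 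Hence the domain of $\mathcal R(s)\mathcal K(s)$ is all of $\C$ and $(\mathcal R(s)\mathcal K(s))c=c\,\mathcal R(s)\xi(s)$; being an operator out of the one-dimensional space $\C$, it is automatically bounded.

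Finally, applying statement~1 of Lemma~\ref{l_BBB} to the $\nu$-measurable, everywhere defined, bounded family $\mathcal R\mathcal K$ and to the $\nu$-measurable section $s\mapsto 1$ of $\mathcal I_{\C,\nu}$, I conclude that $s\mapsto(\mathcal R(s)\mathcal K(s))(1)=\mathcal R(s)\xi(s)$ is a $\nu$-measurable section of $\mathfrak S_2$, which is precisely the assertion of the lemma.

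The only places requiring care are the two uses of the one-dimensionality of the scalar field: it is what forces the composite $\mathcal R\mathcal K$ to be bounded, thereby sidestepping the unbounded character of $\mathcal R$, and its domain is all of $\mathcal I_{\C,\nu}(s)$ exactly because $\xi(s)\in D_{\mathcal R(s)}$. This indirect route is what avoids the main obstacle: a direct attempt to extract $\mathcal R(s)\xi(s)$ from the graph $G_{\mathcal R(s)}$ via Lemma~\ref{l_meas_sec} is circular, since the $\mathfrak S_1\oplus_\nu\mathfrak S_2$-inner products of graph elements involve the very second components $\mathcal R(s)\xi_j(s)$, $\mathcal R(s)\xi(s)$ whose measurability is at issue; funnelling the computation through the composition lemma removes this difficulty.
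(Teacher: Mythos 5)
Your proof is correct, but it takes a genuinely different route from the paper's. The paper works directly at the level of the graph: it takes a $\nu$-measurable basis $\xi_1,\xi_2,\ldots$ of $\mathfrak S_1$, forms the family of orthogonal projections $\mathcal P(s)$ of $\mathfrak S_1(s)\oplus\mathfrak S_2(s)$ onto $G_{\mathcal R(s)}$ (measurable by Lemma~\ref{l_BBB1}), shows that the measurable sections $\zeta_j(s)=\mathcal P(s)(\xi_j(s),0)$ span $G_{\mathcal R(s)}$ densely, and then exploits self-adjointness of the projection to compute $\langle\zeta_j(s),(\xi(s),\mathcal R(s)\xi(s))\rangle=\langle\xi_j(s),\xi(s)\rangle$, so that Lemma~\ref{l_meas_sec} applies with inner products that never involve the unknown second component. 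This shows, incidentally, that your closing remark is only about the \emph{naive} use of Lemma~\ref{l_meas_sec}: the paper's graph-based argument does go through that lemma, but escapes the circularity you describe precisely via the projection trick. Your route instead encodes $\xi$ as the rank-one family $\mathcal K(s)c=c\,\xi(s)$ out of the constant field $\mathcal I_{\C,\nu}$, invokes the composition result (Lemma~\ref{l_BBB4}, which is stated before Lemma~\ref{l_BBB5}, so there is no forward reference), and uses the one-dimensionality of the domain to force $\mathcal R(s)\mathcal K(s)$ to be everywhere defined and bounded, after which statement~1 of Lemma~\ref{l_BBB} finishes the job by evaluation at the constant section $s\mapsto 1$. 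What each approach buys: yours is shorter on the page and conceptually slick, but it rests on the heavier Lemma~\ref{l_BBB4}, whose proof itself runs through intersection measurability (Lemma~\ref{l_BBB2}) and hence through the same projection machinery; the paper's proof is more self-contained, needing only Lemmas~\ref{l_BBB}, \ref{l_BBB1}, and \ref{l_meas_sec}, at the cost of the explicit footnote argument that the projected basis is dense in the graph.
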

\begin{proof}
Let $\zeta$ be a $\nu$-a.e. defined section of $\mathfrak S_1\oplus_\nu \mathfrak S_2$ such that $\zeta(s)=(\xi(s),\mathcal R(s)\xi(s))$ for $\nu$-a.e. $s$. It suffices to show that $\zeta$ is $\nu$-measurable. Let $\xi_1,\xi_2,\ldots$ be a $\nu$-measurable basis in $\mathfrak S_1$. Let $\mathcal P$ be a $\nu$-a.e. defined map such that $\mathcal P(s)$ is the orthogonal projection of $\mathfrak S_1(s)\oplus\mathfrak S_2(s)$ onto $G_{\mathcal R(s)}$ for $\nu$-a.e. $s$. For $j=1,2,\ldots$, let $\zeta_j$ be a $\nu$-a.e. defined map such that $\zeta_j(s) = \mathcal P(s)(\xi_j(s),0)$ for $\nu$-a.e. $s$. By Lemma~\ref{l_BBB1}, $\mathcal P$ is a $\nu$-measurable family of operators in $\mathfrak S_1\oplus_\nu\mathfrak S_2$, and Lemma~\ref{l_BBB} implies that $\zeta_j$ is a $\nu$-measurable section of $\mathfrak S_1\oplus_\nu\mathfrak S_2$ for all $j=1,2,\ldots$. Moreover, the linear span of $\zeta_1(s),\zeta_2(s),\ldots$ is dense in $G_{\mathcal R(s)}$ for $\nu$-a.e. $s$.\footnote{Indeed, let $\mathfrak H_1$ and $\mathfrak H_2$ be Hilbert spaces, $R$ be a closed operator from $\mathfrak H_1$ to $\mathfrak H_2$, $P$ be the orthogonal projection of $\mathfrak H_1\oplus\mathfrak H_2$ onto $G_R$, and $\psi_1,\psi_2,\ldots$ be elements of $\mathfrak H_1$ whose linear span is dense in $\mathfrak H_1$. If $(\psi,\psi')\in G_R$ is orthogonal to all vectors $P(\psi_j,0)$, then ${\langle} \psi,\psi_j{\rangle} = {\langle} (\psi,\psi'),P(\psi_j,0){\rangle} = 0$ for all $j$ and, hence, $\psi=0$. As $G_R$ is a graph of an operator, it follows that $\psi'=0$. Thus, the linear span of $P(\psi_j,0)$ is dense in $G_R$.} As $\langle \zeta_j(s),\zeta(s)\rangle = \langle \xi_j(s),\xi(s)\rangle$ for $\nu$-a.e. $s$, we conclude that $s\to \langle \zeta_j(s),\zeta(s)\rangle$ is a $\nu$-measurable function for all $j=1,2,\ldots$. Hence, $\zeta$ is $\nu$-measurable by Lemma~\ref{l_meas_sec}.
\end{proof}

\begin{remark}
The definition of measurability of a family of operators given in this section is mainly the same as in~Ref.~\onlinecite{Barriere1951}. The only difference is that, in contrast to~Ref.~\onlinecite{Barriere1951}, we do not require operators to be closed and densely defined. This is essential for Propositions~\ref{l_BBB4} and~\ref{l_BBB4a} because the properties of being closed and densely defined are not inherited by sums and products of operators.
\end{remark}

\begin{remark}\label{rem_measurability}
Let $R$ be a closed operator in a Hilbert space $\mathfrak H$, $P(R)$ be the orthogonal projection of $\mathfrak H\oplus\mathfrak H$ onto $G_R$, and $\pi_{1,2}(\mathfrak H)\colon \mathfrak H\oplus\mathfrak H\to \mathfrak H$ and $j_{1,2}(\mathfrak H)\colon \mathfrak H\to\mathfrak H\oplus\mathfrak H$ be the canonical projections and embeddings respectively. For $i,k=1,2$, we set $P_{ik}(R)= \pi_i(\mathfrak H) P(R) j_k(\mathfrak H)$. The $2\times 2$-matrix composed of bounded operators $P_{ik}(R)$ is called the characteristic matrix of (generally, unbounded) operator $R$. In~Ref.~\onlinecite{Nussbaum}, a $\nu$-a.e. defined family $\mathcal R$ of closed operators in a $\nu$-measurable family $\mathfrak S$ of Hilbert spaces was called $\nu$-measurable if $s\to P_{ik}(\mathcal R(s))\xi(s)$ is a $\nu$-measurable section of $\mathfrak S$ for all $i,k=1,2$ and every $\nu$-measurable section $\xi$ of $\mathfrak S$. It follows from Lemmas~\ref{l_BBB} and~\ref{l_BBB1} that this definition is equivalent to that given in this section because
\[
P(\mathcal R(s)) =\sum_{i,k=1}^2 j_i(\mathfrak S(s))P_{ik}(\mathcal R(s))\pi_k(\mathfrak S(s))
\]
for $\nu$-a.e. $s$ and $s\to \pi_{1,2}(\mathfrak S(s))$ and $s\to j_{1,2}(\mathfrak S(s))$ are obviously $\nu$-measurable families (in our sense) of operators from $\mathfrak S\oplus_\nu \mathfrak S$ to $\mathfrak S$ and from $\mathfrak S$ to $\mathfrak S\oplus_\nu \mathfrak S$ respectively.
\end{remark}

\begin{remark}
Let $\mathcal R$ be a $\nu$-a.e. defined family of operators in a $\nu$-measurable family $\mathfrak S$ of Hilbert spaces. Suppose $\mathcal R(s)$ is self-adjoint for $\nu$-a.e. $s$. Then $\mathcal R(s)+i$ has an everywhere defined bounded inverse for $\nu$-a.e. $s$ and Lemmas~\ref{l_BBB} and~\ref{l_BBB3} imply that $\mathcal R$ is $\nu$-measurable if and only if $s\to (\mathcal R(s)+i)^{-1}\xi(s)$ is a $\nu$-measurable section of $\mathfrak S$ for any $\nu$-measurable section $\xi$ of $\mathfrak S$. In~Ref.~\onlinecite{RS4}, the last condition was adopted as the definition of measurability for families of self-adjoint operators.
\end{remark}

\begin{remark}
A $\nu$-a.e. defined family $\mathcal R$ of operators in a $\nu$-measurable family $\mathfrak S$ of Hilbert spaces is said to be weakly $\nu$-measurable~\cite{Nussbaum} if $s\to \mathcal R(s)\xi(s)$ is a $\nu$-measurable section of $\mathfrak S$ for every $\nu$-measurable section $\xi$ of $\mathfrak S$ satisfying $\xi(s)\in D_{\mathcal R(s)}$ for $\nu$-a.e. $s$. Lemma~\ref{l_BBB5} states that every $\nu$-measurable family of closed operators is also weakly $\nu$-measurable. In~Ref.~\onlinecite{Nussbaum}, where Lemma~\ref{l_BBB5} was originally proved, a question was posed whether this statement can be reverted, i.e., whether every weakly $\nu$-measurable family of closed operators is $\nu$-measurable. In~Ref.~\onlinecite{Gesztesy2012}, it was shown by constructing a counterexample that the answer is negative.
\end{remark}

\subsection{Direct integrals of Hilbert spaces}

Given a $\nu$-measurable family $\mathfrak S$ of Hilbert spaces, we denote by $\mathcal M(\mathfrak S,\nu)$ the set of all $\nu$-equivalence classes $[f]_\nu$, where $f$ is a $\nu$-measurable section of $\mathfrak S$. Clearly, $\mathcal M(\mathfrak S,\nu)$ has a natural structure of a complex vector space (for any $\nu$-measurable sections $\xi$ and $\eta$ and any $k\in\mathbb C$, we set $[\xi]_{\nu} + [\eta]_{\nu} = [\xi+\eta]_{\nu}$ and $k[\xi]_{\nu} = [k\xi]_{\nu}$). Now suppose $\mathfrak S'$ is a $\nu$-a.e. defined family of subspaces of $\mathfrak S$. We denote by $(\mathfrak S)\mbox{-}\!\int^\oplus\mathfrak S'(s)\,d\nu(s)$ the linear subspace of the space $\mathcal M(\mathfrak S,\nu)$ consisting of all its elements $\xi$ such that $\xi(s)\in \mathfrak S'(s)$ for $\nu$-a.e. $s$ and $s\to\|\xi(s)\|^2$ is a $\nu$-integrable function. The space $(\mathfrak S)\mbox{-}\!\int^\oplus\mathfrak S'(s)\,d\nu(s)$ is endowed with the scalar product defined by the relation
\[
\langle \xi,\eta\rangle = \int \langle \xi(s),\eta(s)\rangle\,d\nu(s).
\]
For any $\nu$-measurable family $\mathfrak S$, the space $(\mathfrak S)\mbox{-}\!\int^\oplus\mathfrak S(s)\,d\nu(s)$ is complete (the proof is essentially the same as that of completeness of ordinary $L_2$-spaces) and, hence, is a Hilbert space.

As a rule, the $\nu$-measurable family $\mathfrak S$ can be easily deduced from the context. So we usually omit the prefix $(\mathfrak S)\mbox{-}$ and write  $\int^\oplus\mathfrak S'(s)\,d\nu(s)$ in place of $(\mathfrak S)\mbox{-}\!\int^\oplus\mathfrak S'(s)\,d\nu(s)$. The simplest examples of direct integrals are $L_2$-spaces and countable direct sums of Hilbert spaces, which can be represented using the constant and discrete families of Hilbert spaces respectively (see Sec.~\ref{s_const_fam}). Indeed, for any separable Hilbert space $\mathfrak h$, we obviously have
\begin{equation}\label{l2dirint}
L_2(\s_\nu,\mathfrak h,\nu) = \int^\oplus\mathcal I_{\mathfrak h,\nu}(s)\,d\nu(s).
\end{equation}
Similarly, if $\mathscr S$ is a countable set, $\nu$ is a counting measure on $\mathscr S$, and $\mathfrak S$ is a $\nu$-a.e. defined family of Hilbert spaces, then
\begin{equation}\label{3dirint}
\bigoplus_{s\in\mathscr S} \mathfrak S(s) = \int^\oplus \mathfrak S_{\!\!\mathscr S}(s)\,d\nu(s).
\end{equation}

Given a $\nu$-a.e. defined section $\xi$ of $\mathfrak S$ and a $\nu$-measurable set $A$, we denote by $\xi^A$ the $\nu$-equivalence class such that $\xi^A(s)=\xi(s)$ for $\nu$-a.e. $s\in A$ and $\xi^A(s)=0$ for $\nu$-a.e. $s\in \s_\nu\setminus A$.

\begin{lemma}\label{l_di2}
Let $\mathfrak S$ be a $\nu$-measurable family of Hilbert spaces and $\mathfrak S'$ be a $\nu$-measurable family of subspaces of $\mathfrak S$. Then there is a $\nu$-measurable basis in $\mathfrak S'$ consisting of elements of $\int^\oplus\mathfrak S'(s)\,d\nu(s)$.
\end{lemma}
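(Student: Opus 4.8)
The plan is to start from a given $\nu$-measurable basis $\xi_1,\xi_2,\ldots$ in $\mathfrak S'$ (which exists by the very definition of a $\nu$-measurable family of subspaces) and to replace it by a countable family whose members additionally lie in $\int^\oplus\mathfrak S'(s)\,d\nu(s)$. The only obstruction is that the sections $\xi_j$ need not be square-integrable: $s\to\|\xi_j(s)\|^2$ may fail to be $\nu$-integrable. I would remove this defect by truncating each $\xi_j$ to regions on which it is simultaneously norm-bounded and supported over a set of finite measure, using the cut-off operation $\xi\mapsto\xi^A$ introduced just above together with the $\sigma$-finiteness of $\nu$.

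Concretely, since $\nu$ is $\sigma$-finite and $D_\nu$ is a $\delta$-ring, the set $\s_\nu$ is a countable union of elements of $D_\nu$, and after passing to successive differences I may take this union to be disjoint, $\s_\nu=\bigcup_{k=1}^\infty C_k$ with pairwise disjoint $C_k\in D_\nu$. For each $j,k,n\ge 1$ I would set
\[
A_{j,k,n}=C_k\cap\{s\in\s_\nu:\|\xi_j(s)\|^2\le n\}
\]
and $\eta_{j,k,n}=\xi_j^{A_{j,k,n}}$. The set $A_{j,k,n}$ is $\nu$-measurable because $s\to\|\xi_j(s)\|^2=\langle\xi_j(s),\xi_j(s)\rangle$ is a $\nu$-measurable function (the inner product of two $\nu$-measurable sections is $\nu$-measurable, as established after Lemma~\ref{l_orth}), so $\eta_{j,k,n}$ is well-defined.

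Next I would verify that each $\eta_{j,k,n}$ has the required properties. It is a $\nu$-measurable section of $\mathfrak S$ (its pairings with the $\xi_i^{\mathfrak S}$ are obtained from those of $\xi_j$ by multiplication with $\chi_{A_{j,k,n}}$), it satisfies $\eta_{j,k,n}(s)\in\mathfrak S'(s)$ for $\nu$-a.e. $s$ (its value at each $s$ is either $\xi_j(s)\in\mathfrak S'(s)$ or $0$), and it is square-integrable because $\|\eta_{j,k,n}(s)\|^2\le n\,\chi_{C_k}(s)$ for $\nu$-a.e. $s$, while $s\to n\,\chi_{C_k}(s)$ is $\nu$-integrable with integral $n\,\nu(C_k)$. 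Hence $\eta_{j,k,n}\in\int^\oplus\mathfrak S'(s)\,d\nu(s)$.

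Finally I would check density. For $\nu$-a.e. $s$ and every fixed $j$, the point $s$ lies in exactly one of the sets $C_k$, and for every $n\ge\|\xi_j(s)\|^2$ one has $s\in A_{j,k,n}$, so that $\eta_{j,k,n}(s)=\xi_j(s)$. Thus each $\xi_j(s)$ belongs to the linear span of $\{\eta_{j',k',n'}(s)\}$, and since the span of $\xi_1(s),\xi_2(s),\ldots$ is dense in $\mathfrak S'(s)$ for $\nu$-a.e. $s$, so is the span of the $\eta_{j,k,n}(s)$. Reindexing the countable family $\{\eta_{j,k,n}\}_{j,k,n\ge 1}$ as a single sequence then produces the desired $\nu$-measurable basis in $\mathfrak S'$ consisting of elements of $\int^\oplus\mathfrak S'(s)\,d\nu(s)$. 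I do not anticipate a genuine difficulty here; the only point requiring care is that square-integrability and density impose opposite demands, which is precisely why the double exhaustion—over the norm threshold $n$ and over the finite-measure pieces $C_k$—is needed.
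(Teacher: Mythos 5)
Your proof is correct and follows essentially the same route as the paper: start from a $\nu$-measurable basis in $\mathfrak S'$ and use $\sigma$-finiteness together with the cut-off $\xi\mapsto\xi^A$ to force square-integrability while preserving the pointwise linear spans. The only (cosmetic) difference is that the paper achieves norm-boundedness by multiplying each $\xi_j$ by a suitable $\nu$-measurable scalar function, whereas you intersect with the sublevel sets $\{s:\|\xi_j(s)\|^2\le n\}$; both devices work.
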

\begin{proof}
Let $\xi_1,\xi_2,\ldots$ be a $\nu$-measurable basis in $\mathfrak S'$. Multiplying $\xi_j$ by suitable $\nu$-measurable functions, we can ensure that $\|\xi_j(s)\|\leq 1$ for all $j=1,2,\ldots$ and $\nu$-a.e. $s$. Let $A_1,A_2,\ldots$ be elements of $D_\nu$ such that $\s_\nu=\bigcup_{k=1}^\infty A_k$. Then $\xi^{A_k}_j$ with $j,k=1,2,\ldots$ obviously constitute the required $\nu$-measurable basis in $\mathfrak S'$.
\end{proof}

\begin{lemma}\label{l_di1}
Let $\mathfrak S$ and $\mathfrak S'$ be as in Lemma~$\mathrm{\ref{l_di2}}$
and $\mathfrak H'=\int^\oplus \mathfrak S'(s)\,d\nu(s)$. Then $\mathfrak H^{\prime\bot} = \int^\oplus \mathfrak S'(s)^\bot\,d\nu(s)$ and $\overline{\mathfrak H'}=\int^\oplus\overline{\mathfrak S'(s)}\,d\nu(s)$.
\end{lemma}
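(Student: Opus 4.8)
The plan is to prove both identities as equalities of subsets of $\mathfrak H=\int^\oplus\mathfrak S(s)\,d\nu(s)$, deriving the closure formula from the orthogonal-complement formula by a double-complement argument. First I would record that the families $s\to\overline{\mathfrak S'(s)}$ and $s\to\mathfrak S'(s)^\bot$ are $\nu$-measurable: a $\nu$-measurable basis for $\mathfrak S'$ is also one for $\overline{\mathfrak S'}$, so the latter is a $\nu$-measurable family of closed subspaces; letting $\mathcal P(s)$ be the orthogonal projection onto $\overline{\mathfrak S'(s)}$, Lemma~\ref{l_BBB1} makes $\mathcal P$ a $\nu$-measurable family of operators, whence $1_{\mathfrak S(s)}-\mathcal P(s)$, which sends each $\nu$-measurable section $\xi$ to the $\nu$-measurable section $\xi(s)-\mathcal P(s)\xi(s)$, is $\nu$-measurable by Lemma~\ref{l_BBB}. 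Being the orthogonal projection onto $\mathfrak S'(s)^\bot=\overline{\mathfrak S'(s)}^\bot$, it shows via Lemma~\ref{l_BBB1} that $s\to\mathfrak S'(s)^\bot$ is a $\nu$-measurable family of subspaces, so that the direct integrals appearing in the statement are eligible inputs for the arguments below.

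For the identity $\mathfrak H^{\prime\bot}=\int^\oplus\mathfrak S'(s)^\bot\,d\nu(s)$ I would prove two inclusions. The inclusion $\supseteq$ is immediate: if $\eta(s)\in\mathfrak S'(s)^\bot$ for $\nu$-a.e.\ $s$ and $\xi\in\mathfrak H'$, then $\langle\xi(s),\eta(s)\rangle=0$ for $\nu$-a.e.\ $s$, so $\langle\xi,\eta\rangle=\int\langle\xi(s),\eta(s)\rangle\,d\nu(s)=0$. For the reverse inclusion, the key step is to pass from global to pointwise orthogonality. By Lemma~\ref{l_di2} I choose a $\nu$-measurable basis $\xi_1,\xi_2,\ldots$ in $\mathfrak S'$ whose members already lie in $\mathfrak H'$. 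Given $\eta\in\mathfrak H^{\prime\bot}$ and a set $A\in D_\nu$, the restriction $\xi_j^A$ still lies in $\mathfrak H'$, so $\int_A\langle\xi_j(s),\eta(s)\rangle\,d\nu(s)=\langle\xi_j^A,\eta\rangle=0$; since $s\to\langle\xi_j(s),\eta(s)\rangle$ is $\nu$-integrable by the Cauchy--Schwarz inequality and integrates to zero over every element of the intersection-closed class $D_\nu$ (over which $\nu$ is $\sigma(D_\nu)$-compatible), Lemma~\ref{l_vanish} gives $\langle\xi_j(s),\eta(s)\rangle=0$ for $\nu$-a.e.\ $s$. Intersecting the corresponding null sets over the countable index $j$ and using the density of the span of $\xi_1(s),\xi_2(s),\ldots$ in $\mathfrak S'(s)$, I obtain $\eta(s)\in\mathfrak S'(s)^\bot$ for $\nu$-a.e.\ $s$; as $\eta\in\mathfrak H$ has square-integrable norm, $\eta\in\int^\oplus\mathfrak S'(s)^\bot\,d\nu(s)$.

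The closure formula then follows formally. Since $\mathfrak H'$ is a linear subspace of the Hilbert space $\mathfrak H$, we have $\overline{\mathfrak H'}=(\mathfrak H^{\prime\bot})^\bot$. Applying the already proved complement formula to the $\nu$-measurable family $s\to\mathfrak S'(s)^\bot$ in place of $\mathfrak S'$ yields $\bigl(\int^\oplus\mathfrak S'(s)^\bot\,d\nu(s)\bigr)^\bot=\int^\oplus(\mathfrak S'(s)^\bot)^\bot\,d\nu(s)=\int^\oplus\overline{\mathfrak S'(s)}\,d\nu(s)$, and combining this with the first identity gives $\overline{\mathfrak H'}=\int^\oplus\overline{\mathfrak S'(s)}\,d\nu(s)$. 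I expect the only real difficulty to be the reverse inclusion in the complement formula, specifically the measure-theoretic step that converts $\langle\xi_j^A,\eta\rangle=0$ for all $A$ into an a.e.\ pointwise statement; everything else is either bookkeeping with $\nu$-measurable families or a direct application of the cited lemmas.
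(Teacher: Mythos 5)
Your proof is correct and follows essentially the same route as the paper: the orthogonal-complement identity is obtained exactly as in the paper's proof (restrictions $\xi_j^A$ of a $\nu$-measurable basis from Lemma~\ref{l_di2} lying in $\mathfrak H'$, then passing from vanishing integrals to $\nu$-a.e.\ pointwise orthogonality), and the closure formula by the same double-complement argument via $\overline{\mathfrak H'}=(\mathfrak H^{\prime\bot})^\bot$ and $\overline{\mathfrak S'(s)}=(\mathfrak S'(s)^\bot)^\bot$. Your only addition is the explicit verification (via Lemmas~\ref{l_BBB1} and~\ref{l_BBB}) that $s\to\mathfrak S'(s)^\bot$ is a $\nu$-measurable family, a hypothesis the paper uses implicitly when it applies the complement identity a second time; this is a sound piece of care rather than a different method.
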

\begin{proof}
Let $\eta\in \mathfrak H^{\prime\bot}$ and $\xi\in \mathfrak H'$. As $\xi^A\in \mathfrak H'$ for any $A\in D_\nu$, we obtain $\int_A\langle \xi(s),\eta(s)\rangle\,d\nu(s) = \langle \xi_A,\eta\rangle=0$ and, hence, $\langle \xi(s),\eta(s)\rangle=0$ for $\nu$-a.e. $s$. By Lemma~\ref{l_di2}, there is a $\nu$-measurable basis $\xi_1,\xi_2,\ldots$ in $\mathfrak S'$ such that $\xi_j\in \mathfrak H'$ for all $j=1,2,\ldots$. Since $\langle \xi_j(s),\eta(s)\rangle=0$ for $\nu$-a.e. $s$ and all $j=1,2,\ldots$, we have $\eta(s)\in \mathfrak S'(s)^\bot$ for $\nu$-a.e. $s$. This implies that $\mathfrak H^{\prime\bot}=\int^\oplus\mathfrak S'(s)^\bot\,d\nu(s)$. The equality for the closure of $\mathfrak H'$ now follows from the relations $\overline{\mathfrak H'} = (\mathfrak H^{\prime\bot})^\bot$ and $\overline{\mathfrak S'(s)}=(\mathfrak S'(s)^\bot)^\bot$.
\end{proof}

Let $\mathfrak S$ be a $\nu$-measurable family of Hilbert spaces and $\mathfrak H = \int^\oplus\mathfrak S(s)\,d\nu(s)$. Given a $\nu$-measurable complex function $g$, we denote by $\mathcal T^{\nu,\mathfrak S}_g$ the operator of multiplication by $g$ in $\mathfrak H$. By definition, the graph of $\mathcal T^{\nu,\mathfrak S}_g$ consists of all pairs $(\xi_1,\xi_2)\in \mathfrak H\oplus \mathfrak H$ such that $\xi_2(s) = g(s)\xi_1(s)$ for $\nu$-a.e. $s$. The operator $\mathcal T^{\nu,\mathfrak S}_g$ is closed and densely defined and
its adjoint is equal to $\mathcal T^{\nu,\mathfrak S}_{\bar g}$, where $\bar g$ is the complex conjugate function of $g$. In particular, if $g$ is real, then $\mathcal T^{\nu,\mathfrak S}_{g}$ is self-adjoint. If $g$ is $\nu$-essentially bounded, then $\mathcal T^{\nu,\mathfrak S}_{g}$ is everywhere defined and bounded.

\begin{lemma}\label{l_di3}
Let $\mathfrak S$ be a $\nu$-measurable family of Hilbert spaces, $\mathfrak H = \int^\oplus\mathfrak S(s)\,d\nu(s)$, and $\mathfrak H'$ be a closed linear subspace of $\mathfrak H$ such that $\mathcal T^{\nu,\mathfrak S}_{g}\xi\in \mathfrak H'$ for any $\xi\in \mathfrak H'$ and every $\nu$-essentially bounded complex function $g$. Then there is a $\nu$-measurable family $\mathfrak S'$ of closed subspaces of $\mathfrak S$ such that $\mathfrak H' = \int^\oplus \mathfrak S'(s)\,d\nu(s)$.
\end{lemma}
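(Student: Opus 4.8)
The plan is to build the fibers $\mathfrak S'(s)$ directly from the fibers of a countable dense subset of $\mathfrak H'$, and then to establish the equality $\mathfrak H'=\int^\oplus\mathfrak S'(s)\,d\nu(s)$ by proving the two inclusions separately; the nontrivial one will rest entirely on the multiplication-invariance of $\mathfrak H'$. Since $\mathfrak H=\int^\oplus\mathfrak S(s)\,d\nu(s)$ is separable, its closed subspace $\mathfrak H'$ is separable as well, so I would first fix a sequence $\xi_1,\xi_2,\ldots$ of elements of $\mathfrak H'$ that is dense in $\mathfrak H'$. Each $\xi_j$ is, by definition of the direct integral, (the $\nu$-equivalence class of) a $\nu$-measurable section of $\mathfrak S$.

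I would then set $\mathfrak S'(s)$ to be the closed linear span of $\{\xi_j(s)\}_{j\geq 1}$ for $\nu$-a.e.\ $s$. Because the $\xi_j$ are $\nu$-measurable sections whose fibers span a dense subspace of $\mathfrak S'(s)$ by construction, the family $\mathfrak S'$ is a $\nu$-measurable family of closed subspaces of $\mathfrak S$ (the $\xi_j$ constitute a $\nu$-measurable basis in $\mathfrak S'$). Hence $\mathfrak K:=\int^\oplus\mathfrak S'(s)\,d\nu(s)$ is a closed subspace of $\mathfrak H$. The inclusion $\mathfrak H'\subseteq\mathfrak K$ is then routine: each $\xi_j$ belongs to $\mathfrak K$ (its fiber lies in $\mathfrak S'(s)$ and $s\mapsto\|\xi_j(s)\|^2$ is $\nu$-integrable), and since $\mathfrak K$ is closed and $\{\xi_j\}$ is dense in $\mathfrak H'$, all of $\mathfrak H'$ lies in $\mathfrak K$.

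The heart of the argument is the reverse inclusion $\mathfrak K\subseteq\mathfrak H'$, which I would deduce from $\mathfrak H'^\bot\subseteq\mathfrak K^\bot$ by taking orthogonal complements. Let $\eta\in\mathfrak H'^\bot$ and let $A$ be any $\nu$-measurable set. The function $\chi_A$ (equal to unity on $A$ and to zero on $\s_\nu\setminus A$) is $\nu$-essentially bounded, so by hypothesis $\xi_j^A=\mathcal T^{\nu,\mathfrak S}_{\chi_A}\xi_j$ still lies in $\mathfrak H'$, whence $0=\langle\xi_j^A,\eta\rangle=\int_A\langle\xi_j(s),\eta(s)\rangle\,d\nu(s)$. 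As $A$ is arbitrary, $\langle\xi_j(s),\eta(s)\rangle=0$ for $\nu$-a.e.\ $s$; intersecting the countably many exceptional null sets over $j$ gives $\eta(s)\perp\xi_j(s)$ for every $j$ and $\nu$-a.e.\ $s$, so $\eta(s)\in\mathfrak S'(s)^\bot$ for $\nu$-a.e.\ $s$. By Lemma~\ref{l_di1} this means precisely that $\eta\in\int^\oplus\mathfrak S'(s)^\bot\,d\nu(s)=\mathfrak K^\bot$. Thus $\mathfrak H'^\bot\subseteq\mathfrak K^\bot$, and since both spaces are closed, $\mathfrak K=\mathfrak K^{\bot\bot}\subseteq\mathfrak H'^{\bot\bot}=\mathfrak H'$, which completes the identification.

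I expect the only genuine difficulty to be this reverse inclusion, and within it the localization step that converts the global identities $\langle\xi_j^A,\eta\rangle=0$ into the pointwise relation $\eta(s)\perp\mathfrak S'(s)$. This is the unique point where the invariance of $\mathfrak H'$ under all the multiplications $\mathcal T^{\nu,\mathfrak S}_g$ is genuinely used; the measurability of $\mathfrak S'$ and the easy inclusion are then bookkeeping built on the definition of a $\nu$-measurable family of subspaces and on Lemma~\ref{l_di1}.
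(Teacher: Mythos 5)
Your overall architecture matches the paper's proof: define $\mathfrak S'(s)$ as the closed linear span of the fibers of countably many elements of $\mathfrak H'$, observe that these elements form a $\nu$-measurable basis of $\mathfrak S'$, get the inclusion $\mathfrak H'\subseteq\mathfrak K$ from closedness, and get the reverse inclusion by passing to orthogonal complements, multiplying by characteristic functions $\chi_A$, and invoking Lemma~\ref{l_di1}. However, your very first step contains a genuine gap: you assert that $\mathfrak H=\int^\oplus\mathfrak S(s)\,d\nu(s)$ is separable, and the whole construction of the sequence $\xi_1,\xi_2,\ldots$ (hence of $\mathfrak S'$, and hence the density argument giving $\mathfrak H'\subseteq\mathfrak K$) rests on it. Nothing in the hypotheses guarantees separability of $\mathfrak H$: throughout the appendix $\nu$ is only assumed to be a $\sigma$-finite positive measure, not a standard one, and although every fiber $\mathfrak S(s)$ is separable, the direct integral need not be. For the constant family $\mathcal I_{\C,\nu}$ the direct integral is $L_2(\s_\nu,\nu)$, which is non-separable whenever the measure algebra of $\nu$ is not countably generated (e.g., the completed product of uncountably many Bernoulli measures, a finite and hence $\sigma$-finite measure). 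This generality is not a technicality for the paper: Lemma~\ref{l_di3} feeds into Proposition~\ref{p_di}, which is applied in Sec.~\ref{s6} under assumption (A), where $\mathfrak H$ is not assumed separable.

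The paper closes exactly this hole with one idea your proof never uses: the invariance hypothesis implies that the orthogonal projection $P$ of $\mathfrak H$ onto $\mathfrak H'$ \emph{commutes} with every $\mathcal T^{\nu,\mathfrak S}_g$ (the family of multiplication operators is involutive, since $(\mathcal T^{\nu,\mathfrak S}_g)^*=\mathcal T^{\nu,\mathfrak S}_{\bar g}$, so $\mathfrak H'$ reduces each of them). One then takes a $\nu$-measurable basis $\xi_1,\xi_2,\ldots$ of $\mathfrak S$ consisting of elements of $\mathfrak H$ --- available from Lemma~\ref{l_di2} with no separability assumption, because a $\nu$-measurable family carries a countable fundamental sequence of sections by definition --- and sets $\eta_j=P\xi_j$, $\mathfrak S'(s)=\overline{\mathrm{span}}\{\eta_j(s)\}_{j\geq 1}$. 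The commutation gives $\eta_j^A=\mathcal T^{\nu,\mathfrak S}_{\chi_A}P\xi_j=P\xi_j^A$; since the $\xi_j^A$ span a dense subspace of $\mathfrak H$, the $\eta_j^A$ span a dense subspace of $P\mathfrak H=\mathfrak H'$, which is precisely the countable-generation property your dense sequence was supposed to provide. With these $\eta_j$ in place of your $\xi_j$, both of your inclusions go through essentially verbatim (your reverse inclusion is word-for-word the paper's argument). So: your proof is correct under the additional hypothesis that $\mathfrak H$ is separable (e.g., for standard $\nu$), but in the stated generality the countable generating family must be manufactured via the projection $P$ rather than postulated.
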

\begin{proof}
By the hypothesis, the projection $P$ of $\mathfrak H$ onto $\mathfrak H'$ commutes with $\mathcal T^{\nu,\mathfrak S}_{g}$ for every $\nu$-essentially bounded function $g$. By Lemma~\ref{l_di2}, there is a $\nu$-measurable basis $\xi_1,\xi_2,\ldots$ in $\mathfrak S$ consisting of elements of $\mathfrak H$. For each $j=1,2,\ldots$, we set $\eta_j=P\xi_j$. Let $\mathfrak S'$ be a $\nu$-measurable family of subspaces of $\mathfrak S$ such that $\mathfrak S'(s)$ is the closed linear span of $\eta_1(s),\eta_2(s),\ldots$ for $\nu$-a.e. $s$. Let $\tilde {\mathfrak H} = \int^\oplus \mathfrak S'(s)\,d\nu(s)$. If $\xi\in \mathfrak H$ is such that $\langle \xi^A_j,\xi\rangle=\int_A\langle \xi_j(s),\xi(s)\rangle\,d\nu(s)=0$ for every $\nu$-measurable set $A$ and $j=1,2,\ldots$, then $\langle\xi_j(s),\xi(s)\rangle=0$ for all $j$ and $\nu$-a.e. $s$ and, therefore, $\xi=0$. This means that the linear span of all $\xi^A_j$ is dense in $\mathfrak H$. Note that $\eta^A_j = \mathcal T^{\nu,\mathfrak S}_{\chi_A}P\xi_j= P\xi^A_j$, where $\chi_A(s)=1$ for $s\in A$ and $\chi_A(s)=0$ for $s\in\s_\nu\setminus A$. Hence, the linear span of $\eta^A_j$ is dense in $\mathfrak H'$. Since $\tilde {\mathfrak H}$ is closed and contains all $\eta^A_j$, we conclude that $\mathfrak H'\subset \tilde {\mathfrak H}$. Let $\eta\in \mathfrak H^{\prime\bot}$. Then we have $\langle \eta^A_j,\eta\rangle=\int_A\langle \eta_j(s),\eta(s)\rangle\,d\nu(s)=0$ and, hence, $\langle\eta_j(s),\eta(s)\rangle=0$ for all $j$ and $\nu$-a.e. $s$. This implies that $\eta(s)\in \mathfrak S'(s)^\bot$ for $\nu$-a.e. $s$, i.e., $\eta\in \tilde{\mathfrak H}^\bot$. We therefore obtain $\mathfrak H^{\prime\bot}\subset \tilde{\mathfrak H}^\bot$. As $\mathfrak H'$ is closed, it follows that $\tilde{\mathfrak H}\subset \mathfrak H'$ and, hence, $\tilde{\mathfrak H}=\mathfrak H'$.
\end{proof}

\subsection{Direct integrals of operators}

Let $\mathfrak S_1$ and $\mathfrak S_2$ be $\nu$-measurable families of Hilbert spaces. Let $\mathfrak H_{1,2} = \int^\oplus \mathfrak S_{1,2}(s)\,d\nu(s)$ and $\mathcal R$ be a $\nu$-a.e.
defined family of operators from $\mathfrak S_1$ to $\mathfrak S_2$.
The direct integral $\int^{\oplus} \mathcal R(s)\,d\nu(s)$ of the family $\mathcal R$ is defined as the linear operator from $\mathfrak H_1$ to $\mathfrak H_2$ whose graph consists of all pairs $(\xi,\eta)\in \mathfrak H_1\oplus\mathfrak H_2$ such that $\xi(s)\in D_{\mathcal R(s)}$ and $\mathcal R(s)\xi(s) = \eta(s)$ for $\nu$-a.e. $s$.

Let $\mathfrak H = \int^{\oplus} \mathfrak S_1(s)\oplus\mathfrak S_2(s)\,d\nu(s)$. Then there is a unique unitary operator $U_{\mathfrak S_1,\mathfrak S_2}\colon \mathfrak H_1\oplus\mathfrak H_2\to\mathfrak H$ such that $(U_{\mathfrak S_1,\mathfrak S_2}(\xi,\eta))(s) = (\xi(s),\eta(s))$ for any $(\xi,\eta)\in \mathfrak H_1\oplus \mathfrak H_2$ and $\nu$-a.e. $s$. We call $U_{\mathfrak S_1,\mathfrak S_2}$ the natural isomorphism between $\mathfrak H_1\oplus\mathfrak H_2$ and $\mathfrak H$.

The next statement follows immediately from the definition of $U_{\mathfrak S_1,\mathfrak S_2}$.
\begin{lemma}\label{l_di3a}
Let $\mathfrak S_1$ and $\mathfrak S_2$ be $\nu$-measurable families of Hilbert spaces, $\mathcal R$ be a $\nu$-a.e.
defined family of operators from $\mathfrak S_1$ to $\mathfrak S_2$, and $R = \int^\oplus\mathcal R(s)\,d\nu(s)$. Then we have
$U_{\mathfrak S_1,\mathfrak S_2}(G_R) = (\mathfrak S_1\oplus_\nu\mathfrak S_2)\mbox{-}\!\int^{\oplus} G_{\mathcal R(s)}\,d\nu(s)$.
\end{lemma}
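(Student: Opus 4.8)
The plan is to prove the two inclusions separately by directly unwinding the definitions of the direct integral of operators, of the direct integral of a family of subspaces, and of the natural isomorphism $U_{\mathfrak S_1,\mathfrak S_2}$. Throughout, write $\mathfrak S = \mathfrak S_1\oplus_\nu\mathfrak S_2$, so that $\mathfrak S(s)=\mathfrak S_1(s)\oplus\mathfrak S_2(s)$ for $\nu$-a.e.\ $s$ and $\mathfrak H = \int^\oplus\mathfrak S(s)\,d\nu(s)$. Recall that, by definition, the right-hand side $(\mathfrak S)\mbox{-}\!\int^\oplus G_{\mathcal R(s)}\,d\nu(s)$ is the set of those $\zeta\in\mathcal M(\mathfrak S,\nu)$ with $\zeta(s)\in G_{\mathcal R(s)}$ for $\nu$-a.e.\ $s$ and $s\to\|\zeta(s)\|^2$ $\nu$-integrable; in particular it is a subspace of $\mathfrak H$, and the same is true of $U_{\mathfrak S_1,\mathfrak S_2}(G_R)$ since $U_{\mathfrak S_1,\mathfrak S_2}$ maps into $\mathfrak H$. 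Thus equality of the two sides is the right thing to verify.

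First I would establish the inclusion $\subseteq$. Given $(\xi,\eta)\in G_R$, set $\zeta=U_{\mathfrak S_1,\mathfrak S_2}(\xi,\eta)$, so that $\zeta(s)=(\xi(s),\eta(s))$ for $\nu$-a.e.\ $s$ by the defining property of $U_{\mathfrak S_1,\mathfrak S_2}$. By the definition of $R=\int^\oplus\mathcal R(s)\,d\nu(s)$, membership $(\xi,\eta)\in G_R$ means precisely that $\xi(s)\in D_{\mathcal R(s)}$ and $\mathcal R(s)\xi(s)=\eta(s)$ for $\nu$-a.e.\ $s$; hence $\zeta(s)=(\xi(s),\mathcal R(s)\xi(s))\in G_{\mathcal R(s)}$ for $\nu$-a.e.\ $s$. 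Since $\zeta\in\mathfrak H$, it is a $\nu$-measurable section of $\mathfrak S$ with $\nu$-integrable square-norm, so $\zeta$ lies in the right-hand side.

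Then I would verify $\supseteq$, which is where the surjectivity (unitarity) of $U_{\mathfrak S_1,\mathfrak S_2}$ enters. Let $\zeta$ belong to $(\mathfrak S)\mbox{-}\!\int^\oplus G_{\mathcal R(s)}\,d\nu(s)$. Then $\zeta\in\mathfrak H$, and because $U_{\mathfrak S_1,\mathfrak S_2}$ is unitary there is $(\xi,\eta)\in\mathfrak H_1\oplus\mathfrak H_2$ with $U_{\mathfrak S_1,\mathfrak S_2}(\xi,\eta)=\zeta$, i.e.\ $(\xi(s),\eta(s))=\zeta(s)$ for $\nu$-a.e.\ $s$. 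Since $\zeta(s)\in G_{\mathcal R(s)}$ for $\nu$-a.e.\ $s$, this forces $\xi(s)\in D_{\mathcal R(s)}$ and $\mathcal R(s)\xi(s)=\eta(s)$ for $\nu$-a.e.\ $s$, which is exactly the condition for $(\xi,\eta)\in G_R$. Hence $\zeta\in U_{\mathfrak S_1,\mathfrak S_2}(G_R)$, and the argument is complete.

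There is essentially no serious obstacle here — the statement is a bookkeeping consequence of the definitions, as the paper's remark that it ``follows immediately'' indicates. The only points that require a little care are to keep track of the fact that both sides are genuinely subspaces of the same space $\mathfrak H$, and to use that the fibrewise condition $(\xi(s),\eta(s))\in G_{\mathcal R(s)}$ is literally equivalent to ``$\xi(s)\in D_{\mathcal R(s)}$ and $\mathcal R(s)\xi(s)=\eta(s)$'', which is immediate from the definition of the graph of an operator.
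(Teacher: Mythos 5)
Your proof is correct and is exactly the definitional unwinding the paper has in mind when it states that the lemma ``follows immediately from the definition of $U_{\mathfrak S_1,\mathfrak S_2}$''; both inclusions are handled just as the paper intends, with surjectivity of $U_{\mathfrak S_1,\mathfrak S_2}$ supplying the reverse inclusion. Nothing is missing.
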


\begin{lemma}\label{l_di4}
Let $\mathfrak H_{1,2} = \int^\oplus \mathfrak S_{1,2}(s)\,d\nu(s)$, where $\mathfrak S_1$ and $\mathfrak S_2$ are $\nu$-measurable families of Hilbert spaces,  and $G$ be a closed linear subspace of $\mathfrak H_1\oplus\mathfrak H_2$. Suppose $\mathcal G$ is a $\nu$-measurable family of closed subspaces of $\mathfrak S_1\oplus_\nu\mathfrak S_2$ such that $U_{\mathfrak S_1,\mathfrak S_2}(G) = \int^\oplus \mathcal G(s)\,d\nu(s)$. Then $G$ is a graph of an operator from $\mathfrak H_1$ to $\mathfrak H_2$ if and only if $\mathcal G(s)$ is a graph of an operator from $\mathfrak S_1(s)$ to $\mathfrak S_2(s)$ for $\nu$-a.e. $s$.
\end{lemma}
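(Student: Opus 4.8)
The plan is to reduce the statement to the standard characterization of graphs by intersection with the ``vertical'' subspace. Recall that a linear subspace $L$ of a direct sum $\mathfrak{K}_1\oplus\mathfrak{K}_2$ of Hilbert spaces is the graph of an operator from $\mathfrak{K}_1$ to $\mathfrak{K}_2$ if and only if $L\cap(\{0\}\oplus\mathfrak{K}_2)=\{0\}$, since for a linear subspace the single-valuedness condition $(x,\eta),(x,\eta')\in L\Rightarrow \eta=\eta'$ is, upon taking differences, equivalent to $(0,\eta)\in L\Rightarrow\eta=0$. Applying this to $G\subset\mathfrak{H}_1\oplus\mathfrak{H}_2$ and to each $\mathcal{G}(s)\subset\mathfrak{S}_1(s)\oplus\mathfrak{S}_2(s)$, the lemma becomes the assertion that $G\cap(\{0\}\oplus\mathfrak{H}_2)=\{0\}$ if and only if $\mathcal{G}(s)\cap(\{0\}\oplus\mathfrak{S}_2(s))=\{0\}$ for $\nu$-a.e.\ $s$.

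First I would set $\mathfrak{S}'(s)=\{0\}\oplus\mathfrak{S}_2(s)$ for $\nu$-a.e.\ $s$. This is a $\nu$-measurable family of closed subspaces of $\mathfrak{S}_1\oplus_\nu\mathfrak{S}_2$: by the definition of $\oplus_\nu$, the sections $s\mapsto(0,\xi^{\mathfrak{S}_2}_j(s))$ are $\nu$-measurable and their linear span is dense in $\mathfrak{S}'(s)$. From the definition of $U_{\mathfrak{S}_1,\mathfrak{S}_2}$ one reads off directly that $U_{\mathfrak{S}_1,\mathfrak{S}_2}(\{0\}\oplus\mathfrak{H}_2)=\int^\oplus\mathfrak{S}'(s)\,d\nu(s)$. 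Since $U_{\mathfrak{S}_1,\mathfrak{S}_2}$ is a bijection it preserves intersections, so using the hypothesis $U_{\mathfrak{S}_1,\mathfrak{S}_2}(G)=\int^\oplus\mathcal{G}(s)\,d\nu(s)$ together with the evident equality $\bigl(\int^\oplus\mathcal{G}(s)\,d\nu(s)\bigr)\cap\bigl(\int^\oplus\mathfrak{S}'(s)\,d\nu(s)\bigr)=\int^\oplus\bigl(\mathcal{G}(s)\cap\mathfrak{S}'(s)\bigr)\,d\nu(s)$ (both sides consist of the square-integrable sections lying in $\mathcal{G}(s)\cap\mathfrak{S}'(s)$ for $\nu$-a.e.\ $s$), I obtain
\[
U_{\mathfrak{S}_1,\mathfrak{S}_2}\bigl(G\cap(\{0\}\oplus\mathfrak{H}_2)\bigr)=\int^\oplus\mathfrak{T}(s)\,d\nu(s),\qquad \mathfrak{T}(s):=\mathcal{G}(s)\cap\mathfrak{S}'(s).
\]
By statement~2 of Lemma~\ref{l_BBB2}, $\mathfrak{T}$ is a $\nu$-measurable family of closed subspaces of $\mathfrak{S}_1\oplus_\nu\mathfrak{S}_2$.

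It remains to prove the following elementary fact, which I regard as the only real content: for a $\nu$-measurable family $\mathfrak{T}$ of closed subspaces, the equality $\int^\oplus\mathfrak{T}(s)\,d\nu(s)=\{0\}$ holds if and only if $\mathfrak{T}(s)=\{0\}$ for $\nu$-a.e.\ $s$. The ``if'' direction is immediate. For the converse I would argue contrapositively: let $\zeta_1,\zeta_2,\ldots$ be a $\nu$-measurable basis in $\mathfrak{T}$; then $A=\{s:\mathfrak{T}(s)\neq\{0\}\}=\bigcup_j\{s:\|\zeta_j(s)\|>0\}$ is $\nu$-measurable because each $s\mapsto\|\zeta_j(s)\|^2$ is $\nu$-measurable. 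If $\nu(A)>0$, then some $\{s:\zeta_j(s)\neq 0\}$ has positive measure, and since $\nu$ is $\sigma$-finite I can intersect it with an exhausting sequence of finite-measure sets to find a $\nu$-measurable set $B$ of finite positive measure on which $\zeta_j$ never vanishes; the section $\xi=\|\zeta_j\|^{-1}\zeta_j$ on $B$ and $\xi=0$ off $B$ is then a nonzero element of $\int^\oplus\mathfrak{T}(s)\,d\nu(s)$, a contradiction. Combining this fact with the injectivity of $U_{\mathfrak{S}_1,\mathfrak{S}_2}$ yields the chain $G$ is a graph $\iff G\cap(\{0\}\oplus\mathfrak{H}_2)=\{0\}\iff\int^\oplus\mathfrak{T}(s)\,d\nu(s)=\{0\}\iff\mathfrak{T}(s)=\{0\}$ for $\nu$-a.e.\ $s\iff\mathcal{G}(s)$ is a graph for $\nu$-a.e.\ $s$, which is the assertion. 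The main obstacle, such as it is, lies in the finite-measure normalization guaranteeing that the witnessing section is genuinely square-integrable and $\nu$-measurable; everything else is bookkeeping about how $U_{\mathfrak{S}_1,\mathfrak{S}_2}$ intertwines intersections with the vertical subspaces.
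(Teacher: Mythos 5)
Your proof is correct and follows essentially the same route as the paper's: reduce graph-hood of $G$ to triviality of its intersection with the vertical subspace $\{0\}\oplus\mathfrak H_2$, transport this through $U_{\mathfrak S_1,\mathfrak S_2}$ to the direct integral of the pointwise intersections $\mathcal G(s)\cap(\{0\}\times\mathfrak S_2(s))$, and invoke statement~2 of Lemma~\ref{l_BBB2} for the $\nu$-measurability of that family. The only difference is that you explicitly prove the step that a $\nu$-measurable family of closed subspaces with trivial direct integral is $\nu$-a.e.\ trivial (via normalization of a basis section on a finite-measure set), whereas the paper asserts this directly, relying implicitly on the measurability together with Lemma~\ref{l_di2}.
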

\begin{proof}
Let $U=U_{\mathfrak S_1,\mathfrak S_2}$ and $\mathfrak S'$ be a $\nu$-a.e. defined map such that $\mathfrak S'(s) = \{0\}\times \mathfrak S_2(s)$ for $\nu$-a.e. $s$. Clearly, $\mathfrak S'$ is a $\nu$-measurable family of closed subspaces of $\mathfrak S_1\oplus_\nu \mathfrak S_2$. Let $\mathfrak H'=\int^\oplus\mathfrak S'(s)\,d\nu(s)$ and $\mathfrak H''= \int^\oplus\mathfrak S'(s)\cap \mathcal G(s)\,d\nu(s)$. Since $\mathfrak H'' = \mathfrak H'\cap U(G)$ and $U^{-1}(\mathfrak H') = \{0\}\times\mathfrak H_2$, we have $G\cap (\{0\}\times\mathfrak H_2) = U^{-1}(\mathfrak H'')$. Hence, $G$ is a graph of an operator if and only if $\mathfrak H''$ is trivial. By statement~2 of Lemma~\ref{l_BBB2}, $s\to \mathfrak S'(s)\cap \mathcal G(s)$ is a $\nu$-measurable family of subspaces of $\mathfrak S_1\oplus_\nu \mathfrak S_2$ and, therefore, $\mathfrak H''$ is trivial if and only if $\mathfrak S'(s)\cap \mathcal G(s)$ is trivial for $\nu$-a.e. $s$, i.e., if $\mathcal G(s)$ is a graph of an operator for $\nu$-a.e. $s$.
\end{proof}

\begin{lemma}\label{l_di4a}
Let $\mathfrak S_1$ and $\mathfrak S_2$ be $\nu$-measurable families of Hilbert spaces, $\mathcal R$ be a $\nu$-measurable family of operators from $\mathfrak S_1$ to $\mathfrak S_2$, and $R = \int^\oplus \mathcal R(s)\,d\nu(s)$. Then $R$ is closable if and only if $\mathcal R(s)$ is closable for $\nu$-a.e. $s$, in which case $\overline{R} = \int^\oplus \overline{\mathcal R(s)}\,d\nu(s)$.
\end{lemma}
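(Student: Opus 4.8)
The plan is to argue entirely at the level of graphs, exploiting the fact that closability and the passage to the closure are both statements about the closure of the graph. Recall that an operator is closable precisely when the closure of its graph is again the graph of an operator, and that this closed graph is then the graph of the closure. I would therefore set $U = U_{\mathfrak S_1,\mathfrak S_2}$, the natural isomorphism between $\mathfrak H_1\oplus\mathfrak H_2$ and $\int^\oplus \mathfrak S_1(s)\oplus\mathfrak S_2(s)\,d\nu(s)$, and transport the graph $G_R$ through this unitary, where it acquires a direct-integral form amenable to the earlier lemmas.

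First I would apply Lemma~\ref{l_di3a} to write $U(G_R) = \int^\oplus G_{\mathcal R(s)}\,d\nu(s)$. Since $\mathcal R$ is $\nu$-measurable, $s\mapsto G_{\mathcal R(s)}$ is a $\nu$-measurable family of subspaces of $\mathfrak S_1\oplus_\nu\mathfrak S_2$, so Lemma~\ref{l_di1} applies and yields $\overline{U(G_R)} = \int^\oplus \overline{G_{\mathcal R(s)}}\,d\nu(s)$. Because $U$ is unitary it commutes with closures, giving $U(\overline{G_R}) = \int^\oplus \overline{G_{\mathcal R(s)}}\,d\nu(s)$. The family $s\mapsto \overline{G_{\mathcal R(s)}}$ is again $\nu$-measurable, a measurable basis for $G_{\mathcal R(s)}$ spanning its closure densely as well, so this presents $\overline{G_R}$ in exactly the form required by Lemma~\ref{l_di4}.

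I would then invoke Lemma~\ref{l_di4} with $G = \overline{G_R}$ and $\mathcal G(s) = \overline{G_{\mathcal R(s)}}$: it tells us that $\overline{G_R}$ is the graph of an operator if and only if $\overline{G_{\mathcal R(s)}}$ is the graph of an operator for $\nu$-a.e.\ $s$. By the graph characterization of closability recalled above, the left-hand side says exactly that $R$ is closable and the right-hand side that $\mathcal R(s)$ is closable for $\nu$-a.e.\ $s$, which is the stated equivalence. In the closable case one has $\overline{G_{\mathcal R(s)}} = G_{\overline{\mathcal R(s)}}$ for $\nu$-a.e.\ $s$, so $s\mapsto \overline{\mathcal R(s)}$ is a $\nu$-measurable family of operators, and applying Lemma~\ref{l_di3a} to this family shows that $U$ carries the graph of $\int^\oplus \overline{\mathcal R(s)}\,d\nu(s)$ onto $\int^\oplus \overline{G_{\mathcal R(s)}}\,d\nu(s) = U(\overline{G_R})$; injectivity of $U$ then forces $\overline{R} = \int^\oplus \overline{\mathcal R(s)}\,d\nu(s)$.

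The argument is mostly bookkeeping once the three earlier lemmas are in place, and I expect no deep obstacle. The only points demanding care are the measurability of the closure family $s\mapsto \overline{G_{\mathcal R(s)}}$, and hence of $s\mapsto \overline{\mathcal R(s)}$, which is what legitimizes forming the direct integrals above, and the clean separation of the two logical layers: the abstract ``a subspace is a graph'' criterion of Lemma~\ref{l_di4} must be applied consistently on both sides of the analytic identity ``closure commutes with the direct integral'' supplied by Lemma~\ref{l_di1}, so that the graph-theoretic reformulation of closability transfers correctly through the unitary $U$.
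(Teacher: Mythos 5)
Your proposal is correct and follows essentially the same route as the paper's own proof: transporting $G_R$ through the natural isomorphism $U_{\mathfrak S_1,\mathfrak S_2}$ via Lemma~\ref{l_di3a}, using Lemma~\ref{l_di1} to identify $U(\overline{G}_R)$ with $\int^\oplus \overline{G}_{\mathcal R(s)}\,d\nu(s)$, applying the graph criterion of Lemma~\ref{l_di4} to get the closability equivalence, and then identifying $\overline{R}$ with $\int^\oplus \overline{\mathcal R(s)}\,d\nu(s)$ by comparing graphs under $U$. The points you flag for care (measurability of the closure family, consistent use of the graph characterization) are exactly the ones the paper handles implicitly, so there is no gap.
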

\begin{proof}
Let $U=U_{\mathfrak S_1,\mathfrak S_2}$. As $U(G_R) = \int^{\oplus} G_{\mathcal R(s)}\,d\nu(s)$, Lemma~\ref{l_di1} yields
\begin{equation}\label{eq_di1}
U(\overline{G}_R) = \int^{\oplus} \overline{G}_{\mathcal R(s)}\,d\nu(s).
\end{equation}
The operator $R$ is closable if and only if $\overline{G}_R$ is a graph of an operator. By Lemma~\ref{l_di4}, the latter condition holds if and only if $\overline{G}_{\mathcal R(s)}$ is a graph of an operator for $\nu$-a.e. $s$, i.e., if $\mathcal R(s)$ is closable for $\nu$-a.e. $s$. Suppose now that $R$ is closable and $R'=\int^\oplus \overline{\mathcal R(s)}\,d\nu(s)$. As $\overline{G}_R = G_{\overline{R}}$ and $\overline{G}_{\mathcal R(s)} = G_{\overline{\mathcal R(s)}}$ for $\nu$-a.e. $s$, equality~(\ref{eq_di1}) and Lemma~\ref{l_di3a} imply that $U(G_{\overline{R}}) = U(G_{R'})$ and, hence, $R=R'$.
\end{proof}

\begin{proposition}\label{l_di4b}
Let $\mathfrak S_1$ and $\mathfrak S_2$ be $\nu$-measurable families of Hilbert spaces and $\mathcal R$ and $\mathcal R'$ be $\nu$-measurable families of closed operators from $\mathfrak S_1$ to $\mathfrak S_2$. Let $R = \int^\oplus \mathcal R(s)\,d\nu(s)$, and $R' = \int^\oplus \mathcal R'(s)\,d\nu(s)$. If $R$ is an extension of $R'$, then $\mathcal R(s)$ is an extension of $\mathcal R'(s)$ for $\nu$-a.e. $s$. If $R=R'$, then $\mathcal R(s)=\mathcal R'(s)$ for $\nu$-a.e.~$s$.
\end{proposition}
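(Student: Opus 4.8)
The plan is to reduce the statement to a fact about $\nu$-measurable families of closed subspaces by passing to graphs, and then to exploit that intersections and orthogonal complements of such families are again $\nu$-measurable. Write $U=U_{\mathfrak S_1,\mathfrak S_2}$ and set $\mathcal G(s)=G_{\mathcal R(s)}$, $\mathcal G'(s)=G_{\mathcal R'(s)}$; since $\mathcal R$ and $\mathcal R'$ are $\nu$-measurable families of closed operators, $s\to\mathcal G(s)$ and $s\to\mathcal G'(s)$ are $\nu$-measurable families of closed subspaces of $\mathfrak S_1\oplus_\nu\mathfrak S_2$. By Lemma~\ref{l_di3a} we have $U(G_R)=\int^\oplus\mathcal G(s)\,d\nu(s)$ and $U(G_{R'})=\int^\oplus\mathcal G'(s)\,d\nu(s)$. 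Since $R$ extends $R'$ exactly when $G_{R'}\subset G_R$, the hypothesis becomes $\int^\oplus\mathcal G'(s)\,d\nu(s)\subset\int^\oplus\mathcal G(s)\,d\nu(s)$, and it suffices to deduce $\mathcal G'(s)\subset\mathcal G(s)$ for $\nu$-a.e.~$s$, which via Lemma~\ref{l_di3a} is precisely the assertion that $\mathcal R(s)$ extends $\mathcal R'(s)$.

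Next I would introduce the intersection family $\mathcal M(s)=\mathcal G'(s)\cap\mathcal G(s)$, which is a $\nu$-measurable family of closed subspaces by statement~2 of Lemma~\ref{l_BBB2}. The key observation is that $\int^\oplus\mathcal M(s)\,d\nu(s)=\int^\oplus\mathcal G'(s)\,d\nu(s)$. Indeed, the inclusion $\mathcal M(s)\subset\mathcal G'(s)$ gives one direction; for the other, any $\xi\in\int^\oplus\mathcal G'(s)\,d\nu(s)$ lies, by the containment of direct integrals, also in $\int^\oplus\mathcal G(s)\,d\nu(s)$, so $\xi(s)\in\mathcal G'(s)\cap\mathcal G(s)=\mathcal M(s)$ for $\nu$-a.e.~$s$ and hence $\xi\in\int^\oplus\mathcal M(s)\,d\nu(s)$. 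Thus the two direct integrals agree, and what remains is to promote this equality of direct integrals to an a.e.\ equality of the fibres $\mathcal M(s)=\mathcal G'(s)$.

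To carry out this last step I would form $\mathcal L(s)=\mathcal G'(s)\ominus\mathcal M(s)$, which is a $\nu$-measurable family of closed subspaces by statement~1 of Lemma~\ref{l_BBB2}. Using Lemma~\ref{l_di2} I would choose a $\nu$-measurable basis $\eta_1,\eta_2,\dots$ of $\mathcal L$ consisting of elements of $\int^\oplus\mathcal L(s)\,d\nu(s)$. Since $\mathcal L(s)\subset\mathcal G'(s)$, each $\eta_k$ lies in $\int^\oplus\mathcal G'(s)\,d\nu(s)=\int^\oplus\mathcal M(s)\,d\nu(s)$, so $\eta_k(s)\in\mathcal M(s)$ for $\nu$-a.e.~$s$; but $\eta_k(s)\in\mathcal L(s)$ is orthogonal to $\mathcal M(s)$, forcing $\eta_k(s)=0$ for $\nu$-a.e.~$s$. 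As the span of the $\eta_k(s)$ is dense in $\mathcal L(s)$ for $\nu$-a.e.~$s$, it follows that $\mathcal L(s)=\{0\}$, i.e.\ $\mathcal G'(s)=\mathcal M(s)\subset\mathcal G(s)$, for $\nu$-a.e.~$s$. This proves the first assertion. The second assertion follows at once by applying the first in both directions: if $R=R'$ then $R$ extends $R'$ and $R'$ extends $R$, whence $\mathcal R(s)$ extends $\mathcal R'(s)$ and $\mathcal R'(s)$ extends $\mathcal R(s)$ for $\nu$-a.e.~$s$, so $\mathcal R(s)=\mathcal R'(s)$ for $\nu$-a.e.~$s$.

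I expect the main obstacle to be the bookkeeping that keeps all the relevant fibrewise constructions $\nu$-measurable: the argument hinges on $\mathcal M$ and $\mathcal L$ being genuine $\nu$-measurable families, so that a measurable basis lying inside the direct integral can be extracted via Lemma~\ref{l_di2}, and this is exactly what statements~1 and~2 of Lemma~\ref{l_BBB2} supply. The conceptual heart is the passage from the equality of two direct integrals $\int^\oplus\mathcal M(s)\,d\nu(s)=\int^\oplus\mathcal G'(s)\,d\nu(s)$ with $\mathcal M(s)\subset\mathcal G'(s)$ to the a.e.\ equality of fibres; everything else is translation through Lemma~\ref{l_di3a}.
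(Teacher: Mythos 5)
Your proof is correct, and its skeleton coincides with the paper's: both reduce the statement to graphs via Lemma~\ref{l_di3a}, and both hinge on Lemma~\ref{l_di2}, which turns the inclusion of direct integrals $U(G_{R'})\subset U(G_R)$ into countably many fibrewise a.e.\ statements. Where you diverge is in the final extraction step. The paper applies Lemma~\ref{l_di2} once, directly to the graph family $s\to G_{\mathcal R'(s)}$: the resulting sections $\zeta_j$ lie in $U(G_{R'})\subset U(G_R)$, hence $\zeta_j(s)\in G_{\mathcal R(s)}$ for a.e.\ $s$, and since their span is dense in $G_{\mathcal R'(s)}$ while $G_{\mathcal R(s)}$ is closed, the inclusion $G_{\mathcal R'(s)}\subset G_{\mathcal R(s)}$ follows immediately. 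You instead construct the intersection family $\mathcal M$ and the relative complement $\mathcal L=\mathcal G'\ominus\mathcal M$ (invoking both statements of Lemma~\ref{l_BBB2}), apply Lemma~\ref{l_di2} to $\mathcal L$, and annihilate its basis sections by orthogonality against $\mathcal M(s)$. Both mechanisms are sound; yours trades the paper's one-line appeal to closedness of $G_{\mathcal R(s)}$ for the extra measurability bookkeeping of $\mathcal M$ and $\mathcal L$, so it is somewhat longer, but it cleanly isolates the general principle your argument really proves: for $\nu$-measurable families of closed subspaces, an inclusion of the direct integrals forces an a.e.\ inclusion of the fibres, independently of any graph structure. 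The paper's proof establishes the same principle, only with density-plus-closedness doing the work that your orthogonal-complement argument does.
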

\begin{proof}
Let $U=U_{\mathfrak S_1,\mathfrak S_2}$.
Since $U(G_{R'}) = \int^\oplus G_{\mathcal R'(s)}\,d\nu(s)$, Lemma~\ref{l_di2} implies that there exists a sequence $\zeta_1,\zeta_2,\ldots$ of elements of $U(G_{R'})$ such that the linear span of $\zeta_1(s),\zeta_2(s),\ldots$ is dense in $G_{\mathcal R'(s)}$ for $\nu$-a.e. $s$. If $R$ is an extension of $R'$, then $U(G_{R'})\subset U(G_R)$ and, hence, $\zeta_j\in U(G_R)$ for all $j=1,2,\ldots$. Since $U(G_{R}) = \int^\oplus G_{\mathcal R(s)}\,d\nu(s)$, it follows that $\zeta_j(s)\in G_{\mathcal R(s)}$ for $\nu$-a.e. $s$ and all $j=1,2,\ldots$. The linear span of $\zeta_1(s),\zeta_2(s),\ldots$ is therefore contained in $G_{\mathcal R(s)}$ for $\nu$-a.e. $s$. As $\mathcal R(s)$ is closed, it follows that $G_{\mathcal R'(s)}\subset G_{\mathcal R(s)}$, i.e., $\mathcal R(s)$ is an extension of $\mathcal R'(s)$ for $\nu$-a.e. $s$. If $R=R'$, then, by the above, $\mathcal R(s)$ and $\mathcal R'(s)$ are extensions of each other and, therefore, are equal for $\nu$-a.e. $s$.
\end{proof}

\begin{proposition}\label{p_di}
Let $\mathfrak S_1$, $\mathfrak S_2$, $\mathfrak H_1$, and $\mathfrak H_2$ be as in Lemma~$\mathrm{\ref{l_di4}}$. Let $\mathcal R$ be a $\nu$-measurable family of closed operators from $\mathfrak S_1$ to $\mathfrak S_2$. Then $R=\int^\oplus \mathcal R(s)\,d\nu(s)$ is a closed operator from $\mathfrak H_1$ to $\mathfrak H_2$ satisfying the condition:
\begin{itemize}
\item[{\rm (M)}] If $\xi\in D_R$ and $g$ is a $\nu$-measurable $\nu$-essentially bounded complex function, then $\mathcal T^{\nu,\mathfrak S_1}_g\xi\in D_R$ and $R\mathcal T^{\nu,\mathfrak S_1}_g\xi=\mathcal T^{\nu,\mathfrak S_2}_gR\xi$.
\end{itemize}
Conversely, if $R$ is a closed operator from $\mathfrak H_1$ to $\mathfrak H_2$ satisfying~{\rm (M)}, then there is a unique (up to $\nu$-equivalence) $\nu$-measurable family $\mathcal R$ of closed operators from $\mathfrak S_1$ to $\mathfrak S_2$ such that $R=\int^\oplus \mathcal R(s)\,d\nu(s)$.
\end{proposition}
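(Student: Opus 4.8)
The plan is to treat the two directions separately, reducing everything to the subspace-level reduction theory already established for direct integrals of closed subspaces.

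For the forward implication, I would start from the identity $U_{\mathfrak S_1,\mathfrak S_2}(G_R)=\int^\oplus G_{\mathcal R(s)}\,d\nu(s)$ supplied by Lemma~\ref{l_di3a}. Since each $\mathcal R(s)$ is closed, every $G_{\mathcal R(s)}$ is a closed subspace, so Lemma~\ref{l_di1} gives that the direct integral on the right is closed; as $U_{\mathfrak S_1,\mathfrak S_2}$ is unitary, $G_R$ is closed and $R$ is a closed operator. Condition~(M) is then checked fiberwise: for $\xi\in D_R$ and $\nu$-essentially bounded $g$ one has $(\mathcal T^{\nu,\mathfrak S_1}_g\xi)(s)=g(s)\xi(s)\in D_{\mathcal R(s)}$ and $\mathcal R(s)(g(s)\xi(s))=g(s)\mathcal R(s)\xi(s)=(\mathcal T^{\nu,\mathfrak S_2}_gR\xi)(s)$ for $\nu$-a.e.~$s$, by linearity of $\mathcal R(s)$; boundedness of $g$ keeps both sides in the respective $L_2$-spaces, so the pair $(\mathcal T^{\nu,\mathfrak S_1}_g\xi,\mathcal T^{\nu,\mathfrak S_2}_gR\xi)$ lies in $G_R$, which is exactly~(M).

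For the converse, the decisive observation is that~(M) is precisely the statement that $G_R$ is invariant under the diagonal multiplication operators. Concretely, $U_{\mathfrak S_1,\mathfrak S_2}$ intertwines $\mathcal T^{\nu,\mathfrak S_1}_g\oplus\mathcal T^{\nu,\mathfrak S_2}_g$ with $\mathcal T^{\nu,\mathfrak S_1\oplus_\nu\mathfrak S_2}_g$ (a one-line check on sections, since both send a section $s\to(\xi(s),\eta(s))$ to $s\to(g(s)\xi(s),g(s)\eta(s))$), and the elements of $G_R$ are exactly the pairs $(\xi,R\xi)$; thus~(M) asserts that $(\mathcal T^{\nu,\mathfrak S_1}_g\oplus\mathcal T^{\nu,\mathfrak S_2}_g)(\xi,R\xi)\in G_R$, that is, $U_{\mathfrak S_1,\mathfrak S_2}(G_R)$ is taken to itself by $\mathcal T^{\nu,\mathfrak S_1\oplus_\nu\mathfrak S_2}_g$ for every $\nu$-essentially bounded $g$. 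This is exactly the hypothesis of Lemma~\ref{l_di3} applied to $\mathfrak S=\mathfrak S_1\oplus_\nu\mathfrak S_2$ and the closed subspace $\mathfrak H'=U_{\mathfrak S_1,\mathfrak S_2}(G_R)$, which therefore yields a $\nu$-measurable family $\mathcal G$ of closed subspaces with $U_{\mathfrak S_1,\mathfrak S_2}(G_R)=\int^\oplus\mathcal G(s)\,d\nu(s)$.

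It then remains to recognize $\mathcal G$ as a family of graphs. Since $G_R$ is a graph, Lemma~\ref{l_di4} forces $\mathcal G(s)$ to be the graph of an operator $\mathcal R(s)$ from $\mathfrak S_1(s)$ to $\mathfrak S_2(s)$ for $\nu$-a.e.~$s$, and this operator is closed because $\mathcal G(s)$ is closed; measurability of $s\to G_{\mathcal R(s)}=\mathcal G(s)$ is built into Lemma~\ref{l_di3}, so $\mathcal R$ is a $\nu$-measurable family of closed operators. Comparing $U_{\mathfrak S_1,\mathfrak S_2}(G_R)=\int^\oplus G_{\mathcal R(s)}\,d\nu(s)$ with Lemma~\ref{l_di3a} gives $R=\int^\oplus\mathcal R(s)\,d\nu(s)$, and uniqueness up to $\nu$-equivalence is immediate from Proposition~\ref{l_di4b}. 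I expect the only genuine subtlety to be the bookkeeping that identifies~(M) with multiplication-invariance of $U_{\mathfrak S_1,\mathfrak S_2}(G_R)$ together with the verification that $U_{\mathfrak S_1,\mathfrak S_2}$ intertwines the relevant multiplication operators; once this is in place, Lemmas~\ref{l_di1}, \ref{l_di3a}, \ref{l_di3}, \ref{l_di4} and Proposition~\ref{l_di4b} assemble the result mechanically.
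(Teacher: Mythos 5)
Your proposal is correct and follows essentially the same route as the paper's proof: the forward direction is the fiberwise verification of (M) together with closedness via Lemma~\ref{l_di3a} and Lemma~\ref{l_di1} (the paper packages this as Lemma~\ref{l_di4a}), and the converse is exactly the paper's chain of identifying (M) with invariance of $U_{\mathfrak S_1,\mathfrak S_2}(G_R)$ under the multiplication operators, then applying Lemma~\ref{l_di3}, Lemma~\ref{l_di4}, Lemma~\ref{l_di3a}, and Proposition~\ref{l_di4b} for uniqueness.
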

\begin{proof}
Let $U=U_{\mathfrak S_1,\mathfrak S_2}$, $\mathcal R$ be a $\nu$-measurable family of closed operators from $\mathfrak S_1$ to $\mathfrak S_2$, and $R=\int^\oplus \mathcal R(s)\,d\nu(s)$. By Lemma~\ref{l_di4a}, $R$ is closed. If $\xi\in D_R$ and $g$ is a $\nu$-measurable $\nu$-essentially bounded function, then $(\mathcal T^{\nu,\mathfrak S_1}_g\xi,\mathcal T^{\nu,\mathfrak S_2}_gR\xi)$ belongs to $G_R$ by the very definition of the direct integral of operators. This means that (M) is fulfilled. Conversely, let $R$ be a closed operator from $\mathfrak H_1$ to $\mathfrak H_2$ satisfying~{\rm (M)}. If $(\xi,\eta)\in G_R$ and $g$ is a $\nu$-measurable $\nu$-essentially bounded function, then $(\mathcal T^{\nu,\mathfrak S_1}_g\xi,\mathcal T^{\nu,\mathfrak S_2}_g\eta)$ belongs to $G_R$ by~(M) and, hence, $\mathcal T^{\nu,\mathfrak S_1\oplus_\nu \mathfrak S_2}_gU(\xi,\eta)=U(\mathcal T^{\nu,\mathfrak S_1}_g\xi,\mathcal T^{\nu,\mathfrak S_2}_g\eta)$ belongs to $U(G_R)$. This means that $U(G_R)$ is invariant under $\mathcal T^{\nu,\mathfrak S_1\oplus_\nu \mathfrak S_2}_g$. By Lemma~\ref{l_di3}, there is a $\nu$-measurable family $\mathcal G$ of closed subspaces of $\mathfrak S_1\oplus_\nu \mathfrak S_2$ such that $U(G_R) = \int^\oplus \mathcal G(s)\,d\nu(s)$. By Lemma~\ref{l_di4}, there is a $\nu$-measurable family $\mathcal R$ of closed operators from $\mathfrak S_1$ to $\mathfrak S_2$ such that $\mathcal G(s) = G_{\mathcal R(s)}$ for $\nu$-a.e. $s$. If $R'=\int^\oplus \mathcal R(s)\,d\nu(s)$, then it follows from Lemma~\ref{l_di3a} that $U(G_{R'}) = U(G_R)$ and, hence, $R = R'$. The uniqueness of $\mathcal R$ is ensured by Proposition~\ref{l_di4b}.
\end{proof}

Let $\mathfrak H_1$ and $\mathfrak H_2$ be Hilbert spaces and $R$ be an operator from $\mathfrak H_1$ to $\mathfrak H_2$. We define the linear subspaces $G^\circ_R$ and $G^*_R$ of $\mathfrak H_2\oplus\mathfrak H_1$ by setting
\[
G^{\circ}_R = \{(-\psi_2,\psi_1): (\psi_1,\psi_2)\in G_R\},\quad G^*_R = (G^{\circ}_R)^\bot.
\]
The operator $R$ is densely defined if and only if $G^*_R$ is the graph of an operator, in which case $G_{R^*}=G^*_R$.

\begin{proposition}\label{p_di2}
Let $\mathfrak S_1$ and $\mathfrak S_2$ be $\nu$-measurable families of Hilbert spaces, $\mathcal R$ be a $\nu$-measurable family of operators from $\mathfrak S_1$ to $\mathfrak S_2$, and $R=\int^\oplus\mathcal R(s)\,d\nu(s)$. Then the following statements hold:
\begin{enumerate}
\item[$1.$] Suppose $\mathcal R(s)$ is closed for $\nu$-a.e. $s$. The operator $R$ is invertible if and only if $\mathcal R(s)$ is invertible for $\nu$-a.e. $s$, in which case $s\to \mathcal R(s)^{-1}$ is a $\nu$-measurable family of operators from $\mathfrak S_2$ to $\mathfrak S_1$ and $R^{-1}= \int^\oplus \mathcal R(s)^{-1}\,d\nu(s)$.
\item[$2.$] The operator $R$ is densely defined if and only if $\mathcal R(s)$ is densely defined for $\nu$-a.e. $s$, in which case $s\to \mathcal R(s)^{*}$ is a $\nu$-measurable family of operators from $\mathfrak S_2$ to $\mathfrak S_1$ and $R^{*}= \int^\oplus \mathcal R(s)^{*}\,d\nu(s)$.
\end{enumerate}
\end{proposition}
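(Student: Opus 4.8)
The plan is to transport every graph-theoretic characterization of invertibility and adjoints through the natural isomorphism $U=U_{\mathfrak S_1,\mathfrak S_2}$ and its companion $U_{\mathfrak S_2,\mathfrak S_1}$, reducing each global statement to its fiberwise counterpart. Throughout, write $\mathfrak H_{1,2}=\int^\oplus\mathfrak S_{1,2}(s)\,d\nu(s)$. Since $\mathcal R$ is $\nu$-measurable, $s\to G_{\mathcal R(s)}$ is a $\nu$-measurable family of subspaces and, by Lemma~\ref{l_di3a}, $U(G_R)=\int^\oplus G_{\mathcal R(s)}\,d\nu(s)$. I would introduce two fiberwise unitaries: the swap $\mathcal S(s)\colon(\psi_1,\psi_2)\mapsto(\psi_2,\psi_1)$ and the map $\mathcal J(s)\colon(\psi_1,\psi_2)\mapsto(-\psi_2,\psi_1)$, together with their global analogues $S(\xi,\eta)=(\eta,\xi)$ and $J(\xi,\eta)=(-\eta,\xi)$ on $\mathfrak H_1\oplus\mathfrak H_2$. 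A one-line computation on elements gives the intertwining relations $U_{\mathfrak S_2,\mathfrak S_1}\circ S=\bigl(\int^\oplus\mathcal S(s)\,d\nu(s)\bigr)\circ U$ and $U_{\mathfrak S_2,\mathfrak S_1}\circ J=\bigl(\int^\oplus\mathcal J(s)\,d\nu(s)\bigr)\circ U$, which are the mechanism that converts global graph operations into fiberwise ones.

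For statement~1, since $\mathcal R(s)$ is closed for $\nu$-a.e.\ $s$, $R$ is closed by Lemma~\ref{l_di4a}, so $G_R$ and $S(G_R)=G_{R^{-1}}$ are closed. Using the swap intertwining together with Lemma~\ref{l_di3a}, I obtain $U_{\mathfrak S_2,\mathfrak S_1}\bigl(S(G_R)\bigr)=\int^\oplus\mathcal S(s)\bigl(G_{\mathcal R(s)}\bigr)\,d\nu(s)$, and $s\to\mathcal S(s)(G_{\mathcal R(s)})$ is a $\nu$-measurable family of closed subspaces by Lemma~\ref{l_BBB0}. Applying Lemma~\ref{l_di4} (with the roles of $\mathfrak S_1,\mathfrak S_2$ interchanged) to $G=S(G_R)$ shows that $S(G_R)$ is a graph, i.e.\ $R$ is invertible, if and only if $\mathcal S(s)(G_{\mathcal R(s)})=G_{\mathcal R(s)^{-1}}$ is a graph for $\nu$-a.e.\ $s$, i.e.\ $\mathcal R(s)$ is invertible for $\nu$-a.e.\ $s$. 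When this holds, $s\to\mathcal R(s)^{-1}$ is $\nu$-measurable by statement~1 of Lemma~\ref{l_BBB3}; comparing $U_{\mathfrak S_2,\mathfrak S_1}(G_{R^{-1}})$ with $U_{\mathfrak S_2,\mathfrak S_1}\bigl(G_{\int^\oplus\mathcal R(s)^{-1}\,d\nu(s)}\bigr)=\int^\oplus G_{\mathcal R(s)^{-1}}\,d\nu(s)$ (again via Lemma~\ref{l_di3a}) and using injectivity of $U_{\mathfrak S_2,\mathfrak S_1}$ yields $R^{-1}=\int^\oplus\mathcal R(s)^{-1}\,d\nu(s)$.

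For statement~2, I would work with the subspaces $G^\circ_R$ and $G^*_R=(G^\circ_R)^\bot$ defined just above the proposition, recalling that $R$ is densely defined iff $G^*_R$ is a graph, in which case $G_{R^*}=G^*_R$. Since $G^\circ_R=J(G_R)$, the map $\mathcal J$ and the intertwining relation give $U_{\mathfrak S_2,\mathfrak S_1}(G^\circ_R)=\int^\oplus G^\circ_{\mathcal R(s)}\,d\nu(s)$, where $s\to G^\circ_{\mathcal R(s)}$ is $\nu$-measurable by Lemma~\ref{l_BBB0}. The key step is then Lemma~\ref{l_di1}, which commutes the orthogonal complement with the direct integral and yields $U_{\mathfrak S_2,\mathfrak S_1}(G^*_R)=\int^\oplus(G^\circ_{\mathcal R(s)})^\bot\,d\nu(s)=\int^\oplus G^*_{\mathcal R(s)}\,d\nu(s)$; moreover this fiber family is $\nu$-measurable as an orthogonal complement of a $\nu$-measurable family (statement~1 of Lemma~\ref{l_BBB2}). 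Lemma~\ref{l_di4} applied to $G^*_R$ now gives that $R$ is densely defined iff $\mathcal R(s)$ is densely defined for $\nu$-a.e.\ $s$. On the common densely-defined locus, $G^*_R=G_{R^*}$ and $G^*_{\mathcal R(s)}=G_{\mathcal R(s)^*}$, so $s\to\mathcal R(s)^*$ is a $\nu$-measurable family of operators and $R^*=\int^\oplus\mathcal R(s)^*\,d\nu(s)$ follows exactly as in statement~1.

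The routine part is the element-level verification of the two intertwining identities and the identification of $\mathcal S(s)(G_{\mathcal R(s)})$ and $\mathcal J(s)(G_{\mathcal R(s)})$ with $G_{\mathcal R(s)^{-1}}$ and $G^\circ_{\mathcal R(s)}$. I expect the main obstacle to lie in statement~2, specifically in justifying that the orthogonal complement commutes with the direct integral in the correct ambient space $\int^\oplus(\mathfrak S_2(s)\oplus\mathfrak S_1(s))\,d\nu(s)$ and that the resulting adjoint family is genuinely $\nu$-measurable; both are delivered by Lemmas~\ref{l_di1} and~\ref{l_BBB2}, but they must be invoked with the families $s\to G^\circ_{\mathcal R(s)}$ shown measurable (via Lemma~\ref{l_BBB0}) rather than merely $\nu$-a.e.\ defined, so the careful ordering is to establish measurability of $G^\circ_{\mathcal R(s)}$ before passing to complements.
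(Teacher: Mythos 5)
Your proof is correct, and your treatment of statement~2 coincides with the paper's own proof almost verbatim: the same fiberwise map $(\psi_1,\psi_2)\mapsto(-\psi_2,\psi_1)$ (the paper's $\mathcal S$, your $\mathcal J$), the same intertwining relation $U_{21}\tilde S=SU_{12}$, the same use of Lemmas~\ref{l_BBB0}, \ref{l_BBB2}, \ref{l_di1}, \ref{l_di3a}, and~\ref{l_di4}, and the same final graph comparison via injectivity of $U_{21}$. Where you genuinely diverge is statement~1. The paper does not use the coordinate swap there at all: it observes that $\mathrm{Ker}\,R=\int^\oplus\mathrm{Ker}\,\mathcal R(s)\,d\nu(s)$ and that $s\to\mathrm{Ker}\,\mathcal R(s)$ is a $\nu$-measurable family of subspaces by statement~2 of Lemma~\ref{l_BBB3}, so that $R$ is invertible if and only if $\mathcal R(s)$ is invertible for $\nu$-a.e.~$s$; the identity $R^{-1}=\int^\oplus\mathcal R(s)^{-1}\,d\nu(s)$ is then verified by a direct element-wise chain of equivalences between the graphs, with no appeal to Lemma~\ref{l_di4}. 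Your route --- transporting $S(G_R)=G_{R^{-1}}$ through the intertwining and applying Lemma~\ref{l_di4} to the family $\mathcal S(s)(G_{\mathcal R(s)})$, which is $\nu$-measurable by Lemma~\ref{l_BBB0} --- is equally valid, because the closedness hypothesis of statement~1 guarantees that $G_R$ (hence $S(G_R)$) and each $\mathcal S(s)(G_{\mathcal R(s)})$ are closed, as Lemma~\ref{l_di4} requires. What your version buys is uniformity: invertibility and adjoints are handled by one and the same graph-transport mechanism, so the two halves of the proposition become formally parallel. What the paper's version buys is economy: the kernel argument for part~1 is more elementary, avoids any closedness bookkeeping for images of graphs, and isolates the independently useful identity $\mathrm{Ker}\,R=\int^\oplus\mathrm{Ker}\,\mathcal R(s)\,d\nu(s)$.
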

\begin{proof} Let $\mathfrak H_{1,2}=\int^\oplus \mathfrak S_{1,2}(s)\,d\nu(s)$.
\par\noindent
1. By statement~2 of Lemma~\ref{l_BBB3}, $s\to \mathrm{Ker}\,\mathcal R(s)$ is a $\nu$-measurable family of subspaces of $\mathfrak S_1$. As $\mathrm{Ker}\,R = \int^\oplus \mathrm{Ker}\,\mathcal R(s)\,d\nu(s)$, we conclude that $R$ is invertible if and only if $\mathrm{Ker}\,\mathcal R(s)=\{0\}$ for $\nu$-a.e. $s$, i.e., if $\mathcal R(s)$ is invertible for $\nu$-a.e. $s$. Let $R$ be invertible and $R'= \int^\oplus \mathcal R(s)^{-1}\,d\nu(s)$. For any $(\eta,\xi)\in \mathfrak H_2\oplus \mathfrak H_1$, we have
\begin{multline}\nonumber
(\eta,\xi)\in G_{R^{-1}} \Longleftrightarrow (\xi,\eta)\in G_R \Longleftrightarrow (\xi(s),\eta(s))\in G_{\mathcal R(s)}\mbox{ for $\nu$-a.e. $s$} \Longleftrightarrow \\ (\eta(s),\xi(s))\in G_{\mathcal R(s)^{-1}}\mbox{ for $\nu$-a.e. $s$} \Longleftrightarrow (\eta,\xi)\in G_{R'}.
\end{multline}
We thus have $G_{R^{-1}}=G_{R'}$ and, hence, $R=R'$.  By statement~1 of Lemma~\ref{l_BBB3}, $s\to \mathcal R(s)^{-1}$ is a $\nu$-measurable family of operators from $\mathfrak S_2$ to $\mathfrak S_1$.

\par\medskip\noindent
2. Let $U_{12}=U_{\mathfrak S_1,\mathfrak S_2}$ and $U_{21}=U_{\mathfrak S_2,\mathfrak S_1}$. For $\nu$-a.e. $s$, let $\mathcal S(s)$ be everywhere defined bounded operator from $\mathfrak S_1(s)\oplus \mathfrak S_2(s)$ to $\mathfrak S_2(s)\oplus \mathfrak S_1(s)$ taking $(\psi_1,\psi_2)$ to $(-\psi_2,\psi_1)$. Clearly, $\mathcal S$ is a $\nu$-measurable family of unitary operators from $\mathfrak S_1\oplus_\nu\mathfrak S_2$ to $\mathfrak S_2\oplus_\nu \mathfrak S_1$. Note that $G^\circ_{\mathcal R(s)}$ is the image of $G_{\mathcal R(s)}$ under $\mathcal S(s)$ for $\nu$-a.e. $s$. Hence, $s\to G^\circ_{\mathcal R(s)}$ is a $\nu$-measurable family of subspaces of $\mathfrak S_2\oplus_\nu \mathfrak S_1$ by Lemma~\ref{l_BBB0}. Let $S = \int^\oplus\mathcal S(s)\,d\nu(s)$ and $\tilde S$ be the unitary operator from $\mathfrak H_1\oplus \mathfrak H_2$ to $\mathfrak H_2\oplus \mathfrak H_1$ defined by the relation $\tilde S(\xi,\eta)=(-\eta,\xi)$. It is straightforward to check that $U_{21}\tilde S = S U_{12}$. As $G^{\circ}_R = \tilde S(G_R)$, we have $U_{21}(G^{\circ}_R) = S(U_{12}(G_R))$. Since $U_{12}(G_R)=\int^\oplus G_{\mathcal R(s)}\,d\nu(s)$ by Lemma~\ref{l_di3a} and the image of $\int^\oplus G_{\mathcal R(s)}\,d\nu(s)$ under $S$ is equal to $\int^\oplus G^{\circ}_{\mathcal R(s)}\,d\nu(s)$, we conclude that $U_{21}(G^{\circ}_R) = \int^\oplus G^{\circ}_{\mathcal R(s)}\,d\nu(s)$. It now follows from Lemmas~\ref{l_BBB2} and~\ref{l_di1} and the unitarity of $U_{21}$ that $s\to G^*_{\mathcal R(s)}$ is a $\nu$-measurable family of subspaces of $\mathfrak S_2\oplus_\nu \mathfrak S_1$ and
\begin{equation}\label{eq_di3}
U_{21}(G^{*}_R) = \int^\oplus G^{*}_{\mathcal R(s)}\,d\nu(s).
\end{equation}
In view of~(\ref{eq_di3}), Lemma~\ref{l_di4} implies that $R$ is densely defined if and only if $G^{*}_{\mathcal R(s)}$ is the graph of an operator for $\nu$-a.e. $s$, i.e., if $\mathcal R(s)$ is densely defined for $\nu$-a.e. $s$. Suppose $R$ is densely defined and $R'=\int^\oplus \mathcal R(s)^{*}\,d\nu(s)$. As  $G^{*}_{\mathcal R(s)} = G_{\mathcal R(s)^{*}}$ for $\nu$-a.e. $s$, it follows that $s\to \mathcal R(s)^*$ is a $\nu$-measurable family of operators from $\mathfrak S_2$ to $\mathfrak S_1$. Finally, since $G^{*}_R = G_{R^{*}}$, equality~(\ref{eq_di3}) and Lemma~\ref{l_di3a} imply that $U_{21}(G_{R^{*}}) = U_{21}(G_{R'})$ and, hence, $R^{*}=R'$.
\end{proof}

\begin{corollary}\label{c_di}
Let $\mathfrak S$ be a $\nu$-measurable family of Hilbert spaces, and $\mathcal R$ be a $\nu$-measurable family of closed operators in $\mathfrak S$. The operator $\int^\oplus\mathcal R(s)\,d\nu(s)$ is self-adjoint if and only if $\mathcal R(s)$ is self-adjoint for $\nu$-a.e. $s$.
\end{corollary}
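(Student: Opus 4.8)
The plan is to deduce the corollary directly from statement~2 of Proposition~\ref{p_di2} together with the uniqueness part of Proposition~\ref{l_di4b}, using the characterization of self-adjointness as the conjunction of dense definition and equality with the adjoint. Throughout I write $R=\int^\oplus\mathcal R(s)\,d\nu(s)$ and recall that an operator is self-adjoint precisely when it is densely defined and coincides with its adjoint, while $\mathcal R(s)$ is self-adjoint precisely when it is densely defined and $\mathcal R(s)=\mathcal R(s)^*$.

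First I would treat the ``if'' direction. Assuming $\mathcal R(s)$ is self-adjoint for $\nu$-a.e.\ $s$, each $\mathcal R(s)$ is in particular densely defined, so statement~2 of Proposition~\ref{p_di2} applies and yields that $R$ is densely defined with $R^*=\int^\oplus\mathcal R(s)^*\,d\nu(s)$. Since $\mathcal R(s)^*=\mathcal R(s)$ for $\nu$-a.e.\ $s$, the family appearing in this integral is $\nu$-equivalent to $\mathcal R$, whence $R^*=\int^\oplus\mathcal R(s)\,d\nu(s)=R$. Thus $R$ is self-adjoint.

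For the converse, suppose $R$ is self-adjoint. Then $R$ is densely defined, so statement~2 of Proposition~\ref{p_di2} guarantees that $\mathcal R(s)$ is densely defined for $\nu$-a.e.\ $s$, that $s\to\mathcal R(s)^*$ is again a $\nu$-measurable family of operators, and that $R^*=\int^\oplus\mathcal R(s)^*\,d\nu(s)$. The self-adjointness $R=R^*$ then reads $\int^\oplus\mathcal R(s)\,d\nu(s)=\int^\oplus\mathcal R(s)^*\,d\nu(s)$. Applying Proposition~\ref{l_di4b} to the two $\nu$-measurable families $\mathcal R$ and $s\to\mathcal R(s)^*$ of closed operators, I conclude $\mathcal R(s)=\mathcal R(s)^*$ for $\nu$-a.e.\ $s$; combined with the dense definedness already established, this shows that $\mathcal R(s)$ is self-adjoint for $\nu$-a.e.\ $s$.

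Since everything is furnished by the two cited results, there is essentially no computational difficulty; the only point demanding care is that, to invoke the uniqueness clause of Proposition~\ref{l_di4b} in the converse direction, one must know that $s\to\mathcal R(s)^*$ is a genuine $\nu$-measurable family of closed operators---which is exactly what statement~2 of Proposition~\ref{p_di2} provides, adjoints of densely defined operators being automatically closed.
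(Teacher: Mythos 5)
Your proof is correct and follows exactly the route of the paper, which deduces the corollary from statement~2 of Proposition~\ref{p_di2} (adjoints pass through the direct integral) together with the uniqueness part of Proposition~\ref{l_di4b}; you have merely written out the details, including the necessary observation that $s\to\mathcal R(s)^*$ is a $\nu$-measurable family of closed operators so that Proposition~\ref{l_di4b} applies.
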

\begin{proof}
The statement follows from Propositions~\ref{p_di2} and~\ref{l_di4b}.
\end{proof}

\begin{proposition}\label{p_di1}
Let $\mathfrak S_1$, $\mathfrak S_2$, and $\mathfrak S_3$ be $\nu$-measurable families of Hilbert spaces and $\mathcal R_1$ and $\mathcal R_2$ be $\nu$-measurable families of closed operators from $\mathfrak S_1$ to $\mathfrak S_2$ and from $\mathfrak S_2$ to $\mathfrak S_3$ respectively. Let $R_{1,2}=\int^\oplus \mathcal R_{1,2}(s)\,d\nu(s)$. The operator $R_2R_1$ is densely defined if and only if the operator $\mathcal R_2(s)\mathcal R_1(s)$ is densely defined for $\nu$-a.e. $s$.
The operator $R_2R_1$ is closable if and only if the operator $\mathcal R_2(s)\mathcal R_1(s)$ is closable for $\nu$-a.e. $s$, in which case $s\to \overline{\mathcal R_2(s)\mathcal R_1(s)}$ is a $\nu$-measurable family of operators from $\mathfrak S_1$ to $\mathfrak S_3$ and $\overline{R_2R_1} = \int^\oplus \overline{\mathcal R_2(s)\mathcal R_1(s)}\,d\nu(s)$.
\end{proposition}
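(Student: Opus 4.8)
The plan is to compare $R_2R_1$ with the direct integral $P=\int^\oplus\mathcal R_2(s)\mathcal R_1(s)\,d\nu(s)$ of the pointwise products. First I would observe that $s\to\mathcal R_2(s)\mathcal R_1(s)$ is a $\nu$-measurable family of operators from $\mathfrak S_1$ to $\mathfrak S_3$ by Lemma~\ref{l_BBB4}, so that $P$ is well defined, and that both the closability and the dense-domain criteria for $P$ are already supplied by Lemma~\ref{l_di4a} and by statement~2 of Proposition~\ref{p_di2} applied to the family $s\to\mathcal R_2(s)\mathcal R_1(s)$. The whole proof therefore reduces to transferring these properties from $P$ to $R_2R_1$, and the mechanism for this is the claim that $R_2R_1$ and $P$ share the same closure and the same closure of the domain.

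The inclusion $R_2R_1\subset P$ is immediate from the definition of the direct integral of operators: if $\xi\in D_{R_1}$ and $R_1\xi\in D_{R_2}$, then for $\nu$-a.e.\ $s$ one has $\xi(s)\in D_{\mathcal R_1(s)}$, $(R_1\xi)(s)=\mathcal R_1(s)\xi(s)\in D_{\mathcal R_2(s)}$, and $\mathcal R_2(s)\mathcal R_1(s)\xi(s)=(R_2R_1\xi)(s)$, so $\xi\in D_P$ and the two operators agree on $D_{R_2R_1}$. The inclusion is proper in general because membership in $D_{R_2R_1}$ additionally forces the intermediate vector $R_1\xi$ to lie in $\mathfrak H_2$, i.e.\ $s\to\mathcal R_1(s)\xi(s)$ to be square-integrable, whereas $D_P$ imposes no such global constraint. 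Closing this gap is the main obstacle, and I would handle it by a double cut-off.

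Fix $\xi\in D_P$. Since $\mathcal R_1$ is a $\nu$-measurable family of closed operators and $\xi(s)\in D_{\mathcal R_1(s)}$ for $\nu$-a.e.\ $s$, Lemma~\ref{l_BBB5} gives that $s\to\mathcal R_1(s)\xi(s)$ is a $\nu$-measurable section, hence $s\to\|\mathcal R_1(s)\xi(s)\|^2$ is $\nu$-measurable and the sets $B_N=\{s:\|\mathcal R_1(s)\xi(s)\|^2\le N\}$ are $\nu$-measurable with $B_N\uparrow\s_\nu$ up to a $\nu$-null set. Using $\sigma$-finiteness of $\nu$, I would pick $A_1\subset A_2\subset\cdots$ in $D_\nu$ with $\bigcup_kA_k=\s_\nu$ up to a null set, and set $C_N=A_N\cap B_N$ and $\xi_N=\mathcal T^{\nu,\mathfrak S_1}_{\chi_{C_N}}\xi$. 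Then the bound $\int_{C_N}\|\mathcal R_1(s)\xi(s)\|^2\,d\nu(s)\le N\nu(A_N)<\infty$ shows $\xi_N\in D_{R_1}$ with $(R_1\xi_N)(s)=\chi_{C_N}(s)\mathcal R_1(s)\xi(s)$; since this section lies $\nu$-a.e.\ in $D_{\mathcal R_2(s)}$ and $\mathcal R_2(s)\mathcal R_1(s)\xi(s)=(P\xi)(s)$ is square-integrable, one obtains $R_1\xi_N\in D_{R_2}$, that is $\xi_N\in D_{R_2R_1}$ with $(R_2R_1)\xi_N=\chi_{C_N}(P\xi)$. Dominated convergence then yields $\xi_N\to\xi$ in $\mathfrak H_1$ and $(R_2R_1)\xi_N\to P\xi$ in $\mathfrak H_3$, so $(\xi,P\xi)\in\overline{G_{R_2R_1}}$ and $\xi\in\overline{D_{R_2R_1}}$.

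Combining the inclusion $G_{R_2R_1}\subset G_P$ with $G_P\subset\overline{G_{R_2R_1}}$ gives $\overline{G_{R_2R_1}}=\overline{G_P}$, and $\overline{D_{R_2R_1}}=\overline{D_P}$ follows in the same way. Consequently $R_2R_1$ is densely defined if and only if $P$ is, and $R_2R_1$ is closable if and only if $P$ is, with $\overline{R_2R_1}=\overline P$ in that case. It then remains only to invoke statement~2 of Proposition~\ref{p_di2} for density and Lemma~\ref{l_di4a} for closability, the identification of the closure, and the $\nu$-measurability of $s\to\overline{\mathcal R_2(s)\mathcal R_1(s)}$, applied to the family $s\to\mathcal R_2(s)\mathcal R_1(s)$; these translate every statement about $P$ into the corresponding pointwise statement about $\mathcal R_2(s)\mathcal R_1(s)$. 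I expect the only delicate point to be the verification that the cut-offs $\xi_N$ genuinely land in $D_{R_2R_1}$ rather than merely in $D_P$, and this is precisely where the square-integrability forced by $C_N\subset A_N$ and the boundedness forced by $C_N\subset B_N$ are both required.
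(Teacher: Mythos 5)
Your proof is correct and follows essentially the same route as the paper's: both compare $R_2R_1$ with $\int^\oplus \mathcal R_2(s)\mathcal R_1(s)\,d\nu(s)$, use Lemma~\ref{l_BBB5} to make the intermediate section $s\to\mathcal R_1(s)\xi(s)$ measurable, cut off by sets of finite measure on which this section is bounded (the paper folds your $A_N\cap B_N$ into a single sequence $A_k$) to show the graphs have the same closure, and then invoke Lemma~\ref{l_BBB4}, Lemma~\ref{l_di4a}, and statement~2 of Proposition~\ref{p_di2}. No substantive difference.
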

\begin{proof}
Let $\mathfrak H_i = \int^\oplus \mathfrak S_i(s)\,d\nu(s)$, $i=1,2,3$, and $R = \int^\oplus \mathcal R_2(s)\mathcal R_1(s)\,d\nu(s)$. Clearly, $R$ is an extension of $R_2R_1$. Let $(\xi,\eta)\in G_R$. Then $\xi(s)\in D_{\mathcal R_2(s)\mathcal R_1(s)}$ and, hence, $\xi(s)\in D_{\mathcal R_1(s)}$ for $\nu$-a.e. $s$. Let $\tilde\eta$ be a $\nu$-a.e. defined section of $\mathfrak S_2$ such that $\tilde\eta(s)=\mathcal R_1(s)\xi(s)$ for $\nu$-a.e. $s$. By Lemma~\ref{l_BBB5}, $\tilde\eta$ is $\nu$-measurable. Let $A_1\subset A_2\subset\ldots$ be a sequence of elements of $D_\nu$ such that $\s_\nu=\bigcup_{k=1}^\infty A_k$ and $\|\tilde\eta(s)\|\leq k$ for $\nu$-a.e. $s\in A_k$.
Set $\xi_k=\xi^{A_k}$, $\eta_k=\eta^{A_k}$, and $\tilde\eta_k=\tilde\eta^{A_k}$
We obviously have $\xi_k\in \mathfrak H_1$, $\tilde\eta_k\in \mathfrak H_2$, and $\eta_k\in \mathfrak H_3$ for all $k$. As $\tilde \eta(s)\in D_{\mathcal R_2(s)}$ and $\eta(s)=\mathcal R_2(s)\tilde\eta(s)$ for $\nu$-a.e. $s$, we have $(\xi_k,\tilde\eta_k)\in G_{R_1}$ and $(\tilde\eta_k,\eta_k)\in G_{R_2}$ and, therefore, $(\xi_k,\eta_k)\in G_{R_2R_1}$ for all $k$. Since $(\xi_k,\eta_k)\to (\xi,\eta)$ in $\mathfrak H_1\oplus\mathfrak H_3$ as $k\to\infty$, we conclude that $G_{R_2R_1}$ is dense in $G_R$ and, hence, $D_{R_2R_1}$ is dense in $D_R$. In view of statement~2 of Proposition~\ref{p_di2}, it follows that $R_2R_1$ is densely defined if and only if $\mathcal R_2(s)\mathcal R_1(s)$ is densely defined for $\nu$-a.e. $s$. Since $\overline{G}_R = \overline{G}_{R_2R_1}$, the operator $R_2R_1$ is closable if and only if $R$ is closable. In view of Lemma~\ref{l_di4a}, the latter condition holds if and only if $\mathcal R_2(s)\mathcal R_1(s)$ is closable for $\nu$-a.e. $s$ (note that $s\to R_2(s)\mathcal R_1(s)$ is a $\nu$-measurable family of operators from $\mathfrak S_1$ to $\mathfrak S_3$ by Lemma~\ref{l_BBB4}). If $R_2R_1$ is closable, then Lemma~\ref{l_di4a} implies that $\overline{R_2R_1} = \overline{R}=\int^\oplus \overline{\mathcal R_2(s)\mathcal R_1(s)}\,d\nu(s)$.
\end{proof}

\begin{proposition}\label{p_di+}
Let $\mathfrak S_1$ and $\mathfrak S_2$ be $\nu$-measurable families of Hilbert spaces, $\mathcal R_1$ and $\mathcal R_2$ be $\nu$-measurable families of closed operators from $\mathfrak S_1$ to $\mathfrak S_2$, and $R_{1,2}=\int^\oplus \mathcal R_{1,2}(s)\,d\nu(s)$. The operator $R_1+R_2$ is densely defined if and only if the operator $\mathcal R_1(s)+\mathcal R_2(s)$ is densely defined for $\nu$-a.e. $s$.
The operator $R_1+R_2$ is closable if and only if the operator $\mathcal R_1(s)+\mathcal R_2(s)$ is closable for $\nu$-a.e. $s$, in which case $s\to \overline{\mathcal R_1(s)+\mathcal R_2(s)}$ is a $\nu$-measurable family of operators from $\mathfrak S_1$ to $\mathfrak S_2$ and $\overline{R_1+R_2} = \int^\oplus \overline{\mathcal R_1(s)+\mathcal R_2(s)}\,d\nu(s)$.
\end{proposition}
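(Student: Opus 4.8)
The plan is to mirror, almost verbatim, the proof of Proposition~\ref{p_di1}, replacing products by sums throughout and invoking Lemma~\ref{l_BBB4a} in place of Lemma~\ref{l_BBB4}. Write $\mathfrak H_{1,2} = \int^\oplus \mathfrak S_{1,2}(s)\,d\nu(s)$ and set $R = \int^\oplus (\mathcal R_1(s)+\mathcal R_2(s))\,d\nu(s)$. By Lemma~\ref{l_BBB4a}, $s\to \mathcal R_1(s)+\mathcal R_2(s)$ is a $\nu$-measurable family of operators from $\mathfrak S_1$ to $\mathfrak S_2$, so $R$ is a bona fide direct integral to which Lemma~\ref{l_di4a} applies. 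First I would check that $R$ is an extension of $R_1+R_2$: if $\xi\in D_{R_1+R_2}=D_{R_1}\cap D_{R_2}$, then $\xi(s)\in D_{\mathcal R_1(s)}\cap D_{\mathcal R_2(s)}=D_{\mathcal R_1(s)+\mathcal R_2(s)}$ for $\nu$-a.e.\ $s$ and $((R_1+R_2)\xi)(s)=(\mathcal R_1(s)+\mathcal R_2(s))\xi(s)$, whence $(\xi,(R_1+R_2)\xi)\in G_R$.

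The central step, as in Proposition~\ref{p_di1}, is to show that $G_{R_1+R_2}$ is dense in $G_R$. Given $(\xi,\eta)\in G_R$, one has $\xi(s)\in D_{\mathcal R_1(s)}\cap D_{\mathcal R_2(s)}$ and $\eta(s)=\mathcal R_1(s)\xi(s)+\mathcal R_2(s)\xi(s)$ for $\nu$-a.e.\ $s$; by Lemma~\ref{l_BBB5} the sections $\tilde\eta_i(s)=\mathcal R_i(s)\xi(s)$ (for $i=1,2$) are $\nu$-measurable. The subtlety is that $\tilde\eta_1$ and $\tilde\eta_2$ need not lie in $\mathfrak H_2$ individually — only their sum $\eta$ is square-integrable — so one cannot directly assert $\xi\in D_{R_1}\cap D_{R_2}$. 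This is where the truncation enters and is the main obstacle: I would choose an increasing sequence $A_1\subset A_2\subset\ldots$ of elements of $D_\nu$ with $\s_\nu=\bigcup_k A_k$ and $\|\tilde\eta_1(s)\|\le k$, $\|\tilde\eta_2(s)\|\le k$ for $\nu$-a.e.\ $s\in A_k$ (possible since each $\tilde\eta_i$ is measurable and finite a.e.). Setting $\xi_k=\xi^{A_k}$ and $\eta_k=\eta^{A_k}$, the truncated sections $\tilde\eta_i^{A_k}$ vanish off the finite-measure set $A_k$ and are bounded on it, hence lie in $\mathfrak H_2$; this forces $(\xi_k,\tilde\eta_1^{A_k})\in G_{R_1}$ and $(\xi_k,\tilde\eta_2^{A_k})\in G_{R_2}$, so $\xi_k\in D_{R_1}\cap D_{R_2}$ and $(R_1+R_2)\xi_k=\tilde\eta_1^{A_k}+\tilde\eta_2^{A_k}=\eta_k$. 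Since $A_k\uparrow\s_\nu$ with $\xi\in\mathfrak H_1$ and $\eta\in\mathfrak H_2$, dominated convergence yields $(\xi_k,\eta_k)\to(\xi,\eta)$, giving the desired density.

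With this density established, the two assertions follow exactly as in Proposition~\ref{p_di1}. Density of $G_{R_1+R_2}$ in $G_R$ forces the $\mathfrak H_1$-closures of $D_{R_1+R_2}$ and $D_R$ to coincide, so $R_1+R_2$ is densely defined if and only if $R$ is; by statement~2 of Proposition~\ref{p_di2} this is equivalent to $\mathcal R_1(s)+\mathcal R_2(s)$ being densely defined for $\nu$-a.e.\ $s$. Likewise $\overline{G}_{R_1+R_2}=\overline{G}_R$, so $R_1+R_2$ is closable if and only if $R$ is, and Lemma~\ref{l_di4a} (applicable since $\mathcal R_1+\mathcal R_2$ is measurable by Lemma~\ref{l_BBB4a}) shows this is equivalent to closability of $\mathcal R_1(s)+\mathcal R_2(s)$ for $\nu$-a.e.\ $s$, simultaneously delivering the measurability of $s\to\overline{\mathcal R_1(s)+\mathcal R_2(s)}$ and the identity $\overline{R_1+R_2}=\overline{R}=\int^\oplus\overline{\mathcal R_1(s)+\mathcal R_2(s)}\,d\nu(s)$. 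Beyond the bookkeeping of the cutoff I expect no difficulty; the only point demanding care is keeping the two intermediate sections $\tilde\eta_1,\tilde\eta_2$ square-integrable at once, which the joint truncation $A_k$ handles.
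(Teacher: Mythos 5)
Your proposal is correct and follows essentially the same route as the paper's own proof: the same auxiliary operator $R=\int^\oplus(\mathcal R_1(s)+\mathcal R_2(s))\,d\nu(s)$, the same use of Lemma~\ref{l_BBB5} to get measurability of the intermediate sections $\mathcal R_i(s)\xi(s)$, the same joint truncation over sets $A_k$ to prove $G_{R_1+R_2}$ is dense in $G_R$, and the same appeal to statement~2 of Proposition~\ref{p_di2}, Lemma~\ref{l_di4a}, and Lemma~\ref{l_BBB4a} to conclude. The subtlety you flag — that only the sum of the two intermediate sections is known to be square-integrable, which is exactly what the cutoff handles — is precisely the point of the paper's argument as well.
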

\begin{proof}
Let $\mathfrak H_{1,2} = \int^\oplus \mathfrak S_{1,2}(s)\,d\nu(s)$ and $R = \int^\oplus (\mathcal R_1(s)+\mathcal R_2(s))\,d\nu(s)$. Clearly, $R$ is an extension of $R_1+R_2$. Let $(\xi,\eta)\in G_R$. Then $\xi(s)\in D_{\mathcal R_1(s)}\cap D_{\mathcal R_2(s)}$ for $\nu$-a.e. $s$. Let $\eta^{(1)}$ and $\eta^{(2)}$ be $\nu$-a.e. defined sections of $\mathfrak S_2$ such that $\eta^{(1,2)}(s)=\mathcal R_{1,2}(s)\xi(s)$ for $\nu$-a.e. $s$. By Lemma~\ref{l_BBB5}, $\eta^{(1)}$ and $\eta^{(2)}$ are $\nu$-measurable. Let $A_1\subset A_2\subset\ldots$ be a sequence of elements of $D_\nu$ such that $\s_\nu=\bigcup_{k=1}^\infty A_k$ and $\|\eta^{(1)}(s)\|+\|\eta^{(2)}(s)\|\leq k$ for $\nu$-a.e. $s\in A_k$.
Set $\xi_k=\xi^{A_k}$, $\eta_k=\eta^{A_k}$, $\eta^{(1,2)}_k=(\eta^{(1,2)})^{A_k}$.
We obviously have $\xi_k\in \mathfrak H_1$ and $\eta_k,\eta^{(1,2)}_k\in \mathfrak H_2$ for all $k$.
As $(\xi_k,\eta^{(1,2)}_k)\in G_{R_{1,2}}$ and $\eta_k=\eta^{(1)}_k + \eta^{(2)}_k$, we have $(\xi_k,\eta_k)\in G_{R_1+R_2}$ for all $k$.
Since $(\xi_k,\eta_k)\to (\xi,\eta)$ in $\mathfrak H_1\oplus\mathfrak H_2$ as $k\to\infty$, we conclude that $G_{R_1+R_2}$ is dense in $G_R$ and, hence, $D_{R_1+R_2}$ is dense in $D_R$. In view of statement~2 of Proposition~\ref{p_di2}, it follows that $R_1+R_2$ is densely defined if and only if $\mathcal R_1(s)+\mathcal R_2(s)$ is densely defined for $\nu$-a.e. $s$. Since $\overline{G}_R = \overline{G}_{R_1+R_2}$, the operator $R_1+R_2$ is closable if and only if $R$ is closable. In view of Lemma~\ref{l_di4a}, the latter condition holds if and only if $\mathcal R_1(s)+\mathcal R_2(s)$ is closable for $\nu$-a.e. $s$ (note that $s\to R_1(s)+\mathcal R_2(s)$ is a $\nu$-measurable family of operators from $\mathfrak S_1$ to $\mathfrak S_2$ by Lemma~\ref{l_BBB4a}). If $R_1+R_2$ is closable, then Lemma~\ref{l_di4a} implies that $\overline{R_1+R_2} = \overline{R}=\int^\oplus \overline{\mathcal R_1(s)+\mathcal R_2(s)}\,d\nu(s)$.
\end{proof}

\begin{remark}
Propositions~\ref{l_di4b}, \ref{p_di}, and~\ref{p_di2} were proved in~Ref.~\onlinecite{Nussbaum} (see also~Ref.~\onlinecite{Lance1975}). They are generalizations of corresponding statements proved in~Ref.~\onlinecite{Neumann} for bounded operators. Propositions~\ref{p_di1} and~\ref{p_di+} were obtained in~Ref.~\onlinecite{Dixon1971} under  additional assumption that $\nu$ is a finite Borel measure on a metrizable compact space and were proved in~Ref.~\onlinecite{Lennon1974} in their general form.
\end{remark}

\bibliography{smirnov-bib}

\end{document}